\keywords{Rig categories, Cartesian Bicategories, Program logics}
\theoremstyle{plain} %
\newcommand{\lAngle}{\langle\!\langle}
\newcommand{\rAngle}{\rangle\!\rangle}
\crefname{defi}{Definition}{Definitions}
\crefname{thm}{Theorem}{Theorems}
\crefname{lem}{Lemma}{Lemmas}
\crefname{prop}{Proposition}{Propositions}
\crefname{cor}{Corollary}{Corollaries}
\crefname{exa}{Example}{Examples}
\crefname{rem}{Remark}{Remarks}
\begin{document}

\title{A Diagrammatic Basis for Computer Programming}

\author[F.~Bonchi]{Filippo Bonchi}[a]
\author[A.~Di Giorgio]{Alessandro Di Giorgio}[b]
\author[E.~Di Lavore]{Elena Di Lavore}[c]

\address{University of Pisa, Italy}	%

\address{Tallinn University of Technology, Estonia}	%
\email{alessd@taltech.ee}  %

\address{University of Oxford, United Kingdom}	%

\begin{abstract}
    \noindent Tape diagrams provide a convenient graphical notation for arrows of rig categories, i.e., categories equipped with two monoidal products, $\piu$ and $\per$.
    In this work, we introduce Kleene-Cartesian rig categories, namely rig categories where $\per$ provides a Cartesian bicategory, while $\piu$ a Kleene bicategory.
    We show that the associated tape diagrams can conveniently deal with imperative programs and various program logics.        
\end{abstract}

\bibliographystyle{alphaurl}
\maketitle

\section{Introduction}\label{sec:intro}
The calculus of relations, originally introduced by  De Morgan and Peirce in the late 19th century, is an ancestor of first order logic that has been revitalised by Tarski in 1941.
With the dawn of program logics, the calculus of relations -- extended with transitive closure -- was early recognised~\cite{pratt1976semantical} to provide them with solid algebraic foundations.

Around the same time, Lawvere was pioneering categorical logic by introducing the concept of \emph{functorial semantics}~\cite{LawvereOriginalPaper}. Given an algebraic theory $T$ (in the sense of universal algebra, i.e., a signature $\Sigma$ and a set of equations $E$), one can freely generate a Cartesian category $\mathcal{L}_T$. Models, in the standard algebraic sense, are in one-to-one correspondence with Cartesian functors $F$ from $\mathcal{L}_T$ to $\Cat{Set}$, the category of sets and functions. More generally, models of the theory in any Cartesian category $\Cat{C}$ are represented by Cartesian functors $F: \mathcal{L}_T \to \Cat{C}$. However this approach fails if one tries to apply it to relational theories by choosing $\Cat{C}$ as $\Rel$, the category of sets and relations, because the Cartesian product of sets is not the categorical product in $\Rel$.

A refinement of Lawvere's method for relational structures has been recently proposed in~\cite{Bonchi2017c,GCQ,DBLP:journals/corr/abs-2009-06836}. Starting from a \emph{monoidal signature}, one can freely generate a \emph{Cartesian bicategory}~\cite{katis1997bicategories} and define models as morphisms into $(\Rel, \per, \uno)$, the monoidal category of relations, where the monoidal product $\per$ is the Cartesian product of sets. This framework captures regular theories, i.e., those involving the $\{\exists, \land, \top\}$-fragment of first order logic. More recently~\cite{DBLP:conf/lics/Bonchi0H024}, this approach was extended to full first order logic by deriving negation from the interaction of Cartesian and linear bicategories~\cite{cockett2000introduction}.

In this paper, we extend the Cartesian bicategory framework in a different direction: program logics.
In the last decades, there has been an explosion of program logics and many researchers felt the need for more systematic approaches.
Our proposal is based on relational and categorical algebra.
We propose tape diagrams as an ``assembly language'' for interpreting various program logics (\Cref{rem:other-program-logics}).
While the inference rules for each of these logics are usually defined by the ingenuity of the researchers, in our approach such rules follow from the laws of Kleene-Cartesian rig categories.
These laws arise from the interaction of canonical categorical structures on the category of sets and relations.
Crucially, the same approach has led to the identification of various categorical structures corresponding to various well-known logics (\Cref{fig:logics-categories}).
\begin{figure}[h!]%
  \centering
  \begin{tabular}{| c | l l |}
    \hline
    & Logic & Categorical structure \\
    \hline
    \cite{LawvereOriginalPaper} & Equational logic & Cartesian category \\
    \cite{GCQ} & Regular logic & Cartesian bicategory \\
    \cite{bonchi2023deconstructing} & Coherent logic & Finite-biproduct Cartesian bicategory \\
    \cite{DBLP:conf/lics/Bonchi0H024} & First-order logic & First-order bicategory \\
    This work & Program logic & Kleene-Cartesian rig category\\
    \hline
  \end{tabular}
  \caption{Categorical structures correspond to logics.}\label{fig:logics-categories}
\end{figure}

An idea, originating at least from Bainbridge~\cite{bainbridge1976feedback}, is to model data flow using the Cartesian product of relations, $(\Rel, \per, \uno)$, and control flow using a different monoidal structure on relations: $(\Rel, \piu, \zero)$.
In this second structure, the monoidal product $\piu$ is the disjoint union of sets, which acts both as a categorical product and a coproduct, hence, a \emph{biproduct}.
Both monoidal categories are \emph{traced}~\cite{Joyal_tracedcategories}: the trace in $(\Rel, \per, \uno)$ represents feedback, while in $(\Rel, \piu, \zero)$ --the focus of our work-- it provides \emph{iteration}~\cite{selinger1998note}.

Our first step is to extract from $(\Rel, \piu, \zero)$ the categorical structures essential for modelling control flow, which we term \emph{Kleene bicategories}. Essentially, a Kleene bicategory is a poset-enriched traced monoidal category where the monoidal product $\piu$ is a biproduct, and the induced natural comonoid~\cite{fox1976coalgebras} is \emph{right adjoint} to the natural monoid. The trace must satisfy a posetal variant of the so-called \emph{uniformity} condition~\cite{cuazuanescu1994feedback,hasegawa2003uniformity}. The term ``Kleene'' is justified because every Kleene bicategory forms a (typed) Kleene algebra in Kozen's sense~\cite{Kozen94acompleteness,kozen98typedkleene} (Corollary \ref{cor:kleeneareka}), while any Kleene algebra canonically gives rise, through the biproduct completion~\cite{mac_lane_categories_1978}, to a Kleene bicategory.

To model control and data flow within a unified categorical structure, we employ \emph{rig categories}~\cite{laplaza_coherence_1972}, categories equipped with two monoidal products, $\piu$ and $\per$, where $\per$ distributes over $\piu$. We define \emph{Kleene-Cartesian rig} (kc-rig) \emph{categories}, where $\piu$ and $\per$ exhibit the structures of Kleene and Cartesian bicategories, respectively. To construct the freely generated kc-rig category (Theorem \ref{thm:KleeneCartesiantapesfree}), we extend \emph{tape diagrams}~\cite{bonchi2023deconstructing}, a diagrammatic notation recently introduced for rig categories. Intuitively, tape diagrams are \emph{string diagrams}~\cite{joyal1991geometry} in which other string diagrams are nested: the inner diagrams model data flow, and the outer ones model control flow. On one hand, this offers an intuitive unified picture of Bainbridge's idea; on the other it allows for visualising the laws of kc-rig categories (Figures \ref{fig:tapesax}, \ref{fig:cb axioms} and \ref{fif:poset unif tape axioms}) in a way that enlights several monoidal algebras occurring in different types of systems \cite{stein2023probabilistic,coecke2011interacting,Backens-ZXcompleteness1,Fritz_stochasticmatrices,bruni2011connector,BaezCoya-propsnetworktheory,bonchi2015full,DBLP:journals/pacmpl/BonchiHPSZ19,BonchiPSZ19,lmcs:10963,Ghica2016}.

We then introduce \emph{Kleene-Cartesian theories} and their models that, like in Lawvere's approach, coincide with functors (Proposition \ref{funct:sem}). We illustrate an example of a Kleene-Cartesian theory which is not first order: Peano's axiomatisation of natural numbers. We demonstrate how imperative programs and their logics~\cite{hoare1969axiomatic,kozen00hoarekleene,DBLP:conf/vmcai/CousotCFL13,o2019incorrectness,DBLP:journals/corr/abs-2310-18156} -- even more sophisticated ones, like \cite{benton2004simple}, where the interaction of data and control flow play a key role -- can be encoded within Kleene-Cartesian tape diagrams. In particular, we show that the rules of Hoare logic follow from the laws of kc-rig categories (Proposition \ref{prop:Hoare}). %
Finally, the framework is expressive enough to capture the positive fragment of the calculus of relations with transitive closure, which is the departure of our journey. %

\paragraph{Synopsis}
In the next section we recall the calculus of relations equipped with reflexive 
and transitive closure.  
Its allegorical fragment can be expressed using Cartesian bicategories, reviewed 
in \Cref{sc:cb:background}, while its Kleene fragment is captured by Kleene 
bicategories, introduced in \Cref{sec:kleene}.  
To combine these two structures, we recall rig categories in 
\Cref{sec:rigcategories}.  
The central notion of Kleene-Cartesian rig categories is then developed in 
\Cref{sec:cb}.  
Tape diagrams for such categories, together with the notion of Kleene-Cartesian 
theory, are presented in \Cref{sec:kctapes}.  
In \Cref{sec:peano} we describe the Kleene-Cartesian theory of Peano's natural 
numbers, while in \Cref{sec:hoare} we introduce a theory for imperative programs 
and program logics.  
All remaining proofs appear in the appendix, which also contains the coherence conditions for rig categories, and further 
auxiliary material.  In particular, Appendix~\ref{app:dictionary} contains a dictionary for the structure of Kleene-Cartesian rig categories and its representation as string diagrams and tape diagrams. 
This paper extends the conference version~\cite{DBLP:conf/fossacs/BonchiGL25} by 
including full proofs, additional examples, and several minor results.

\section{The calculus of relations}\label{sec:CR} 
We commence our exposition by recalling the positive fragment of  the calculus of relations with
reflexive and transitive closure ($\CR$).  See \cite{DBLP:conf/stacs/Pous18} for a more detailed overview. Its syntax is given by the grammar below, where $R$ is taken from a given set $\Sigma$ of generating symbols.
 \begin{align}
     E  ::= \ &  R\phantom{\op{}}       \mid \id{}  \mid E;E      \;\;\mid  \label{eq:calculusofrelation1}\\
            &  \op{E}    \mid  \top  \mid E \cap E \mid  \label{eq:calculusofrelation2} \\
            &  \kstar{E} \mid  \bot  \mid E \cup E       \label{eq:calculusofrelation3}
   \end{align}
Beyond the usual relational composition $;$, union $\cup$, intersection $\cap$ and their units $\id{}$, $\bot$ and $\top$, the calculus features two unary operations:
the opposite $\op{(\cdot)}$ and the reflexive and transitive closure $\kstar{(\cdot)}$.  For all sets $X,Y,Z$ and relations $R\subseteq X \times Y$, $S\subseteq Y\times Z$, composition and identities are defined as
\begin{equation*}
R;S\defeq \{(x,z) \mid \exists y \in Y. \, (x,y)\in R \wedge (y,z)\in S\}  \; \text{ and } \; \id{X} \defeq \{(x,x)\mid x\in X\} \text{,} 
\end{equation*}
the opposite as $\op{R}\defeq \{(y,x)\mid (x,y)\in R\}$, while for $R\subseteq X \times X$, its reflexive and transitive closure is $\kstar{R}\defeq\bigcup_{n\in \N}R^n$ where $R^0\defeq\id{X}$ and $R^{n+1}\defeq R;R^n$.

Its semantics, illustrated below, is defined wrt a  \emph{relational interpretation} $\interpretation$, that is, a set $X$ together with a binary relation $\rho(R)\subseteq X \times X$ for each $R\in \Sigma$. %
\begin{equation*}\label{eq:sematicsExpr}
   \begin{array}{ccc}
      \begin{array}{r@{\;}c@{\;}l}
         \dsemRel{R} & \defeq & \rho(R) \\
         \dsemRel{\op{E}} & \defeq & \op{\dsemRel{E}} \\
         \dsemRel{\kstar{E}} & \defeq & \dsemRel{E}^* 
      \end{array}
      &
      \begin{array}{r@{\;}c@{\;}l}
          \dsemRel{\id{}} & \defeq & \id{X} \\
          \dsemRel{\bot} & \defeq & \emptyset \\
          \dsemRel{\top} & \defeq & X \times X
      \end{array}
      &
      \begin{array}{r@{\;}c@{\;}l}
          \dsemRel{E_1 ; E_2} & \defeq & \dsemRel{E_1} ; \dsemRel{E_2} \\
          \dsemRel{E_1 \cup E_2} & \defeq & \dsemRel{E_1} \cup \dsemRel{E_2} \\
          \dsemRel{E_1 \cap E_2} & \defeq & \dsemRel{E_1} \cap \dsemRel{E_2}
      \end{array}
  \end{array}
\end{equation*}
Two expressions $E_1$, $E_2$ are said to be \emph{equivalent}, written $E_1 \equiv_{\CR} E_2$, iff $\dsemRel{E_1}=\dsemRel{E_2}$ for all interpretations $\interpretation$. For instance, $\op{(\kstar{R})} \equiv_{\CR} \kstar{(\op{R})}$. Inclusion, denoted by $\leq_{\CR}$, is defined analogously by replacing $=$ with $\subseteq$. Axiomatisations and decidability of $\equiv_{\CR}$ have been studied focusing on several different fragments: see \cite{DBLP:conf/stacs/Pous18} and the references therein. Particularly interesting are the \emph{allegorical fragment}, consisting of \eqref{eq:calculusofrelation1} and \eqref{eq:calculusofrelation2}, and the \emph{Kleene fragment} consisting of \eqref{eq:calculusofrelation1} and \eqref{eq:calculusofrelation3}. %

\medskip

Our starting observation is that these two fragments arise from two different traced monoidal structures on $\Rel$, the category  of sets and relations: $(\Rel, \per, \uno)$ and $(\Rel, \piu, \zero)$.
In the former, the monoidal product $\per$ is given by the cartesian product of sets and, for relations $R\colon X_1\to Y_1 $, $S \colon X_2 \to Y_2$, $R\per S \colon X_1 \per X_2 \to Y_1 \per Y_2$ is defined as  
\begin{equation*}
R \per S \defeq \{(\,(x_1,x_2),\, (y_1,y_2)\,) \mid (x_1,y_1)\in R \text{ and } (x_2,y_2)\in S  \} \text{ with unit }1\defeq \{\bullet\}\text{.} 
\end{equation*} 
In $(\Rel, \piu, \zero)$, $0\defeq \emptyset$,  $\piu$ on sets is their disjoint union and $R \piu S \colon X_1\piu X_2 \to Y_1\piu Y_2$ is
\begin{equation*} R \piu S \defeq \{(\,(x_1,1),\,(y_1,1)\,) \mid (x_1,y_1)\in R \} \cup \{(\,(x_2,2),\,(y_2,2)\,) \mid (x_2,y_2)\in S \}\text{.} \end{equation*}
Here, we tag with $1$ and $2$ the elements of the disjoint union of two arbitrary sets.

For all sets $X$, the unique function $\discharger{X}\colon X \to 1$ and the pairing $\langle \id{X},\id{X}\rangle \defeq \copier{X} \colon X \to X \per X$ form a comonoid in $(\Rel, \per, \uno)$. Similarly
the unique function $\cobang{X}\colon 0 \to X$ and the copairing $[\id{X},\id{X}]\defeq \codiag{X} \colon X\piu X \to X$ form a monoid in $(\Rel, \piu, \zero)$.
By taking their opposite relations, we obtain in total the two (co)monoid structures illustrated below. 
\begin{equation}\label{eq:comonoidsREL}
  \def\arraystretch{2}
  \begin{array}{r@{\,}c@{\,}l@{\quad\;\;}r@{\,}c@{\,}l}
    \copier{X}  &\defeq&  \{(x, \; (x,x)) \mid x\in X\} & \diag{X}  &\defeq&  \op{\codiag{X}}
    \\
    \discharger{X}  &\defeq&  \{(x, \bullet) \mid x\in X\} & \bang{X}  &\defeq&  \op{\cobang{X}} 
    \\
    \cocopier{X}  &\defeq&  \op{\copier{X}} & \codiag{X}  &\defeq&  \{((x, 1), \; x) \mid x\in X\} \cup \{((x, 2), \; x) \mid x\in X\}
    \\
    \codischarger{X}  &\defeq&  \op{\discharger{X}} & \cobang{X}  &\defeq&  \emptyset 
  \end{array}
\end{equation}
The black (co)monoids give to $(\Rel, \per, \uno)$ the structure of a \emph{Cartesian bicategory} \cite{Carboni1987}, while the white ones give to $(\Rel, \piu, \zero)$ the structure of, what we named, a \emph{Kleene bicategory}.
These are illustrated in the next two sections.

\section{Cartesian Bicategories}\label{sc:cb:background}
All bicategories considered in this paper are \emph{poset enriched symmetric monoidal categories}:  every homset carries a partial order $\leq$, and composition $;$ and monoidal product $\perG$ are monotone. That is, if $f_1\leq f_2$ and $g_1\leq g_2$ then $f_1; g_1 \leq f_2; g_2$ and $f_1\perG g_1 \leq f_2\perG g_2$. 
A \emph{poset enriched symmetric monoidal functor} is a symmetric monoidal functor that preserves the order $\leq$.
The notion of \emph{adjoint arrows}, which will play a key role, amounts to the following: for $f \colon X \to Y$ and $g \colon Y \to X$,  $f$ is  \emph{left adjoint} to $g$, or $g$ is \emph{right adjoint} to $f$, written $f \dashv g$, if $\id{X} \leq f ; g$  and $g ; f \leq \id{Y}$. We extend such terminology to pairs of arrows: $(a,b)$ is left adjoint to $(c,d)$ iff $a \dashv c$ and $b \dashv d$.

All monoidal categories and functors considered throughout this paper are tacitly assumed to be strict \cite{mac_lane_categories_1978}, i.e.\ $(X\perG Y)\perG Z = X \perG (Y \perG Z)$ and $\unoG \perG X = X =X \perG \unoG$ for all objects $X,Y,Z$. This is harmless: strictification~\cite{mac_lane_categories_1978,johnson2024bimonoidal} allows to transform any monoidal category into a strict one, dispensing with the administrative burden of structural isomorphisms while ensuring the rigorous application of \emph{string diagrams}.
In this and in the next section we will use the string diagrammatic notation for traced monoidal categories from~\cite{selinger2010survey}. The unfamiliar reader may have a look at  \Cref{app:stringDiagrams} or check~\cite[Sec. 2]{bonchi2024diagrammaticalgebraprogramlogics}. In particular multiplication, unit, comultiplication and counit of the various (co)monoids, always tacitly assumed to be (co)commutative, will be drawn hereafter respectively, as 
\[
    \CBcocopier{X} \colon X \perG X \to X 
    \quad\;\; 
    \CBcodischarger{X} \colon I \to X
    \quad\;\; 
    \CBcopier{X} \colon X \to X \perG X
    \quad\;\; 
    \CBdischarger{X} \colon X \to I.
\]

\begin{figure}
    \mylabel{ax:cb:comonoid:assoc}{$\copier{}$-as}
    \mylabel{ax:cb:comonoid:unit}{$\copier{}$-un}
    \mylabel{ax:cb:comonoid:comm}{$\copier{}$-co}
    \mylabel{ax:cb:monoid:assoc}{$\cocopier{}$-as}
    \mylabel{ax:cb:monoid:unit}{$\cocopier{}$-un}
    \mylabel{ax:cb:monoid:comm}{$\cocopier{}$-co}
    \mylabel{ax:comonoid:assoc}{$\diag{}$-as}
    \mylabel{ax:comonoid:unit}{$\diag{}$-un}
    \mylabel{ax:comonoid:comm}{$\diag{}$-co}
    \mylabel{ax:monoid:assoc}{$\codiag{}$-as}
    \mylabel{ax:monoid:unit}{$\codiag{}$-un}
    \mylabel{ax:monoid:comm}{$\codiag{}$-co}
    \[
    % [inline block 0: 2 envs, 2972 chars -> data_tex | \begin{array}{c}         \begin{array}{c@{}c@{}c @{\qquad} c@{}c@{}c @{\qquad} c@{}c@{}c}...]

    \end{array}
    \]
        \caption{String diagrams for the axioms of coherent (co)commutative (co)monoids in \Cref{def:cartesian bicategory}.(1),(2),(3) and \Cref{def:fb}.(1),(2),(3). The gray colouring on the labels should be instantiated to black for \Cref{def:cartesian bicategory} and to white for \Cref{def:fb}. The coherence axioms -- in last two rows -- are unlabeled as they will be implicit in the graphical representation by means of tapes diagrams.}
    \label{fig:(co)monoids string}
\end{figure}

Hereafter, we briefly recall  Cartesian bicategories from \cite{Carboni1987}, 
and refer the reader to \cite{AlessandroThesis} for a more detailed exposition.
\begin{defi}\label{def:cartesian bicategory}
A \emph{Cartesian bicategory} is a poset enriched symmetric monoidal category $(\Cat{C}, \per, \uno)$ where, for every object $X$, there are morphisms $\cocopier{X} \colon X \otimes X \to X$, $\codischarger{X} \colon \uno \to X$, $\copier{X} \colon X \to X \otimes X$ and $\discharger{X} \colon X \to \uno$, such that%
\begin{enumerate}
    \item $(\cocopier{X}, \codischarger{X})$ is a commutative monoid:
    \[ (\cocopier{X} \otimes \id{X}) ; \cocopier{X} = (\id{X} \otimes \cocopier{X}) ; \cocopier{X}, \qquad (\codischarger{X} \otimes \id{X}) ; \cocopier{X} = \id{X} \quad\text{and}\quad \symm{X}{X} ; \cocopier{X} = \cocopier{X}; \]
    \item $(\copier{X}, \discharger{X})$ is a cocommutative comonoid:
    \[ \copier{X} ; (\copier{X} \otimes \id{X}) = \copier{X} ; (\id{X} \otimes \copier{X}), \qquad \copier{X} ; (\discharger{X} \otimes \id{X}) = \id{X} \quad\text{and}\quad  \copier{X} ; \symm{X}{X} = \copier{X}; \]
    \item $(\cocopier{X}, \codischarger{X})$ and $(\copier{X}, \discharger{X})$ are coherent with the monoidal structure:
    \[
    \begin{array}{l@{\quad}l@{\quad}l@{\quad}l}
        \cocopier{\uno} = \id{\uno} & \cocopier{X \otimes Y} = (\id{X} \otimes \symm{Y}{X} \otimes \id{Y}) ; (\cocopier{X} \otimes \cocopier{Y}) & \codischarger{\uno} = \id{\uno} & \codischarger{X \otimes Y} = \codischarger{X} \otimes \codischarger{Y}
        \\
        \copier{\uno} = \id{\uno} & \copier{X \otimes Y} =  (\copier{X} \otimes \copier{Y}) ; (\id{X} \otimes \symm{X}{Y} \otimes \id{Y}) & \discharger{\uno} = \id{\uno} & \discharger{X \otimes Y} = \discharger{X} \otimes \discharger{Y};
    \end{array}
    \]
    \item arrows $f \colon X \to Y$ are comonoid lax morphisms: 
    \[f;\copier{Y}\leq \copier{X};(f \per f) \quad \text{and} \quad f;\discharger{Y}\leq \discharger{X};\]
    \item $(\copier{X}, \discharger{X})$ and  $(\cocopier{X}, \codischarger{X})$ form special Frobenius algebras:
    \[ \copier{X};\cocopier{X} = \id{X} \quad\text{and}\quad (\id{X} \otimes \copier{X});(\cocopier{X} \otimes \id{X}) = (\copier{X} \otimes \id{X});(\id{X} \otimes \cocopier{X}); \]
    \item $(\copier{X}, \discharger{X})$ is left adjoint to $(\cocopier{X}, \codischarger{X})$: 
    \[ \id{X}\leq \copier{X};\cocopier{X}, \qquad \cocopier{X};\copier{X} \leq \id{X\per X}, \qquad \id{X}\leq \discharger{X};\codischarger{X} \quad \text{and} \quad \codischarger{X}.\discharger{X} \leq \id{\uno}. \]
\end{enumerate}
A \emph{morphism of Cartesian bicategories} is a poset enriched symmetric monoidal functor preserving monoids and comonoids. We denote by $\Cat{CB}$ the category of Cartesian bicategories and their morphisms.
\end{defi}

\begin{figure}[t]
    \centering
    \mylabel{ax:cb:comonoid:nat:copy}{$\copier{}$-nat}
    \mylabel{ax:cb:comonoid:nat:discard}{$\discharger{}$-nat}
    \mylabel{ax:cb:monoid:nat:copy}{$\cocopier{}$-nat}
    \mylabel{ax:cb:monoid:nat:discard}{$\codischarger{}$-nat}
    \mylabel{ax:cb:specfrob}{S}
    \mylabel{ax:cb:frob}{F}
    \mylabel{ax:cb:dischargeradj1}{$\codischarger{}\discharger{}$}
    \mylabel{ax:cb:dischargeradj2}{$\discharger{}\codischarger{}$}
    \mylabel{ax:cb:copieradj1}{$\cocopier{}\copier{}$}
    \mylabel{ax:cb:copieradj2}{$\copier{}\cocopier{}$}
    \[
    % [inline block 1: 1 envs, 2135 chars -> data_tex | \begin{array}{c@{}c@{}c @{\qquad} c@{}c@{}c}       ...]

    \]
    \caption{String diagrams for the axioms of Cartesian bicategories: lax naturality, special Frobenius algebras and adjointness in \Cref{def:cartesian bicategory}.(4),(5),(6).}
    \label{fig:ax-cb-string}
  \end{figure}

The archetypal example of a Cartesian bicategory is $(\Rel,\per,\uno)$ with $\copier{X}$, $\discharger{X}$, $\cocopier{X}$,  $\codischarger{X}$ defined as in \eqref{eq:comonoidsREL}. Simple computations confirm that all the laws of Definition \ref{def:cartesian bicategory} -- illustrated by means of string diagrams in Figures~\ref{fig:(co)monoids string} and~\ref{fig:ax-cb-string} --  are satisfied. %

The operations of $\CR$ in \eqref{eq:calculusofrelation2} can be defined in  any Cartesian bicategory, as
\begin{equation}\label{eq:cb:covolution}%
    \op{f} \defeq \stringOp{f}{X}{Y} \quad\qquad \top \defeq \stringBottom{X}{Y}  \quad\qquad  f \sqcap g \defeq \stringConvolution{f}{g}{X}{Y}
\end{equation}
for all objects $X,Y$ and arrows $f,g\colon X \to Y$. The reader can easily check that, in $\Rel$, these correspond to the opposite relation $\op{f}$, the top relation $X\times Y$ and the intersection $f\sqcap g$, respectively.

As in $\Rel$, in any Cartesian bicategory $\Cat{C}$ the operations $\sqcap$ and $\top$ make each hom-set $\Cat{C}[X,Y]$ into a meet-semilattice (see the top row of Table~\ref{table:re:daggerproperties}).
Yet $\Cat{C}$ is \emph{not} enriched over meet-semilattices, but only \emph{laxly} so (see the middle row).
Finally, the assignment $f \mapsto \op{f}$ defines an identity-on-objects monoidal isomorphism $\op{(\cdot)} \colon \Cat{C} \to \opcat{\Cat{C}}$ (see the bottom row).
\begin{prop}\label{prop:dagger}
    In any Cartesian bicategory, the laws in Table~\ref{table:re:daggerproperties} hold. 
    \end{prop}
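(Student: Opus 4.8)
The plan is to unfold the definitions from \eqref{eq:cb:covolution}, writing $f \sqcap g = \copier{X};(f \per g);\cocopier{Y}$ and $\top = \discharger{X};\codischarger{Y}$, and then to prove each law of Table~\ref{table:re:daggerproperties} by diagrammatic rewriting with the axioms of \Cref{def:cartesian bicategory} (read off Figures~\ref{fig:(co)monoids string} and~\ref{fig:ax-cb-string}). Commutativity and associativity of $\sqcap$ are immediate from (co)commutativity and (co)associativity of the black (co)monoids together with the coherence and symmetry laws, sliding symmetries past $f \per g$. The unit law $f \sqcap \top = f$ unfolds to $\copier{X};(f \per (\discharger{X};\codischarger{Y}));\cocopier{Y}$; rewriting $f \per \discharger{X} = (\id{X} \per \discharger{X});f$ by functoriality of $\per$ and then collapsing the two ends with the comonoid counit \ref{ax:cb:comonoid:unit} and the monoid unit \ref{ax:cb:monoid:unit} leaves exactly $f$.

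Next I would record a small lemma used repeatedly: every arrow satisfies $g \leq \top$. This follows from $\id{Y} \leq \discharger{Y};\codischarger{Y}$ (\ref{ax:cb:dischargeradj1}) and the lax naturality $g;\discharger{Y} \leq \discharger{X}$ (\ref{ax:cb:comonoid:nat:discard}), since $g = g;\id{Y} \leq g;\discharger{Y};\codischarger{Y} \leq \discharger{X};\codischarger{Y} = \top$. With this, the meet-semilattice laws (top row) reduce to showing that $\sqcap$ is the greatest lower bound of $(\Cat{C}[X,Y],\leq)$: the lower-bound inequalities $f \sqcap g \leq f$ and $f \sqcap g \leq g$ follow from $g \leq \top$ (resp.\ $f\leq\top$), monotonicity of $\sqcap$, and the unit law; while the universal property, \emph{if $h \leq f$ and $h \leq g$ then $h \leq f\sqcap g$}, follows from $h = h;\copier{Y};\cocopier{Y}$ (special Frobenius \ref{ax:cb:specfrob}), the lax naturality $h;\copier{Y} \leq \copier{X};(h\per h)$ (\ref{ax:cb:comonoid:nat:copy}), and monotonicity. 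Idempotency $f\sqcap f = f$ is then the instance $h=f=g$.

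For the lax-enrichment laws (middle row), the pre-composition inequality $p;(f\sqcap g) \leq (p;f)\sqcap(p;g)$ is again a direct consequence of $\copier{}$-naturality (\ref{ax:cb:comonoid:nat:copy}), whereas the post-composition inequality $(f\sqcap g);q \leq (f;q)\sqcap(g;q)$ requires the dual, monoid-side lax naturality $\cocopier{X};q \leq (q\per q);\cocopier{Y}$ (\ref{ax:cb:monoid:nat:copy}). Since this dual is not among the six stated axioms, I would derive it by the mate computation $\cocopier{X};q = \cocopier{X};q;\copier{Y};\cocopier{Y} \leq \cocopier{X};\copier{X};(q\per q);\cocopier{Y} \leq (q\per q);\cocopier{Y}$, using \ref{ax:cb:specfrob}, \ref{ax:cb:comonoid:nat:copy} and the adjointness $\cocopier{X};\copier{X}\leq\id{X\per X}$ (\ref{ax:cb:copieradj1}); the discharger dual of $\discharger{}$-naturality is obtained symmetrically. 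This mate argument is the one place where the reasoning is not a direct rewrite but a genuine use of the adjunction $(\copier{},\discharger{})\dashv(\cocopier{},\codischarger{})$, and I expect it to be the main obstacle; once these two dual inequalities are isolated they can be reused throughout and the remaining verifications become routine.

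Finally, for the dagger laws (bottom row), I would observe that \ref{ax:cb:frob}, \ref{ax:cb:specfrob} and the adjointness axioms make $(\copier{X},\discharger{X},\cocopier{X},\codischarger{X})$ a special commutative Frobenius algebra, hence endow every object with a self-dual compact structure whose cups $\codischarger{X};\copier{X}$ and caps $\cocopier{X};\discharger{X}$ satisfy the snake equations, and $\op{f}$ is precisely the transpose of $f$ along this structure. Involutivity $\op{(\op{f})} = f$ and contravariance $\op{(f;g)} = \op{g};\op{f}$ are then the standard bending arguments, $\op{\id{X}} = \id{X}$ is immediate, and both $\op{\top}=\top$ and $\op{(f\sqcap g)} = \op{f}\sqcap\op{g}$ follow because the transpose exchanges $\copier{}\leftrightarrow\cocopier{}$ and $\discharger{}\leftrightarrow\codischarger{}$; preservation of $\per$, and hence monoidality of $\op{(\cdot)}$, comes from the coherence equations of \Cref{def:cartesian bicategory}.(3). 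Being identity-on-objects and involutive, $\op{(\cdot)}$ is the claimed isomorphism $\Cat{C} \to \opcat{\Cat{C}}$.
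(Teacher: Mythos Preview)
Your proposal is correct. The paper itself does not give a direct proof: it simply cites Theorem~2.4 of Carboni and Walters~\cite{Carboni1987}. What you have written is effectively a self-contained reconstruction of the relevant parts of that argument, carried out in the string-diagrammatic style used elsewhere in the paper. The key step you isolate---deriving the monoid-side lax naturality $\cocopier{X};q \leq (q\per q);\cocopier{Y}$ as the mate of the comonoid-side axiom via the adjunction $(\copier{},\discharger{})\dashv(\cocopier{},\codischarger{})$ and the special Frobenius law---is exactly the standard manoeuvre, and once it is in hand the remaining verifications are indeed routine. Your treatment of $\op{(\cdot)}$ via the self-dual compact structure induced by the special commutative Frobenius algebra is likewise the standard route. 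The only difference in scope is that you also prove $\op{\top}=\top$ and $\op{(f\sqcap g)}=\op{f}\sqcap\op{g}$, which are not listed in Table~\ref{table:re:daggerproperties} but are harmless (and true) extras.
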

\begin{proof}%
    See Theorem 2.4 in~\cite{Carboni1987}.
\end{proof}

\begin{rem}\label{rem:dagger diagram}
	Since $\op{(\cdot)}$ is an identity-on-objects monoidal isomorphism, its action on a diagram $f$ can be immediately visualised as the mirror reflection of $f$. For instance, $\op{(\,\CBcopier{X}\,)} = \CBcocopier{X}$. For this reason, from now on, we will depict a morphism $f \colon X \to Y$ as $\stringBox{f}{X}{Y}$, and use $\stringBoxOp{f}{X}{Y}$ as syntactic sugar for $\op{f}$.
\end{rem}

\begin{table}[t]
    \centering
    \[
    % [inline block 2: 10 envs, 9834 chars -> data_tex | \begin{array}{@{}c@{\qquad}c@{\qquad}c@{\qquad}c@{}}         \toprule...]

}

        \end{equation}
    \end{minipage}
    \begin{minipage}{0.46\textwidth}
        \begin{equation}\label{eq:cb:adj:sur}
            \stringId{Y} \leq \stringSpan{f}{Y}
        \end{equation}
    \end{minipage}
    }
    
\noindent The reader can easily check that in $\Rel$ these four notions coincide with the expected ones. For instance, a relation $f\colon X \to Y$ satisfies \eqref{eq:cb:tot} iff the following inclusion holds.
\[ \{(x,\bullet) \mid x\in X\} \subseteq \{ (x,y) \mid (x,y)\in f \} ; \{(y,\bullet) \mid y \in Y\} = \{(x,\bullet) \mid \exists y\in Y. \, (x,y)\in f\}  \]

In any Cartesian bicategory, a \emph{map} is an arrow that is both single-valued and total. Similarly, a \emph{comap} is an arrow that is both injective and surjective. In $\Rel$, maps coincide with functions and comaps are opposites of functions. The following results generalises the well-known fact that a relation is a function iff it has a right adjoint.

\begin{lem}\label{lemma:cb:adjoints}
    In a Cartesian bicategory, an arrow $f \colon X \to Y$ 
    \begin{itemize}
    \item is single-valued iff it satisfies~\eqref{eq:cb:adj:sv}, \\
    \item it is total iff it satisfies~\eqref{eq:cb:adj:tot},  \\
    \item it is injective iff it satisfies~\eqref{eq:cb:adj:inj}, and \\
    \item it is surjective iff it satisfies~\eqref{eq:cb:adj:sur}. 
    \end{itemize}
    In particular, an arrow is a map iff it has a right adjoint, namely $f \dashv \op{f}$; and it is a comap iff it has a left adjoint, namely $\op{f} \dashv f$.
\end{lem}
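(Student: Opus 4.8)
The plan is to prove the four bullet-point equivalences and then read off the final ``in particular'' clause directly from the definition of adjoint arrow. First I would halve the work by exploiting the duality furnished by $\op{(\cdot)}$. By \Cref{prop:dagger} the assignment $\op{(\cdot)}\colon\Cat{C}\to\opcat{\Cat{C}}$ is an involutive, identity-on-objects, monoidal isomorphism reversing composition, and it is monotone because $\op f$ is built from $f$ by composing and tensoring with fixed arrows (and $;$, $\per$ are monotone). Since $\op{(\cdot)}$ exchanges $\copier{X}$ with $\cocopier{X}$ and $\discharger{X}$ with $\codischarger{X}$ (a short consequence of the special Frobenius structure), it carries the defining inequality of injectivity \eqref{eq:cb:inj} (resp. surjectivity \eqref{eq:cb:sur}) for $f$ to that of single-valuedness \eqref{eq:cb:sv} (resp. totality \eqref{eq:cb:tot}) for $\op f$. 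Hence injectivity of $f$ is equivalent to single-valuedness of $\op f$, which by the first bullet applied to $\op f$ is equivalent to $\op{(\op f)};\op f\le\id{X}$, i.e. $f;\op f\le\id{X}$, which is \eqref{eq:cb:adj:inj}; the surjective case is analogous. Thus it suffices to establish the single-valued and total equivalences.

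For these I would use the compact closed (self-dual) structure carried by the special Frobenius algebra $(\copier{X},\discharger{X},\cocopier{X},\codischarger{X})$ of \Cref{def:cartesian bicategory}.(5). Writing $\eta_X\defeq\codischarger{X};\copier{X}$ and $\epsilon_X\defeq\cocopier{X};\discharger{X}$, the special law together with the (co)unit laws yields the snake equations, and $\op f$ of \eqref{eq:cb:covolution} is exactly the transpose of $f$ along $\eta$ and $\epsilon$. The forward implication for single-valuedness is then a short computation: pre-composing \eqref{eq:cb:sv} with $\codischarger{X}$ gives $\eta_X;(f\per f)\le\codischarger{X};f;\copier{Y}$, and folding both sides with the operation $g\mapsto(\id{Y}\per g);(\epsilon_Y\per\id{Y})$ turns the left-hand side into $\op f;f$ and collapses the right-hand side, via the generalised Frobenius identity and the counit law, to $(\id{Y}\per(\codischarger{X};f));\cocopier{Y}$, which is below $\id{Y}$ because $\codischarger{X};f\le\codischarger{Y}$ (itself the $\op{(\cdot)}$-image of the lax law $f;\discharger{Y}\le\discharger{X}$). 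This yields \eqref{eq:cb:adj:sv}.

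The remaining implications I would obtain with the same toolbox. For totality the reverse implication is clean: post-composing \eqref{eq:cb:adj:tot} with $\discharger{X}$ and using the lax law $\op f;\discharger{X}\le\discharger{Y}$ (\Cref{def:cartesian bicategory}.(4) applied to $\op f$) gives $\discharger{X}\le f;\op f;\discharger{X}\le f;\discharger{Y}$, i.e. \eqref{eq:cb:tot}; the forward implication starts from the comonoid unit law $\id{X}=\copier{X};(\discharger{X}\per\id{X})$, replaces $\discharger{X}$ using totality, and bounds the result by $f;\op f$ through a Frobenius manipulation. The reverse implication for single-valuedness is dual, combining the hypothesis $\op f;f\le\id{Y}$ with the always-valid lax inequality $f;\copier{Y}\le\copier{X};(f\per f)$ (\Cref{def:cartesian bicategory}.(4)). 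Once all four bullets hold, the ``in particular'' clause is immediate: a map is single valued and total, hence satisfies $\op f;f\le\id{Y}$ and $\id{X}\le f;\op f$, which are exactly the two inequalities defining $f\dashv\op f$; dually a comap is injective and surjective, i.e. $f;\op f\le\id{X}$ and $\id{Y}\le\op f;f$, the defining inequalities of $\op f\dashv f$.

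The main obstacle is the bending step. One must check that the diagrammatic foldings land \emph{exactly} on the identities $\id{X}$, $\id{Y}$ rather than on some laxly related arrow; this is precisely where the \emph{special} Frobenius law $\copier{X};\cocopier{X}=\id{X}$ is indispensable, since ordinary Frobenius would only deliver inequalities. A second delicate point is that the transformations used are monotone but generally \emph{not} order-isomorphisms, so the two directions of each ``iff'' must be argued separately — the forward ones via the $\eta/\epsilon$-folding above and the reverse ones via the lax naturality of the (co)monoid structure.
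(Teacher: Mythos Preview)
The paper does not prove this lemma itself; it simply cites Lemma~4.4 of~\cite{Bonchi2017c}. So there is no ``paper's approach'' to compare against, and your direct argument via the Frobenius/compact-closed structure is entirely appropriate. The duality reduction (bullets 3 and 4 from bullets 1 and 2 via $\op{(\cdot)}$), the forward direction for single-valuedness, and the reverse direction for totality are all correct as you describe them; the minor slip in justifying $\codischarger{X};f\le\codischarger{Y}$ (it is the $\op{(\cdot)}$-image of the lax law \emph{applied to $\op f$}, not to $f$) is easily fixed.

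The one place where your sketch needs tightening is the reverse implication for single-valuedness. You write that it follows by combining the hypothesis $\op f;f\le\id{Y}$ with the comonoid lax inequality $f;\copier{Y}\le\copier{X};(f\per f)$, but that inequality points the wrong way for the goal $\copier{X};(f\per f)\le f;\copier{Y}$. What actually works is a Frobenius rewriting together with the \emph{monoid} lax inequality $\cocopier{X};f\le(f\per f);\cocopier{Y}$ (the $\op{(\cdot)}$-image of the comonoid one). Concretely: decompose $\copier{X}=(\eta_X\per\id{X});(\id{X}\per\cocopier{X})$, use the compact-closed identity $\eta_X;(f\per\id{X})=\eta_Y;(\id{Y}\per\op f)$ to expose $\op f$, then apply $\cocopier{X};f\le(f\per f);\cocopier{Y}$ followed by the hypothesis $\op f;f\le\id{Y}$, and finally recompose via $(\eta_Y\per\id{Y});(\id{Y}\per\cocopier{Y})=\copier{Y}$. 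A symmetric manipulation handles the forward direction for totality, replacing your ``Frobenius manipulation'' placeholder. With these two steps filled in, your proof is complete.
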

\begin{proof}%
    See Lemma 4.4 in~\cite{Bonchi2017c}.
\end{proof}

Although it will not be used in what follows, it is worth recalling that every Cartesian bicategory is self-dual compact closed, and hence \emph{traced}. 
For any arrow \( f \colon S \per X \to S \per Y \), its trace is defined as
\[
\trace_{S}(f) \defeq \begin{tikzpicture}
	\begin{pgfonlayer}{nodelayer}
		\node [style=label] (105) at (-3.5, -0.3) {$X$};
		\node [style=none] (117) at (-3, -0.3) {};
		\node [style=label] (120) at (3.5, -0.3) {$Y$};
		\node [style=none] (125) at (3, -0.3) {};
		\node [style=stringbox, scale=1.2] (129) at (0, 0.025) {$f$};
		\node [style=black] (130) at (2.25, 1.15) {};
		\node [style=none] (131) at (1, 0.275) {};
		\node [style=none] (132) at (1, 2.025) {};
		\node [style=black] (133) at (-2.25, 1.15) {};
		\node [style=none] (134) at (-1, 0.275) {};
		\node [style=none] (135) at (-1, 2.025) {};
		\node [style=black] (136) at (3, 1.15) {};
		\node [style=black] (137) at (-3, 1.15) {};
	\end{pgfonlayer}
	\begin{pgfonlayer}{edgelayer}
		\draw [bend left] (132.center) to (130);
		\draw [bend left] (130) to (131.center);
		\draw [bend right] (135.center) to (133);
		\draw [bend right] (133) to (134.center);
		\draw (137) to (133);
		\draw (136) to (130);
		\draw (117.center) to (125.center);
		\draw (134.center) to (131.center);
		\draw (132.center) to (135.center);
	\end{pgfonlayer}
\end{tikzpicture}
\]
In the case of \( (\Rel, \per, \uno) \), this specializes to
\[
\trace_{S}(f) = \{(x,y) \mid \exists s \in S.\, ((s,x),(s,y)) \in f\}.
\]
In the  Section \ref{sec:kleene}, we shall describe another trace on the monoidal category \( (\Rel, \piu, \zero) \), 
which provides a categorical account of the operation \( \kstar{(\cdot)} \) from~\eqref{eq:calculusofrelation3}.

\subsection{Coreflexives in Cartesian Bicategories}
Before addressing $(\Rel,\piu,\zero)$ and Kleene bicategories, we recall the theory of coreflexives in Cartesian bicategories, which will subsequently play a role in the treatment of \emph{guards} (or \emph{tests}) in imperative programs.
The results presented in this section are largely part of the folklore, although we are not aware of explicit references.

At the beginning of this section, the notions of single-valued, total, injective, and surjective relations have been extended to the arrows of an arbitrary Cartesian bicategory.
In the same spirit, we now generalise the usual relational properties of reflexivity, transitivity, and symmetry: an arrow $f \colon X \to X$ is said to be \emph{reflexive} iff it satisfies~\eqref{eq:cb:ref} below, \emph{transitive} iff it satisfies~\eqref{eq:cb:trans} and \emph{symmetric} iff it satisfies~\eqref{eq:cb:sym}.
\[\text{\begin{minipage}{0.33\textwidth}
        \begin{equation}\label{eq:cb:ref}\tag{REF}
        \id{X}\leq f
        \end{equation}
    \end{minipage}
    \begin{minipage}{0.33\textwidth}
        \begin{equation}\label{eq:cb:trans}\tag{TRN}
        f;f\leq f
        \end{equation}
    \end{minipage}
    \begin{minipage}{0.33\textwidth}
        \begin{equation}\label{eq:cb:sym}\tag{SYM}
        \op{f}\leq f
        \end{equation}
    \end{minipage}}\]
Note that, since $\op{(\cdot)}$ is involutive,~\eqref{eq:cb:sym} can be strenghtened to an equality: $f = \op{(\op{f})} \leq \op{f}$.

 As suggested by the name, \emph{co}reflexives are dual to reflexives, i.e., \emph{coreflexives} are arrows $f\colon X \to X$ such that 
\begin{equation}\label{def:coreflexive}\tag{COR}
    f \leq \id{X}.
\end{equation}

\begin{lem}\label{lemma:coreflexive properties} 
    In any Cartesian bicategory, the following hold for all coreflexives $f,g \colon X \to X$:
    \begin{multicols}{2}
    \begin{enumerate}
        \item $
    \InputIfFileExists{coreflexive/copyR.tikz}{}{\input{./tikz/coreflexive/copyR.tikz}}
 = 
    \InputIfFileExists{coreflexive/Rcopy.tikz}{}{\input{./tikz/coreflexive/Rcopy.tikz}}
$
        \item $
    % [inline block 3: 18 envs, 15522 chars -> data_tex | \begin{tikzpicture}[scale=1.5] 	\begin{pgfonlayer}{nodelayer}...]

}
 \!\!\stackrel{\eqref{def:coreflexive}}{\leq}\!\! 
    }
$. Thus, by \Cref{lemma:cb:adjoints},  $f$ is single-valued.
        \item $
    }
 \stackrel{(4)}{\leq} 
    }
 \!\!\stackrel{\eqref{def:coreflexive}}{\leq} \!\! 
    }
$. Thus, by \Cref{lemma:cb:adjoints}, $f$ is injective. \qedhere
\end{enumerate}
\end{proof}
\begin{prop}\label{prop:coriff}
An arrow is coreflexive iff it is transitive, symmetric, and single-valued.
\end{prop}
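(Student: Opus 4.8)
The statement is an ``iff'', so I would prove the two implications separately, and the work is entirely in one direction. The forward implication --- a coreflexive is transitive, symmetric and single valued --- is already recorded: it is exactly items (3), (4) and (5) of \Cref{lemma:coreflexive properties}. So the whole task is the converse: assuming $f\colon X\to X$ satisfies \eqref{eq:cb:trans}, \eqref{eq:cb:sym} and \eqref{eq:cb:sv}, derive \eqref{def:coreflexive}, i.e.\ $f\leq \id{X}$.

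For the converse, the first move is to upgrade symmetry to an equality. Since $\op{(\cdot)}$ is an order-preserving identity-on-objects isomorphism $\Cat{C}\to\opcat{\Cat{C}}$ (\Cref{prop:dagger}), applying it to the symmetry inequality $\op{f}\leq f$ gives $f=\op{\op f}\leq \op f$, hence $f=\op f$. The plan is then to prove idempotency $f\leq f;f$ from symmetry and transitivity, and to close the argument with a single use of single-valuedness in the form $\op f;f\leq \id{X}$ (which is equivalent to \eqref{eq:cb:sv} by \Cref{lemma:cb:adjoints}, cf.\ \eqref{eq:cb:adj:sv}). Concretely I would run the chain
\[
f \;\leq\; f;\op{f};f \;\leq\; f;f;f \;\leq\; f;f \;=\; \op{f};f \;\leq\; \id{X},
\]
where the second inequality is monotonicity of $;$ applied to $\op f\leq f$ \eqref{eq:cb:sym}, the third is $f;f\leq f$ \eqref{eq:cb:trans} composed on the right with $f$, the equality is $f=\op f$, and the last step is single-valuedness via \Cref{lemma:cb:adjoints}. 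This yields $f\leq\id{X}$, as required, and the argument genuinely uses all three hypotheses (dropping any one of them fails already in $\Rel$).

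The one ingredient not yet available in the excerpt, and the main technical point, is the first inequality: the \emph{snake} (or modular) law $f\leq f;\op f;f$, which holds for every arrow of a Cartesian bicategory. The cleanest route is to note it is an instance of the modular law of allegories, $(f;g)\sqcap h\leq f;(g\sqcap(\op f;h))$ with $g=\id{}$ and $h=f$, since then $f=(f;\id{})\sqcap f\leq f;(\id{}\sqcap \op f;f)\leq f;\op f;f$; the modular law itself is derivable from the special Frobenius axiom \eqref{ax:cb:specfrob} and the adjunction axioms \eqref{ax:cb:copieradj1}--\eqref{ax:cb:copieradj2}. I expect establishing this snake inequality (rather than the bookkeeping of the chain above) to be the only real obstacle; everything else is monotonicity of composition together with the already-stated characterisations of single-valuedness.
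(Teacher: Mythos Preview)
Your proposal is correct and follows essentially the same route as the paper: the forward direction is \Cref{lemma:coreflexive properties}(3)--(5), and for the converse both you and the paper run the chain $f \leq f;\op f;f \leq f;f;f \leq f;f \leq \op f;f \leq \id{X}$, using symmetry, transitivity, and \eqref{eq:cb:adj:sv}. The only difference is in justifying the snake inequality $f \leq f;\op f;f$: the paper gives a direct diagrammatic derivation from \eqref{ax:cb:frob} and lax naturality of the copier, whereas you route through the allegorical modular law; both are standard and amount to the same Frobenius manipulation.
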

\begin{proof}
One direction follows immediately from points (3), (4) and (5) of Lemma \ref{lemma:coreflexive properties}. For the other direction we first prove
\begin{equation}\label{eq:op in the middle}
    f \leq f ; \op{f} ; f
\end{equation}
as follows
\[

    }

\!\!\stackrel{\substack{\eqref{ax:cb:monoid:unit}\\ \eqref{ax:cb:comonoid:unit}}}{=}\!\!
% [inline block 4: 9 envs, 9097 chars -> data_tex | \begin{tikzpicture} 	\begin{pgfonlayer}{nodelayer}...]

}
 \qquad \text{ and }\qquad c(\!
    }
\,) \defeq 
    \InputIfFileExists{coreflexive/7/Cg.tikz}{}{\input{./tikz/coreflexive/7/Cg.tikz}}
.  
\end{equation}

In $\Rel$, $i(
    \InputIfFileExists{coreflexive/7/f.tikz}{}{\input{./tikz/coreflexive/7/f.tikz}}
)$ is the relation $\{(y, \bullet) \mid \exists x\in X. \, (x,y)\in f\}$, representing the image of $f$;  while $c(\!
    }
\;)$ is the relation $\{(x,x)\mid (x,\bullet) \in g\}$. Observe that $c(\! 
    }
 \,)$ is a coreflexive:
        \[ c(\! 
    }
 \,) = 
    \InputIfFileExists{coreflexive/7/Cg.tikz}{}{\input{./tikz/coreflexive/7/Cg.tikz}}
 \stackrel{\eqref{ax:cb:comonoid:nat:discard}}{\leq} 
    \InputIfFileExists{coreflexive/7/step1.tikz}{}{\input{./tikz/coreflexive/7/step1.tikz}}
 \stackrel{\eqref{ax:cb:comonoid:unit}}{=} 
    }
.  \]
When $i$ is restricted to coreflexives, $i$ and $c$ are inverse to each other.

\begin{prop}\label{proop:bijcor}
Coreflexives $X \to X$ are in bijective correspondence with morphisms $X \to \uno$. 
Moreover, for all arrows $f \colon X \to \uno$ and coreflexives $g \colon X \to X$, $g ; f = f \sqcap i(g)$. %
\end{prop}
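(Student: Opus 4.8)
The plan is to prove the two assertions of the statement in turn, reducing both of them to a single computational identity together with the two verifications that $i$ and $c$ are mutually inverse on coreflexives. Throughout I would work strictly (so $X \per \uno = X$), use the coherence equations $\copier{\uno} = \cocopier{\uno} = \id{\uno}$, and invoke from \Cref{lemma:coreflexive properties} that every coreflexive $r$ is symmetric (so $\op{r}=r$, point (4)) and satisfies the absorption law (point (1)), which together with cocommutativity of $\copier{}$ gives $\copier{X};(\id{X}\per r) = r;\copier{X}$. I would also record at the outset that, by the definition \eqref{eq:defc}, $i(f)=\op{f};\discharger{X}$, so that for a coreflexive $r$ one has $i(r) = r;\discharger{X}$.

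The computational heart is the identity $c(h);f = f \sqcap h$ for all $h,f\colon X \to \uno$. I would prove it by expanding $c(h)=\copier{X};(\id{X}\per h)$ from \eqref{eq:defc}, using bifunctoriality of $\per$ to get $(\id{X}\per h);f = f \per h$, whence $c(h);f = \copier{X};(f\per h)$, which is exactly $f \sqcap h$ once $\cocopier{\uno}=\id{\uno}$ is used in the definition \eqref{eq:cb:covolution}. Granting the bijection below, this identity settles the second claim immediately: for a coreflexive $g$ one has $g = c(i(g))$, hence $g;f = c(i(g));f = f \sqcap i(g)$.

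It then remains to verify that $i$ and $c$ restrict to mutually inverse bijections between coreflexives $X\to X$ and predicates $X\to\uno$; recall that $c(h)$ is always a coreflexive, as shown just before the statement. For $i\circ c = \id$ I would use that $c(g)$ is symmetric to rewrite $i(c(g)) = c(g);\discharger{X}$, turn this into $\copier{X};(\discharger{X}\per g)$ by bifunctoriality, and collapse it to $g$ via the counit law \eqref{ax:cb:comonoid:unit}. For $c\circ i = \id$, given a coreflexive $f$, symmetry yields $i(f)=f;\discharger{X}$, and I would compute
\[ c(i(f)) = \copier{X};(\id{X}\per f);(\id{X}\per\discharger{X}) = f;\copier{X};(\id{X}\per\discharger{X}) = f;\id{X} = f, \]
applying the absorption law (1) in the middle step and the counit law \eqref{ax:cb:comonoid:unit} at the end.

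All the steps are short diagrammatic manipulations, so the only real difficulty is bookkeeping. The main obstacle I anticipate is handling the map $i$ correctly: since $i$ is the image map $\op{(\cdot)};\discharger{X}$, the reductions genuinely need the symmetry of coreflexives (\Cref{lemma:coreflexive properties}(4)) to discard the opposite, and one must track which wire is being discharged when identifying $X\per\uno$ with $X$. Once symmetry is used to normalise $i$ on coreflexives to the form $(\cdot);\discharger{X}$, the remaining equalities are direct applications of the counit law and of the absorption law.
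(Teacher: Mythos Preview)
Your proposal is correct and follows essentially the same approach as the paper: both arguments hinge on the symmetry of coreflexives (\Cref{lemma:coreflexive properties}(4)), the absorption law (\Cref{lemma:coreflexive properties}(1)), and the counit law. The only organizational differences are that for $i\circ c = \id$ you invoke symmetry of the coreflexive $c(g)$ to avoid unfolding $\op{(\cdot)}$ via Frobenius (the paper does the Frobenius computation directly), and for the ``moreover'' clause you factor through the clean intermediate identity $c(h);f = f \sqcap h$ rather than computing $f \sqcap i(g)$ head-on; both reorganizations are mild simplifications of the paper's argument rather than genuinely different routes.
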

\begin{proof}
Observe that  $i$ and $c$ are inverse to each other: for all arrows $g\colon X \to \uno$
        \[ i(c(\! 
    }
 \,)) = i(
    \InputIfFileExists{coreflexive/7/Cg.tikz}{}{\input{./tikz/coreflexive/7/Cg.tikz}}
) = 
    \InputIfFileExists{coreflexive/7/step2.tikz}{}{\input{./tikz/coreflexive/7/step2.tikz}}
 
        \stackrel{\eqref{eq:cb:covolution}}{=}
        % [inline block 5: 2 envs, 2033 chars -> data_tex | \begin{tikzpicture}[scale=1.5] 	\begin{pgfonlayer}{nodelayer}...]

        \stackrel{\substack{\eqref{ax:cb:monoid:unit} \\ \eqref{ax:cb:comonoid:unit}}}{=} 
    }
 \]
        where the third step comes from the defintion of $i$ that reflects the whole diagram via $\op{(\cdot)}$ (see Remark~\ref{rem:dagger diagram}) and post-composes with $\discharger{X}$.
		For all coreflexives $f\colon X \to X$,
        \[ c(i(\! 
    \InputIfFileExists{coreflexive/7/f.tikz}{}{\input{./tikz/coreflexive/7/f.tikz}}
 \!)) = c(\! 
    }
 \,) = 
    \InputIfFileExists{coreflexive/7/step3.tikz}{}{\input{./tikz/coreflexive/7/step3.tikz}}
 \!\!\stackrel{\eqref{eq:cb:sym}}{=}\!\!
        % [inline block 6: 6 envs, 4919 chars -> data_tex | \begin{tikzpicture}[scale=1.5] 	\begin{pgfonlayer}{nodelayer}...]
\,.\] %
Observe that in the last two derivations we used \eqref{eq:cb:sym} as an equality. This is justified: by the properties of $\op{(\cdot)}$ listed in Table~\ref{table:re:daggerproperties}, it follows that if $f$ is symmetric, then $f \leq \op{f}$.
\end{proof}

\begin{rem}\label{rem:notation-coreflexives}
    From now on we will use $\stringCorefl{f}{X}$ to depict coreflexive arrows. This graphical representation is, in some sense, orientation agnostic, and it reflects the fact that coreflexives are symmetric, as stated by \Cref{lemma:coreflexive properties}.(4).
\end{rem}

\section{Kleene Bicategories}\label{sec:kleene}

Having recalled Cartesian bicategories, we now introduce our next ingredient: 
Kleene bicategories. The name reflects the fact that their axioms capture the 
complete axiomatisation of Kleene algebras given in \cite{Kozen94acompleteness}. We commence with the standard notion of a category with finite biproducts.

\begin{defi}\label{def:fb}
A \emph{finite biproduct} (shortly, fb) \emph{category} %
is a %
symmetric monoidal category $(\Cat{C}, \piu, \zero)$ where, for every object $X$,
there are morphisms $\codiag{X} \colon X \piu X \to X, \cobang{X} \colon \zero \to X, \diag{X} \colon X \to X \piu X$ and $\bang{X} \colon X \to \zero$, such that:
\begin{enumerate}
  \item $(\codiag{X}, \cobang{X})$ is a commutative monoid:
    \[ (\codiag{X} \piu \id{X}) ; \codiag{X} = (\id{X} \piu \codiag{X}) ; \codiag{X}, \qquad (\cobang{X} \piu \id{X}) ; \codiag{X} = \id{X} \quad\text{and}\quad \symm{X}{X} ; \codiag{X} = \codiag{X}; \]
    \item $(\diag{X}, \bang{X})$ is a cocommutative comonoid:
    \[ \diag{X} ; (\diag{X} \piu \id{X}) = \diag{X} ; (\id{X} \piu \diag{X}), \qquad \diag{X} ; (\bang{X} \piu \id{X}) = \id{X} \quad\text{and}\quad  \diag{X} ; \symm{X}{X} = \diag{X}; \]
    \item $(\codiag{X}, \cobang{X})$ and $(\diag{X}, \bang{X})$ are coherent with the monoidal structure:
    \[
    \begin{array}{l@{\quad}l@{\quad}l@{\quad}l}
        \codiag{\zero} = \id{\zero} & \codiag{X \piu Y} = (\id{X} \piu \symm{Y}{X} \piu \id{Y}) ; (\codiag{X} \piu \codiag{Y}) & \cobang{\zero} = \id{\zero} & \cobang{X \piu Y} = \cobang{X} \piu \cobang{Y}
        \\
        \diag{\zero} = \id{\zero} & \diag{X \piu Y} =  (\diag{X} \piu \diag{Y}) ; (\id{X} \piu \symm{X}{Y} \piu \id{Y}) & \bang{\zero} = \id{\zero} & \bang{X \piu Y} = \bang{X} \piu \bang{Y};
    \end{array}
    \]
  \item arrows $f \colon X \to Y$ are  both monoid and comonoid morphisms: 
  \[f;\diag{Y}= \diag{X};(f \piu f), \qquad f;\bang{Y}= \bang{X}, \qquad \codiag{Y}; f= (f \piu f); \codiag{X} \quad\text{and}\quad \cobang{X} ; f= \cobang{Y}. \]
\end{enumerate}
A \emph{morphism of finite biproduct categories} is a symmetric monoidal functor preserving monoids and comonoids. We denote by $\Cat{FBCat}$ the category of finite biproduct categories and their morphisms.
\end{defi}
Note that the first three conditions coincide with those in the definition of a Cartesian bicategory: they ensure that every object carries coherent monoid and comonoid structures. The fourth condition requires these (co)monoids to be natural. By Fox's theorem~\cite{fox1976coalgebras}, it then follows that $\zero$ is both an initial and a final object, while $\piu$ serves simultaneously as the categorical product and coproduct---that is, as a \emph{biproduct}. For further details, see~\cite{bonchi2023deconstructing}.

For any objects $X,Y$ and arrows $f,g \colon X \to Y$, the \emph{convolution monoid} on the hom-set $\Cat{C}[X,Y]$ is given below. %
\begin{equation}\label{eq:covolution}
f \sqcup g \defeq \stringConvolution{f}{g}{X}{Y}
\qquad
\bot \defeq \stringBottom{X}{Y}
\end{equation}
It is straightforward to verify that, under these operations, every fb category
becomes enriched over $\Cat{CMon}$, the category of commutative monoids: all
equations in \eqref{eq:cmon laws} and \eqref{eq:cmon enrichment} hold, with the
sole exception of idempotency, i.e.,\ $f \sqcup f = f$.

When $\sqcup$ \emph{is} idempotent, an fb category is not only
$\Cat{CMon}$-enriched but also \emph{poset}-enriched. In this case, the pair
$(\diag{X},\bang{X})$ forms a right adjoint to $(\codiag{X},\cobang{X})$:
\begin{equation}\label{eq:adjointnesfib}
\diag{X}\,;\codiag{X} \leq \id{X},
\qquad
\id{X \per X} \leq \codiag{X}\,;\diag{X},
\qquad
\bang{X}\,;\cobang{X} \leq \id{X},
\qquad
\id{\uno} \leq \cobang{X}\,;\bang{X}.
\end{equation}
Conversely, if an fb category is poset-enriched and the inequalities in
\eqref{eq:adjointnesfib} hold, then convolution is necessarily idempotent. This
is established by the following result.

\begin{lem}\label{lem:idempfib}
Let $(\Cat{C}, \piu, \zero)$ be an fb-category. The following are equivalent:
\begin{enumerate}
\item $(\Cat{C}, \piu, \zero)$ is a poset enriched symmetric monoidal category and the laws in \eqref{eq:adjointnesfib} hold;
\item $\sqcup$ is idempotent.
\end{enumerate}
\end{lem}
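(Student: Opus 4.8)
The plan is to prove the two implications separately, writing throughout $f \sqcup g = \diag{X};(f \piu g);\codiag{Y}$ and $\bot = \bang{X};\cobang{Y}$ for the convolution monoid of \eqref{eq:covolution}, and freely using that each hom-set is a commutative monoid under $(\sqcup,\bot)$ together with the bilinearity of composition coming from the $\Cat{CMon}$-enrichment.

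For $(1)\Rightarrow(2)$ I would assume poset-enrichment together with \eqref{eq:adjointnesfib}. First record that, by naturality of the comultiplication (\Cref{def:fb}.(4)), $f \sqcup f = \diag{X};(f \piu f);\codiag{Y} = f;\diag{Y};\codiag{Y}$, so monotonicity of $;$ and the inequality $\diag{Y};\codiag{Y} \leq \id{Y}$ give $f \sqcup f \leq f$. The reverse inequality $f \leq f \sqcup f$ is the delicate point, and I would obtain it by showing that $\bot$ is the least element of every hom-poset. Indeed, for any $g \colon X \to Y$, naturality of the unit yields $\cobang{X};g = \cobang{Y}$, hence $\bang{X};\cobang{X};g = \bang{X};\cobang{Y} = \bot$; on the other hand $\bang{X};\cobang{X} \leq \id{X}$ and monotonicity give $\bang{X};\cobang{X};g \leq g$, so $\bot \leq g$. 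Since $\sqcup$ is assembled from the monotone operations $;$ and $\piu$, it is monotone, and therefore $f = f \sqcup \bot \leq f \sqcup f$. Combining the two inequalities yields $f \sqcup f = f$.

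For $(2)\Rightarrow(1)$ I would assume $\sqcup$ idempotent, so that each hom-set is a bounded semilattice, and equip $\Cat{C}[X,Y]$ with the induced order $f \leq g \iff f \sqcup g = g$, for which $\bot$ is the bottom. Monotonicity of $;$ is immediate from bilinearity \eqref{eq:cmon enrichment}, and monotonicity of $\piu$ from the fact that the biproduct is an additive bifunctor over the convolution monoid; this establishes poset-enrichment. Three of the inequalities in \eqref{eq:adjointnesfib} are then almost immediate: $\diag{X};\codiag{X} = \id{X} \sqcup \id{X} = \id{X}$ by idempotency; $\bang{X};\cobang{X} = \bot \leq \id{X}$ because $\bot$ is the bottom; and $\cobang{X};\bang{X} = \id{\zero}$ since $\zero$ is a zero object.

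The remaining inequality $\id{X \piu X} \leq \codiag{X};\diag{X}$ is the main computation. Here I would pass to the biproduct injections $\inj{i}$ and projections $\proj{i}$, for which $\inj{i};\proj{j}$ equals $\id{X}$ when $i=j$ and $\bot$ otherwise, and $(\proj{1};\inj{1}) \sqcup (\proj{2};\inj{2}) = \id{X \piu X}$. Writing $\codiag{X} = \proj{1} \sqcup \proj{2}$ and $\diag{X} = \inj{1} \sqcup \inj{2}$ and expanding by bilinearity gives $\codiag{X};\diag{X} = \id{X \piu X} \sqcup (\proj{1};\inj{2}) \sqcup (\proj{2};\inj{1})$, whence $\id{X \piu X} \leq \codiag{X};\diag{X}$ since $a \leq a \sqcup b$ in any semilattice. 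The only genuine obstacles are this biproduct identity and, dually, the observation in the first implication that idempotency hinges on $\bot$ being the least element; both turn on the interaction between the biproduct structure and the convolution monoid rather than on any lengthy diagrammatic calculation.
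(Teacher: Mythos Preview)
Your proof is correct and follows essentially the same approach as the paper. The only noteworthy difference is in the $(1)\Rightarrow(2)$ direction: the paper first invokes an auxiliary lemma (\Cref{lemma:order-adjointness}) to show that the given poset enrichment must coincide with the $\sqcup$-induced order, from which $\bot \leq f$ follows because $\bot \sqcup f = f$; you instead deduce $\bot \leq f$ directly from $\bang{X};\cobang{X} \leq \id{X}$ and naturality of $\cobang{}$, which is slightly more economical. In the $(2)\Rightarrow(1)$ direction the paper simply says ``straightforward computations prove the four laws in \eqref{eq:adjointnesfib}'', and your biproduct expansion of $\codiag{X};\diag{X}$ is exactly such a computation.
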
 
The above lemma justifies the following definition.

\begin{figure}[t]
    \centering
    \mylabel{ax:comonoid:nat:copy}{$\diag{}$-nat}
    \mylabel{ax:comonoid:nat:discard}{$\bang{}$-nat}
    \mylabel{ax:monoid:nat:copy}{$\codiag{}$-nat}
    \mylabel{ax:monoid:nat:discard}{$\cobang{}$-nat}
    \mylabel{ax:adjbiprod:1}{$\codiag{}\diag{}$}
    \mylabel{ax:adjbiprod:2}{$\cobang{}\bang{}$}
    \mylabel{ax:adjbiprod:3}{$\diag{}\codiag{}$}
    \mylabel{ax:adjbiprod:4}{$\bang{}\cobang{}$}
    \[
    % [inline block 7: 1 envs, 2313 chars -> data_tex | \begin{array}{c@{}c@{}c @{\qquad} c@{}c@{}c @{\qquad} c@{}c@{}c}       \multicolumn{9}{c}{...]

      }
    \end{array}
    \]
    \caption{String diagrams for the axioms of fb categories with idempotent convolution: naturality in \Cref{def:fb}.(4) and adjointness in \eqref{eq:adjointnesfib}.}
    \label{fig:ax-fb-string}
  \end{figure}

\begin{defi}\label{def:biproduct category}
A \emph{finite biproduct category with idempotent convolution} is a poset enriched symmetric monoidal category $(\Cat{C}, \piu, \zero)$ such that:
\begin{enumerate}
  \item $(\Cat{C}, \piu, \zero)$ is a finite biproduct category;
  \item $(\diag{X}, \bang{X})$ is right adjoint to $(\codiag{X}, \cobang{X})$, i.e., the laws in \eqref{eq:adjointnesfib} hold. 
\end{enumerate}
A \emph{morphism of finite biproduct categories with idempotent convolution} is a poset enriched symmetric monoidal functor preserving monoids and comonoids. We denote by $\Cat{FIBCat}$ the category of finite biproduct categories with idempotent convolution and their morphisms.
\end{defi}
The reader can easily check that $(\Rel,\piu,\zero)$ with $\diag{X}$, $\bang{X}$, $\codiag{X}$,  $\cobang{X}$ defined as in \eqref{eq:comonoidsREL} is a finite biproduct category with idempotent convolution and that $\sqcup$ and $\bot$ in \eqref{eq:covolution} coincide with union and empty relation.

\begin{rem}
The axioms of fb categories with idempotent convolution are depicted by means of 
string diagrams in Figures~\ref{fig:(co)monoids string} and~\ref{fig:ax-fb-string}. 
It is worth noting that, although the string diagrams in 
Figure~\ref{fig:(co)monoids string} serve both for Cartesian bicategories and for 
fb--categories, their interpretations in $(\Rel,\per,\uno)$ and in 
$(\Rel,\piu,\zero)$ differ substantially. For example, both $\copier{X}$ and 
$\diag{X}$ are represented as
\[
  \CBcopier{X}
\]
yet the former is understood as a \emph{copier}, receiving some input on the left 
and producing two identical outputs on the right, whereas the latter should be 
understood as a \emph{split}, receiving an input on the left and emitting it on 
either the upper or the lower branch on the right. More generally, when a string 
diagram is interpreted in $(\Rel,\per,\uno)$, information flows through it 
\emph{as a wave}, whereas in $(\Rel,\piu,\zero)$ it behaves \emph{as a particle}. 

It is also worth observing that the diagrams in the last two rows of 
\Cref{fig:ax-cb-string} and~\Cref{fig:ax-fb-string} are identical except for the 
direction of the inequalities. This reflects the fact that 
$(\copier{X}, \discharger{X})$ is left adjoint to $(\cocopier{X}, \codischarger{X})$, 
while $(\diag{X}, \bang{X})$ is right adjoint to $(\codiag{X}, \cobang{X})$.  
This phenomenon is again explained by the wave--versus--particle analogy.  
Consider the following two diagrams and their interpretations in 
$(\Rel,\per,\uno)$ and $(\Rel,\piu,\zero)$:
\[
\small{
  % [inline block 8: 1 envs, 2140 chars -> data_tex | \begin{array}{ccc}     ...]

}
\]

When interpreted in $(\Rel,\per,\uno)$, in the leftmost diagram potentially 
different pieces of information travel simultaneously along the two wires, whereas 
in the rightmost diagram all four ports are forced to carry the same information.  
In contrast, when interpreted in $(\Rel,\piu,\zero)$, in the leftmost diagram the 
information travels either along the upper wire or along the lower one; in the 
rightmost diagram the information may enter either the upper left or the lower 
left port, and in both cases may exit either the upper right or lower right port. In \Cref{sec:kctapes} we introduce a diagrammatic notation that allows us to 
represent, simultaneously and within the same formalism, both 
$(\Rel,\per,\uno)$ and $(\Rel,\piu,\zero)$.
\end{rem}

We can now illustrate two key properties of fb categories with idempotent convolution. First observe that
by \Cref{lem:idempfib}, any fb category with idempotent convolution is enriched over $\Cat{Jsl}$, the category of join-semilattices (i.e., idempotent commutative monoids).
\begin{prop}
In a finite biproduct category with idempotent convolution, the laws \eqref{eq:cmon laws} and \eqref{eq:cmon enrichment} in \Cref{tab:TKA} hold.
\end{prop}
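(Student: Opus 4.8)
The plan is to split the laws collected in \Cref{tab:TKA} into two groups that are established by genuinely different means: the commutative-monoid-enrichment laws, which already hold in an arbitrary fb category, and the idempotency law $f \sqcup f = f$, which is the single extra equation that the idempotent-convolution hypothesis buys us. For the first group I would work only with the finite-biproduct structure, available here by \Cref{def:fb}, and unfold the definitions of \eqref{eq:covolution}, namely $f \sqcup g = \diag{X};(f\piu g);\codiag{Y}$ and $\bot = \bang{X};\cobang{Y}$. The three commutative-monoid laws for $(\sqcup,\bot)$ then reduce to the (co)commutative (co)monoid axioms of \Cref{def:fb}.(1)--(2): commutativity from $\diag{X};\symm{X}{X}=\diag{X}$ and $\symm{X}{X};\codiag{X}=\codiag{X}$ together with naturality of the symmetry, associativity from coassociativity of $\diag{}$ and associativity of $\codiag{}$, and unitality of $\bot$ from the (co)unit laws, the factorisation of $\bot$ through $\zero$ collapsing one summand.

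For the enrichment laws the decisive ingredient is \Cref{def:fb}.(4): every arrow is simultaneously a monoid and a comonoid morphism. Bilinearity on the right, $(f\sqcup g);h=(f;h)\sqcup(g;h)$, I would obtain by pushing $h$ through the merge using $\codiag{Y};h=(h\piu h);\codiag{Z}$ and then applying the interchange law $(f\piu g);(h\piu h)=(f;h)\piu(g;h)$; bilinearity on the left is dual, using that $h$ is a comonoid morphism. The absorption laws $\bot;h=\bot=h;\bot$ follow from $\cobang{}$-naturality ($\cobang{Y};h=\cobang{Z}$) and $\bang{}$-naturality. The analogous distributivity of $\piu$ over $\sqcup$ uses the coherence of the (co)monoids with $\piu$ from \Cref{def:fb}.(3) to rewrite $\diag{X\piu W}$ and $\codiag{Y\piu V}$ as the interleaved tensors of the component (co)diagonals. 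These are precisely the computations summarised in the remark preceding \Cref{lem:idempfib}, so I would carry them out diagrammatically rather than restating them abstractly.

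Finally, idempotency is the one law failing in a general fb category, and this is exactly where the standing hypotheses are used. Since an fb category with idempotent convolution is poset enriched and satisfies the adjointness inequalities \eqref{eq:adjointnesfib}, it meets condition~(1) of \Cref{lem:idempfib}, whence condition~(2) gives $f\sqcup f=f$. Combining the two groups yields every law of \Cref{tab:TKA}. The main obstacle is not conceptual but purely one of bookkeeping: the bilinearity and distributivity computations require careful tracking of symmetry naturality and the interchange law, whereas idempotency itself is immediate from \Cref{lem:idempfib}.
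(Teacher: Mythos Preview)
Your proposal is correct and follows essentially the same approach as the paper: the text preceding the proposition already observes that every fb category is $\Cat{CMon}$-enriched (yielding all of \eqref{eq:cmon laws} and \eqref{eq:cmon enrichment} except idempotency), and then invokes \Cref{lem:idempfib} for $f\sqcup f=f$, which is exactly your decomposition. One minor remark: the paragraph on ``distributivity of $\piu$ over $\sqcup$'' is extraneous, since neither \eqref{eq:cmon laws} nor \eqref{eq:cmon enrichment} involves $\piu$; you may safely drop it.
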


The next  interesting property is the so called \emph{matrix normal form}.
\begin{prop}\label{prop:matrixform}
In any fb category $\Cat{C}$, for all arrows $f\colon S \piu X \to T \piu Y$, it holds that %
\[f = \stringMatrix{f}{S}{T}{X}{Y} \]
where $f_{ST}\colon S \to T$,  $f_{SY}\colon S \to Y$, $f_{XT}\colon X \to T$ and  $f_{XY}\colon X \to Y$ are defined as follows.
\begin{equation}\label{eq:components}
\begin{array}{cc}
f_{ST} \defeq (\id{S} \piu \cobang{X}); f ; (\id{T} \piu \cobang{Y}) &  f_{SY} \defeq (\id{S} \piu \cobang{X}); f ; (\cobang{T} \piu \id{Y} )\\
f_{XT} \defeq (\cobang{S} \piu \id{X} ); f ; (\id{T} \piu \cobang{Y}) &  f_{XY} \defeq(\cobang{S} \piu \id{X} ); f ; (\cobang{T} \piu \id{Y} )
\end{array}
\end{equation}
Moreover, if $\Cat{C}$ has idempotent convolution, for all $f,g\colon S \piu X \to T \piu Y$, 
\[f \leq g \text{ iff }
\begin{array}{cccc}
f_{ST} \leq g_{ST}, &  f_{SY} \leq g_{SY}, &
f_{XT} \leq g_{XT} \text{ and }&  f_{XY} \leq g_{XY}.
\end{array}
\]
\end{prop}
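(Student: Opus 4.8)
The plan is to reduce both claims to the biproduct structure of $\piu$ together with the $\Cat{CMon}$-enrichment of $\Cat{C}$. First I would name the relevant structural maps: on $S\piu X$ the injections $\iota_S\defeq\id{S}\piu\cobang{X}$, $\iota_X\defeq\cobang{S}\piu\id{X}$ and the projections $\pi_S\defeq\id{S}\piu\bang{X}$, $\pi_X\defeq\bang{S}\piu\id{X}$, and analogously $\iota_T,\iota_Y,\pi_T,\pi_Y$ on $T\piu Y$. With this notation the four arrows of \eqref{eq:components} are exactly $f_{AB}=\iota_A ; f ; \pi_B$ for $A\in\{S,X\}$ and $B\in\{T,Y\}$. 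Since the conditions of \Cref{def:fb} make $\piu$ a genuine biproduct (Fox's theorem~\cite{fox1976coalgebras}), the standard completeness relations $\id{S\piu X}=(\pi_S;\iota_S)\sqcup(\pi_X;\iota_X)$ and $\id{T\piu Y}=(\pi_T;\iota_T)\sqcup(\pi_Y;\iota_Y)$ hold, where $\sqcup$ is the convolution of \eqref{eq:covolution}.

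For the normal form I would write $f=\id{S\piu X};f;\id{T\piu Y}$, substitute the two completeness relations, and distribute composition over $\sqcup$ using the $\Cat{CMon}$-enrichment \eqref{eq:cmon enrichment}, which holds in every fb category (idempotency is not needed here). This produces the four-fold convolution $f=\bigsqcup_{A,B}\pi_A;f_{AB};\iota_B$, and it then suffices to observe that this sum is precisely what $\stringMatrix{f}{S}{T}{X}{Y}$ depicts: the input is shared via the comonoid $\diag{}$ and the four outputs are merged via the monoid $\codiag{}$, which is exactly how $\sqcup$ is built in \eqref{eq:covolution}.

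For the order characterisation, the forward implication is immediate and uses only poset enrichment: pre- and post-composition are monotone, so $f\leq g$ yields $f_{AB}=\iota_A;f;\pi_B \leq \iota_A;g;\pi_B=g_{AB}$ for each component. The converse is where idempotent convolution becomes essential. Assuming $f_{AB}\leq g_{AB}$ for all four indices, I would apply the normal form to both $f$ and $g$, deduce $\pi_A;f_{AB};\iota_B \leq \pi_A;g_{AB};\iota_B$ by monotonicity of composition, and then join over $A,B$. Here I rely on the fact that, by \Cref{lem:idempfib}, idempotency upgrades the $\Cat{CMon}$-enrichment to a $\Cat{Jsl}$-enrichment, so that $\sqcup$ is the binary join of the hom-poset and hence monotone; this gives $f=\bigsqcup_{A,B}\pi_A;f_{AB};\iota_B \leq \bigsqcup_{A,B}\pi_A;g_{AB};\iota_B=g$.

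The only genuinely delicate points are establishing the completeness relations and matching the abstract convolution with the drawn matrix; both are routine bookkeeping with the coherence and naturality axioms of \Cref{def:fb} via Fox's theorem, and so I would merely cite them rather than expand them. The conceptual hinge, and the step worth emphasising, is the asymmetric role of idempotency in the second statement: the forward implication is valid in any poset enriched fb category, whereas the converse genuinely requires monotonicity of $\sqcup$ (equivalently, that $\sqcup$ be a join), which is exactly what \Cref{lem:idempfib} secures.
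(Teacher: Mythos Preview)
Your proposal is correct and follows essentially the same approach as the paper: the paper simply cites \cite[Proposition~2.7]{harding2008orthomodularity} for the matrix normal form (which is exactly the completeness-relation-plus-distributivity argument you sketch) and then handles the ordering just as you do, using monotonicity of composition for the forward direction and the normal form for the converse. Your explicit emphasis on idempotency being needed only for the converse direction (via \Cref{lem:idempfib} making $\sqcup$ a join) is a useful clarification that the paper leaves implicit in its one-line ``one can use the [normal] form to deduce immediately that $f\leq g$.''
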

In $(\Rel, \piu, \zero)$, the above result amounts to the well known fact that any relation $f\colon S \piu X \to T \piu Y$ can be decomposed as $f=f_{S,S} \cup f_{S,Y} \cup f_{X,S} \cup f_{X,Y}$ where 
\begin{equation*}
\begin{array}{rclrcl}
f_{S,T} &\defeq&\{(s,t)\mid (\,(s,0), (t,0) \,)\in f \}  &
f_{S,Y} &\defeq&\{(s,y)\mid (\,(s,0), (y,1) \,)\in f \}  \\ 
f_{X,T} &\defeq&\{(x,t)\mid (\,(x,1), (t,0) \,)\in f \}  &
f_{X,Y} &\defeq&\{(x,y)\mid (\,(x,1), (y,1) \,)\in f \}  
\end{array}
\end{equation*}
The above decomposition is also useful to define the monoidal trace in $(\Rel, \piu, \zero)$: given a relation $f\colon S \piu X \to S \piu Y$, its trace $\trace_{S}(f)\colon X \to Y$ is the relation defined as 
\begin{equation}\label{eq:traceREL}
\trace_{S}(f) \defeq ( f_{X,S} ; \kstar{f_{S,S}} ; f_{S,Y} ) \cup f_{X,Y}
\end{equation}
where, like in Section \ref{sec:CR}, $\kstar{(\cdot)}$ provides the reflexive and transitive closure \cite{joyal1996traced,selinger1998note}. Unfortunately, finite biproduct categories with idempotent convolution do not have in general terms enough structure to deal with $\kstar{(\cdot)}$: differently from Cartesian bicategories they are not necessarily traced. Such structure is explicitly added in the next section.

\subsection{Kleene Bicategories are Typed Kleene Algebras}
\begin{table}[t]
  \[
  % [inline block 9: 2 envs, 2933 chars -> data_tex | \begin{tabular}{c}     \toprule...]

	\]
  \caption{String diagrams for the axioms of (typed) Kleene algebras in Table~\ref{tab:TKA}.}\label{fig:string kleene axioms}
\end{figure}

We can now introduce the main structure of this section: Kleene bicategories. These are fb categories with idempotent convolution equipped with a monoidal trace that, intuitively, behaves well w.r.t. the poset enrichment.  

\begin{defi}\label{def:kleenebicategory}
    A \emph{Kleene bicategory} is both a finite biproduct category with idempotent convolution and a poset enriched traced monoidal category such that 
    \begin{enumerate}
    \item for all objects $X$, the trace $\trace_{X}$ satisfies the axiom  $\trace_{X}(\codiag{X};\diag{X}) \leq \id{X}$; %
    \item the trace is posetal uniform: for all $f\colon S\piu X \to S \piu Y$ and $g \colon T\piu X \to T \piu Y$,
    \begin{itemize}
    \item[(AU1)] if $\exists r\colon S \to T$ such that  $f ; (r \piu \id{Y}) \leq (r \piu \id{X}) ; g$, then $\trace_{S}f \leq \trace_{S}g$;
    \item[(AU2)] if $\exists r\colon T \to S$ such that  $(r \piu \id{X})  ; f \leq  g ; (r \piu \id{Y})$, then $\trace_{S}f \leq \trace_{S}g$.
    \end{itemize}
    \end{enumerate}
A \emph{morphism of Kleene bicategories} is a poset enriched symmetric monoidal functor preserving monoids, comonoids and traces. Kleene bicategories and their morphisms form a category, hereafter referred as \(\KBicat\).
\end{defi}
We have already seen that $(\Rel, \piu, \zero)$ is a fb category with idempotent convolution. The reader can use \eqref{eq:traceREL} to check that the three laws above.
The laws in (AU1) and (AU2) are the posetal extension of the uniformity condition \cite{hasegawa2003uniformity} for traces (see also \Cref{def:utr-category} in \Cref{app:stringDiagrams}). To the best of our knowledge, they have never been studied. In constrast, the axiom in (1) already appeared in the literature (see e.g. \cite{lmcs:10963}).
They are illustrated in diagrammatic form in \Cref{fig:ineq-uniformity}.

  \begin{figure}[t]
    \centering
        \mylabel{ax:kb:traceid}{AT1}
    \mylabel{ax:posetunif:1}{AU1}
    \mylabel{ax:posetunif:2}{AU2}
    \[
    \begin{array}{c}
    
    \InputIfFileExists{kb/traceid_lhs.tikz}{}{\input{./tikz/kb/traceid_lhs.tikz}}
 \axsubeq{\ref{ax:kb:traceid}} 
    \InputIfFileExists{kb/traceid_rhs.tikz}{}{\input{./tikz/kb/traceid_rhs.tikz}}
\\[20pt]
      
    \InputIfFileExists{posetunif/AU1_lhs.tikz}{}{\input{./tikz/posetunif/AU1_lhs.tikz}}
 \leq 
    \InputIfFileExists{posetunif/AU1_rhs.tikz}{}{\input{./tikz/posetunif/AU1_rhs.tikz}}
 \stackrel{(\ref{ax:posetunif:1})}{\implies} 
    \InputIfFileExists{posetunif/AU1TR_lhs.tikz}{}{\input{./tikz/posetunif/AU1TR_lhs.tikz}}
 \leq 
    \InputIfFileExists{posetunif/AU1TR_rhs.tikz}{}{\input{./tikz/posetunif/AU1TR_rhs.tikz}}

      \\[20pt]
      
    \InputIfFileExists{posetunif/AU2_lhs.tikz}{}{\input{./tikz/posetunif/AU2_lhs.tikz}}
 \leq 
    \InputIfFileExists{posetunif/AU2_rhs.tikz}{}{\input{./tikz/posetunif/AU2_rhs.tikz}}
 \stackrel{(\ref{ax:posetunif:2})}{\implies} 
    \InputIfFileExists{posetunif/AU2TR_lhs.tikz}{}{\input{./tikz/posetunif/AU2TR_lhs.tikz}}
 \leq 
    \InputIfFileExists{posetunif/AU2TR_rhs.tikz}{}{\input{./tikz/posetunif/AU2TR_rhs.tikz}}

    \end{array}
    \]
    \caption{String diagrams for the axioms of traces in Kleene bicategories.\label{fig:ineq-uniformity}}
  \end{figure}

Like in any finite biproduct category with trace (see e.g. \cite{cuazuanescu1994feedback}), in a Kleene bicategory one can define for each endomorphism $f\colon X \to X$, a morphism $\kstar{f}\colon X \to X$ as below.
\begin{equation}\label{eq:star}
  f^\ast \defeq \Crepetition{f}{X}{X}
\end{equation}
The distinguishing property of Kleene bicategories is that $\kstar{(\cdot)}$ satisfies the laws of Kleene star as axiomatised by Kozen in \cite{Kozen94acompleteness}.

\begin{defi}
A \emph{Kleene star operator} on a $\Cat{Jsl}$-enriched category $\Cat{C}$ consists of a family of operations \(\kstar{(\cdot)} \colon \Cat{C}(X,X) \to  \Cat{C}(X,X)\)  such that for all $f\colon X\to X$, $r\colon X \to Y$ and $l\colon Y \to X$ the four laws in \eqref{eq:Kllenelaw} hold.
\end{defi}
\begin{defi}
A \emph{typed Kleene algebra} is a $\Cat{Jsl}$-enriched category equipped with a Kleene star operator. A \emph{morphism of typed Kleene algebras} is a $\Cat{Jsl}$-enriched functor preserving Kleene star. Typed Kleene algebras and their morphism form a category referred as $\TKAlg$.
\end{defi}
\begin{rem}
The notion of typed Kleene algebra has been introduced by Kozen in \cite{kozen98typedkleene} in order to deal with Kleene algebras \cite{Kozen94acompleteness} with multiple sorts. In other words, a Kleene algebra is a typed Kleene algebra with a single object.
\end{rem}

On the one hand, the laws of Kleene bicategories are sufficient for defining a Kleene star operation. On the other, any Kleene star operation over a fb category with idempotent convolution gives rise to a trace satisfying the laws of Kleene bicategories.

\begin{thm}\label{prop:trace-star}
  Let $\Cat{C}$ be a fb category with idempotent convolution. $\Cat{C}$ is a Kleene bicategory iff $\Cat{C}$ has a Kleene-star operator.
\end{thm}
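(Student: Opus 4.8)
The plan is to prove the two implications separately, in each case transferring between the monoidal trace and the Kleene star through the matrix normal form of \Cref{prop:matrixform} and the $\Cat{Jsl}$-enrichment guaranteed by \Cref{lem:idempfib}.

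For the forward direction, assume $\Cat{C}$ is a Kleene bicategory and let $\kstar{(\cdot)}$ be the operation defined as on the right of \eqref{eq:covolution}; concretely $\kstar{f} \defeq \trace_X(w_f)$, where $w_f \colon X \piu X \to X \piu X$ is the endomorphism whose components in the sense of \eqref{eq:components} are $(w_f)_{SS}=f$, $(w_f)_{XS}=(w_f)_{SY}=\id{X}$ and $(w_f)_{XY}=\bot$, so that $\trace_X(w_f)=\id{X};\kstar{f};\id{X}\sqcup\bot=\kstar{f}$. First I would establish the two unfolding laws $\id{X} \sqcup f;\kstar{f} \leq \kstar{f}$ and $\id{X} \sqcup \kstar{f};f \leq \kstar{f}$ of \eqref{eq:Kllenelaw}. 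These hold because the trace of a finite biproduct category is automatically a Conway/iteration operator: the (di)naturality, vanishing and superposing trace axioms, combined with the biproduct (co)monoid structure, yield the fixed-point identity $\kstar{f}=\id{X}\sqcup f;\kstar{f}$ (and its mirror), whence the two inequalities. The remaining two laws are the Kleene induction rules, and this is precisely where the posetal uniformity axioms enter, since Conway structure alone does not prove induction. I would derive $f;r \leq r \implies \kstar{f};r \leq r$ from (AU1) and the symmetric $l;f \leq l \implies l;\kstar{f} \leq l$ from (AU2), by presenting both $\kstar{f};r$ and $r$ (resp.\ $l;\kstar{f}$ and $l$) as traces and exhibiting a lax simulation of state dynamics, witnessed by the hypothesis, of the exact shape required by the relevant uniformity axiom.

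For the backward direction, assume a Kleene-star operator and define a candidate trace by mimicking \eqref{eq:traceREL}: for $f \colon S \piu X \to S \piu Y$ put $\trace_S(f) \defeq (f_{XS};\kstar{f_{SS}};f_{SY}) \sqcup f_{XY}$, using the components of \eqref{eq:components}. I would then check, one at a time, that this assignment satisfies the axioms of a poset-enriched traced symmetric monoidal category: tightening/naturality, sliding, the two vanishing laws, superposing and yanking. Each reduces, via \Cref{prop:matrixform}, to an identity between composites of matrix components, provable from the convolution laws \eqref{eq:cmon laws} and \eqref{eq:cmon enrichment} and the four star laws; the only genuinely star-specific ingredients are the derived sliding identity $\kstar{(a;b)};a = a;\kstar{(b;a)}$, the monotonicity of $\kstar{(\cdot)}$, and $\kstar{\bot}=\id{}$, all of which follow from \eqref{eq:Kllenelaw}. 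Finally I would verify the three Kleene-bicategory-specific axioms: for (AT1), a short computation with the unit laws shows that all four components of $\codiag{X};\diag{X}$ equal $\id{X}$, so $\trace_X(\codiag{X};\diag{X})=\kstar{\id{X}}$, and $\kstar{\id{X}}\leq\id{X}$ is the instance $l=f=\id{X}$ of the induction law; for (AU1) and (AU2) the simulation hypothesis unpacks, again via \Cref{prop:matrixform}, into componentwise inequalities that together with monotonicity and induction yield the required ordering of traces.

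The main obstacle is the equivalence between posetal uniformity and Kleene induction, which must be secured in both directions. The backward instance is essentially the \emph{bisimulation lemma} for the Kleene star: it is demanding but purely algebraic bookkeeping of matrix components. The forward instance is more delicate, because uniformity compares two traces only up to a mediating map on the state object while keeping the input/output interface fixed; realising the induction rule therefore requires choosing the two systems and the mediator so that the nontrivial self-loop $f_{SS}$ of the first system is correctly simulated inside the second, and getting this bookkeeping right — rather than the routine verification of the trace axioms — is where the real work lies.
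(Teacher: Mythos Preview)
Your proposal is correct and, for the forward direction, essentially mirrors the paper's argument: the paper also first obtains the unfolding identities (its Corollary~\ref{cor:bounds-star}, proved via the trace axioms and biproduct structure exactly as you outline) and then derives the two induction rules from posetal uniformity together with (AT1). The paper packages this last step slightly differently, passing through an intermediate ``star uniformity'' form $f;r \leq r;g \Rightarrow \kstar{f};r \leq r;\kstar{g}$ (its Lemma~\ref{lemma:uniform-star}, proved via the matrix normal form) and then showing this is equivalent to the Kleene induction rules plus $\kstar{\id{}}\leq\id{}$ (its Lemma~\ref{lemma:uniform-kozen}); your direct approach amounts to the composite of these two steps.

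For the backward direction the routes genuinely differ. You propose to verify the trace axioms by hand from the formula $\trace_S(f)=f_{XS};\kstar{f_{SS}};f_{SY}\sqcup f_{XY}$, which works but is a nontrivial amount of matrix algebra. The paper instead invokes the C\u{a}z\u{a}nescu--\c{S}tef\u{a}nescu characterisation (its Proposition~\ref{prop:stef}): in any fb category, a trace is the same data as a \emph{repetition operation} satisfying the three laws \eqref{eq:fromstefanescu}, and these follow from the Kleene-star laws (e.g.\ by completeness of Kozen's axioms over the language model). This gives the trace essentially for free, after which the same Lemma~\ref{lemma:uniform-kozen} yields (AT1), (AU1) and (AU2). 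Your approach is more self-contained; the paper's is considerably shorter by outsourcing the trace axioms to the literature.
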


\begin{cor}\label{cor:kleeneareka}
  All Kleene bicategories are typed Kleene algebras.
\end{cor}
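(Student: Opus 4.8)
The plan is to deduce this immediately from \Cref{prop:trace-star} together with the $\Cat{Jsl}$-enrichment established for fb categories with idempotent convolution. Recall that a typed Kleene algebra is by definition a $\Cat{Jsl}$-enriched category equipped with a Kleene star operator, so it suffices to exhibit, on an arbitrary Kleene bicategory $\Cat{C}$, both of these pieces of structure.

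First I would observe that, by \Cref{def:kleenebicategory}, a Kleene bicategory is in particular a finite biproduct category with idempotent convolution. By \Cref{lem:idempfib}, idempotency of $\sqcup$ is equivalent to $\Cat{C}$ being poset enriched with the adjointness laws \eqref{eq:adjointnesfib} in force; consequently, via the convolution monoid $(\sqcup,\bot)$ of \eqref{eq:covolution}, each hom-set $\Cat{C}(X,Y)$ is a join-semilattice -- it is a commutative monoid by \eqref{eq:cmon laws} and idempotent by hypothesis -- and composition distributes over joins by \eqref{eq:cmon enrichment}. This supplies the required $\Cat{Jsl}$-enriched base. Next I would invoke the forward direction of \Cref{prop:trace-star}: since $\Cat{C}$ is an fb category with idempotent convolution that is moreover a Kleene bicategory, it carries a Kleene star operator, namely a family $\kstar{(\cdot)} \colon \Cat{C}(X,X) \to \Cat{C}(X,X)$ satisfying all four laws of \eqref{eq:Kllenelaw}; concretely this is the operator obtained from the trace as on the right of \eqref{eq:covolution}.

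Combining the two observations, $\Cat{C}$ is a $\Cat{Jsl}$-enriched category equipped with a Kleene star operator, which is exactly the data of a typed Kleene algebra, so the corollary follows. The argument is purely a matter of unpacking definitions and feeding them into \Cref{prop:trace-star}, so there is no genuine obstacle at this stage: the substantive work -- deriving \eqref{eq:Kllenelaw} from the trace axioms (AT1), (AU1) and (AU2) -- has already been discharged in the proof of \Cref{prop:trace-star}. The only point I would double-check is that idempotency of convolution really upgrades the $\Cat{CMon}$-enrichment to a $\Cat{Jsl}$-enrichment (rather than merely a commutative-monoid one), which is precisely the content guaranteed by \Cref{lem:idempfib}.
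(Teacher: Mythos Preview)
Your proposal is correct and follows essentially the same approach as the paper, which treats the corollary as immediate: a Kleene bicategory is an fb category with idempotent convolution (hence $\Cat{Jsl}$-enriched by the proposition preceding \Cref{def:biproduct category}), and the forward direction of \Cref{prop:trace-star} supplies the Kleene star operator. The paper does not spell out a separate proof for the corollary, so your unpacking is exactly what is intended.
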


The converse is false: not all typed Kleene algebras are monoidal categories.
Nevertheless, from any Kleene algebra one can canonically construct a Kleene 
bicategory via the so-called \emph{matrix construction} (also known as 
\emph{biproduct completion}) \cite{coecke2017two,mac_lane_categories_1978}. 
Since this construction plays no role in our development, we include it for 
interested readers in \Cref{ssec:matrix}.

We conclude this section with two derived laws that will be useful later on.
\begin{lem}\label{lemma:derivedlawsKleene}
Let $f,g\colon X \to X$ be arrows of a Kleene bicategory. Then:
\begin{equation}\label{eq:starequality}\id{X}  \sqcup f \dcomp \kstar{f} = \kstar{f} = \id{X}  \sqcup \kstar{f} \dcomp f \end{equation}
\begin{equation}\label{eq:starsum}\kstar{f}; \kstar{g} \leq \kstar{(f\sqcup g)}\end{equation}
\end{lem}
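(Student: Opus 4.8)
The plan is to exploit the fact that, by \Cref{cor:kleeneareka}, every Kleene bicategory is a typed Kleene algebra, so that both statements become routine consequences of the Kleene-star laws in \eqref{eq:Kllenelaw} together with the $\Cat{Jsl}$-enrichment (monotonicity of $\dcomp$ and its distribution over $\sqcup$). I abbreviate the four laws of \eqref{eq:Kllenelaw} as (K1)--(K4): the two fixpoint inequalities $\id{X} \sqcup f\dcomp\kstar{f} \leq \kstar{f}$ and $\id{X} \sqcup \kstar{f}\dcomp f \leq \kstar{f}$, and the two induction rules $f\dcomp r \leq r \Rightarrow \kstar{f}\dcomp r \leq r$ and $l\dcomp f \leq l \Rightarrow l\dcomp\kstar{f}\leq l$. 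Note first that (K1) already gives $\id{X}\leq\kstar{f}$ and $f\dcomp\kstar{f}\leq\kstar{f}$, and symmetrically from (K2).

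For \eqref{eq:starequality}, the inequality $\id{X}\sqcup f\dcomp\kstar{f}\leq\kstar{f}$ is exactly (K1), so only the reverse direction needs proof. Writing $e\defeq\id{X}\sqcup f\dcomp\kstar{f}$, I would first check $f\dcomp e\leq e$: distributing, $f\dcomp e = f\sqcup f\dcomp f\dcomp\kstar{f}$, and both summands lie below $e$, since $f\leq f\dcomp\kstar{f}\leq e$ (using $\id{X}\leq\kstar{f}$) and $f\dcomp(f\dcomp\kstar{f})\leq f\dcomp\kstar{f}\leq e$ (using $f\dcomp\kstar{f}\leq\kstar{f}$ and monotonicity). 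Applying (K3) with $r\defeq e$ then yields $\kstar{f}\dcomp e\leq e$, whence $\kstar{f}=\kstar{f}\dcomp\id{X}\leq\kstar{f}\dcomp e\leq e$ because $\id{X}\leq e$. This gives $\kstar{f}\leq e$ and hence the first equality; the second, $\kstar{f}=\id{X}\sqcup\kstar{f}\dcomp f$, follows by the mirror-image argument using (K2) and (K4).

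For \eqref{eq:starsum}, I would first record two standard derived facts. \emph{Monotonicity of the star}: if $a\leq b$ then, since $a\dcomp\kstar{b}\leq b\dcomp\kstar{b}\leq\kstar{b}$, rule (K3) gives $\kstar{a}\dcomp\kstar{b}\leq\kstar{b}$, so $\kstar{a}\leq\kstar{a}\dcomp\kstar{b}\leq\kstar{b}$. \emph{Idempotence of the star under composition}: from $\kstar{h}\dcomp h\leq\kstar{h}$ (a consequence of (K2)), rule (K4) with $l\defeq\kstar{h}$ yields $\kstar{h}\dcomp\kstar{h}\leq\kstar{h}$. Now since $f\leq f\sqcup g$ and $g\leq f\sqcup g$, monotonicity gives $\kstar{f}\leq\kstar{(f\sqcup g)}$ and $\kstar{g}\leq\kstar{(f\sqcup g)}$, so $\kstar{f}\dcomp\kstar{g}\leq\kstar{(f\sqcup g)}\dcomp\kstar{(f\sqcup g)}\leq\kstar{(f\sqcup g)}$ by idempotence, as required.

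There is no serious obstacle here: the only subtlety is to avoid proving the derived laws circularly and instead to bootstrap everything from (K1)--(K4) and the enrichment. The mildly delicate point is the correct choice of the ``invariant'' $r$ (resp.\ $l$) when invoking the induction rules (K3), (K4)---namely $e$ for the fixpoint equality and $\kstar{b}$ (resp.\ $\kstar{h}$) for monotonicity and idempotence.
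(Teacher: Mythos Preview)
Your proof is correct and follows exactly the route the paper intends: the paper's own proof is a one-line citation to Kozen~\cite{Kozen94acompleteness}, relying on \Cref{cor:kleeneareka} to reduce the claims to standard Kleene-algebra identities. You have simply spelled out those standard derivations from the four laws in~\eqref{eq:Kllenelaw}, and each step is sound.
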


\section{Rig Categories}\label{sec:rigcategories}
We have seen that $\Rel$ carries two monoidal categories $(\Rel, \per, \uno)$ and $(\Rel, \piu, \zero)$. The appropriate setting for studying their interaction is given by rig categories~\cite{laplaza_coherence_1972,johnson2021bimonoidal}.
\begin{defi}\label{def:rig}
    A \emph{rig category} is a category $\Cat{C}$ with 
    two symmetric monoidal structures $(\Cat{C}, \per, \uno)$ and 
    $(\Cat{C}, \piu, \zero)$ and natural isomorphisms 
    \[ \dl{X}{Y}{Z} \colon X \per (Y \piu Z) \to (X \per Y) \piu (X \per Z) \qquad  \annl{X} \colon \zero \per X \to \zero \]
    \[ \dr{X}{Y}{Z} \colon (X \piu Y) \per Z \to (X \per Z) \piu (Y \per Z) \qquad \annr{X} \colon X \per \zero \to \zero \]
satisfying the coherence axioms in \Cref{fig:rigax}. 

A rig category is said to be \emph{right} (resp. \emph{left}) \emph{strict} when both its monoidal structures are 
strict and $\lambda^\bullet, \rho^\bullet$ and $\delta^r$ (resp. $\delta^l$) are all identity 
natural isomorphisms. A \emph{right strict rig functor} is a strict symmetric monoidal functor for both $\per$ and $\piu$ preserving $\delta^l$.
\end{defi}

Note that only one of the two distributors is forced to be the identity within a strict rig category. If both  $\delta^r$ and $\delta^l$ would be  identities, one would obtain several undesired equalities. For instance,  for all objects $A,B,C,D$, it would hold that 
\begin{align*}
 (A\piu B) \per (C\piu D) = ((A \per C)\piu (B\per C)) \piu ((A \per D)\piu (B \per D)) 
\end{align*}
and
\begin{align*}
(A\piu B) \per (C\piu D) = ((A \per C)\piu (A\per D)) \piu ((B \per C)\piu (B \per D))
\end{align*}
forcing $\piu$ to be commutative.  We refer the curious reader to \cite[Section 4]{bonchi2023deconstructing} for a more detailed explanation. In loc. cit., it is also explained that the above notion of strictness is somewhat inconvenient when studying  freely generated categories: consider a right strict rig category freely generated by a monoidal signature $\sign$ with sorts $\sort$. The objects of this category are terms generated by the grammar in Table~\ref{table:eq objects fsr} modulo the equations in the first three rows of the same table. These equivalence classes of terms do not come with a very handy form, unlike, for instance, the objects of a strict monoidal category, which are words. For this reason several authors, like~\cite{comfort2020sheet,johnson2021bimonoidal}, prefer to take as objects polynomials in $\sort$ at the price of working with a category that is not freely generated but only equivalent to a freely generated one. This fact forces one to consider functors that are not necessarily strict,  thus most of the constructions need to properly deal with the tedious natural isomorphisms.

\begin{table}
	\begin{center}
		\begin{subtable}{0.52\textwidth}
            \scalebox{0.7}{
            \begin{tabular}{c}
				\toprule
				$X \; ::=\; \; A \; \mid \; \uno \; \mid \; \zero \; \mid \;  X \per X \; \mid \;  X \piu X \qquad (A\in \sort)$\\
				\midrule
				\begin{tabular}{ccc}
					$ (X \per Y) \per Z = X \per (Y \per Z)$ & $\uno \per X = X$ &  $X \per \uno = X $\\
					$(X \piu Y) \piu Z = X \piu (Y \piu Z)$ & $\zero \piu X = X$  & $X \piu \zero =X  $\\
					$(X \piu Y) \per Z = (X \per Z) \piu (Y \per Z)$ &  $\!\zero \per X = \zero$ & $X \per \zero=X $\\
				\end{tabular}\\
				$A \per (Y \piu Z) = (A \per Y) \piu (A \per Z)$\\
				\bottomrule
			\end{tabular}
            }
            \caption{}
            \label{table:eq objects fsr}
        \end{subtable}
		\begin{subtable}{0.45\textwidth}
            \scalebox{0.7}{
            \begin{tabular}{c}
				\toprule
				$n$-ary sums and products $\vphantom{\mid}$\\
				\midrule
				\makecell{
					\\[-1pt]
					$\Piu[i=1][0]{X_i} \!=\! \zero \;\; \Piu[i=1][1]{X_i}\!=\! X_1 \;\; \Piu[i=1][n+1]{X_i} \!=\! X_1 \piu (\Piu[i=1][n]{X_{i+1}} )$ \\[1em]
					$\Per[i=1][0]{X_i} \!=\! \uno \;\; \Per[i=1][1]{X_i}\!=\! X_1 \;\; \Per[i=1][n+1]{X_i} \!=\! X_1 \per (\Per[i=1][n]{X_{i+1}} )$ \\[3pt]
				} \\
				\bottomrule
			\end{tabular}
            }
            \caption{}
            \label{table:n-ary sums and prod}
        \end{subtable}
	\end{center}
	\caption{Equations for the objects of a sesquistrict rig category freely generated by a rig signature $(\sort,\sign)$.}\label{tab:equationsonobject}
\end{table}

An alternative solution is proposed in \cite{bonchi2023deconstructing}: the focus is  on  freely generated rig categories that are \emph{sesquistrict}, i.e.\ right strict but only partially left strict: namely the left distributor  $\dl{X}{Y}{Z} \colon X \per (Y \piu Z) \to (X \per Y) \piu (X \per Z)$ is the identity only when $X$ is a basic sort $A\in \sort$. In terms of the equations to impose on objects, this amounts to the one in the fourth row in Table~\ref{table:eq objects fsr} for each $A\in \sort$.  By orienting from left to right \emph{all} the equations in Table~\ref{table:eq objects fsr}, one obtains a rewriting system that is confluent and terminating and, most importantly, the unique normal forms are exactly polynomials: a term $X$  is in \emph{polynomial} form if there exist $n$, $m_i$ and $A_{i,j}\in \sort$ for $i=1 \dots n$ and $j=1 \dots m_i$ such that $X=\Piu[i=1][n]{\Per[j=1][m_i]{A_{i,j}}}$ (for $n$-ary sums and products as in Table~\ref{table:n-ary sums and prod}).
We will always refer to terms in polynomial form as \emph{polynomials} and, for a polynomial like the aforementioned  $X$, we will call \emph{monomials} of $X$ the $n$ terms $\Per[j=1][m_i]{A_{i,j}}$. For instance the monomials of $(A \per B) \piu \uno$ are $A \per B$ and $1$. Note that, differently from the polynomials we are used to dealing with, here neither $\piu$ nor $\per$ is commutative so, for instance, $(A \per B) \piu \uno$ is different from both $\uno \piu (A \per B)$ and $(B \per A) \piu \uno$. Note that non-commutative polynomials are in one to one correspondence with \emph{words of words} over $\sort$, while monomials are words over $\sort$. 
\begin{notation}
Hereafter, we will denote by $A,B,C\dots$ the sorts in $\sort$, by $U,V,W \dots$ the words in $\sort^\star$ and by $P,Q,R,S \dots$ the words of words in $(\sort^\star)^\star$. Given two words $U,V\in \sort^\star$, we will write $UV$ for their concatenation and $1$ for the empty word. Given two words of words $P,Q\in (\sort^\star)^\star$, we will write $P\piu Q$ for their concatenation and $\zero$ for the empty word of words. Given a word of words $P$, we will write $\pi P$ for the corresponding term in polynomial form, for instance $\pi(A \piu BCD\piu 1 )$ is the term $A \piu ((B \per (C \per D)) \piu \uno)$. Throughout this paper  we  often omit $\pi$, thus we implicitly identify words of words with polynomials.
\end{notation}

Beyond concatenation ($\piu$), one can define a product operation $\per$ on $(\sort^\star)^\star$ by taking the unique normal form of $\pi(P) \per \pi(Q)$ for any $P,Q\in (\sort^\star)^\star$. More explicitly for
$P = \Piu[i]{U_i}$ and $Q = \Piu[j]{V_j}$, 
\begin{equation}\label{def:productPolynomials} P \per Q \defeq \Piu[i]{\Piu[j]{U_iV_j}}.
\end{equation}
For instance, $(A\piu B) \per (C \piu D)$ is $(A \per C) \piu (A \per D) \piu (B \per C) \piu (B \per D)$ and not $(A \per C) \piu (B \per C) \piu (A \per D)  \piu (B \per D)$. 
Observe that, if both $P$ and $Q$ are monomials, namely, $P=U$ and $Q=V$ for some $U,V\in \sort^\star$, then $P\per Q = UV$. We can thus safely write $PQ$ in place of $P\per Q$ without the risk of any confusion.

\begin{defi}
A \emph{sesquistrict rig category} is a functor $H \colon \Cat S \to \Cat C$, where $\Cat S$ is a discrete category and $\Cat C$ is a right strict rig category, such that for all $A \in \Cat S$
	\[
	\dl{H(A)}{X}{Y} \colon H(A) \per (X \piu Y) \to (H(A) \per X) \piu (H(A) \per Y)
	\]
	is an identity morphism. %
	Given $H \!\colon\! \Cat S \!\to\! \Cat C$ and $H' \!\colon\! \Cat S' \!\to\! \Cat C'$ two sesquistrict rig categories, a \emph{sesquistrict rig functor} from $H$ to $H'$ is a pair $(\alpha \!\colon\! \Cat S \!\to\! \Cat S', \beta \!\colon\! \Cat C \!\to\! \Cat C')$, with $\alpha$ a functor and $\beta$ a strict rig functor, such that $\alpha; H' = H; \beta$.
\end{defi}

From any rig category $\Cat{C}$, one can construct its (right) strictification $\overline{\Cat{C}}$ \cite{johnson2021bimonoidal} and then embed $ob(\Cat{C})$, the discrete category of the objects of $\Cat{C}$, into $\overline{\Cat{C}}$. The embedding $ob(\Cat{C}) \to \overline{\Cat{C}}$ forms a sesquistrict category and it is equivalent (as a rig category) to the original $\Cat{C}$ \cite[Corollary 4.5]{bonchi2023deconstructing}. Throughout this paper, when dealing with a rig category $\Cat{C}$, we will often implicitly refer to the equivalent sesquistrict $ob(\Cat{C}) \to \overline{\Cat{C}}$.

Given a set of sorts $\sort$, a \emph{monoidal signature} is  a tuple $(\sort,\sign,\ar,\coar)$ where $\ar$ and $\coar$ assign to each symbol $\gen \in \sign$ an arity and a coarity in $\sort^\star$. 
A \emph{rig signature} is the same but with arity and coarity in $(\sort^\star)^\star$. %
An \emph{interpretation} $\interpretation$ of a rig signature $(\sort,\sign,\ar,\coar)$ in a sesquistrict  rig category $H \colon \Cat M \to \Cat D$ is a pair of functions $(\alpha_{\sort} \colon \sort \to Ob(\Cat M), \alpha_\sign \colon \sign \to Ar(\Cat D))$ such that, for all $\gen \in \sign$, $\alpha_{\sign}(s)$ is an arrow having as domain and codomain $(\alpha_{\sort};H)^\sharp(\ar(s))$ and  $(\alpha_{\sort};H)^\sharp(\coar(s))$. Here, $(\alpha_{\sort};H)^\sharp$ stands for inductive extension of $\alpha_{\sort};H \colon \sort \to Ob(\Cat{D})$ to $(\sort^\star)^\star$.

\begin{defi}\label{def:freesesqui}
Let $(\sort,\sign,\ar,\coar)$ (simply $\sign$ for short) be a rig signature. A sesquistrict  rig category $H \colon \Cat M \to \Cat D$ is said to be \emph{freely generated} by $\sign$ if there is an interpretation $(\alpha_S,\alpha_\sign)$ of $\sign$ in $H$ such that for every sesquistrict rig category $H' \colon \Cat M' \to \Cat D'$ and every interpretation $(\alpha_\sort' \colon \sort \to Ob(\Cat M'), \alpha_\sign' \colon \sign \to Ar(\Cat D'))$ there exists a unique sesquistrict rig functor $(\alpha \colon \Cat M \to \Cat M', \beta \colon \Cat D \to \Cat D')$ such that $\alpha_\sort ; \alpha = \alpha_\sort'$ and $\alpha_\sign ; \beta = \alpha_{\sign}'$. 
\end{defi}

This is the definition of free object on a generating one instantiated in the category of sesquistrict rig categories and the category of rig signatures.
Thus, sesquistrict rig categories generated by a given signature are isomorphic to each other and we may refer to ``the'' free sesquistrict rig category generated by a signature. 
To simplify notation, we will  denote the free sesquistrict rig category generated by 
$(\sort,\sign)$, written formally as  $\mathcal{S}\to \Cat{C}$, simply by $\Cat{C}$.

\begin{rem}\label{rem:rigsig}
Theorem 4.9 in \cite{bonchi2023deconstructing} guarantees that, whenever $\piu$ is forced to be a biproduct, every rig signature can be reduced to a monoidal one. 
Since in the rig categories relevant for this paper, namely those introduced in the next section, $\piu$ is always a biproduct, we can restrict without loss of generality to consider just monoidal signatures rather than arbitrary rig signatures.
\end{rem}

\section{Kleene-Cartesian Bicategories}\label{sec:cb}
We have seen that Cartesian bicategories provide sufficient structure  to capture the \emph{allegorical fragment} of~$\CR$,  while Kleene bicategories account for its \emph{Kleene fragment}. 
To encompass the entire calculus~$\CR$, we let the Cartesian and Kleene bicategory structures interact as rig categories.

\begin{defi}\label{def:kcbrig}
A \emph{Kleene-Cartesian rig category} (shortly kc-rig) is a poset enriched rig category $(\Cat{C}, \piu, \per, \uno, \zero)$ such that 
\begin{enumerate}
\item $(\Cat{C}, \piu, \zero)$ is a Kleene bicategory;
\item $(\Cat{C}, \per, \uno)$ is a Cartesian bicategory;
\item the trace in $(\Cat{C}, \piu, \zero)$ satisfies the following coherence condition 
\begin{equation}\label{eq:missing}
\trace_{S}(f) \per \id{Z} = \trace_{S\per Z}(f \per \id{Z})
\end{equation}
for all objects $Z$ and arrows $f\colon S\piu X \to S \piu  Y$;
\item the (co)monoids satisfy the following coherence conditions: %
\end{enumerate}
    \begin{equation}\label{eq:fbcbcoherence}
        \scalebox{0.9}{$
        \begin{array}{r@{}c@{}l@{\quad}r@{}c@{}l}
            \copier{X \piu Y} &=& (\Tcopier{X} \piu  \copier{Y}) ; (\id{XX} \piu \cobang{XY} \piu \cobang{YX} \piu \id{YY}) ; (\Idl{X}{X}{Y} \piu \Idl{Y}{X}{Y}) & \discharger{X \piu Y} &=& (\Tdischarger{X} \piu \discharger{Y}) ; \codiag{\uno} \\
            \cocopier{X \piu Y} &=& (\Tcocopier{X} \piu  \cocopier{Y}) ; (\id{XX} \piu \bang{XY} \piu \bang{YX} \piu \id{YY}) ; (\dl{X}{X}{Y} \piu \dl{Y}{X}{Y}) & \codischarger{X \piu Y} &=& \diag{\uno} ; (\Tcodischarger{X} \piu \codischarger{Y}).
        \end{array}$}
\end{equation}
A \emph{morphism of kc-rig categories} is a poset enriched rig functor that is a morphism of both Kleene and Cartesian bicategories. We write $\Cat{KCB}$ for the category of kc-rig categories and their morphisms.
\end{defi}
The law in \eqref{eq:missing} rules the interaction of the  monoidal trace for $\piu$ with the product $\per$. This law appears in several works and, as expected, $\Rel$ satisfies it: see e.g., \cite{goncharov2021metalanguage}.

The axioms in~\eqref{eq:fbcbcoherence} govern the interaction between the black and white (co)monoids.
Observe that the black (co)multiplication interacts only with the white (co)unit, and vice versa, the black (co)unit interacts only with the white (co)multiplication.
The significance of these coherence laws will become more intuitive in the next section, where we will represent the corresponding arrows as tape diagrams. %
The reader may verify, using the definitions of comonoids in~\eqref{eq:comonoidsREL}, that these laws indeed hold in~$\Rel$.
Since we have already shown that $(\Rel,\piu,\zero)$ forms a Kleene bicategory and $(\Rel,\per,\uno)$ a Cartesian bicategory, we can conclude that~$\Rel$ is a kc-rig category.

\begin{rem}
Interestingly, the two laws at the top of~\eqref{eq:fbcbcoherence} also hold in Kleisli categories for arbitrary monoidal monads (see~\cite{DBLP:conf/calco/BonchiC0L25}).
However, such categories are, in general, neither Kleene nor Cartesian bicategories: the monoidal structure given by~$\piu$ yields a category with finite coproducts (having just natural and coherent monoids), while the one given by~$\per$ yields a \emph{copy-discard category} \cite{corradini1999algebraic,cho2019disintegration} (coherent comonoids). The curious reader may consult~\cite{DBLP:conf/calco/BonchiC0L25} for further details.
\end{rem}

\begin{rem}[Strictification]
  The equivalence between a rig category and its sesquistrictified rig category is strong monoidal for both monoidal structures~\cite[Volume I, Section 5.2]{johnson2024bimonoidal}.
  As a consequence, the equivalence preserves monoids and comonoids, and the equations they satisfy;
  in particular, it preserves naturality of the monoids and comonoids of \(\piu\).
  The poset structure is a consequence of the monoids and comonoids and, therefore, it is automatically preserved by sesquistrictification.
  To see that the equivalence preserves the structure of kc-rig category, we are only left to check that it preserves the uniform trace.
  This is a consequence of the strong monoidal equivalence, of coherence and of the axioms of trace.
\end{rem}

Definition~7.1 of~\cite{bonchi2023deconstructing} introduces \emph{fb-cb rig categories}.  An fb-cb rig category~$\Cat{C}$ is the same as a kc-rig category, 
except that the additive structure~$(\Cat{C}, \oplus, \zero)$  forms merely an fb category with idempotent convolution (\Cref{def:biproduct category}), rather than a full Kleene bicategory. 
In rough terms, one may regard a kc-rig category as an fb-cb rig category equipped with uniform traces. While fb-cb rig categories corresponds, as shown in~\cite{bonchi2023deconstructing}, to coherent logic, we shall see in Section~\ref{sec:hoare} that the additional trace structure enables the treatment of program logics.

Since every kc-rig category is, in particular, an fb-cb rig category, several of the results established in~\cite{bonchi2023deconstructing}  can be directly reused.

\begin{lem}\label{lemma:fb-cbsummary}
Let $\Cat{C}$ be a kc-rig category. The following hold:
\begin{enumerate}
\item $\piu$-(co)monoids are coherent with respect to to $\per$:
\[\begin{array}{c@{\qquad\qquad}c}
\diag{X \per Y} =  \diag{X} \per \id{Y} = (\id{X} \per \diag{Y} ); \dl{X}{Y}{Y}  & \bang{X \per Y} = \id{X} \per \bang{Y}\quad \\
\codiag{X \per Y} = \codiag{X} \per \id{Y} = \Idl{X}{Y}{Y} ; (\id{X} \per \codiag{Y}) & \cobang{X\per Y} = \id{X} \per \cobang{Y}
\end{array}
\]
\item $(\Cat{C}, \per, \uno)$ is \emph{monoidally} enriched over join semilattices:
\[g \per (f_1\sqcup f_2) = (g \per f_1 \sqcup  g \per f_2) \qquad (f_1\sqcup f_2)\per g =(f_1\per g \sqcup f_2\per g)  \qquad \bot \per g = \bot = g \per \bot\] 
\item for all objects $X,Y$, $\Cat{C}[X,Y]$ has the structure of a distributive lattice: %
   \[f \sqcap (g \sqcup  h) = (\, f \sqcap g \, ) \sqcup  (\, f \sqcap h  \,)
        \qquad
        f \sqcup  \top = \top
        \qquad
        f \sqcup  (g \sqcap h) = (\, f \sqcup  g \, ) \sqcap (\, f \sqcup  h  \,)
        \qquad
        f \sqcap \bot = \bot\]
\item $\op{(\cdot)}$ distributes over $\piu$, $\sqcup$ and $\bot$:
\[ \op{(f \piu g)} = \op{f} \piu \op{g} \qquad \op{(f \sqcup g)} = \op{f} \sqcup \op{g} \qquad \op{\bot}=\bot\]
\end{enumerate}
\end{lem}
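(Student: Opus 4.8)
The plan is to exploit the observation made just before the statement: every kc-rig category is in particular an fb-cb rig category in the sense of \cite{bonchi2023deconstructing}, and none of the four items refers to the trace. Each is therefore a property of the additive biproduct structure together with its interaction with $\per$ and with the black (co)monoids, so I would either invoke the corresponding results of loc.\ cit.\ directly or reprove them by short diagrammatic calculations. I would treat (1), (2) and (4) as essentially routine and isolate (3) as the point carrying genuine content.

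For item (1), since $\piu$ is a biproduct the white (co)monoids $\diag{}$, $\bang{}$, $\codiag{}$, $\cobang{}$ are natural (Fox's theorem, \Cref{def:fb}.(4)), and the claimed equalities are then instances of the naturality of the distributors together with the rig coherence axioms of \Cref{fig:rigax}: for example $\diag{X \per Y} = (\id{X} \per \diag{Y}); \dl{X}{Y}{Y}$ is precisely the statement that the comultiplication is transported across $\dl{X}{Y}{Y}$, and the annihilator identities $\annl{}$, $\annr{}$ handle the clauses involving $\bang{}$ and $\cobang{}$. This is a purely equational computation in a rig category whose $\piu$ is a biproduct. Item (2) then follows from (1): writing $f_1 \sqcup f_2 = \diag{X};(f_1 \piu f_2);\codiag{Y}$ by \eqref{eq:covolution}, I would compute $g \per (f_1 \sqcup f_2)$ by whiskering with $g$ and pushing $g \per (-)$ through $\diag{}$ and $\codiag{}$ using the coherence equalities of (1), bifunctoriality of $\per$, and distributivity of $\per$ over $\piu$; the absorbing law $\bot \per g = \bot = g \per \bot$ follows the same way, since $\bot = \bang{X};\cobang{Y}$ factors through $W \per \zero$, which the annihilator $\annr{}$ collapses to $\zero$.

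Item (4) I would handle by recalling (Table~\ref{table:re:daggerproperties}) that $\op{(\cdot)}$ is an identity-on-objects monoidal isomorphism $\Cat{C} \to \opcat{\Cat{C}}$ for $\per$, defined through the black (co)monoids as in \eqref{eq:cb:covolution}. To see that it also respects the additive structure I would use the coherence laws \eqref{eq:fbcbcoherence}, which tie the black (co)multiplications to the white (co)units and conversely: applying $\op{(\cdot)}$ to a diagram built from $\piu$, $\sqcup$ or $\bot$ and sliding the black caps and cups past the components via \eqref{eq:fbcbcoherence} reassembles the corresponding additive combination of daggers, giving $\op{(f \piu g)} = \op{f} \piu \op{g}$, $\op{(f \sqcup g)} = \op{f} \sqcup \op{g}$, and $\op{\bot} = \bot$.

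I expect the real obstacle to be item (3), the distributive-lattice structure of each hom-poset. The meet-semilattice fragment is supplied by the Cartesian bicategory (Table~\ref{table:re:daggerproperties}) and the join-semilattice fragment by idempotent convolution (\eqref{eq:cmon laws}), so the work lies in the two distributivity laws and the two absorption laws $f \sqcup \top = \top$ and $f \sqcap \bot = \bot$. Here the two monoidal products cannot be treated separately, because $\sqcap$ is defined via $\per$ in \eqref{eq:cb:covolution} while $\sqcup$ is defined via $\piu$ in \eqref{eq:covolution}; consequently the distributivity of $\sqcap$ over $\sqcup$ is really the rig distributivity of $\per$ over $\piu$ filtered through the black convolution, and must combine the Frobenius and adjointness axioms of \Cref{fig:ax-cb-string} with the bilinearity established in item (2). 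This is exactly the interaction already proved for fb-cb rig categories in \cite{bonchi2023deconstructing}, so I would cite that result; were one to reprove it from scratch, the delicate part is keeping the diagrammatic bookkeeping honest while the black caps and cups are pushed across the white split and join nodes.
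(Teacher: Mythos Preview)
Your proposal is correct and matches the paper's approach: the paper's proof simply cites the corresponding results from \cite{bonchi2023deconstructing} (Proposition~D.1 and Proposition~6.1 for items (1) and (2), Theorem~7.5 for items (3) and (4)), relying on exactly the observation you make that every kc-rig category is an fb-cb rig category and that none of the four items involves the trace. Your sketches of how to reprove each item from scratch are more detailed than what the paper provides, but the underlying strategy is identical.
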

\begin{proof}
The first two points are Proposition D.1 and Proposition 6.1   in \cite{bonchi2023deconstructing}. The remaining points follow directly from Theorem 7.5 in \cite{bonchi2023deconstructing}.
\end{proof}

The last  item above ensures that  $\op{(\cdot)}$ distributes over $\piu$ and $\sqcup$; the following two results state that $\kstar{(\cdot)}$ only distributes \emph{laxly}  over $\per$ and $\sqcap$.

\begin{lem}\label{prop:star-per}
Let $\Cat{C}$ be a kc-rig category. For all $f \colon X \to X$ and $g \colon Y \to Y$,  \[\kstar{(f \per g)} \leq \kstar{f} \per \kstar{g}\text{.}\]
\end{lem}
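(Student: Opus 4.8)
The plan is to derive the inequality from the induction law of the Kleene star, namely the third implication in~\eqref{eq:Kllenelaw}, instantiated with $r \defeq \kstar{f} \per \kstar{g}$, which is an endomorphism on $X \per Y$ exactly like $f \per g$. This reduces the whole statement to verifying a single premise, $(f \per g) \dcomp r \leq r$; the induction rule then yields $\kstar{(f \per g)} \dcomp r \leq r$, and a short post-composition argument recovers the bound on $\kstar{(f \per g)}$ itself.

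First I would verify the premise. By functoriality of $\per$ as a bifunctor (the interchange law, valid in any monoidal category) we have $(f \per g) \dcomp (\kstar{f} \per \kstar{g}) = (f \dcomp \kstar{f}) \per (g \dcomp \kstar{g})$. From the derived identity~\eqref{eq:starequality} applied separately to $f$ and to $g$, we obtain $f \dcomp \kstar{f} \leq \kstar{f}$ and $g \dcomp \kstar{g} \leq \kstar{g}$. Since every kc-rig category is poset enriched with monotone monoidal product, these two inequalities combine to give $(f \dcomp \kstar{f}) \per (g \dcomp \kstar{g}) \leq \kstar{f} \per \kstar{g} = r$, which is precisely $(f \per g) \dcomp r \leq r$.

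Next, applying the induction law~\eqref{eq:Kllenelaw} to this premise gives $\kstar{(f \per g)} \dcomp r \leq r$. To conclude, I would observe that $\id{X \per Y} \leq r$: indeed~\eqref{eq:starequality} yields $\id{X} \leq \kstar{f}$ and $\id{Y} \leq \kstar{g}$, so monotonicity of $\per$ gives $\id{X \per Y} = \id{X} \per \id{Y} \leq \kstar{f} \per \kstar{g} = r$. Therefore $\kstar{(f \per g)} = \kstar{(f \per g)} \dcomp \id{X \per Y} \leq \kstar{(f \per g)} \dcomp r \leq r = \kstar{f} \per \kstar{g}$, as required.

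I do not anticipate any serious obstacle: the argument is a routine instance of the Kleene induction principle. The only point demanding a little care is that the induction rule by itself delivers only $\kstar{(f \per g)} \dcomp r \leq r$ rather than the statement directly, so one must not omit the final step that exploits $\id{} \leq r$ to pass from a bound on $\kstar{(f \per g)} \dcomp r$ to a bound on $\kstar{(f \per g)}$. Everything else rests solely on the monoidal interchange law, monotonicity of $\per$, and the elementary star laws already recorded in~\eqref{eq:starequality}.
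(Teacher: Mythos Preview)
Your proposal is correct and follows essentially the same route as the paper: verify $(f \per g)\dcomp(\kstar{f}\per\kstar{g}) \leq \kstar{f}\per\kstar{g}$ via functoriality of $\per$ and the star laws, invoke the Kleene induction rule from~\eqref{eq:Kllenelaw}, and then bootstrap from $\id{}\leq \kstar{f}\per\kstar{g}$. The only cosmetic difference is that you cite~\eqref{eq:starequality} for the facts $f\dcomp\kstar{f}\leq\kstar{f}$ and $\id{}\leq\kstar{f}$ whereas the paper cites~\eqref{eq:Kllenelaw} directly.
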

\begin{proof}
	First observe that the following inequality holds:
	\begin{align*}
		(f \per g) ; (\kstar{f} \per \kstar{g})
		&= (f ; \kstar{f}) \per (g ; \kstar{g}) \tag{Functoriality of $\per$} \\ 
		&\leq (\kstar{f} \per \kstar{g}) \tag{\ref{eq:Kllenelaw}} 
	\end{align*}
Thus, by the first implication in~\eqref{eq:Kllenelaw}, it follows that
		\begin{equation}\label{eq:A}\tag{$\spadesuit$}
		\kstar{(f \per g)} ; (\kstar{f} \per \kstar{g}) \leq \kstar{f} \per \kstar{g}.
	\end{equation}
	We conclude with the following derivation.
	\begin{align*}
		\kstar{(f \per g)} 
		&= \kstar{(f \per g)} ; (\id{X} \per \id{Y}) \tag{Functoriality of $\per$} \\ 
		&\leq \kstar{(f \per g)} ; (\kstar{f} \per \kstar{g}) \tag{\ref{eq:Kllenelaw}} \\
		&\leq \kstar{f} \per \kstar{g} \tag{\ref{eq:A}}
	\end{align*}
\end{proof}

\begin{lem}\label{lemma:starcap}
Let $\Cat{C}$ be a kc-rig category. For all $f,g \colon X \to X$,
\[ \kstar{(f \sqcap g)} \leq \kstar{f} \sqcap \kstar{g} \qquad \kstar{\top}=\top\]
\end{lem}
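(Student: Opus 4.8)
The plan is to derive both facts from the Kleene-algebra axioms in \eqref{eq:Kllenelaw} together with the meet-semilattice structure of the hom-posets supplied by the Cartesian bicategory. The key auxiliary fact is the \emph{monotonicity} of $\kstar{(\cdot)}$: if $f \leq h$ then $\kstar{f} \leq \kstar{h}$. This is a standard consequence of Kozen's axiomatisation (recall that by \Cref{cor:kleeneareka} every Kleene bicategory is a typed Kleene algebra), but I would prove it explicitly since it carries the real work. From \eqref{eq:starequality} we have $h ; \kstar{h} \leq \kstar{h}$ and $\id{X} \leq \kstar{h}$; monotonicity of composition together with $f \leq h$ gives $f ; \kstar{h} \leq h ; \kstar{h} \leq \kstar{h}$, so the third implication of \eqref{eq:Kllenelaw}, instantiated with $r \defeq \kstar{h}$, yields $\kstar{f} ; \kstar{h} \leq \kstar{h}$; finally $\kstar{f} = \kstar{f} ; \id{X} \leq \kstar{f} ; \kstar{h} \leq \kstar{h}$.

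For the first inequality I would use that $\sqcap$ is the meet in each hom-poset of a Cartesian bicategory (Table~\ref{table:re:daggerproperties}): since $f \sqcap g \leq f$ and $f \sqcap g \leq g$, monotonicity of $\kstar{(\cdot)}$ gives $\kstar{(f \sqcap g)} \leq \kstar{f}$ and $\kstar{(f \sqcap g)} \leq \kstar{g}$; as $\kstar{f} \sqcap \kstar{g}$ is the greatest lower bound of $\kstar{f}$ and $\kstar{g}$, these two bounds combine to $\kstar{(f \sqcap g)} \leq \kstar{f} \sqcap \kstar{g}$.

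For the equality $\kstar{\top} = \top$, the inclusion $\kstar{\top} \leq \top$ is immediate because $\top$ is the top element of $\Cat{C}[X,X]$. For the converse, \eqref{eq:starequality} gives both $\id{X} \leq \kstar{\top}$ and $\top ; \kstar{\top} \leq \kstar{\top}$; hence, using monotonicity of composition on the left, $\top = \top ; \id{X} \leq \top ; \kstar{\top} \leq \kstar{\top}$. The two inclusions yield the claimed equality.

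The only genuinely non-formal point — the ``hard part'' — is establishing monotonicity of the star, which truly requires the induction rule in \eqref{eq:Kllenelaw} rather than merely the unfolding law \eqref{eq:starequality}; everything else is a short manipulation in the ordered structure. Accordingly I would phrase monotonicity as a brief preliminary observation (or cite it as a known Kleene-algebra fact via \Cref{cor:kleeneareka}), after which each of the two claims follows in a couple of lines.
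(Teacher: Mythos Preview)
Your proof is correct and takes a genuinely different, more elementary route than the paper's.

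For the first inequality, the paper does not appeal to monotonicity of the Kleene star; instead it works diagrammatically through the definition $f \sqcap g = \copier{X};(f \per g);\cocopier{X}$ and the previously established \Cref{prop:star-per} ($\kstar{(f \per g)} \leq \kstar{f} \per \kstar{g}$), then uses trace manipulations (sliding, the adjointness $\cocopier{}\copier{}$, and naturality of the (co)monoids) to turn $\copier{X};\kstar{(f \per g)};\cocopier{X}$ back into $\kstar{(f \sqcap g)}$. Your argument bypasses all of this: it uses only that $\sqcap$ is the meet and that $\kstar{(\cdot)}$ is monotone, a standard Kleene-algebra fact requiring nothing more than \eqref{eq:Kllenelaw}. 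This is shorter and shows that the inequality holds in any typed Kleene algebra equipped with hom-wise meets, not just in kc-rig categories. The paper's approach, on the other hand, exercises and illustrates the diagrammatic calculus (which is one of the paper's aims), and its auxiliary \Cref{prop:star-per} is of independent interest.

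For $\kstar{\top}=\top$, the paper first establishes the stronger equality $\top;\kstar{\top}=\top$ via a diagrammatic argument (naturality, sliding, and the adjointness $\codischarger{}\discharger{}$), and then concludes using the fixpoint equation $\kstar{\top}=\id{}\sqcup \top;\kstar{\top}$ together with $\id{}\sqcup\top=\top$. Your argument is again more direct: $\top=\top;\id{X}\leq \top;\kstar{\top}\leq \kstar{\top}$ uses only monotonicity of composition and the unfolding of \eqref{eq:starequality}, while the other inclusion is immediate from $\top$ being the top element.
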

\begin{proof}
	\begin{align*}
		\kstar{f} \sqcap \kstar{g} &= \;\; \copier{X} ; (\kstar{f} \per \kstar{g}) ; \cocopier{X} \tag{\ref{eq:cb:covolution}} \\
		&\geq \;\; \copier{X} ; \kstar{(f \per g)} ; \cocopier{X} \tag{\Cref{prop:star-per}} \\
		&= 
    % [inline block 10: 10 envs, 18132 chars -> data_tex | \begin{tikzpicture} 	\begin{pgfonlayer}{nodelayer}...]

}
 \tag{\ref{ax:kb:traceid}} \\
		&= \top. \tag{\ref{eq:cb:covolution}}
	\end{align*}
	To conclude, observe that:
	\[ \kstar{\top} \stackrel{\eqref{eq:starequality}}{=} \id{X} \sqcup \top ; \kstar{\top} = \id{X} \sqcup \top \stackrel{(\text{\Cref{lemma:fb-cbsummary}.(3)})}{=} \top. \qedhere\]
\end{proof}

As expected, $\kstar{(\cdot)}$ commutes with $\op{(\cdot)}$.
\begin{lem}\label{lemma:opstar}
Let $\Cat{C}$ be a kc-rig category. For all $f\colon X \to X$,
\[\kstar{(\op{f})} = \op{(\kstar{f})}\]
\end{lem}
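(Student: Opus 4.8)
The plan is to reduce the equality to two inequalities and prove each by exploiting that $\op{(\cdot)}$ is an order-preserving involution that reverses composition, together with the induction axioms of the Kleene star in \eqref{eq:Kllenelaw}. First I would record the two ingredients. From Table~\ref{table:re:daggerproperties} and the fact that $\op{(\cdot)} \colon \Cat{C} \to \opcat{\Cat{C}}$ is a poset-enriched identity-on-objects monoidal isomorphism, $\op{(\cdot)}$ satisfies $\op{(g\dcomp h)} = \op{h}\dcomp\op{g}$, $\op{(\id{X})} = \id{X}$ and $\op{(\op{a})} = a$, and is monotone; being an isomorphism it is in fact an order isomorphism, so $a \leq b$ iff $\op{a} \leq \op{b}$. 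Second, I would extract from \eqref{eq:Kllenelaw} the standard least-prefixpoint property: for any $h\colon X \to X$ and any $r\colon X \to X$, if $\id{X} \leq r$ and $h\dcomp r \leq r$, then $\kstar{h} \leq r$. This follows from the third law of \eqref{eq:Kllenelaw} (from $h\dcomp r \leq r$ we obtain $\kstar{h}\dcomp r \leq r$) together with monotonicity of composition, since $\kstar{h} = \kstar{h}\dcomp\id{X} \leq \kstar{h}\dcomp r \leq r$ using $\id{X} \leq r$.

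For the inequality $\kstar{(\op{f})} \leq \op{(\kstar{f})}$, I would verify that $r \defeq \op{(\kstar{f})}$ is a prefixpoint for $\op{f}$. Applying the monotone map $\op{(\cdot)}$ to $\id{X} \leq \kstar{f}$ (a consequence of the first law of \eqref{eq:Kllenelaw}) yields $\id{X} \leq \op{(\kstar{f})}$; applying it to $\kstar{f}\dcomp f \leq \kstar{f}$ (a consequence of the second law) and using the composition-reversal gives $\op{f}\dcomp\op{(\kstar{f})} \leq \op{(\kstar{f})}$. The least-prefixpoint property, instantiated at $h = \op{f}$, then delivers $\kstar{(\op{f})} \leq \op{(\kstar{f})}$.

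The reverse inequality follows by symmetry: instantiating the inequality just proved at $\op{f}$ in place of $f$ gives $\kstar{(\op{(\op{f})})} \leq \op{(\kstar{(\op{f})})}$, i.e.\ $\kstar{f} \leq \op{(\kstar{(\op{f})})}$ since $\op{(\op{f})} = f$; applying the monotone involution $\op{(\cdot)}$ to both sides and using $\op{(\op{a})} = a$ yields $\op{(\kstar{f})} \leq \kstar{(\op{f})}$. Combining the two inequalities gives the desired equality. I expect the only genuinely delicate point to be the bookkeeping around $\op{(\cdot)}$: confirming that it is order-preserving, so that it transports inequalities and not merely equalities, and keeping track of the reversal of $\dcomp$ when dualising the star laws. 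Everything else is a direct application of the axioms in \eqref{eq:Kllenelaw}; as an alternative sanity check one could dualise the fixpoint identity \eqref{eq:starequality} directly, but the prefixpoint argument above is the cleanest route.
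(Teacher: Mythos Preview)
Your proof is correct and follows essentially the same approach as the paper's: establish $\kstar{(\op{f})} \leq \op{(\kstar{f})}$ via a Kleene-star induction argument (showing that $\op{(\kstar{f})}$ is a prefixpoint of $\op{f}$ by dualising the star laws through the monotone, composition-reversing involution $\op{(\cdot)}$), and then obtain the reverse inequality by instantiating the first one at $\op{f}$ and applying $\op{(\cdot)}$ again. The only cosmetic difference is that the paper packages the induction step via the derived rule $g \sqcup f\dcomp r \leq r \Rightarrow \kstar{f}\dcomp g \leq r$ (citing Kozen), whereas you unfold it directly from the third law of \eqref{eq:Kllenelaw} together with $\id{X}\leq r$; these are equivalent formulations.
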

\begin{proof}
 First, note that the following law holds in a Kleene algebra (see e.g. Equation (14) in \cite{Kozen94acompleteness}), thus in particular it holds in any Kleene bicategory by~\Cref{cor:kleeneareka}: %
	\begin{equation}\label{eq:kozen-equiv-axiom}\tag{$\heartsuit$}
		g \sqcup (f;r) \leq r \implies \kstar{f};g \leq r.
	\end{equation}
	Then observe that the following holds for all $f \colon X \to X$:
	\[ \id{X} \sqcup \op{f} ; \op{(\kstar{f})} \stackrel{(\text{\Cref{table:re:daggerproperties}})}{=} \id{X} \sqcup \op{(\kstar{f} ; f)} = \op{(\id{X} \sqcup \kstar{f} ; f)} \stackrel{\eqref{eq:Kllenelaw}}{\leq} \op{(\kstar{f})}. \]
	Thus, by \eqref{eq:kozen-equiv-axiom} the inequality below holds:
	\begin{equation}\label{eq:star-dagger-inclusion}\tag{$\clubsuit$}
		\kstar{(\op{f})} = \kstar{(\op{f})} ; \id{X} \leq \op{(\kstar{f})}.
	\end{equation}
	For the other inequality we exploit~\eqref{eq:star-dagger-inclusion} and the fact that $\op{(\cdot)}$ is involutive:
	\[ \op{(\kstar{f})} \stackrel{(\text{\Cref{table:re:daggerproperties}})}{=} \op{(\kstar{(f^{\dag \dag})})} \stackrel{\eqref{eq:star-dagger-inclusion}}{\leq} (\kstar{(\op{f})})^{\dag \dag} \stackrel{(\text{\Cref{table:re:daggerproperties}})}{=} \kstar{(\op{f})}. \]

\end{proof}

\begin{table}[t]
    \[ 
    \begin{array}{cccc}
        \toprule
        f \sqcap (g \sqcup  h) = (\, f \sqcap g \, ) \sqcup  (\, f \sqcap h  \,)
        &
        f \sqcup  \top = \top
        &
        \kstar{(f \per g)} \leq \kstar{f} \per \kstar{g}
        &
        \op{(f \piu g)} = \op{f} \piu \op{g} 
        \\
        f \sqcup  (g \sqcap h) = (\, f \sqcup  g \, ) \sqcap (\, f \sqcup  h  \,)
        &
        f \sqcap \bot = \bot
        &
        \kstar{(f \sqcap g)} \leq \kstar{f} \sqcap \kstar{g}       
        &
        \op{(f \sqcup g)} = \op{f} \sqcup \op{g}
        \\
          &  \kstar{(\op{f})} = \op{(\kstar{f})} & \kstar{\top}=\top & \op{\bot}=\bot
        \\
        \bottomrule
    \end{array}
    \]
	\caption{Derived laws in kc-rig categories.} 
	\label{table:kc-rig derived laws}
\end{table}

We found convenient to compactly summarise the lemmas above as follows.
\begin{prop}\label{prop:kc-rig laws}%
	The laws in \Cref{table:kc-rig derived laws} hold in any kc-rig category. 
\end{prop}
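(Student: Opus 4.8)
The plan is to observe that \Cref{prop:kc-rig laws} is purely a consolidation of the lemmas established immediately above it: each of the eleven entries in \Cref{table:kc-rig derived laws} has already been proved individually, so the task reduces to matching every law to its source. Accordingly, the proof I would write is essentially a dispatch table, and I do not expect any genuine obstacle -- the real work has been done in the preceding results, and what remains is bookkeeping.

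First I would dispatch the four lattice laws occupying the first two columns. The distributivity $f \sqcap (g \sqcup h) = (\, f \sqcap g \,) \sqcup (\, f \sqcap h \,)$, the absorption $f \sqcup \top = \top$, the dual distributivity $f \sqcup (g \sqcap h) = (\, f \sqcup g \,) \sqcap (\, f \sqcup h \,)$, and $f \sqcap \bot = \bot$ are exactly the four equations asserting that each hom-set carries the structure of a distributive lattice, which is the content of \Cref{lemma:fb-cbsummary}.(3). Next, the three laws in the rightmost column governing $\op{(\cdot)}$, namely $\op{(f \piu g)} = \op{f} \piu \op{g}$, $\op{(f \sqcup g)} = \op{f} \sqcup \op{g}$ and $\op{\bot} = \bot$, are precisely the statements of \Cref{lemma:fb-cbsummary}.(4).

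For the remaining Kleene-star entries I would cite the dedicated lemmas in turn: the lax distributivity over the product, $\kstar{(f \per g)} \leq \kstar{f} \per \kstar{g}$, is \Cref{prop:star-per}; the lax distributivity over meet, $\kstar{(f \sqcap g)} \leq \kstar{f} \sqcap \kstar{g}$, together with the equality $\kstar{\top} = \top$, are the two claims of \Cref{lemma:starcap}; and the commutation of star with opposite, $\kstar{(\op{f})} = \op{(\kstar{f})}$, is \Cref{lemma:opstar}. Since every law in the table is quoted verbatim from an already-established result, the proof closes immediately with no further calculation.
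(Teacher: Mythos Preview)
Your proposal is correct and follows exactly the paper's approach: the paper's proof is the single line ``By \Cref{lemma:fb-cbsummary,prop:star-per,lemma:starcap,lemma:opstar},'' and your dispatch table simply spells out which entry of \Cref{table:kc-rig derived laws} comes from which of those four results.
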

\begin{proof}
By \Cref{lemma:fb-cbsummary,prop:star-per,lemma:starcap,lemma:opstar}.
\end{proof}

\begin{cor}
Any kc-rig category is a typed Kleene algebra with converse~\cite{brunet2014kleene,bloom1995notes}.
\end{cor}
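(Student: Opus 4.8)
The plan is to observe that this corollary requires no genuinely new argument: it is an assembly of results already established, so I would simply exhibit the two ingredients of a typed Kleene algebra with converse on a kc-rig category $\Cat{C}$ and then check the compatibility laws one by one. A typed Kleene algebra with converse is a typed Kleene algebra equipped with an identity-on-objects involution $\op{(\cdot)}$ that is compatible with the Kleene structure, so the whole task reduces to pointing to where each piece has been proved.

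First I would supply the typed Kleene algebra structure from the underlying Kleene bicategory $(\Cat{C}, \piu, \zero)$: by \Cref{cor:kleeneareka} this is a typed Kleene algebra, with join $\sqcup$, composition $;$, star $\kstar{(\cdot)}$, bottom $\bot$, and identities $\id{}$. For the converse I would take the dagger $\op{(\cdot)}$ coming from the Cartesian bicategory structure $(\Cat{C}, \per, \uno)$, which by \Cref{prop:dagger} is an identity-on-objects monoidal involution $\Cat{C} \to \opcat{\Cat{C}}$.

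Next I would verify the converse axioms, each of which is already on record. Involutivity $\op{(\op{f})} = f$, contravariance $\op{(f ; g)} = \op{g} ; \op{f}$, and $\op{(\id{X})} = \id{X}$ are in Table~\ref{table:re:daggerproperties}; compatibility with the additive structure, namely $\op{(f \sqcup g)} = \op{f} \sqcup \op{g}$ and $\op{\bot} = \bot$, is recorded in \Cref{prop:kc-rig laws} (coming from \Cref{lemma:fb-cbsummary}); and commutation with the star, $\op{(\kstar{f})} = \kstar{(\op{f})}$, is \Cref{lemma:opstar}. Should the axiomatisation of~\cite{brunet2014kleene,bloom1995notes} also require the strong (modular) inequality $f \leq f ; \op{f} ; f$, I would note that it holds for \emph{every} arrow of any Cartesian bicategory: it is precisely~\eqref{eq:op in the middle}, established in the proof of \Cref{prop:coriff}, whose derivation invokes only the general Cartesian bicategory laws and not the coreflexivity hypotheses of that proposition.

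The only real obstacle here is bookkeeping rather than mathematics: I must pin down exactly which signature of axioms the cited notion of (typed) Kleene algebra with converse adopts---in particular whether it demands the inequality $f \leq f ; \op{f} ; f$ or only the purely equational laws---and then confirm that the list above matches it verbatim, with no residual axiom left unproven.
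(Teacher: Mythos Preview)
Your proposal is correct and matches the paper's approach: the paper states this corollary without proof, treating it as an immediate consequence of \Cref{cor:kleeneareka} (for the typed Kleene algebra part) together with the converse laws collected in Table~\ref{table:re:daggerproperties} and \Cref{prop:kc-rig laws} (for $\op{(\cdot)}$ preserving $\sqcup$, $\bot$ and commuting with $\kstar{(\cdot)}$). You have identified exactly these ingredients, and your additional remark that~\eqref{eq:op in the middle} covers the modular law $f \leq f;\op{f};f$ if required is a nice piece of diligence beyond what the paper spells out.
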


The lax distributivity of $\kstar{(\cdot)}$ over $\per$ (\Cref{prop:star-per}) is somehow unsatisfactory when one is interested in taking products of imperative programs (see Section~\ref{ssec:relationalHoare}). We conclude this section with a result (\Cref{prop:perstarcup}) that characterises the products of Kleene stars. First we need the following.

\begin{lem}\label{lemma:starperid}
Let $\Cat{C}$ be a kc-rig category. For all arrows $f\colon X \to X$ and objects $Z$
\[\kstar{f} \per \id{Z} = \kstar{(f\per \id{Z})} \qquad  \id{Z} \per \kstar{f}  = \kstar{(\id{Z} \per f)}  \]
\end{lem}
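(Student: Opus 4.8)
The plan is to exploit the fact that $\kstar{(\cdot)}$ is \emph{defined} by a trace and that the coherence law \eqref{eq:missing} makes that trace commute with $-\per\id{Z}$. Recall that for an endomorphism $f\colon X\to X$ the Kleene star is the trace depicted on the right of \eqref{eq:covolution}, i.e.\ $\kstar{f}=\trace_{X}(g_f)$ where $g_f\colon X\piu X\to X\piu X$ is assembled from $f$ and the $\piu$-biproduct structure; concretely $g_f=(f\piu\id{X})\dcomp\codiag{X}\dcomp\diag{X}$, the first summand being the one traced away (one checks via \eqref{eq:traceREL} and \eqref{eq:starequality} that in $\Rel$ this yields the reflexive--transitive closure). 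For the first equality I would then compute
\begin{align*}
\kstar{f}\per\id{Z} &= \trace_{X}(g_f)\per\id{Z}\\
&= \trace_{X\per Z}(g_f\per\id{Z}) \tag{\ref{eq:missing}}\\
&= \kstar{(f\per\id{Z})},
\end{align*}
where the last step rests on the identity $g_f\per\id{Z}=g_{f\per\id{Z}}$.

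The heart of the argument is therefore this identity. I would establish it by distributing $-\per\id{Z}$ through the composite defining $g_f$ using functoriality of $\per$, and rewriting each factor. Since the rig category is right strict, the right distributor is the identity, so $(f\piu\id{X})\per\id{Z}=(f\per\id{Z})\piu\id{X\per Z}$. Moreover \Cref{lemma:fb-cbsummary}.(1) gives $\codiag{X}\per\id{Z}=\codiag{X\per Z}$ and $\diag{X}\per\id{Z}=\diag{X\per Z}$. Assembling these three facts shows that $g_f\per\id{Z}$ is exactly the arrow $g_{f\per\id{Z}}$ built over $X\per Z$, which closes the first equality.

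For the second equality I would avoid a parallel computation on the left, since there the left distributor is \emph{not} the identity and \Cref{lemma:fb-cbsummary}.(1) only describes $\id{Z}\per\codiag{X}$ up to a nontrivial distributor. Instead I reduce to the first equality via the symmetry of $\per$. By naturality of $\symm{}{}$ we have $\id{Z}\per f=\symm{Z}{X}\dcomp(f\per\id{Z})\dcomp\symm{X}{Z}$, and a short sub-lemma — valid in any typed Kleene algebra, hence in any Kleene bicategory by \Cref{cor:kleeneareka} — states that $\kstar{(\sigma\dcomp h\dcomp\sigma^{-1})}=\sigma\dcomp\kstar{h}\dcomp\sigma^{-1}$ for every isomorphism $\sigma$ (this follows from \eqref{eq:Kllenelaw}, since conjugation by an iso preserves reflexivity, transitivity, and the least-fixpoint characterisation of the star). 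Combining these with the first equality yields
\begin{align*}
\kstar{(\id{Z}\per f)} &= \symm{Z}{X}\dcomp\kstar{(f\per\id{Z})}\dcomp\symm{X}{Z}\\
&= \symm{Z}{X}\dcomp(\kstar{f}\per\id{Z})\dcomp\symm{X}{Z}\\
&= \id{Z}\per\kstar{f},
\end{align*}
the last step again by naturality of $\symm{}{}$.

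The main obstacle I anticipate is essentially bookkeeping: pinning down the precise form of $g_f$ in the definition of $\kstar{(\cdot)}$ and verifying that $-\per\id{Z}$ preserves every piece of $\piu$-biproduct structure occurring in it. Once \eqref{eq:missing} and \Cref{lemma:fb-cbsummary}.(1) are in hand, no genuinely new inequality is required; in particular the delicate inclusion $\kstar{f}\per\id{Z}\leq\kstar{(f\per\id{Z})}$ — which would be hard to extract from the star induction laws alone, as $-\per\id{Z}$ need not preserve the infinite joins implicit in $\kstar{(\cdot)}$ — is handed to us directly by the trace coherence condition.
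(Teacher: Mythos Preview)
Your argument for the first equality is essentially identical to the paper's: unfold $\kstar{f}$ as $\trace_X\big((f\piu\id{X});\codiag{X};\diag{X}\big)$, use \Cref{lemma:fb-cbsummary}.(1) together with right strictness to see that $(-)\per\id{Z}$ carries this composite to the corresponding one over $X\per Z$, and conclude via \eqref{eq:missing}.

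For the second equality you diverge slightly from the paper. The paper first establishes a left-hand analogue of \eqref{eq:missing}, namely $\id{Z}\per\trace_S(f)=\trace_{Z\per S}(\Idl{Z}{S}{X};(\id{Z}\per f);\dl{Z}{S}{Y})$ (\Cref{lemma:leftwisktrace}), and then reruns the same computation; that auxiliary lemma is itself proved by conjugating with the $\per$-symmetry and invoking tightening and sliding. Your route stays at the level of $\kstar{(\cdot)}$: you use naturality of $\sigma^{\per}$ to write $\id{Z}\per f$ as a conjugate of $f\per\id{Z}$, appeal to the Kleene-algebra fact that $\kstar{(\sigma;h;\sigma^{-1})}=\sigma;\kstar{h};\sigma^{-1}$ for any iso $\sigma$ (easily checked from \eqref{eq:Kllenelaw}), and reduce to the first equality. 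Both approaches are correct and both ultimately hinge on the $\per$-symmetry; yours is arguably cleaner here since it avoids pushing the left distributors through the trace, while the paper's buys a reusable statement about left whiskering of traces.
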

\begin{proof}
The following derivation proves the leftmost equality. 
\begin{align*}
\kstar{(f \per \id{Z})} &= \trace_{X\per Z}\big( (f \per \id{Z}) \piu \id{X \per Z} ; \codiag{X \per Z} ; \diag{X \per Z} \big) \tag{\ref{eq:star}}\\
& = \trace_{X \per Z}\big( (f \per \id{Z}) \piu (\id{X} \per \id{Z}) ; (\codiag{X} \per \id{Z}) ; (\diag{X} \per \id{Z}) \big) \tag{\Cref{lemma:fb-cbsummary}}\\
& = \trace_{X \per Z}\big( (f \piu\id{X} ) \per \id{Z} ; (\codiag{X} \per \id{Z}) ; (\diag{X} \per \id{Z}) \big) \tag{\Cref{def:rig}}\\
& = \trace_{X \per Z}\big( ( \,(f \piu\id{X} ) ; \codiag{X} ; \diag{X} \,) \per \id{Z} \big) \tag{Functoriality}\\
& = \trace_{X}\big( ( \,(f \piu\id{X} ) ; \codiag{X} ; \diag{X} \,) \big) \per \id{Z}  \tag{\ref{eq:missing}}\\
& = \kstar{f} \per \id{Z} \tag{\ref{eq:star}}
\end{align*}
For the rightmost statement, one first proves \Cref{lemma:leftwisktrace} and then proceed as above.
\end{proof}

\begin{rem}
While checking the previous derivation, the reader may have noticed that term-based proofs are considerably less intuitive than their diagrammatic counterparts. Unfortunately, standard string diagrams are not well suited to represent the non-trivial interactions between $\piu$ and $\per$ arising in the derivation above. In the next section, we introduce a diagrammatic notation that captures these interactions naturally.
\end{rem}

\begin{prop}\label{prop:perstarcup}
Let $\Cat{C}$ be a kc-rig category. For all arrows $f\colon X \to X$, $g\colon Y \to Y$,
\[\kstar{f}\per \kstar{g} = \kstar{((f\per \id{Y})\sqcup (\id{X}\per g))}\]
\end{prop}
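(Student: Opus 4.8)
The plan is to reduce the product of the two stars to a \emph{composite} of stars of the endomorphisms $a \defeq f \per \id{Y}$ and $b \defeq \id{X} \per g$ on $X \per Y$, and then to exploit the typed Kleene algebra structure guaranteed by \Cref{cor:kleeneareka}. By \Cref{lemma:starperid} we have $\kstar{a} = \kstar{f} \per \id{Y}$ and $\kstar{b} = \id{X} \per \kstar{g}$, so functoriality of $\per$ gives $\kstar{f} \per \kstar{g} = (\kstar{f} \per \id{Y}) ; (\id{X} \per \kstar{g}) = \kstar{a} ; \kstar{b}$. Hence the statement is equivalent to proving $\kstar{a} ; \kstar{b} = \kstar{(a \sqcup b)}$ for these particular $a,b$. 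The inclusion $\kstar{f} \per \kstar{g} \leq \kstar{(a \sqcup b)}$ is then immediate, since it is exactly $\kstar{a} ; \kstar{b} \leq \kstar{(a \sqcup b)}$, which is~\eqref{eq:starsum} of \Cref{lemma:derivedlawsKleene}.

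For the reverse inclusion $\kstar{(a \sqcup b)} \leq \kstar{f} \per \kstar{g}$ I would argue by Kleene induction, taking $r \defeq \kstar{f} \per \kstar{g}$. First, $r$ is reflexive: since $\id{X} \leq \kstar{f}$ and $\id{Y} \leq \kstar{g}$ (both consequences of~\eqref{eq:Kllenelaw}), monotonicity of $\per$ yields $\id{X \per Y} \leq r$. Next I would verify the invariance $(a \sqcup b) ; r \leq r$; by distributivity of composition over $\sqcup$ in~\eqref{eq:cmon enrichment} this splits into $a ; r \leq r$ and $b ; r \leq r$. Using functoriality of $\per$ one computes $a ; r = (f ; \kstar{f}) \per \kstar{g}$ and $b ; r = \kstar{f} \per (g ; \kstar{g})$, and the unfolding laws $f ; \kstar{f} \leq \kstar{f}$ and $g ; \kstar{g} \leq \kstar{g}$ from~\eqref{eq:Kllenelaw}, together with monotonicity of $\per$, give both inequalities. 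The induction rule $(a \sqcup b) ; r \leq r \implies \kstar{(a \sqcup b)} ; r \leq r$ of~\eqref{eq:Kllenelaw} then applies, and composing with reflexivity of $r$ we obtain $\kstar{(a \sqcup b)} = \kstar{(a \sqcup b)} ; \id{X \per Y} \leq \kstar{(a \sqcup b)} ; r \leq r = \kstar{f} \per \kstar{g}$, which closes the argument.

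The subtle point worth stressing is that the reverse inclusion is \emph{false} in a general (typed) Kleene algebra: for arbitrary $a,b$ one only has $\kstar{a};\kstar{b} \leq \kstar{(a \sqcup b)}$, and the opposite fails as soon as $a$ and $b$ do not interact well. What makes equality hold here is precisely the rig structure: because $a$ acts only on the first factor and $b$ only on the second, functoriality of $\per$ lets $a ; r$ and $b ; r$ factor back through $r = \kstar{f} \per \kstar{g}$. Thus the genuine content of the proof lies not in the Kleene-algebraic bookkeeping but in the interchange computations $a ; r = (f ; \kstar{f}) \per \kstar{g}$ and $b ; r = \kstar{f} \per (g ; \kstar{g})$, and I expect verifying these — i.e.\ correctly applying the middle-four interchange law of $\per$ — to be the step requiring the most care.
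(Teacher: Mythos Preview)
Your proposal is correct and follows essentially the same strategy as the paper: both directions use \Cref{lemma:starperid} plus~\eqref{eq:starsum} for $\kstar{f}\per\kstar{g} \leq \kstar{(a\sqcup b)}$, and Kleene induction with $r = \kstar{f}\per\kstar{g}$ for the converse. The only cosmetic difference is in establishing $(a\sqcup b);r \leq r$: you split via distributivity into $a;r \leq r$ and $b;r \leq r$ and handle each by functoriality of $\per$, whereas the paper first bounds $a\sqcup b \leq (\id{X}\sqcup f)\per(\id{Y}\sqcup g)$ and then applies functoriality once; your route is arguably more direct.
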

\begin{proof}
For $\kstar{f}\per \kstar{g} \leq \kstar{((f\per \id{Y})\sqcup (\id{X}\per g))}$, we have the following derivation.
\begin{align*}
\kstar{f}\per \kstar{g} & = (\kstar{f}\per\id{Y}); (\id{X} \otimes \kstar{g}) \tag{Functoriality}\\
&=  \kstar{(f\per\id{Y})} ; \kstar{(\id{X} \per g)} \tag{\Cref{lemma:starperid}}\\
& \leq   \kstar{((f\per\id{Y}) \sqcup (\id{X} \otimes g))} \tag{\ref{eq:starsum}}
\end{align*}
To prove the opposite inequality, observe that by  \Cref{lemma:fb-cbsummary}, the following equality holds.
\begin{equation}\label{eq:instarprod}
(f\per \id{Y})\sqcup(\id{X} \per g) \leq (\id{X} \per \id{Y}) \sqcup (f\per \id{Y})\sqcup(\id{X} \per g) \sqcup (f \per g) = (\id{X}\sqcup f) \per (\id{Y} \sqcup g)
\end{equation}
Thus
\begin{align*}
((f\per \id{Y})\sqcup(\id{X} \per g)) ; (\kstar{f} \per \kstar{g}) &\leq ((\id{X}\sqcup f) \per (\id{Y} \sqcup g)) ;(\kstar{f} \per \kstar{g}) \tag{\ref{eq:instarprod}}\\
&=  ((\id{X}\sqcup f) ; \kstar{f} ) \per ((\id{Y} \sqcup g)  ; \kstar{g})) \tag{Functoriality}\\
&\leq  \kstar{f}  \per \kstar{g} \tag{\ref{eq:Kllenelaw}}
\end{align*}
We can now use the implication in \eqref{eq:Kllenelaw} to obtain that 
\begin{equation}\label{eq:instraprod2}
\kstar{((f\per \id{Y})\sqcup(\id{X} \per g))} ; (\kstar{f} \per \kstar{g}) \leq  \kstar{f}  \per \kstar{g}
\end{equation}
We can now conclude as
\begin{align*}
\kstar{((f\per \id{Y})\sqcup(\id{X} \per g))} & = \kstar{((f\per \id{Y})\sqcup(\id{X} \per g))} ; (\id{X} \per \id{Y}) \tag{Functoriality} \\
& \leq \kstar{((f\per \id{Y})\sqcup(\id{X} \per g))} ; (\kstar{f} \per \kstar{g}) \tag{\ref{eq:Kllenelaw}}\\
& \leq  \kstar{f}  \per \kstar{g} \tag{\ref{eq:instraprod2}}
\end{align*}

\end{proof}

\section{Kleene-Cartesian Tape Diagrams}\label{sec:kctapes}
In the proof of \Cref{lemma:starcap}, we manipulated several string diagrams representing arrows of a kc rig category. However, in such a representation, vertical composition of diagrams amounts to $\piu$, while the other monoidal product, $\per$, is represented in textual form, as in the following diagram.
\[
    }
\]
To properly visualize arrows of rig categories, one would need  three dimensions (see e.g. \cite{comfort2020sheet}): one for $;$, one for $\piu$ and one for $\per$.  An alternative, which remains within two dimensions, is offered by \emph{tape diagrams}~\cite{bonchi2023deconstructing}. Intuitively, tape diagrams are \emph{string diagrams of string diagrams}: the vertical composition of inner diagrams represents $\per$, whereas the vertical composition of outer diagrams represents $\piu$. For instance the diagram above is drawn as
\[
    \InputIfFileExists{traccia.tikz}{}{\input{./tikz/traccia.tikz}}
\]
The key intuition is that, like objects in a free sesquistrict rig category can be written in the form of polynomials $P=\Piu[i]{\Per[j]{A_{i,j}}}$, similarly arrows of a free sesquistrict kc-rig category can always be written as sums of products of certain basic arrows.

In this section we introduce \emph{Kleene-Cartesian tape diagrams}, a graphical notation for kc-rig categories, and we show that they provide the kc-rig category freely generated by a rig signature $(\sort,\sign)$. Our work extends \cite[Section 7]{bonchi2023deconstructing} that identifies tape diagrams as  freely generated \emph{fb-cb rig categories}. %

Thanks to Remark \ref{rem:rigsig} we can restrict, without loss of  generality, to the simpler case where $(\sort,\sign)$ is just a monoidal signature.
As explained in Section \ref{sec:rigcategories}, we can consider the sesquistrict rig categories having as sets of objects $(\sort^\star)^\star$. %
For arrows, consider the following two-layer grammar where $s \in \sign$, $A,B \in \sort$ and $U,V \in \sort^\star$.
\begin{equation}\label{kleeneCartesianGrammar}
      \begin{array}{rcccccccccccc}
          c    & ::= &   \id{A}   &      \mid      &    \id{\uno}    &      \mid      &    \gen    &      \mid      &    \sigma_{A,B}    &      \mid      &      c ; c      &      \mid      &     c \per c   \\
               &     &   \discharger{A}   &      \mid      &   \copier{A}   &      \mid      &   \codischarger{A}   &      \mid      &   \cocopier{A} \\[8pt]
          \t   & ::= &   \id{U}   &      \mid      &    \id{\zero}    &      \mid      &    \tapeFunct{c}    &      \mid      &    \sigma_{U,V}^{\piu}    &      \mid      &      \t ; \t      &      \mid      &     \t \piu \t    \\
               &     &  \bang{U}    &      \mid      &   \diag{U}   &      \mid      &   \cobang{U}   &      \mid      &   \codiag{U}    &       \mid      &   \trace_{U}\t      
    \end{array}
\end{equation}
 The terms of the first layer,  called \emph{circuits}, intuitively represent arrows of a Cartesian bicategory. The terms of the second layer, called \emph{tapes}, represent arrows of a Kleene bicategory. Crucially, a circuit $c$ can occur within a tape as the term $\tapeFunct{c}$.

We only consider those terms to which is possible to associate source and target objects according to the simple type system in~\Cref{tab:typingTAPE}. In particular, circuits have type $U\to V$ for $U,V \in \sort^\star$ while tapes $P\to Q$ for $P,Q\in (\sort^\star)^\star$.

Constants and operations in \eqref{kleeneCartesianGrammar} can be extended to arbitrary polynomials in $(\sort^\star)^\star$ via the inductive definitions in Table \ref{tab:SyntacticSUGAR}. %

Particularly interesting is the fact that one can define $\per$ on tapes: for $\t_1 \colon P \to Q$, $\t_2 \colon R \to S$,  
\begin{equation}\label{eq:pertapes}
\t_1 \per \t_2 \defeq \LW{P}{\t_2} ; \RW{S}{\t_1} \end{equation} 
where $\LW{P}{\cdot}$, $\RW{S}{\cdot}$ are the left and right whiskerings inductively defined in Table \ref{tab:SyntacticSUGAR}. The same table illustrates the inductive definitions of the left distributors
$\dl{P}{Q}{R} \colon P \per (Q\piu R)  \to (P \per Q) \piu (P\per R)$ and $\per$-symmetries $\symmt{P}{Q}\colon P \per Q \to Q \per P$. %

Next, we impose the laws of kc-rig categories on tapes. However, this should be done carefully, in order to properly tackle the two uniformity laws (AU1) and (AU2) which are implications and not (in)equalities.
Let $\basicR$ be a set of pairs $(\t_1, \t_2)$ of tapes with the same domain and codomain. We define $\precongB$ to be the set generated by the following inference system (where $\t \precongB \s$ is a shorthand for $(\t,\s)\in \precongB$).%
\begin{equation}\label{eq:uniformprecong}
    \scalebox{0.9}{$\begin{array}{@{\qquad}c@{\qquad\qquad}c@{\qquad}c@{\qquad}}
            \inferrule*[right=($\basicR$)]{\t_1 \mathbin{\basicR} \t_2}{\t_1 \mathrel{\precongB} \t_2}
            &
            \inferrule*[right=($r$)]{-}{\t \mathrel{\precongB} \t}
            &    
            \inferrule*[right=($t$)]{\t_1 \mathrel{\precongB} \t_2 \quad \t_2 \mathrel{\precongB} \t_3}{\t_1 \mathrel{\precongB} \t_3}
            \\[8pt]
            \inferrule*[right=($;$)]{\t_1 \mathrel{\precongB} \t_2 \quad \s_1 \mathrel{\precongB} \s_2}{\t_1;\s_1 \mathrel{\precongB} \t_2;\s_2}
            &
            \inferrule*[right=($\piu$)]{\t_1 \mathrel{\precongB} \t_2 \quad \s_1 \mathrel{\precongB} \s_2}{\t_1\piu\s_1 \mathrel{\precongB} \t_2 \piu \s_2}
            &
            \inferrule*[right=($\per $)]{\t_1 \mathrel{\precongB} \t_2 \quad \s_1 \mathrel{\precongB} \s_2}{\t_1\per \s_1 \mathrel{\precongB} \t_2 \per \s_2}
            \\[8pt]
            \multicolumn{3}{c}{
            \inferrule*[right=($u_r$)]{\s_2 \mathrel{\precongB} \s_1 \quad \t_1 ; (\s_1 \piu \id{}) \mathrel{\precongB} (\s_2 \piu \id{}) ; \t_2}{\trace_{S_1}\t_1 \mathrel{\precongB} \trace_{S_2}\t_2}
            \qquad
            \inferrule*[right=($u_l$)]{ \s_2 \mathrel{\precongB} \s_1 \quad (\s_1 \piu \id{}) ; \t_1 \mathrel{\precongB} \t_2 ; (\s_2 \piu \id{})}{\trace_{S_1}\t_1 \mathrel{\precongB} \trace_{S_2}\t_2}
            }
        \end{array}$}
\end{equation}
The first six laws ensure that $\precongB$ is a precongruence (w.r.t. $;$, $\piu$ and $\per$) containing $\basicR$. The last two rules force the uniformity laws: observe that, while in (AU1) and (AU2) the same arrow $r$ occurs in both the left and the right- hand-side of the premises, here $r$ is replaced by two different but related tapes $\s_1$ and $\s_2$. This technicality is needed to guarantee uniformity in the category resulting from the following construction.

We take $\basicKC$ to be the set of all pairs of tapes containing the axioms in Table~\ref{tab:axiomTAPE} %
and define $\leq_{\basicKC}$ according to \eqref{eq:uniformprecong}. We fix $=_{\basicKC} \defeq \leq_{\basicKC} \cap \geq_{\basicKC}$. With these definitions we can construct the category of Kleene-Cartesian tapes $\KTCB$: objects are polynomials in $(\sort^\star)^\star$ with $\piu$ and $\per$ defined as in \eqref{def:productPolynomials}; arrows are $=_\basicKC$-equivalence classes of tapes; every homset $\KTCB[P,Q]$ is ordered by $\precongKC$.  The construction of $\KTCB$  gives rise to a sesquistrict kc-rig category. More importantly, $\KTCB$ is the freely generated one.

\begin{thm}\label{thm:KleeneCartesiantapesfree}
$\KTCB$ is the free sesquistrict kc-rig category generated by $(\sort, \sign)$.
\end{thm}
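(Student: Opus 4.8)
The plan is to prove the statement in two movements: first, that $\KTCB$ is itself a sesquistrict kc-rig category carrying a canonical interpretation of $(\sort,\sign)$; and second, that it enjoys the universal property of \Cref{def:freesesqui}, relativised to kc-rig categories and their morphisms. For the first movement I would observe that the circuit layer and the trace-free fragment of the tape layer in \eqref{kleeneCartesianGrammar}, together with the equations among the axioms of Table~\ref{tab:axiomTAPE} governing them, reproduce exactly the situation of \cite[Section 7]{bonchi2023deconstructing}. Hence that $(\KTCB,\per,\uno)$ is a Cartesian bicategory and $(\KTCB,\piu,\zero)$ is an fb category with idempotent convolution can be checked precisely as in the fb-cb development (the lax-naturality inequalities propagating from generators to all circuits and tapes by induction). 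What is genuinely new is the $\trace_U$ constructor, so the first real task is to promote the additive structure to a Kleene bicategory: one verifies that $\trace$ is a well-defined poset-enriched monoidal trace (naturality, vanishing, superposing, yanking, monotonicity in each homset) and satisfies (AT1). The trace and (AT1) axioms hold because they are imposed, up to $\sim_\basicKC$, by Table~\ref{tab:axiomTAPE}, and monotonicity holds because $\precongKC$ is by construction a precongruence with respect to all operations, $\trace$ included.

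The delicate point is posetal uniformity, and this is exactly what the rules $(u_r)$ and $(u_l)$ of \eqref{eq:uniformprecong} supply. To see that (AU1) holds in $\KTCB$, take arrows $[\t_1],[\t_2]$ and $[\rho]$ with $[\t_1];([\rho]\piu\id{})\precongKC([\rho]\piu\id{});[\t_2]$; choosing representatives, instantiating $(u_r)$ with $\s_1=\s_2=\rho$ makes the first premise reflexivity and the second the given inequality, whence $\trace_{S}\t_1\precongKC\trace_{T}\t_2$, i.e.\ $\trace_S[\t_1]\leq\trace_T[\t_2]$; (AU2) follows symmetrically from $(u_l)$. The two-tape formulation with $\s_1,\s_2$ (rather than a single $\rho$) is what keeps the relation stable under the quotient, and I would record the elementary fact that it is in any case derivable from the single-arrow axioms by monotonicity: if $\s_2\leq\s_1$ and $\t_1;(\s_1\piu\id{})\leq(\s_2\piu\id{});\t_2$, then $\t_1;(\s_2\piu\id{})\leq\t_1;(\s_1\piu\id{})\leq(\s_2\piu\id{});\t_2$, so (AU1) applies with $r=\s_2$. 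Together with the coherence conditions \eqref{eq:missing} and \eqref{eq:fbcbcoherence}, which are again among the imposed axioms, this shows $\KTCB$ is a sesquistrict kc-rig category, with interpretation sending each $A\in\sort$ to itself and each $\gen\in\sign$ to $[\tapeFunct{\gen}]$.

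For the universal property, let $H'\colon\Cat M'\to\Cat D'$ be a sesquistrict kc-rig category with interpretation $(\alpha_\sort',\alpha_\sign')$; the action on objects is forced to be the inductive extension $(\alpha_\sort';H')^\sharp$ on polynomials. I would define the candidate rig functor $\beta$ by structural induction on \eqref{kleeneCartesianGrammar}: circuits go to the Cartesian-bicategory arrows built from $\alpha_\sign'$, the black (co)monoids and symmetries of $\Cat D'$; $\tapeFunct{c}$ goes to the interpretation of its circuit; the additive operations and white (co)monoids go to their counterparts; and $\trace_U\t$ goes to $\trace_{\beta(U)}(\beta(\t))$. The crux is that $\beta$ is sound for $\precongKC$ — that $\t\precongKC\s$ implies $\beta(\t)\leq\beta(\s)$ in $\Cat D'$ — which simultaneously yields poset-enrichment and well-definedness on $\sim_\basicKC$-classes. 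This is an induction on the derivation in \eqref{eq:uniformprecong}: the base case requires each axiom of Table~\ref{tab:axiomTAPE} to become a valid (in)equality in $\Cat D'$, which holds because $\Cat D'$ is a kc-rig category; the precongruence rules are discharged by functoriality and monotonicity of $\Cat D'$; and the rules $(u_r),(u_l)$ are discharged by the monotonicity-derived generalised uniformity applied to $\beta(\s_1),\beta(\s_2)$, using the inductive hypothesis $\beta(\s_2)\leq\beta(\s_1)$. Once $\beta$ is well-defined, confirming that it is a poset-enriched sesquistrict rig functor preserving monoids, comonoids and traces is routine from its inductive definition, and uniqueness follows since any kc-rig morphism extending the interpretation must agree with $\beta$ on the generators and hence, by preservation of every operation, on all tapes.

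The main obstacle is the soundness of $\beta$ against the uniformity rules: one must ensure that their two-sided form, with $\s_1,\s_2$ in place of a single arrow, is both valid inside the quotient and transportable to an arbitrary target, and the key to both is the reduction to (AU1), (AU2) by monotonicity described above. A secondary, more bureaucratic obstacle is the bookkeeping needed to confirm that $\beta$ respects the sesquistrict discipline (sending the left distributor $\dl{A}{X}{Y}$ to an identity on basic sorts $A$) and the full slate of traced-monoidal axioms; these verifications are mechanical but lengthy, and are precisely where reuse of the fb-cb development of \cite{bonchi2023deconstructing} saves the most effort.
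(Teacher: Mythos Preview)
Your approach is correct but takes a genuinely different route from the paper's. You proceed directly: verify that $\KTCB$ carries a kc-rig structure, then establish the universal property by an inductive definition of $\beta$ on the tape syntax together with a soundness induction on derivations in \eqref{eq:uniformprecong}. The paper instead factors the construction through a chain of adjunctions (\Cref{sec:free-kleene-cartesian}): it invokes \cite{bonchi2023deconstructing} for the free fb-cb rig category on a monoidal signature, then builds the free kc-rig structure on top via the free traced monoidal category of \cite{katis2002feedback}, a quotient by a uniformity congruence, and a further quotient by a Kleene congruence---each step packaged as a left adjoint, with two general lemmas (on restricting adjunctions along faithful functors, and on quotients by traced-monoidal congruences) doing the heavy lifting. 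Your route is more elementary and stays closer to the syntax; the paper's buys modularity and a clean separation of the trace, Kleene, rig and Cartesian ingredients. One small correction: \eqref{eq:missing} is not literally among the axioms of Table~\ref{tab:axiomTAPE} but follows from the inductive definition of whiskerings on traces in Table~\ref{tab:SyntacticSUGAR}; showing that $\per$ remains a monoidal product once traces are present---in particular the interchange law $\LW{P}{\t_2};\RW{S}{\t_1}=\RW{R}{\t_1};\LW{Q}{\t_2}$---needs uniformity in a non-trivial way, and this is precisely where the paper's \Cref{prop:free-ut-fib-rig} spends its effort. It is the ``bureaucratic obstacle'' you flag, and the one place where your sketch slightly understates what has to be checked.
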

The proof of the above result relies on several adjunctions between categories 
that we have not had the opportunity to introduce.  
For this reason, the proof is presented in detail in 
\Cref{sec:free-kleene-cartesian}.
\begin{table}[H]
    \centering
    \scriptsize{
        \[
        \def\arraystretch{1.4}
        % [inline block 11: 6 envs, 6009 chars -> data_tex | \begin{array}{c@{\qquad\;}c@{\qquad\;}c@{\qquad\;}c}             \toprule...]

  \end{array}
  \\
  \bottomrule
 \end{array}$
                }
    \caption{Axioms for Kleene-Cartesian tapes. For each axiom $l=r$ in top-left corner, the set $\basicKC$ contains the pairs $(l,r)$ and $(r,l)$  where  $\perG$ and $\unoG$ are replaced by $\piu$ and $\zero$ and  the pairs $(\tape{l},\tape{r})$ and $(\tape{r},\tape{l})$ where $\perG$ and $\unoG$ are replaced by $\per$ and $\uno$. In the rest, for each $l\leq r$ $\basicKC$ contains a pair $(l,r)$ and, additionally, the pair $(r,l)$ in case of an axiom $l=r$. }
    \label{tab:axiomTAPE}
\end{table}
\begin{table}[H]
    \scriptsize{
        \begin{subtable}{\textwidth}
            \scalebox{0.95}{$
            % [inline block 12: 16 envs, 12112 chars -> data_tex | \begin{array}{l@{\,}c@{\,}l l@{\,}c@{\,}l | l@{\,}c@{\,}l l@{\,}c@{\,}l}                 \toprule...]

                        }
                    \\
                    \bottomrule
                    \end{array}
            $}
            \caption{Inductive definitions of $\diag{P}, \bang{P}, \codiag{P}, \cobang{P}$ and $\trace_{P}$ for all polynomials $P \in (\sort^\star)^\star$.}
            \label{tab:inddefutfb}
        \end{subtable}
        }
    \caption{Syntactic sugar for Kleene-Cartesian Tapes.}\label{tab:SyntacticSUGAR}
\end{table}

\subsection{Diagrammatic Syntax}
As mentioned earlier, the key feature of tapes is that they can be drawn nicely in 2 dimensions despite representing arrows of rig categories.
Indeed, both circuits and tapes can be drawn as string diagrams. Note however that \emph{inside} tapes, there are string diagrams. %
Thus, the grammar in~\eqref{kleeneCartesianGrammar} can be graphically rendered as follows. %
\begin{equation*}\label{tracedTapesDiagGrammar}
    \setlength{\tabcolsep}{2pt}
    \begin{tabular}{rc cccccccccccc}
        $c$  & ::= &  $\wire{A}$ & $\mid$ & $ 
    \InputIfFileExists{empty.tikz}{}{\input{./tikz/empty.tikz}}
 $ & $\mid$ & $ \Cgen{\gen}{A}{B}  $ & $\mid$ & $ \Csymm{A}{B} $ & $\mid$ & $ 
    \InputIfFileExists{seq_compC.tikz}{}{\input{./tikz/seq_compC.tikz}}
   $ & $\mid$ & $  
    \InputIfFileExists{par_compC.tikz}{}{\input{./tikz/par_compC.tikz}}
$ & $\mid$ \\
             &     &  \multicolumn{12}{l}{$\CBdischarger{A} \; \mid \; \CBcopier{A} \; \mid \; \CBcodischarger{A} \; \mid \; \CBcocopier{A}$} \\[-5pt]
        $\t$ & ::= & $\Twire{U}$ & $\mid$ & $ 
    \InputIfFileExists{empty.tikz}{}{\input{./tikz/empty.tikz}}
 $ & $\mid$ & $ \Tcirc{c}{U}{V}  $ & $\mid$ & $ \Tsymmp{U}{V} $ & $\mid$ & $ 
    \InputIfFileExists{tapes/seq_comp.tikz}{}{\input{./tikz/tapes/seq_comp.tikz}}
  $ & $\mid$ & $  
    \InputIfFileExists{tapes/par_comp.tikz}{}{\input{./tikz/tapes/par_comp.tikz}}
$ & $\mid$ \\
             &     & \multicolumn{12}{l}{$\Tcounit{U} \; \mid \; \Tcomonoid{U} \; \mid \; \Tunit{U} \; \mid \; \Tmonoid{U} \; \mid \; \TTraceMon{\t}[U][P][Q]$}
    \end{tabular}
\end{equation*} 
The identity $\id\zero$ is rendered as the empty tape $
    \InputIfFileExists{empty.tikz}{}{\input{./tikz/empty.tikz}}
$, while $\id\uno$ is $
    \InputIfFileExists{tapes/empty.tikz}{}{\input{./tikz/tapes/empty.tikz}}
$: a tape filled with the empty circuit. %
For a monomial $U \!=\! A_1\dots A_n$, $\id U$ is depicted as a tape containing  $n$ wires labelled by $A_i$. For instance, $\id{AB}$ is rendered as $\TRwire{A}{B}$. When clear from the context, we will simply represent it as a single wire  $\Twire{U}$ with the appropriate label.
Similarly, for a polynomial $P = \Piu[i=1][n]{U_i}$, $\id{P}$ is obtained as a vertical composition of tapes, as illustrated below on the left.

\noindent\scalebox{0.82}{$ 
    \id{AB \piu \uno \piu C} = \!\!\!\begin{aligned}\begin{gathered} \TRwire{A}{B} \\[-1.8mm] \Twire{\uno} \\[-1.8mm] \Twire{C} \end{gathered}\end{aligned}
    \quad
    \codiag{A\piu B \piu C} = \!\!
    \InputIfFileExists{tapes/examples/codiagApBpC.tikz}{}{\input{./tikz/tapes/examples/codiagApBpC.tikz}}

    \quad
    \cobang{AB \piu B \piu C} = \!
    \InputIfFileExists{tapes/examples/cobangABpBpC.tikz}{}{\input{./tikz/tapes/examples/cobangABpBpC.tikz}}

    \quad
    \copier{A \piu B} = \!\!\!
    % [inline block 13: 2 envs, 4846 chars -> data_tex | \begin{tikzpicture} 	\begin{pgfonlayer}{nodelayer}...]

}

$}

The diagonal $\diag{U} \colon U \!\to\! U \piu U$ is represented as a splitting of tapes, while the bang $\bang{U} \colon U \!\to\! \zero$ is a tape closed on its right boundary. 
Codiagonals and cobangs are represented in the same way but mirrored along the y-axis. Exploiting the usual coherence conditions (Definition \ref{def:biproduct category}.(3)), we can construct (co)diagonals and (co)bangs for arbitrary polynomials.
For example, $\codiag{A\piu B \piu C}$ and $\cobang{AB \piu B \piu C}$ are depicted as the second and third diagrams above.

The copier $\copier{A} \colon A \!\to\! A \per A$ is represented as a splitting of wires, while the discharger $\discharger{A} \colon A \to \uno$ is a wire closed on the right. As expected, the cocopier and codischarger are obtained via mirroring. 
From the coherence laws in~\eqref{eq:fbcbcoherence}, one can build (co)copiers and (co)dischargers for arbitrary polynomials.
For instance, $\copier{A \piu B} \colon A \piu B \to (A \piu B)\per(A \piu B) = AA \piu AB \piu BA \piu BB$ and $\discharger{A \piu B} \colon A \piu B \to \uno$ are drawn as the last two diagrams above.
For an arbitrary tape diagram $\t \colon P \to Q$ we write $\Tbox{\t}{P}{Q}$.

The graphical representation embodies several axioms such as those of monoidal categories and several axioms for traces. Those axioms which are not implicit in the graphical representation are illustrated in Figures \ref{fig:tapesax} and \ref{fig:cb axioms}.
Figure \ref{fif:poset unif tape axioms} illustrates the uniformity laws in the form of tape diagrams.

\begin{rem}
The diagrammatic representation provided by tapes does not allow for a direct 
visualisation of the $\per$--composition of tapes. Indeed, by 
\eqref{eq:pertapes} we have
\[
  \t_1 \per \t_2 \;=\; \LW{P}{\t_2} \,;\, \RW{S}{\t_1},
\]
where $\LW{P}{\cdot}$ and $\RW{S}{\cdot}$ must be computed using the 
definitions given in Table~\ref{tab:SyntacticSUGAR}.  
When the polynomials $P$ and $S$ are monomials, the corresponding diagrams are, 
however, easy to draw: $\LW{P}{\t_2}$ is obtained from $\t_2$ by adding extra 
upper wires for $P$, and analogously $\RW{S}{\t_1}$ is obtained by adding 
extra lower wires for $S$. For instance, consider the tapes $\t_1 \colon A \otimes A \to A$ and $\t_2 \colon B \to B$, illustrated below on the left, and their product $\t_1 \per \t_2$ on the right:
\[

    \InputIfFileExists{tMonProd.tikz}{}{\input{./tikz/tMonProd.tikz}}
 \quad \per \quad 
    \InputIfFileExists{sMonProd.tikz}{}{\input{./tikz/sMonProd.tikz}}
 \;=\; 
    \InputIfFileExists{tXsMonProd.tikz}{}{\input{./tikz/tXsMonProd.tikz}}

\]

When $P$ and $S$ are not monomials, the situation is more subtle.  
In particular, the right whiskering $\RW{S}{\t_1}$ requires the use of left 
distributors.  
For instance, consider the tapes 
$\t_1\colon U \piu V \to W \piu Z$ and 
$\t_2\colon U' \piu V' \to W' \piu Z'$, illustrated below on the left:
\[
  
    \InputIfFileExists{tapes/examples/t1.tikz}{}{\input{./tikz/tapes/examples/t1.tikz}}

  \quad \per \quad
  
    \InputIfFileExists{tapes/examples/t2.tikz}{}{\input{./tikz/tapes/examples/t2.tikz}}

  \;=\;
  
    \InputIfFileExists{tapes/examples/t1xt2_line.tikz}{}{\input{./tikz/tapes/examples/t1xt2_line.tikz}}

\]
Then $\t_1 \per \t_2$ is the sequential composition of 
$\LW{U \piu V}{\t_2}$ and $\RW{W' \piu Z'}{\t_1}$.  
The dashed vertical line highlights the boundary between the left and the right 
polynomial whiskerings.  
On the left-hand side, $\LW{U \piu V}{\t_2}$ is computed as the vertical 
composition of the monomial whiskerings $\LW{U}{\t_2}$ and $\LW{V}{\t_2}$.  
On the right-hand side, $\RW{W' \piu Z'}{\t_1}$ is given by the vertical 
composition of $\RW{W'}{\t_1}$ and $\RW{Z'}{\t_1}$, suitably pre- and post-composed 
with the appropriate left distributors.
\end{rem}

\begin{figure}[ht!]
    \mylabel{ax:symmpinv}{$\symmp$-inv}
    \mylabel{ax:symmpnat}{$\symmp$-nat}
    \mylabel{ax:diagas}{$\diag{}$-as}
    \mylabel{ax:diagun}{$\diag{}$-un}
    \mylabel{ax:diagco}{$\diag{}$-co}
    \mylabel{ax:codiagas}{$\codiag{}$-as}
    \mylabel{ax:codiagun}{$\codiag{}$-un}
    \mylabel{ax:codiagco}{$\codiag{}$-co}
    \mylabel{ax:bi}{bi}
    \mylabel{ax:bo}{bo}
    \mylabel{ax:diagbi}{$\diag{}$-bi}
    \mylabel{ax:codiagbi}{$\codiag{}$-bi}
    \mylabel{ax:diagnat}{$\diag{}$-nat}
    \mylabel{ax:bangnat}{$\bang{}$-nat}
    \mylabel{ax:codiagnat}{$\codiag{}$-nat}
    \mylabel{ax:cobangnat}{$\cobang{}$-nat}
       \mylabel{ax:relCobangBang}{$\cobang{}\bang{}$}
        \mylabel{ax:relBangCobang}{$\bang{}\cobang{}$}
        \mylabel{ax:relCodiagDiag}{$\codiag{}\diag{}$}
        \mylabel{ax:relDiagCodiag}{$\diag{}\codiag{}$}
        \mylabel{ax:trace:tape:kstar-id}{AT1}
    \begin{center}
        \scalebox{0.9}{
        % [inline block 14: 2 envs, 5615 chars -> data_tex | \begin{tabular}{c}     \begin{tabular}{c c@{}c@{}c @{\quad}c@{\quad} c@{}c@{}c c}...]

\end{tabular}
        }
    \end{center}
    \caption{Tape axioms for Kleene bicategories.} 
    \label{fig:tapesax}
\end{figure}

\begin{figure}[ht!]
    \mylabel{ax:symmtinv}{$\sigma^{\per}$-inv}
    \mylabel{ax:symmtnat}{$\sigma^{\per}$-nat}
    \mylabel{ax:copieras}{$\copier{}$-as}
    \mylabel{ax:copierun}{$\copier{}$-un}
    \mylabel{ax:copierco}{$\copier{}$-co}
    \mylabel{ax:cocopieras}{$\cocopier{}$-as}
    \mylabel{ax:cocopierun}{$\cocopier{}$-un}
    \mylabel{ax:cocopierco}{$\cocopier{}$-co}
    \mylabel{ax:specfrob}{S}
    \mylabel{ax:frob}{F}
    \mylabel{ax:copiernat}{$\copier{}$-nat}
    \mylabel{ax:dischargernat}{$\discharger{}$-nat}
    \mylabel{ax:dischargeradj1}{$\codischarger{}\discharger{}$}
    \mylabel{ax:dischargeradj2}{$\discharger{}\codischarger{}$}
    \mylabel{ax:copieradj1}{$\cocopier{}\copier{}$}
    \mylabel{ax:copieradj2}{$\copier{}\cocopier{}$}
	\centering
	% [inline block 15: 1 envs, 3775 chars -> data_tex | \begin{tabular}{c@{}c@{}c @{\qquad} c@{}c@{}c} 		...]

    \caption{Axioms of Cartesian bicategories.}
    \label{fig:cb axioms}
\end{figure}
\begin{figure}
        \mylabel{ax:trace:tape:AU1}{AU1}
        \mylabel{ax:trace:tape:AU2}{AU2}
        \[ 
    \InputIfFileExists{traceTapeAx/posetUnif/1/lhs.tikz}{}{\input{./tikz/traceTapeAx/posetUnif/1/lhs.tikz}}
 \leq 
    \InputIfFileExists{traceTapeAx/posetUnif/1/rhs.tikz}{}{\input{./tikz/traceTapeAx/posetUnif/1/rhs.tikz}}
 \stackrel{(\text{AU1})}{\implies} 
    \InputIfFileExists{traceTapeAx/posetUnif/1/TR_lhs.tikz}{}{\input{./tikz/traceTapeAx/posetUnif/1/TR_lhs.tikz}}
 \leq 
    \InputIfFileExists{traceTapeAx/posetUnif/1/TR_rhs.tikz}{}{\input{./tikz/traceTapeAx/posetUnif/1/TR_rhs.tikz}}
 \]
        \[ 
    \InputIfFileExists{traceTapeAx/posetUnif/2/lhs.tikz}{}{\input{./tikz/traceTapeAx/posetUnif/2/lhs.tikz}}
 \leq 
    \InputIfFileExists{traceTapeAx/posetUnif/2/rhs.tikz}{}{\input{./tikz/traceTapeAx/posetUnif/2/rhs.tikz}}
 \stackrel{(\text{AU2})}{\implies} 
    \InputIfFileExists{traceTapeAx/posetUnif/2/TR_lhs.tikz}{}{\input{./tikz/traceTapeAx/posetUnif/2/TR_lhs.tikz}}
 \leq 
    \InputIfFileExists{traceTapeAx/posetUnif/2/TR_rhs.tikz}{}{\input{./tikz/traceTapeAx/posetUnif/2/TR_rhs.tikz}}
 \]
        \caption{Posetal uniformity axioms in tape diagrams.}
        \label{fif:poset unif tape axioms}
\end{figure}

\subsection{Semantics of Tape Diagrams}
Recall from Section \ref{sec:rigcategories} that an interpretation $\mathcal{I}=(\alpha_{\sort}, \alpha_{\sign})$ of a monoidal signature $(\sort,\sign)$ in a sesquistrict rig category  $H \colon \Cat{S} \to \Cat{C}$ consists of $\alpha_{\sort}\colon \sort \to Ob(\Cat{S})$ and $\alpha_{\sign}\colon \sign \to Ar(\Cat{C})$  preserving (co)arities of symbols $s\in \sign$. 
Whenever $\Cat{C}$ is a kc-rig category, $\mathcal{I}$ gives rise uniquely, by freeness of $\KTCB$, to a morphism of kc-rig categories\footnote{To be completely formal, the unique sesquistrict rig functor from the sesquistrict rig category $\sort \to \KTCB$ to $H\colon \Cat{S} \to \Cat{C}$ is the pair $(\alpha_\sort \colon \sort \to \Cat{S}, \TCBdsem{\cdot}_{\interpretation} \colon \KTCB \to \Cat{C} )$.} $\TCBdsem{\cdot}_{\interpretation} \colon \KTCB \to \Cat{C}$
 defined on polynomials $P$ as 
\[\TCBdsem{A}_{\interpretation}=H(\alpha_{\sort}(A))  \quad \TCBdsem{\zero}_{\interpretation}=\zero \quad \TCBdsem{\uno}_{\interpretation}=\uno \quad  \TCBdsem{P \piu Q}_{\interpretation}= \TCBdsem{P}_{\interpretation} \piu \TCBdsem{Q}_{\interpretation} \quad  \TCBdsem{P \per Q}_{\interpretation}= \TCBdsem{P}_{\interpretation} \per \TCBdsem{Q}_{\interpretation} \]
and on tapes as %
\noindent\begin{equation}\label{eq:semantic}\scalebox{0.85}{
\renewcommand{\arraystretch}{1.5}
\begin{tabular}{@{\!\!\!\!\!\!\!}lllll}
    $\CBdsem{s}_{\interpretation} = \alpha_{\sign}(s)$ & $\CBdsem{\id{A}}_{\interpretation}= \id{\TCBdsem{A}_{\interpretation} } $ & $\CBdsem{\copier{A}}_{\interpretation} = \copier{\TCBdsem{A}_{\interpretation}}$ & $\CBdsem{\discharger{A}}_{\interpretation} = \discharger{\TCBdsem{A}_{\interpretation}}$ & $\CBdsem{c;d}_{\interpretation} = \CBdsem{c}_{\interpretation}; \CBdsem{d}_{\interpretation}$ \\
    $\CBdsem{\symmt{A}{B}}_{\interpretation} = \symmt{\TCBdsem{A}_{\interpretation}}{\TCBdsem{B}_{\interpretation}}$ & $\CBdsem{\id{1}}_{\interpretation}= \id{1}$ & $ \CBdsem{\cocopier{A}}_{\interpretation} = \cocopier{\TCBdsem{A}_{\interpretation}}  $&$ \CBdsem{\codischarger{A}}_{\interpretation} = \codischarger{\TCBdsem{A}_{\interpretation}}$ &  $ \CBdsem{c\per d}_{\interpretation} = \CBdsem{c}_{\interpretation} \per \CBdsem{d}_{\interpretation}$ \\
    $\CBdsem{\, \tapeFunct{c} \,}_{\interpretation}= \CBdsem{c}_{\interpretation} $ & $\CBdsem{\id{U}}_{\interpretation}= \id{\TCBdsem{U}_{\interpretation}} $ & $ \CBdsem{\diag{U}}_{\interpretation}= \diag{\TCBdsem{U}_{\interpretation}} $&$ \CBdsem{\bang{U}}_{\interpretation} = \bang{\TCBdsem{U}_{\interpretation}}$ & $\CBdsem{\s;\t}_{\interpretation} = \CBdsem{\s}_{\interpretation} ; \CBdsem{\t}_{\interpretation}$ \\
    $\CBdsem{\symmp{U}{V}}_{\interpretation}= \symmp{\TCBdsem{U}_{\interpretation}}{\TCBdsem{V}_{\interpretation}}$ & $ \CBdsem{\id{\zero}}_{\interpretation} = \id{\zero}$ & $\CBdsem{\codiag{U}}_{\interpretation}= \codiag{\TCBdsem{U}_{\interpretation}} $&$ \CBdsem{\cobang{U}}_{\interpretation} = \cobang{\TCBdsem{U}_{\interpretation}}$ & $ \CBdsem{\s \piu \t}_{\interpretation} = \CBdsem{\s}_{\interpretation} \piu \CBdsem{\t}_{\interpretation} $\\
    \multicolumn{5}{c}{
        $\CBdsem{\trace_{U}\t}_{\interpretation} = \trace_{\TCBdsem{U}_{\interpretation}} \CBdsem{\t}_{\interpretation}$
    }
\end{tabular}
}\end{equation}
The functor $\TCBdsem{\cdot}_{\interpretation} \colon \KTCB \to \Cat{C}$ serves as semantics for tape diagrams. Hereafter, we will typically take as $\Cat{C}$ the   kc-rig category $\Rel$. 
\begin{exa}[Back to the calculus of relations]
Recall the calculus of relations $\CR$ from Section \ref{sec:CR}. The set $\Sigma$ of generating symbols in $\CR$ corresponds to a monoidal signature $(\sort, \sign)$ where $\sort$ contains a single sort $A$ and each symbol in $\sign$ has both arity and coarity $A$. 
An interpretation of $\CR$ is exactly an interpretation $\interpretation$ of the monoidal signature $(\sort,\sign)$ into $\Rel$: a set $\alpha_{\sort}(A)$ and a relation $\alpha_\sign(R)\subseteq \alpha_{\sort}(A) \times \alpha_{\sort}(A)$ for each symbol $R\in \sign$.

One can encode $\CR$ expressions into tapes as follow 
\[
\begin{array}{rcl@{\qquad}rcl@{\qquad}rcl}
\encoding{R}&\defeq& \alpha_{\sign}(R)&
\encoding{\id{}}&\defeq& \id{\alpha_{\sort(A)}} &
\encoding{E_1;E_2}& \defeq & \encoding{E}; \encoding{E_2}\\
\encoding{\op{E}} & \defeq &\op{\encoding{E}}&
\encoding{\top} &  \defeq & \top &
\encoding{E_1 \cap E_2}&  \defeq & \encoding{E_1} \sqcap \encoding{E_2}\\
\encoding{\kstar{E}} & \defeq &\kstar{\encoding{E}}&
\encoding{\bot} &  \defeq & \bot &
\encoding{E_1 \cup E_2}&  \defeq & \encoding{E_1} \sqcup \encoding{E_2}\\
\end{array}
 \]
where $\op{(\cdot)},\kstar{(\cdot)},\top,\bot, \sqcap,\sqcup$ on the right-hand-side are defined as in \eqref{eq:cb:covolution}, \eqref{eq:covolution}, and~\eqref{eq:star}. 
 
 A simple inductive argument confirms that the encoding preserves the semantics: for all interpretations $\interpretation$ and expressions $E\in \CR$, \[\dsemRel{E} = \TCBdsem{\encoding{E}}_{\interpretation}\text{.}\] 
 Thus, one can safely check $\leq_{\basicKC}$ by first encoding $\CR$ expressions into tapes and then use the axioms of $\basicKC$. Indeed, if $\encoding{E_1}\leq_{\basicKC}\encoding{E_2}$, then $E_1 \leq_{\CR} E_2$. The converse implication, completeness, is an open problem.

To conclude, note that $\KTCB$ is strictly more expressive than $\CR$: thanks to the monoidal product $\otimes$, tapes can deal e.g. with $n$-ary functions, as we will see soon in Example \ref{ex:functions}. 
\end{exa}

\subsection{Kleene-Cartesian Theories and Functorial Semantics}\label{ssec:funsem}
A \emph{Kleene-Cartesian theory}, shortly kc theory, is a pair $(\sign, \basicR)$ where $\sign$ is a monoidal signature and $\basicR$ is a set of pairs $(\t_1, \t_2)$ of tapes with same domain and codomain. We think of each pair $(\t_1, \t_2)$ as an inequality $\t_1 \leq \t_2$, but the results that we develop in this section trivially hold  also for equations: it is enough to add in $\basicR$ a pair $(\t_2, \t_1)$ for each $(\t_1, \t_2) \in \basicR$.
Hereafter we always keep implicit $\basicKC$ and we write $\precongB$ for $\precongR{\basicKC \cup \basicR}$. We fix $=_{\basicR}\defeq \leq_{\basicR}\cap \geq_{\basicR}$.

We say that an intepretation $\interpretation$ of $\sign$ in a kc rig category $\Cat{C}$ is \emph{a model of the theory} $(\sign, \basicR)$ whenever $\TCBdsem{\cdot}_{\interpretation} \colon \KTCB \to \Cat{C}$ preserves $\precongB$: if $\t_1 \precongB \t_2$, then $\TCBdsem{\t_1}_{\interpretation} $ is below $\TCBdsem{\t_2}_{\interpretation}$ in $\Cat{C}$. 

\begin{exa}[Linear Orders]\label{ex:totalorder}
	Consider the signature $(\sort, \sign)$ where $\sort$ contains a single sort $A$ and $\sign = \{ R \colon A \to A \}$. Take as $\basicR$ the set consisting of the following inequalities:
	\mylabel{R:refl}{refl}
	\mylabel{R:tr}{tr}
	\mylabel{R:anti}{anti}
	\mylabel{R:lin}{lin}
	\[
		\begin{array}{c@{}c@{}c c@{}c@{}c}
			
    \InputIfFileExists{linOrd/id.tikz}{}{\input{./tikz/linOrd/id.tikz}}
 &\stackrel{}{\leq}& 
    \InputIfFileExists{linOrd/R.tikz}{}{\input{./tikz/linOrd/R.tikz}}
 & 
    \InputIfFileExists{linOrd/RR.tikz}{}{\input{./tikz/linOrd/RR.tikz}}
 &\stackrel{}{\leq}& 
    \InputIfFileExists{linOrd/R.tikz}{}{\input{./tikz/linOrd/R.tikz}}
  
			\\[10pt]
			
    \InputIfFileExists{linOrd/and.tikz}{}{\input{./tikz/linOrd/and.tikz}}
 &\stackrel{}{\leq}& 
    \InputIfFileExists{linOrd/id.tikz}{}{\input{./tikz/linOrd/id.tikz}}
 & 
    \InputIfFileExists{linOrd/top.tikz}{}{\input{./tikz/linOrd/top.tikz}}
 &\stackrel{}{\leq}& 
    \InputIfFileExists{linOrd/or.tikz}{}{\input{./tikz/linOrd/or.tikz}}
		
		\end{array}
	\]
	An interpretation $\interpretation$ of $(\sort,\sign)$ in $\Rel$, consists of a set $\alpha_\sort(A)$ and a relation $\alpha_\sign(R) \subseteq \alpha_{\sort}(A) \times \alpha_{\sort}(A)$. It is a model iff $R$ is a linear order, i.e. it is reflexive~\eqref{eq:cb:ref}, transitive~\eqref{eq:cb:trans}, antisymmetric ($R\cap\op{R}\subseteq \id{}$) and linear ($\top \subseteq R\cup \op{R}$).
\end{exa}

\begin{exa}[Functions]\label{ex:functions}
Let $\sort$ be a set of sorts and $\sign \defeq \{f\colon U \to A\}$ for some $A\in \sort$ and $U\in S^\star$. Let $\basicR$ be the set of the two equalities in \eqref{eq:deterministic-total}. An interpretation $\interpretation$ of $(\sort,\sign)$ in $\Rel$, consists of a set $\alpha_\sort(A_i)$ for each $A_i\in \sort$ and a relation $\alpha_\sign(f) \subseteq \alpha_{\sort}^\sharp(U) \times \alpha_{\sort}(A)$. $\interpretation$ is a model of $(\sign,\basicR)$ iff $\alpha_\sign(f)$ is single-valued \eqref{eq:cb:sv} and total \eqref{eq:cb:tot}, i.e., a function.
\begin{equation}\label{eq:deterministic-total} 

    % [inline block 16: 8 envs, 8672 chars -> data_tex | \begin{tikzpicture}[scale=1.5] 	\begin{pgfonlayer}{nodelayer}...]

}
}
  \end{equation}
  An interpretation $\interpretation$ of \((\sort,\sign)\) in \(\Rel\) is a set \(\alpha_\sort (A_i)\) for each $A_i\in \sort$ together with predicates \(R \subseteq X\times \uno \) and \(\bar{R} \subseteq X\times \uno\) for $X\defeq \alpha_\sort^\sharp(U)$.
  $\interpretation$ is a model of $(\sign, \basicR)$ iff, for all \(R\), \(\bar{R}\) is its set-theoretic complement: by \eqref{eq:covolution} and \eqref{eq:cb:covolution}, the equalities above assert that $R \cup \bar{R}=X$ and $R \cap \bar{R}=\emptyset$. 
  By Proposition \ref{prop:kc-rig laws}, $\KTCBI[U,\uno]$ carries a distributive lattice and even a Boolean algebra when defining $\lnot P$, for all $P\colon U \to \uno$ as follows.
    \begin{equation}\label{eq:neg}
        \scalebox{0.9}{$
     \neg R \defeq \bar{R} 
     \quad   
     \neg \top \defeq \bot 
     \quad 
     \neg(P \sqcup Q) \defeq \neg P \sqcap \neg Q 
     \quad
     \neg \bar{R} \defeq R
     \quad
     \neg \bot \defeq \top 
     \quad
     \neg(P \sqcap Q) \defeq \neg P \sqcup \neg Q
    $}
    \end{equation}
     Consider the set \(C\) that contains, for each \(P \colon U \to \uno \), the associated coreflexive $c(P)$ defined as in \eqref{eq:defc}. 
 Again coreflexives form a Boolean algebra  \((C,\sqcup, \dcomp, \lnot, \bot, \id{})\) where composition \(\dcomp\) acts as $\sqcap$: see Lemma \ref{lemma:coreflexive properties}.3. Moreover, by \Cref{cor:kleeneareka}, $\KTCBI[U,U]$ is a Kleene algebra.
  Thus \((\KTCBI[U,U], C, \sqcup, \dcomp, (\cdot)^{*}, \bot, \id{}, \lnot)\) is a \emph{Kleene algebra with tests}~\cite{kozen1997kleene}.
\end{exa}

Models enjoy a beautiful characterisation provided by Proposition \ref{funct:sem} below. Let $\KTCBI$ be the category having the same objects as $\KTCB$ and arrows $=_{\basicR}$-equivalence classes of arrows of $\KTCB$ ordered by $\precongB$. Since $\KTCB$ is a kc-rig category, then also $\KTCBI$ is so.

\begin{prop}\label{funct:sem}
	Let $(\sign, \basicR)$ be a kc tape theory and $H \colon \Cat{S} \to \Cat{C}$ a sesquistrict kc-rig category.  Models of $(\sign, \basicR)$ are in bijective correspondence with  morphisms of sesquistrict kc-rig categories from $\KTCBI$ to $H\colon \Cat{S} \to \Cat{C}$.
\end{prop}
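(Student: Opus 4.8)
The plan is to realise $\KTCBI$ as the quotient of the free category $\KTCB$ by the inequational theory $\basicR$, and then to read off the correspondence from the two universal properties at play: the freeness of $\KTCB$ (\Cref{thm:KleeneCartesiantapesfree}) and the universal property of this quotient.

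First I would recall from \Cref{thm:KleeneCartesiantapesfree} that interpretations $\interpretation$ of $\sign$ in $H \colon \Cat{S} \to \Cat{C}$ are in bijection with sesquistrict kc-rig functors $\KTCB \to H$, the functor attached to $\interpretation$ being (its object part together with) $\TCBdsem{\cdot}_\interpretation$. By definition, $\interpretation$ is a model of $(\sign,\basicR)$ exactly when $\TCBdsem{\cdot}_\interpretation$ is monotone with respect to $\precongB$, i.e. sends $\precongB$-related tapes to $\leq$-related arrows of $\Cat{C}$. Hence it suffices to show that such monotone functors are precisely those that factor, necessarily uniquely, through the identity-on-objects quotient functor $q \colon \KTCB \to \KTCBI$, and that these factorisations are exactly the sesquistrict kc-rig functors $\KTCBI \to H$.

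The core lemma is the universal property of $q$: a sesquistrict kc-rig functor $F \colon \KTCB \to H$ satisfies $\t_1 \precongB \t_2 \Rightarrow F\t_1 \leq F\t_2$ iff it factors as $F = q ; G$ for a unique sesquistrict kc-rig functor $G \colon \KTCBI \to H$. The backward direction is immediate since $q$ is itself monotone. For the forward direction I would set $G([\t]) \defeq F\t$ on $\sim_{\basicR}$-classes; well-definedness follows from $F$ respecting $\precongB$ in both directions, whence $F$ equates $\sim_{\basicR}$-related tapes by antisymmetry in $\Cat{C}$; monotonicity follows from $F$ respecting $\precongB$; and preservation of all the kc-rig structure follows from the facts that $F$ and $q$ preserve it and that $q$ is full and bijective on objects. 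Uniqueness is forced because $q$ is surjective on arrows.

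The step requiring genuine care, and the main obstacle, is showing that a sesquistrict kc-rig functor $F$ respects all of $\precongB$ as soon as it respects the generators $\basicR$. This is a rule induction over the inference system \eqref{eq:uniformprecong}. The reflexivity, transitivity and congruence rules for $;$, $\piu$, $\per$ are discharged by functoriality and monotonicity of $F$; the axioms in $\basicKC$ are sent to (in)equalities because $\Cat{C}$ is a kc-rig category; and the $\basicR$-rule holds by hypothesis. The two uniformity rules $(u_r)$ and $(u_l)$ are the delicate ones, since they are stated with two related mediating tapes $\s_1,\s_2$ rather than the single arrow $r$ of the axioms AU1/AU2 (\Cref{def:kleenebicategory}). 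Here I would argue that their image under $F$ reduces to a genuine instance of AU1/AU2 in $\Cat{C}$: from $F\s_2 \leq F\s_1$ and $F\t_1 ; (F\s_1 \piu \id{}) \leq (F\s_2 \piu \id{}) ; F\t_2$ one obtains, taking $r \defeq F\s_1$ and using monotonicity, $F\t_1 ; (r \piu \id{}) \leq (r \piu \id{}) ; F\t_2$, so that AU1 in $\Cat{C}$ yields $\trace(F\t_1) \leq \trace(F\t_2)$, which is $F(\trace\,\t_1) \leq F(\trace\,\t_2)$ because $F$ preserves traces; the argument for $(u_l)$ is symmetric. Composing the two universal properties then delivers the desired bijection between models of $(\sign,\basicR)$ and sesquistrict kc-rig functors $\KTCBI \to H$.
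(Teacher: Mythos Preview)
Your proof is correct and follows essentially the same route as the paper: exhibit the quotient $q = \eta \colon \KTCB \to \KTCBI$, use freeness of $\KTCB$ to identify interpretations with kc-rig functors out of $\KTCB$, and then show that those preserving $\precongB$ are exactly those factoring (uniquely) through $q$.

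The one substantive difference is the rule induction you flag as ``the main obstacle''. Under the paper's definition of model---an interpretation whose semantics $\TCBdsem{\cdot}_\interpretation$ preserves \emph{all} of $\precongB$, not merely the generating pairs in $\basicR$---this step is simply not needed. In the forward direction (model to functor), preservation of $\precongB$ is the hypothesis and gives well-definedness on $\sim_\basicR$-classes directly; in the backward direction (functor $\beta$ to model), as you note yourself, $q$ is monotone by construction and $\beta$ is poset-enriched, so $q;\beta$ preserves $\precongB$ on the nose. The paper's proof is accordingly brief, with no induction. Your rule induction, including the reduction of $(u_r)$ and $(u_l)$ to AU1/AU2 in the target, is correct and establishes the useful but logically separate fact that a kc-rig functor sending the pairs in $\basicR$ to inequalities automatically respects the whole closure $\precongB$; in effect it shows that one could equivalently define a model as an interpretation validating only the axioms in $\basicR$, but it is not required for the proposition as stated.
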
 
\begin{proof} %
First, we observe that there exists a kc-rig morphism $\eta \colon \KTCB \to \KTCBI$ defined as the identity on objects and mapping tapes $\t\colon P \to Q$ into $=_{\basicR}$-equivalence classes $[\t]\colon P \to Q$.
    
Let $\mathcal{I}=(\alpha_\sort, \alpha_\sign)$ be a model of $(\sign, \basicR)$ in $\Cat{S} \to \Cat{C}$ and $\TCBdsem{\cdot}_{\interpretation} \colon \KTCB \to \Cat{C}$ be the morphism induced by freeness of $\KTCB$. Define $\tilde{\alpha}_\sign^\sharp \colon \KTCBI \to \Cat{C}$ for all objects $P$ and $=_{\basicR}$-equivalence classes $[\t]\colon P \to Q$ as 
\[\tilde{\alpha}_\sign^\sharp(P)\defeq \TCBdsem{P}_{\interpretation} \qquad \tilde{\alpha}_\sign^\sharp([\t])\defeq \TCBdsem{\t}_{\interpretation}\text{.}\]
Since $(\alpha_\sort, \alpha_\sign)$ is a model, then $\TCBdsem{\cdot}_{\interpretation}$ preserves $\precongB$ and thus $\tilde{\alpha}_\sign^\sharp$ is well defined. Checking that $\tilde{\alpha}_\sign^\sharp$ is a kc-rig morphism is immediate from the fact that $\TCBdsem{\cdot}_{\interpretation}$ is a kc-rig morphism.
    
    Vice versa, from a morphism $\beta \colon \KTCBI \to \Cat{C}$ one can construct an interpretation  $\mathcal{I}$ of $(\sort, \sign)$ in $\Cat{S} \to \Cat{C}$ by precomposing first with $\eta$ and then with the trivial interpretation of  $(\sort, \sign)$ in $\sort \to \KTCB$. The unique sesquistrict kc-rig morphism induced by $\mathcal{I}$ is exactly $\eta; \beta$. Since $\eta;\beta$ factors through $\KTCBI$, it obviously preserves $\precongB$ and thus $\mathcal{I}$ is a model of $(\sign, \basicR)$.
    
    To conclude that the correspondence is bijective, it is enough to observe that $\TCBdsem{\cdot}_{\interpretation} = \eta ; \tilde{\alpha}_\sign^\sharp$.
\end{proof}

\section{The Kleene-Cartesian  Theory of Peano}\label{sec:peano} 
In Section \ref{ssec:funsem}, we introduced kc theories and presented several illustrative examples.
We now provide an additional example of a kc theory: Peano's axiomatisation of the natural numbers.
Although this is not a first-order theory \cite{harsanyi1983mathematics}, it can be concisely formulated as a kc tape theory.

As expected, the single sorted signature contains a constant $
    \begin{tikzpicture}
	\begin{pgfonlayer}{nodelayer}
		\node [style=bbox] (69) at (0, 0) {$0$};
		\node [style=none] (71) at (1.5, 0) {};
		\node [style=label] (72) at (2, 0) {$A$};
	\end{pgfonlayer}
	\begin{pgfonlayer}{edgelayer}
		\draw (69) to (71.center);
	\end{pgfonlayer}
\end{tikzpicture}
}
$ and a unary symbol $
    \begin{tikzpicture}
	\begin{pgfonlayer}{nodelayer}
		\node [style=bbox] (69) at (0, 0) {$s$};
		\node [style=none] (70) at (-1.5, 0) {};
		\node [style=none] (71) at (1.5, 0) {};
		\node [style=label] (72) at (2, 0) {$A$};
		\node [style=label] (73) at (-2, 0) {$A$};
	\end{pgfonlayer}
	\begin{pgfonlayer}{edgelayer}
		\draw (70.center) to (69);
		\draw (69) to (71.center);
	\end{pgfonlayer}
\end{tikzpicture}
}
 $. Formally, we have $\sort\defeq \{A\}$ and $\sign \defeq \{ \; 
    }
\colon 1 \to A, 
    }
 \colon A \to A  \}$.
An interpretation of $\sign$ in $\Rel$ consists of  a set $X$ (i.e., $\alpha_\sort(A)$), a relation $0 \subseteq 1 \times X$  (i.e., $\alpha_\sign( \, 
    }
 )$) and 
a relation $s\subseteq  X \times X$  (i.e., $\alpha_\sign(  
    }
 )$).
\begin{figure}[t]
    \mylabel{ax:peanoT:zero:sv}{$0$-sv}
    \mylabel{ax:peanoT:zero:tot}{$0$-tot}
    \mylabel{ax:peanoT:succ:sv}{$s$-sv}
    \mylabel{ax:peanoT:succ:tot}{$s$-tot}
    \mylabel{ax:peanoT:ind}{ind}
    \mylabel{ax:peanoT:iso:1}{iso-1}
    \mylabel{ax:peanoT:iso:2}{iso-2}
    $\begin{array}{c@{\;}c@{}|@{}c}
        \scalebox{0.9}{
    \InputIfFileExists{peano/ax1/iso1/lhs.tikz}{}{\input{./tikz/peano/ax1/iso1/lhs.tikz}}
} \!\!\!\!\!\stackrel{\eqref{ax:peanoT:iso:1}}{=}\!\!\!\!\! \scalebox{0.9}{
    \InputIfFileExists{peano/ax1/iso1/rhs.tikz}{}{\input{./tikz/peano/ax1/iso1/rhs.tikz}}
}
        &
        \scalebox{0.9}{
    \InputIfFileExists{peano/ax1/iso2/lhs.tikz}{}{\input{./tikz/peano/ax1/iso2/lhs.tikz}}
} \!\!\!\!\stackrel{\eqref{ax:peanoT:iso:2}}{=}\!\!\!\! \scalebox{0.9}{
    \InputIfFileExists{peano/ax1/iso2/rhs.tikz}{}{\input{./tikz/peano/ax1/iso2/rhs.tikz}}
}
        &
        \scalebox{0.9}{
    \InputIfFileExists{peano/ax1/ind/rhs.tikz}{}{\input{./tikz/peano/ax1/ind/rhs.tikz}}
} \!\!\!\!\!\stackrel{\eqref{ax:peanoT:ind}}{\leq}\!\!\!\!\! \scalebox{0.9}{
    \InputIfFileExists{peano/ax1/ind/aux.tikz}{}{\input{./tikz/peano/ax1/ind/aux.tikz}}
}
    \end{array}$
    \caption{The Kleene-Cartesian theory of Peano.}
    \label{fig:tape-theory-natural-numbers}
\end{figure}

The set of axioms $\mathbb{P}$ consists of those in Figure \ref{fig:tape-theory-natural-numbers}.
From a universal-algebraic standpoint, the natural numbers form the smallest set $X$ such that $X$ is isomorphic to $X \piu \uno$. The two leftmost axioms in Figure~\ref{fig:tape-theory-natural-numbers} force $[s,0]  \defeq (s \piu 0) ; \codiag{X}$ to be an isomorphism of type $X \piu \uno \to X$: \eqref{ax:peanoT:iso:1} states that $\op{[s,0]}\dcomp [s,0] = \id{X}$, while axiom \eqref{ax:peanoT:iso:2} asserts the converse identity  $[s,0]\dcomp \op{[s,0]}= \id{X \piu \uno}$; the rightmost axiom expresses minimality: $X \subseteq 0 \dcomp s^* \colon 1\to X$. Consequently, an interpretation in $\Rel$ is a model of the theory $(\sign, \mathbb{P})$ precisely when $X$ is isomorphic to $X\piu 1$ and contained in $0;s^*$.
As expected, a model of $(\sign, \mathbb{P})$ in $\Rel$ is the usual set of natural numbers $\N$, equipped with the element zero $0\colon 1 \to \N$ and the successor function $s\colon \N \to \N$. We will shortly see that this is the unique model up-to isomorphism.

First, we illustrate that $(\sign, \mathbb{P})$ is equivalent to Peano's axiomatisation of natural numbers. Possibly, the most interesting axiom is the principle of induction:~\eqref{ax:peano:ind} in Figure~\ref{fig:peano-theory-natural-numbers}. 
This follows easily from posetal uniformity and \eqref{ax:peanoT:ind}.

\begin{thm}[Principle of Induction]\label{thm:induction} Let $(\Pi, \mathbb{Q})$ be a kc theory, such that $\sign \subseteq \Pi$ and $\mathbb{P}\subseteq \mathbb{Q}$. For all $P\colon\uno \to A$ in $\Cat{KCT}_{\Pi,\mathbb{Q}}$, \eqref{ax:peano:ind} in Figure \ref{fig:peano-theory-natural-numbers} holds 
\end{thm}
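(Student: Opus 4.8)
The plan is to derive the induction rule \eqref{ax:peano:ind} inside $\Cat{KCT}_{\Pi,\mathbb{Q}}$, where every axiom of $\basicKC$, the derived laws of \Cref{table:kc-rig derived laws}, and the Peano axioms of $\mathbb{P}$ are available (the latter because $\mathbb{P}\subseteq\mathbb{Q}$). Reading \eqref{ax:peano:ind} as a rule parametrised by $P\colon\uno\to A$, its hypotheses are the base case $0 \leq P$ (zero lies in $P$) and the inductive step $P\dcomp s \leq P$ ($P$ is closed under the successor), and its conclusion is $\top \leq P$; since $P\leq\top$ always holds, this conclusion says exactly that $P$ equals $\top$. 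The whole argument is just monotonicity plus one genuine use of the iteration structure.

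First I would record the reduction to minimality. The axiom \eqref{ax:peanoT:ind} states $\top \leq 0\dcomp\kstar{s}$, so by transitivity of $\leq$ the goal $\top\leq P$ follows as soon as one proves $0\dcomp\kstar{s}\leq P$. This isolates the only nonformal content of the rule: that $P$ absorbs the whole orbit $0\dcomp\kstar{s}$ of zero under the successor.

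I would then establish $0\dcomp\kstar{s}\leq P$ in two monotone steps. From the base case $0\leq P$ and monotonicity of $\dcomp$ we obtain $0\dcomp\kstar{s}\leq P\dcomp\kstar{s}$. From the inductive step $P\dcomp s\leq P$, the Kleene-star induction law — the fourth implication of \eqref{eq:Kllenelaw}, available since every Kleene bicategory is a typed Kleene algebra by \Cref{cor:kleeneareka}, instantiated with $l\defeq P$ and $f\defeq s$ — gives $P\dcomp\kstar{s}\leq P$. Chaining the two yields $0\dcomp\kstar{s}\leq P\dcomp\kstar{s}\leq P$, and combining with \eqref{ax:peanoT:ind} closes the derivation: $\top\leq 0\dcomp\kstar{s}\leq P$.

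The place where posetal uniformity is genuinely used is precisely the Kleene-star induction law invoked above. In a Kleene bicategory $\kstar{s}$ is not primitive but the trace $\trace_{A}\big((s\piu\id{A})\dcomp\codiag{A}\dcomp\diag{A}\big)$, and the implication $P\dcomp s\leq P \Rightarrow P\dcomp\kstar{s}\leq P$ is an instance of \eqref{ax:posetunif:2}: after rewriting $P\dcomp\kstar{s}$ as a trace by the tightening law of \Cref{fig:tapesax}, one compares it by uniformity with the trivial trace over $\zero$ that evaluates to $P$. This repackaging is exactly the content of \Cref{prop:trace-star}, so citing that theorem lets me treat the star law as a black box and keep the proof of \Cref{thm:induction} to the three lines above. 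The only real obstacle arises if one insists on a self-contained derivation: one must then exhibit the mediating morphism $r$ of \eqref{ax:posetunif:2}, built from $\cobang{A}$ together with the witness of $P\dcomp s\leq P$, and verify that the uniformity premise $(r\piu\id{})\dcomp f \leq g\dcomp(r\piu\id{})$ holds — a careful but routine tape manipulation, with everything else reducing to monotonicity and transitivity.
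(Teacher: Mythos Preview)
Your proof is correct and takes a genuinely different route from the paper's. The paper works directly at the level of tape diagrams and posetal uniformity: it first establishes the premise of \eqref{ax:trace:tape:AU2} from the two hypotheses (base and step together), applies AU2 to compare the trace form of $0\dcomp\kstar{s}$ with another trace, and then uses \eqref{ax:trace:tape:kstar-id} to collapse that second trace to $P$. You instead stay at the algebraic level: invoke \eqref{ax:peanoT:ind} to get $\top \leq 0\dcomp\kstar{s}$, use the base case $0\leq P$ with monotonicity to reach $P\dcomp\kstar{s}$, and then invoke the Kleene-star induction law of \eqref{eq:Kllenelaw} (available via \Cref{cor:kleeneareka}) from the step hypothesis $P\dcomp s\leq P$ to conclude $P\dcomp\kstar{s}\leq P$. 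Your argument is more modular---it separates the two hypotheses into two independent inequalities and treats the trace machinery as a black box---whereas the paper's proof is self-contained at the diagrammatic level and illustrates concretely how AU2 yields induction. Your final paragraph correctly identifies that the Kleene law you invoke is itself an instance of posetal uniformity, so the two proofs ultimately rest on the same axiom; you simply cite it one level of abstraction higher.
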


\begin{proof} %
    Observe that the following holds:
    \[
        \scalebox{0.8}{
    % [inline block 17: 7 envs, 14190 chars -> data_tex | \begin{tikzpicture}[scale=0.8] 	\begin{pgfonlayer}{nodelayer}...]

}
. \qedhere
    \]
\end{proof}

The other Peano's axioms state that $0$ is a natural number, $s$ is an injective function and that $0$ is \emph{not} the successor of any natural number. These are illustrated by means of tapes in Figure \ref{fig:peano-theory-natural-numbers}, where we use the characterisation of total, single-valued and injective relations provided by \Cref{lemma:cb:adjoints}. Observe that  \eqref{ax:peano:bottom} states that $\{x\in X \mid (x,0)\in s \} \subseteq \emptyset$.
\begin{prop}\label{lemma:Peano}
    The laws in Figures~\ref{fig:tape-theory-natural-numbers} and \ref{fig:peano-theory-natural-numbers} are equivalent. 
     In particular, the following equivalences hold:
     \begin{enumerate}
         \item $\eqref{ax:peanoT:iso:1} \iff \eqref{ax:peano:succ:sv}, \eqref{ax:peano:zero:sv}$
         \item $\eqref{ax:peanoT:iso:2} \iff \eqref{ax:peano:succ:tot}, \eqref{ax:peano:zero:tot}, \eqref{ax:peano:succ:inj}, \eqref{ax:peano:bottom}$
         \item $\eqref{ax:peanoT:ind} \iff \eqref{ax:peano:ind}$
     \end{enumerate}
\end{prop}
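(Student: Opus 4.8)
The plan is to read the two ``iso'' axioms through the adjoint characterisation of relational properties in \Cref{lemma:cb:adjoints}. Writing $m \defeq [s,0] = (s \piu 0);\codiag{X} \colon X \piu \uno \to X$, axioms \eqref{ax:peanoT:iso:1} and \eqref{ax:peanoT:iso:2} assert that $\op{m}$ is a two-sided inverse of $m$. By \Cref{lemma:cb:adjoints}, the equation $\op{m};m = \id{X}$ says precisely that $m$ is \emph{single valued} and \emph{surjective}, while $m;\op{m} = \id{X\piu\uno}$ says that $m$ is \emph{total} and \emph{injective}. Hence the whole proof reduces to translating single-valuedness, totality, injectivity and surjectivity of the composite $m$ into the corresponding properties of its two ``rows'' $s$ and $0$, and then matching each resulting inequality with an axiom of \Cref{fig:peano-theory-natural-numbers}.

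For the algebraic normal forms I would first push $\op{(\cdot)}$ inside, using that it reverses $;$ and distributes over $\piu$ (Table~\ref{table:re:daggerproperties} and \Cref{lemma:fb-cbsummary}.(4)), together with $\op{\codiag{X}} = \diag{X}$ and functoriality of $\piu$. This gives
\[ \op{m};m = \diag{X};\big((\op{s};s) \piu (\op{0};0)\big);\codiag{X} = (\op{s};s) \sqcup (\op{0};0) \]
by the definition of convolution \eqref{eq:covolution}. Since $\sqcup$ is the join of the hom-poset, $\op{m};m \leq \id{X}$ holds iff $\op{s};s \leq \id{X}$ and $\op{0};0 \leq \id{X}$, i.e.\ (by \Cref{lemma:cb:adjoints}) iff $s$ and $0$ are single valued; this is equivalence (1) up to the surjectivity direction discussed below. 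Dually, I would put $m;\op{m} \colon X\piu\uno \to X\piu\uno$ in matrix normal form (\Cref{prop:matrixform}); since the left injections satisfy $\iota_1;m = s$ and $\iota_2;m = 0$, its four components compute to $s;\op{s}$, $s;\op{0}$, $0;\op{s}$ and $0;\op{0}$, so by \Cref{prop:matrixform} the single equation $m;\op{m} = \id{X\piu\uno}$ splits into the componentwise statements $s;\op{s} = \id{X}$, $0;\op{0} = \id{\uno}$ and $s;\op{0} = \bot = 0;\op{s}$. Reading these back through \Cref{lemma:cb:adjoints} yields exactly totality of $s$ and of $0$, injectivity of $s$ (injectivity of $0$ holding automatically, as $\uno$ is the monoidal unit and $\id{\uno}$ is the top of $\Cat{C}[\uno,\uno]$), and the ``zero-is-not-a-successor'' axiom \eqref{ax:peano:bottom}; this is equivalence (2).

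Equivalence (3) is essentially already available: one direction is \Cref{thm:induction}, which derives the induction rule \eqref{ax:peano:ind} from minimality \eqref{ax:peanoT:ind} using posetal uniformity \eqref{ax:trace:tape:AU2}. For the converse I would instantiate \eqref{ax:peano:ind} at the predicate $P \defeq 0;\kstar{s}$: the base case $0 \leq 0;\kstar{s}$ follows from $\id{X} \leq \kstar{s}$ \eqref{eq:starequality}, and the inductive step $(0;\kstar{s});s \leq 0;\kstar{s}$ from $\kstar{s};s \leq \kstar{s}$ \eqref{eq:Kllenelaw}; induction then gives $\top \leq 0;\kstar{s}$, which is \eqref{ax:peanoT:ind}.

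The main obstacle is the one piece of \eqref{ax:peanoT:iso:1} not captured by single-valuedness, namely its ``$\geq$'' half $\id{X} \leq (\op{s};s) \sqcup (\op{0};0)$, i.e.\ \emph{surjectivity} of $m$: every element is either $0$ or a successor. This is strictly weaker than minimality but is \emph{not} among the axioms of \Cref{fig:peano-theory-natural-numbers}; it becomes redundant precisely because minimality \eqref{ax:peanoT:ind} forces every element to lie in $0;\kstar{s}$, hence to be $0$ (when the $\kstar{}$-exponent vanishes) or a successor (otherwise). Making this passage rigorous at the diagrammatic level --- turning the state-level inequality $\top \leq 0;\kstar{s}$ into the coreflexive-level surjectivity $\id{X} \leq (\op{s};s) \sqcup (\op{0};0)$ --- is where the argument must genuinely exploit the interaction of the trace/$\kstar{}$ structure with the Cartesian adjunctions, via the predicate--coreflexive correspondence of \Cref{proop:bijcor} and the unfolding $0;\kstar{s} = 0 \sqcup (0;s;\kstar{s})$. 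The componentwise computations underlying (1) and (2) are otherwise routine.
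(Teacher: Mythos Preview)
Your plan is correct and matches the paper's proof closely. The paper also decomposes $\op{m};m$ as a $\sqcup$ (via the $\diag{}/\codiag{}$ naturality axioms rather than writing the convolution explicitly), uses the matrix normal form for $m;\op{m}$ exactly as you do (this is the displayed equivalence \eqref{eq:peano:iso2-equivalence}), and handles point~(3) in both directions precisely as you suggest: \Cref{thm:induction} for one direction, and the instantiation $P \defeq 0;\kstar{s}$ with the two Kleene facts $\id{}\leq \kstar{s}$ and $\kstar{s};s \leq \kstar{s}$ for the other.

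You have also put your finger on the one genuinely nontrivial step: the surjectivity half of \eqref{ax:peanoT:iso:1} does \emph{not} follow from \eqref{ax:peano:succ:sv} and \eqref{ax:peano:zero:sv} alone, and the paper, like you, uses minimality to recover it. In particular the paper first derives \eqref{ax:peanoT:ind} from \eqref{ax:peano:ind} and only then proves \eqref{ax:peanoT:iso:1}; so the three ``equivalences'' in the statement are to be read jointly rather than independently. The paper's concrete argument for this step differs slightly from your sketch: rather than going through the predicate--coreflexive correspondence of \Cref{proop:bijcor}, it rewrites surjectivity of $m$ via the \eqref{eq:cb:sur}-characterisation of \Cref{lemma:cb:adjoints} as the state-level inequality $\codischarger{X} \leq \codischarger{X\piu\uno};m$, which unfolds (using the coherence law for $\codischarger{X\piu Y}$) to $\codischarger{X} \leq (\codischarger{X};s)\sqcup 0$. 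This is then obtained by sandwiching with $0;\kstar{s}$: \eqref{ax:peanoT:ind} gives $\codischarger{X}\leq 0;\kstar{s}$, the Kleene unfolding and monotonicity give $0;\kstar{s} \leq (\codischarger{X};s)\sqcup 0$. Your idea via $0;\kstar{s}=0\sqcup 0;\kstar{s};s$ is the same computation phrased differently; the coreflexive machinery is not needed.
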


\begin{cor}  Any model of $(\sign, \mathbb{P})$ is isomorphic to the one on natural numbers.
\end{cor}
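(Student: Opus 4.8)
The plan is to reduce the statement to the classical categoricity of second-order Peano arithmetic, which the relational axioms recover thanks to \Cref{lemma:Peano}. Fix an arbitrary model $\interpretation_X$ of $(\sign,\mathbb{P})$ in $\Rel$, consisting of a set $X$, a point $0_X \colon \uno \to X$ and a relation $s_X \colon X \to X$. By \Cref{lemma:Peano}, the axioms \eqref{ax:peanoT:iso:1}, \eqref{ax:peanoT:iso:2} and \eqref{ax:peanoT:ind} are equivalent to the Peano-form axioms of \Cref{fig:peano-theory-natural-numbers}; in particular $0_X$ is a total single-valued point, $s_X$ is an injective function (by \eqref{ax:peano:succ:sv}, \eqref{ax:peano:succ:tot}, \eqref{ax:peano:succ:inj}), $0_X$ lies outside the image of $s_X$ (by \eqref{ax:peano:bottom}), and the induction principle \eqref{ax:peano:ind} holds for every predicate $P \colon \uno \to A$. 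Since all the structure involved is functional, the remainder of the argument can be carried out in the category of sets and functions.

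First I would define a candidate isomorphism $\phi \colon \N \to X$ by primitive recursion on the standard model, setting $\phi(0) = 0_X$ and $\phi(n+1) = s_X(\phi(n))$. By construction $\phi$ is a homomorphism of models, i.e.\ $0_\N \dcomp \phi = 0_X$ and $s_\N \dcomp \phi = \phi \dcomp s_X$, so it only remains to prove that $\phi$ is a bijection, hence an isomorphism in $\Rel$. For surjectivity I would invoke the minimality axiom \eqref{ax:peanoT:ind}, which reads $\top \leq 0_X \dcomp \kstar{s_X}$ and expresses that every element of $X$ equals $s_X^{\,n}(0_X) = \phi(n)$ for some $n$; equivalently, the image of $\phi$ contains $0_X$, is closed under $s_X$, and therefore by \eqref{ax:peano:ind} coincides with all of $X$. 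For injectivity I would argue by induction on the external $\N$: using that $0_X$ is not a successor (\eqref{ax:peano:bottom}) one rules out $\phi(0) = \phi(m)$ for $m>0$, and using that $s_X$ is injective (\eqref{ax:peano:succ:inj}) one reduces $\phi(n+1) = \phi(m+1)$ to $\phi(n) = \phi(m)$. Combining the two directions, $\phi$ is a bijective homomorphism, whence $X \cong \N$ as models.

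The main obstacle is not the construction of $\phi$ but the careful bookkeeping in the two directions of the bijection: surjectivity genuinely relies on the reachability/induction axiom \emph{inside} the model $X$ (it is precisely what excludes non-standard elements), whereas injectivity is an induction on the \emph{external} standard $\N$ that must simultaneously use the ``zero-is-not-a-successor'' and ``successor-is-injective'' clauses.

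A cleaner but essentially equivalent packaging, which I would also mention, is to read \eqref{ax:peanoT:iso:1} and \eqref{ax:peanoT:iso:2} as saying that $[s_X,0_X] \colon X \piu \uno \to X$ is an isomorphism, so that $\bigl(X,[s_X,0_X]\bigr)$ is an algebra for the endofunctor $(-)\piu\uno$, and \eqref{ax:peanoT:ind} as saying that this algebra has no proper subalgebra. In the category of sets these two conditions make it the \emph{initial} $(-)\piu\uno$-algebra, exactly as for $\bigl(\N,[s_\N,0_\N]\bigr)$; initiality then produces the unique isomorphism between them directly, via the standard fact that any two initial objects are uniquely isomorphic, avoiding the explicit double induction at the cost of invoking the initial-algebra machinery.
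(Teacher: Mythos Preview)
Your proposal is correct and follows essentially the same approach as the paper: invoke \Cref{lemma:Peano} to translate the tape axioms into the classical Peano axioms, and then appeal to Dedekind's categoricity theorem. The paper simply cites Dedekind's result as a black box, whereas you spell out its proof (the recursive construction of $\phi$ and the two inductions for bijectivity, plus the initial-algebra repackaging); this is additional detail rather than a different strategy.
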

\begin{proof}
By \Cref{lemma:Peano} and the result by Dedekind in \cite{dedekind1888nature} that shows that any two models of Peano axioms are isomorphic. %
\end{proof}
\begin{figure}[t]
    \mylabel{ax:peano:zero:sv}{$0$-sv}
    \mylabel{ax:peano:zero:tot}{$0$-tot}
    \mylabel{ax:peano:succ:sv}{$s$-sv}
    \mylabel{ax:peano:succ:tot}{$s$-tot}
    \mylabel{ax:peano:succ:inj}{$s$-inj}
    \mylabel{ax:peano:bottom}{$\bot$}
    \mylabel{ax:peano:ind}{ind-princ}
    \[
        % [inline block 18: 4 envs, 2807 chars -> data_tex | \begin{array}{cc}             \begin{array}{cc}...]

            }
        \end{array}
    \]
    \caption{Peano's theory of the natural numbers.}
    \label{fig:peano-theory-natural-numbers}
\end{figure}

\subsection{First steps with Tape's arithmetic}
To give to the reader a taste of how one can program with tapes, we now illustrate how to start to encode arithmetic within  $(\sign, \mathbb{P})$.
The tape for addition is illustrated below. %
\begin{equation}\label{eq:add}
    
    \InputIfFileExists{addition/addCirc.tikz}{}{\input{./tikz/addition/addCirc.tikz}}
 \defeq 
    \InputIfFileExists{addition/additionDef.tikz}{}{\input{./tikz/addition/additionDef.tikz}}

\end{equation}
This can be thought of as a simple imperative program:
\begin{center}
\texttt{add(x,y) = while (x>0) \{ x:=x-1; y:=y+1 \}; return y}
\end{center}
The variable \texttt{x} corresponds to the top wire in \eqref{eq:add}, while \texttt{y} to the bottom one. At any iteration, the program checks whether 
 \texttt{x} is $0$, in which case it returns \texttt{y}, or the successor of some number, in which case \texttt{x} takes such number, while \texttt{y} takes its own successor.
 Note that \eqref{eq:add} exploits both $
    }
$ and $\op{(
    }
)}$. The latter act at the same time as a test ($\texttt{x>0}$) and as an assignment (\texttt{x:=x-1}), since its unique (modulo iso) interpretation is the relation $\{(x+1,x) \mid x\in \N\}$.

One can easily prove that~\eqref{eq:add} satisfies the usual inductive definition of addition in Peano's arithmetics.
\begin{lem}\label{lemma:additionPeano}
    The following hold in $\KTCBP$:
    \begin{enumerate}
        \item $
    \InputIfFileExists{addition/addZero.tikz}{}{\input{./tikz/addition/addZero.tikz}}
 = 
    \InputIfFileExists{addition/addZeroRhs.tikz}{}{\input{./tikz/addition/addZeroRhs.tikz}}
 \qquad \textnormal{(\texttt{ add(0,y) = y })}$ 
        \item $
    \InputIfFileExists{addition/addSuccLhs.tikz}{}{\input{./tikz/addition/addSuccLhs.tikz}}
 = 
    \InputIfFileExists{addition/addSuccRhs.tikz}{}{\input{./tikz/addition/addSuccRhs.tikz}}
 \qquad \textnormal{(\texttt{ add(succ(x),y) = succ(add(x,y)) })}$
    \end{enumerate}
\end{lem}
\begin{proof}
    First, observe that by \eqref{eq:starequality} 
    the following holds in $\KTCBP$:
    \begin{equation}\label{eq:add-fixpoint}
        
    \InputIfFileExists{addition/additionDef.tikz}{}{\input{./tikz/addition/additionDef.tikz}}
 =_{\mathbb{P}} 
    \InputIfFileExists{addition/addStar.tikz}{}{\input{./tikz/addition/addStar.tikz}}
. 
    \end{equation}
    Then, for $(1)$ the following holds:
    \[
    % [inline block 19: 5 envs, 6025 chars -> data_tex | \begin{array}{@{}c@{}c@{}c@{}c@{}c@{}c@{}c}       ...]

}
$. %
\end{lem}
\begin{proof}
    First observe that the following holds:
    \begin{align*}
        \!\!
    \InputIfFileExists{addition/terminates/derivation/step1.tikz}{}{\input{./tikz/addition/terminates/derivation/step1.tikz}}
 
        \!\!\stackrel{\eqref{ax:dischargeradj2}}{\leq_{\mathbb{P}}}\!\!
        
    \InputIfFileExists{addition/terminates/derivation/step2.tikz}{}{\input{./tikz/addition/terminates/derivation/step2.tikz}}
 
        \!\!\stackrel{\substack{\eqref{ax:codiagnat} \\ \eqref{ax:diagnat}}}{=_{\mathbb{P}}}\!\!
        
    \InputIfFileExists{addition/terminates/derivation/step3.tikz}{}{\input{./tikz/addition/terminates/derivation/step3.tikz}}
 
        \!\!\stackrel{\eqref{ax:dischargeradj1}}{\leq_{\mathbb{P}}}\!\! 
        
    \InputIfFileExists{addition/terminates/derivation/step4.tikz}{}{\input{./tikz/addition/terminates/derivation/step4.tikz}}
 
        \!\!\stackrel{ \eqref{ax:peano:succ:tot}}{\leq_{\mathbb{P}}}\!\!
        
    \InputIfFileExists{addition/terminates/derivation/step5.tikz}{}{\input{./tikz/addition/terminates/derivation/step5.tikz}}
.
    \end{align*}
    Then, by~\eqref{ax:trace:tape:AU1}, the inequality below holds and the derivation concludes the proof. %
    \[
        
    }
  \stackrel{\eqref{ax:peanoT:ind}}{\leq_{\mathbb{P}}} 
    \InputIfFileExists{addition/terminates/derivation/unif_lhs.tikz}{}{\input{./tikz/addition/terminates/derivation/unif_lhs.tikz}}
 \leq_{\mathbb{P}} 
    \InputIfFileExists{addition/terminates/derivation/unif_rhs.tikz}{}{\input{./tikz/addition/terminates/derivation/unif_rhs.tikz}}
 \stackrel{\eqref{eq:add}}{=_{\mathbb{P}}} 
    }
        \qedhere
    \]
\end{proof} 

\begin{rem}
Beyond natural numbers, one can analogously define kc rig theories for other algebraic data types, such as lists. Interestingly, the combined use of the opposite operation \( \op{(\cdot)} \) and the diagonal \( \diag{X} \colon X \to X \piu X \) makes it possible to express \emph{pattern matching}. For instance, consider the arrow
\[
\diag{A\otimes A} ; (\op{(
    }
 \otimes \id{A})} \oplus \op{(
    }
\otimes \id{A})}) \colon A \otimes A \to (A \otimes A) \oplus A
\]
occurring in \eqref{eq:add}. At this stage, the reader might wonder whether tape diagrams could provide a suitable setting for functional programming. We argue that this is not the case in the current formulation, as it lacks explicit linguistic constructs for handling functional types. Rather, tapes provide a particularly well-suited setting for imperative programs and program logics, as illustrated in the next section.
\end{rem}

\section{A Diagrammatic View of Imperative Programming}\label{sec:hoare} %

In this section, we begin by showing how imperative programs can be represented as tape diagrams. This encoding establishes the foundation for our main objective: demonstrating that the structure of kc-rig categories offers a natural and expressive framework for reasoning about imperative programs. In particular, we show that the axioms of kc-rig categories (1) make it possible to derive sophisticated program equivalences involving nontrivial interactions between data and control flow; (2) yield a proof system that is at least as powerful as Hoare logic; and (3) support relational reasoning about pairs of programs, in the style of relational Hoare logic.

\subsection{Programs as tape diagrams} %
For the sake of generality, we avoid fixing basic types and operations and, rather, we work parametrically with respect to a triple $(\sort,\mathcal{F},\mathcal{P})$: $\sort$ is a set of sorts, representing basic types; $\mathcal{F}$ is a set of function symbols, equipped with an arity in $\sort^*$ and a coarity in $\sort$;  $\mathcal{P}$ is a set of predicate symbols equipped just with an arity in $\sort^*$. The coarity of predicates is fixed to be $1$.

We consider the monoidal signature $\sign \defeq \mathcal{F} \cup \mathcal{P} \cup \bar{\mathcal{P}}$ where $\bar{\mathcal{P}}$ is as in Example \ref{ex:kat}. %
The set of axioms $\basicR$ contains, for all $f\colon U \to A$ in $\mathcal{F}$, the axioms in \eqref{eq:deterministic-total} and, for each $R\colon U \to \uno$ in $\mathcal{P}$, those in \eqref{eq:axioms-predicates}. These force any model of the kc-theory $(\sign, \basicR)$ in $\Rel$ to interpret symbols in $\mathcal{F}$ as functions and $\bar{P}$ as the complement of $P$.  One may add to $\basicR$ other axioms, e.g., those of $\mathbb{P}$ in Section \ref{sec:peano}.

We consider terms generated by the following grammar  
\[
\begin{array}{rcl}
e &\Coloneqq& x \mid f(e_{1}, \ldots, e_{n})\\
P &\Coloneqq& R(e_1, \dots, e_n) \mid \bar{R}(e_1, \dots, e_n) \mid \top \mid \bot  \mid P \lor P \mid P \land P  \\
C &\Coloneqq& \mathsf{abort} \mid \mathsf{skip} \mid \mathop{\mathsf{if}} P \mathop{\mathsf{then}} C \mathop{\mathsf{else}} C \mid \mathop{\mathsf{while}} P \mathop{\mathsf{do}} C \mid C ; C \mid x \coloneqq e
\end{array}
\]
where $f\in \mathcal{F}$, $R\in \mathcal{P}$ and $x$ is taken from a fixed set of variables.  As usual, $e$ are expressions, $P$ are predicates, and $C$ are commands. Negation of predicates can be expressed as in \eqref{eq:neg}.
In order to encode terms into diagrams, we need to make copying and discarding of variables explicit; we thus define a simple type system with judgement of the form
\[\Gamma \vdash e \colon A \qquad \Gamma \vdash P \colon 1 \qquad \Gamma \vdash C\]
where $A$ is a sort in $\sort$ and $\Gamma$ is a \emph{typing context}, i.e., an ordered sequence $x_1\colon A_1, \dots x_n \colon A_n$, where all the $x_i$ are distinct variables and $A_i \in \sort$.
The type system is  in Table \ref{tab:Hoaretypes} where $\Gamma$, $\Gamma'$ and $\Delta'$ stands for arbitrary typing contexts. In particular, the notation $\Gamma = \Gamma', x \colon A, \Delta'$, appearing in the premises of the rules \textsc{(VAR)} and \textsc{(ASSN)}, means that there exist contexts $\Gamma'$ and $\Delta'$ so that $\Gamma$ is the concatenation of $\Gamma'$, $x \colon A$ and $\Delta'$.

\begin{table}[t]
{\scriptsize	
  \begin{equation*}
  % [inline block 20: 2 envs, 4163 chars -> data_tex | \begin{array}{c@{\qquad}c@{\qquad}c@{\qquad}}     \multicolumn{3}{c}{...]
$$}
\caption{Encoding of expressions, predicates and commands into diagrams. }\label{tab:encoding}
\end{table}

The encoding $\encoding{\cdot}$, defined inductively on the typing rules, maps well-typed expressions $\Gamma \vdash e \colon A$,  predicates $\Gamma \vdash P \colon 1$ and commands $\Gamma \vdash C $  into, respectively,  diagrams of the following types
\[\encoding{\Gamma} \to A \qquad \encoding{\Gamma} \to 1 \qquad \encoding{\Gamma} \to \encoding{\Gamma}\] 
where for $\Gamma =x_1\colon A_1, \dots , x_n \colon A_n$, we fix $\encoding{\Gamma} \defeq A_1 \per \dots  \per A_n$. 

The definition of $\encoding{\cdot}$, compactly summarised in Table \ref{tab:encoding}, is illustrated thoroughly below in diagrammatic form. For the sake of readability, we label wires directly with the typing contexts rather than their encodings.

\paragraph{Expressions} For the case of variables $\Gamma \vdash x \colon A$, the context is $\Gamma = \Gamma', x \colon A, \Delta'$, according to the typing rule \textsc{(VAR)}. In the encoding, the part of the context that is not $x \colon A$ is discarded using $\discharger{\encoding{\Gamma'}}$ and $\discharger{\encoding{\Delta'}}$, as shown below.
\[
\encoding{\Gamma \vdash x \colon A} 
\defeq

    % [inline block 21: 2 envs, 3548 chars -> data_tex | \begin{tikzpicture} 	\begin{pgfonlayer}{nodelayer}...]

}

\]

\paragraph{Predicates} Analogously to the case of operations, predicate symbols $\Gamma \vdash R(e_1, \ldots, e_n) \colon \uno$ and their complements $\Gamma \vdash \bar{R}(e_1, \ldots, e_n) \colon \uno$ share the context across the arguments $e_i$, in accordance with the rules \textsc{(R)} and \textsc{$(\bar{\text{R}})$}.
\[
  \encoding{\Gamma \vdash R(e_1, \dots e_n)\colon 1} \defeq 
    \InputIfFileExists{hoare/enc/pred/R.tikz}{}{\input{./tikz/hoare/enc/pred/R.tikz}}

  \qquad
  \encoding{\Gamma \vdash \bar{R}(e_1, \dots e_n)\colon 1} \defeq 
    \InputIfFileExists{hoare/enc/pred/notR.tikz}{}{\input{./tikz/hoare/enc/pred/notR.tikz}}

\]
The remaining Boolean operations and constants are encoded as in Example~\ref{ex:kat}.
\[
\begin{array}{r@{\,}c@{\,}l@{\qquad\qquad}r@{\,}c@{\,}l}
  \encoding{\Gamma \vdash \top\colon 1} &\defeq& 
    \InputIfFileExists{hoare/enc/pred/top.tikz}{}{\input{./tikz/hoare/enc/pred/top.tikz}}

  &
  \encoding{\Gamma \vdash \bot \colon 1} &\defeq& 
    \InputIfFileExists{hoare/enc/pred/bot.tikz}{}{\input{./tikz/hoare/enc/pred/bot.tikz}}

  \\[10pt]
  \encoding{\Gamma \vdash P \land  Q \colon 1} &\defeq& 
    \InputIfFileExists{hoare/enc/pred/and.tikz}{}{\input{./tikz/hoare/enc/pred/and.tikz}}

  &
  \encoding{\Gamma \vdash P \lor  Q \colon 1} &\defeq& 
    \InputIfFileExists{hoare/enc/pred/or.tikz}{}{\input{./tikz/hoare/enc/pred/or.tikz}}

\end{array}
\]

\paragraph{Commands} 
The command $\Gamma \vdash \mathsf{skip}$ does not perform any computation, leaving the program state unchanged. Therefore it is represented as the identity of the context $\Gamma$.
\[
\encoding{\Gamma \vdash  \mathsf{skip} } \defeq 
    \InputIfFileExists{hoare/enc/cmd/skip.tikz}{}{\input{./tikz/hoare/enc/cmd/skip.tikz}}

\]
The command $\Gamma \vdash \mathsf{abort}$ terminates abruptly, preventing any further computation. Therefore it is represented as $\bot$, expressing that no initial state leads to any final state.
\[
\encoding{\Gamma \vdash \mathsf{abort}}  \defeq 
    \InputIfFileExists{hoare/enc/cmd/abort.tikz}{}{\input{./tikz/hoare/enc/cmd/abort.tikz}}

\]
As expected, the encoding of composition $\Gamma \vdash C ; D$ is given by the sequential composition of the encodings of $\Gamma \vdash C$ and $\Gamma \vdash D$.
\[
\encoding{\Gamma \vdash  C;D }  \defeq 
    \InputIfFileExists{hoare/enc/cmd/seq.tikz}{}{\input{./tikz/hoare/enc/cmd/seq.tikz}}

\]
For conditional branching and iteration, we encode guards $\Gamma \vdash P \colon 1$ as their corresponding coreflexives $c(\encoding{P}) \colon \encoding{\Gamma} \to \encoding{\Gamma}$. 
The encodings of $\Gamma \vdash \mathop{\mathsf{if}} P \mathop{\mathsf{then}} C \mathop{\mathsf{else}} D$ and $\Gamma \vdash \mathop{\mathsf{while}} P \mathop{\mathsf{do}} C$ are pretty standard (see e.g.~\cite{kozen1997kleene}) and exploit, respectively, the join $\sqcup$ and the Kleene star $\kstar{(\cdot)}$, as shown below.
\[
\begin{array}{r@{\,}c@{}l}
  \encoding{\Gamma \vdash \mathop{\mathsf{if}} P \mathop{\mathsf{then}} C \mathop{\mathsf{else}} D}  & \defeq & 
    \InputIfFileExists{hoare/enc/cmd/if.tikz}{}{\input{./tikz/hoare/enc/cmd/if.tikz}}

  \\[20pt]
  \encoding{\Gamma \vdash\mathop{\mathsf{while}} P \mathop{\mathsf{do}} C}  & \defeq & 
    \InputIfFileExists{hoare/enc/cmd/while.tikz}{}{\input{./tikz/hoare/enc/cmd/while.tikz}}

\end{array}
\]
Finally, to encode the assignment $\Gamma \vdash x \coloneqq e$, we exploit the structure of Cartesian bicategories to correctly model data flow. The typing rule \textsc{(ASSN)} requires the context to be of the form $\Gamma = \Gamma', x \colon A, \Delta'$. Then, $\copier{\encoding{\Gamma'}}$ and $\copier{\encoding{\Delta'}}$ are used to copy the parts of the context that are not affected by the assignment, as illustrated below.
\[
\encoding{\Gamma \vdash x \coloneqq e}  \defeq 
    \InputIfFileExists{hoare/enc/cmd/asgn.tikz}{}{\input{./tikz/hoare/enc/cmd/asgn.tikz}}

\]

\begin{rem}
Recall that any model $\interpretation$ of $(\sign, \basicR)$ in $\Rel$ assigns to each basic sort $A \in \sort$ a set $\dsem{A}_\interpretation$, and interprets a tensor $A_1 \otimes \dots \otimes A_n$ as the Cartesian product $\dsem{A_1}_\interpretation \times \dots \times \dsem{A_n}_\interpretation$. In particular, this coincides with $\dsem{\encoding{\Gamma}}_\interpretation$ for a context $\Gamma = x_1\colon A_1, \dots, x_n \colon A_n$.

Operationally, $\dsem{\encoding{\Gamma}}_\interpretation$ can be viewed as the space of all possible \emph{states}, namely tuples $(a_1, \dots, a_n)$ such that $a_i \in \dsem{A_i}_\interpretation$. Since $\encoding{\Gamma \vdash C}$ is a tape of type $\encoding{\Gamma} \to \encoding{\Gamma}$, its semantics
\(
\dsem{\encoding{\Gamma \vdash C}}_\interpretation \colon \dsem{\encoding{\Gamma}}_\interpretation \to \dsem{\encoding{\Gamma}}_\interpretation
\)
 is a relation on the state space $\dsem{\encoding{\Gamma}}_\interpretation$. This relation provides the \emph{extensional semantics} of the program $C$, namely the set of all pairs of states $(i,o)$ such that executing $C$ from state $i$ may lead to state $o$.

Now, for all commands $C$ and $D$, by the definition of models in Section~\ref{ssec:funsem}, we have that
\begin{center}
if $\encoding{\Gamma \vdash C} =_{\basicR} \encoding{\Gamma \vdash D}$, then $\dsem{\encoding{\Gamma \vdash C}}_\interpretation = \dsem{\encoding{\Gamma \vdash D}}_\interpretation$
\end{center}
for any model $\interpretation$. Therefore, if two commands can be proved equal via their tape encodings, then they have the same extensional semantics, i.e., they are \emph{extensionally equivalent}. This establishes the soundness of our approach. The converse implication---completeness---does not hold in general.
\end{rem}

\begin{rem}
Our encoding is reminiscent of the standard translation of imperative programs into Kleene Algebra with Tests (KAT) \cite{kozen1997kleene}. In that setting, assignments are typically interpreted as atomic symbols drawn from a fixed set of generators, whereas tape diagrams allow us to represent them explicitly. As a consequence, the axioms of kc-rig categories make it possible to establish program equivalences that are \emph{not} directly provable in KAT without introducing additional laws (see, e.g., \cite{angus2001kleene}). We present several examples below.
\end{rem}

\begin{exa}\label{ex:program equivalence 1}
  Let $\Gamma=x:A,y:A,z:A$, then $\encoding{\cdot}$ maps the programs $\Gamma \vdash x \coloneqq z ; y\coloneqq z$ and $\Gamma \vdash y \coloneqq z ; x \coloneqq z$ into the same tape.
  \[ \encoding{\Gamma \vdash x \coloneqq z ; y\coloneqq z} =\!\! \scalebox{0.8}{
    % [inline block 22: 3 envs, 6720 chars -> data_tex | \begin{tikzpicture} 	\begin{pgfonlayer}{nodelayer}...]

}
} \!\!= \encoding{\Gamma \vdash y \coloneqq z ; x \coloneqq z}.\]
\end{exa}
\begin{exa} Consider a signature containing a unary function symbol $(-)+1$ and a unary predicate symbol $(-)=0$, written in infix notation. 
Let $\Gamma = x \colon A, y \colon A$ and consider the following program 
  \[
  \Gamma \vdash x \coloneqq x+1 ; \mathop{\mathsf{if}} y = 0 \mathop{\mathsf{then}} y \coloneqq y+1 \mathop{\mathsf{else}} \mathsf{skip}.
  \]
  Observe that the assignment  on $x$ commutes with the conditional branching, since they operate on disjoint sets of variables. We show this via the encoding, as follows:
  \[
  % [inline block 23: 1 envs, 10588 chars -> data_tex | \begin{array}{@{}c@{}c@{}c@{}c}     & \encoding{\Gamma \vdash x \coloneqq x+1 ; \mathop{\mathsf{if}} y = 0 \mathop{\math...]

  \]
\end{exa}

\begin{exa}
Let $\Gamma = x \colon A, y \colon A$ and consider the following program:
  \[
  \Gamma \vdash \mathop{\mathsf{while}} x = 0 \mathop{\mathsf{do}} y \coloneqq y + 1.
  \]
  If the guard holds initially, the loop does not terminate, since the assignment to $y$ never affects the test on $x$. Conversely, if the guard is false from the beginning, the program terminates immediately without modifying the state. In other words, the program is equivalent to:
  \[
    \Gamma \vdash \mathop{\mathsf{if}} x = 0 \mathop{\mathsf{then}} \mathsf{abort} \mathop{\mathsf{else}} \mathsf{skip}.
  \]
  We establish this equivalence using the encoding, relying on the monoidal product $\per$, which makes it possible to reason about data flow--and in particular about predicates and commands acting on disjoint sets of variables.

  First, to simplify calculations, it is convenient to note the following equivalences:
  \[
  c(\encoding{\Gamma \vdash x = 0}) 
  = 
  c(\!
    % [inline block 24: 12 envs, 19237 chars -> data_tex | \begin{tikzpicture} 	\begin{pgfonlayer}{nodelayer}...]

  \]
\end{exa}

\subsection{Diagrammatic Hoare logic} %

Hoare logic \cite{hoare1969axiomatic} is one of the most influential languages to reason about imperative programs. Its rules --in the version appearing in \cite{winskel1993formal}-- are in \Cref{fig:hoare-rules}. 
In partial correctness, a triple \(\{P\}C\{Q\}\) asserts that if, starting from a state satisfying the precondition $P$, the execution of $C$ terminates, then the resulting state satisfies the postcondition $Q$. We can express Hoare triples as inequalities \(\op{\encoding{P}} ; \encoding{C} \leq_{\basicR} \op{\encoding{Q}}\), that, in form of diagrams appear as follows:
\[
    \begin{tikzpicture}
	\begin{pgfonlayer}{nodelayer}
		\node [style=none] (138) at (6.25, 1.25) {};
		\node [style=none] (139) at (6.25, -1.25) {};
		\node [style=none] (144) at (7.75, 0) {};
		\node [style=none] (148) at (12.25, 1.25) {};
		\node [style=none] (149) at (12.25, -1.25) {};
		\node [style=none] (159) at (12.25, 0) {};
		\node [style=label] (160) at (13, 0) {$\Gamma$};
		\node [style=bboxOp] (165) at (7.75, 0) {$\encoding{P}$};
		\node [style=bbox] (166) at (10.5, 0) {$\encoding{C}$};
		\node [style=label] (167) at (5.5, 0) {$\phantom{\Gamma}$};
	\end{pgfonlayer}
	\begin{pgfonlayer}{edgelayer}
		\draw [tape] (149.center)
			 to (139.center)
			 to (138.center)
			 to (148.center)
			 to cycle;
		\draw (165) to (166);
		\draw (166) to (159.center);
	\end{pgfonlayer}
\end{tikzpicture}
}
 \leq_{\basicR} 
    \begin{tikzpicture}
	\begin{pgfonlayer}{nodelayer}
		\node [style=none] (167) at (6, 1.25) {};
		\node [style=none] (168) at (6, -1.25) {};
		\node [style=none] (169) at (10, 0) {};
		\node [style=none] (170) at (12, 1.25) {};
		\node [style=none] (171) at (12, -1.25) {};
		\node [style=none] (172) at (12, 0) {};
		\node [style=label] (173) at (12.75, 0) {$\Gamma$};
		\node [style=bboxOp] (174) at (10, 0) {$\encoding{Q}$};
		\node [style=label] (175) at (5.25, 0) {$\phantom{\Gamma}$};
	\end{pgfonlayer}
	\begin{pgfonlayer}{edgelayer}
		\draw [tape] (171.center)
			 to (168.center)
			 to (167.center)
			 to (170.center)
			 to cycle;
		\draw (174) to (172.center);
	\end{pgfonlayer}
\end{tikzpicture}
}
.\]%
Here we prove that  if a triple \(\{P\}C\{Q\}\) is derivable within the Hoare logic, then it follows from the rules of kc-rig categories. We first need the following lemma illustrating how substitutions are handled by the encoding.
\begin{lem}\label{lemma:encodingsubstPed}
Let $\Gamma ' = \Gamma, x\colon A, \Delta$. If $\Gamma' \vdash P\colon 1$ and $\Gamma ' \vdash t\colon A$, then
\[\encoding{ \Gamma' \vdash P[t /x] \colon 1} = 
    \InputIfFileExists{hoare/pred/step1.tikz}{}{\input{./tikz/hoare/pred/step1.tikz}}
. \]
\end{lem}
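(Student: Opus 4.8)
The plan is to prove the statement by structural induction, but since substitution is fundamentally an operation on expressions while the logical structure of $P$ merely transports it, I would first isolate an auxiliary \emph{substitution lemma for expressions}: for every well-typed $\Gamma' \vdash e \colon B$ and $\Gamma' \vdash t \colon A$ with $\Gamma' = \Gamma, x\colon A, \Delta$, the diagram $\encoding{\Gamma' \vdash e[t/x] \colon B}$ equals the tape obtained by copying the context $\Gamma'$, computing $t$ on one copy to fill the $x$-position, and then applying $\encoding{\Gamma' \vdash e}$ to the resulting context. Once this expression-level statement is in hand, the predicate case reduces to a short induction over the grammar of $P$, since by \Cref{tab:encoding} the encodings of $R(e_1,\dots,e_n)$, $\bar R(e_1,\dots,e_n)$, $\top$, $\bot$, $P\land Q$ and $P\lor Q$ are all built uniformly out of the expression encodings together with the black comonoid and the (co)diagonal structure.

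For the expression lemma I would induct on $e$. The two base cases are variables. If $e = x$, then $e[t/x] = t$ and the right-hand side collapses to $\encoding{\Gamma' \vdash t \colon A}$ after the unused copies of the context are absorbed by the counit law \eqref{ax:cb:comonoid:unit}; if $e = y$ with $y \neq x$, then $e[t/x] = y$ and the freshly computed copy of $t$ is discarded, again by \eqref{ax:cb:comonoid:unit}. For the inductive step $e = f(e_1,\dots,e_n)$ I would unfold the encoding $\copier{\encoding{\Gamma}}^n ; (\encoding{\Gamma'\vdash e_1} \per \dots \per \encoding{\Gamma'\vdash e_n}) ; f$ from \Cref{tab:encoding}, apply the induction hypothesis to each $e_i$, and then refactor the nested copiers using coassociativity \eqref{ax:cb:comonoid:assoc} and cocommutativity \eqref{ax:cb:comonoid:comm}, together with the coherence of $\copier{}$ with $\per$ (Definition~\ref{def:cartesian bicategory}.(3)), so that the single shared computation of $t$ is distributed to all the arguments.

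The predicate induction then proceeds case by case. For $R(e_1,\dots,e_n)$ and $\bar R(e_1,\dots,e_n)$ the encoding is precisely the copier-then-parallel-expressions pattern postcomposed with $R$ (resp. $\bar R$), so the expression lemma applies verbatim to each $e_i$. The constants $\top$ and $\bot$ are handled by discarding the computed value of $t$ via \eqref{ax:cb:comonoid:unit}. For the binary connectives one applies the induction hypothesis to both $P$ and $Q$: in the case of $P \land Q$, which uses $\copier{}$, and of $P \lor Q$, which uses $\diag{}/\codiag{}$, the substitution diagram is moved past the branching using the comonoid axioms and the lax naturality laws \eqref{ax:cb:comonoid:nat:copy} and \eqref{ax:cb:comonoid:nat:discard}.

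The main obstacle is the diagrammatic bookkeeping of copiers. Naive substitution replaces each occurrence of $x$ \emph{independently}, which would amount to recomputing $t$ once per occurrence, whereas the right-hand side of the statement computes $t$ exactly once and then copies its result. Reconciling the two is exactly the content of the comonoid axioms: coassociativity and cocommutativity guarantee that an iterated copy of the context can be refactored as ``copy once, compute $t$, then copy the result,'' while the coherence of $\copier{}$ with $\per$ allows these refactorings to pass through parallel composition. Since all of these are genuine equalities in a Cartesian bicategory, the whole derivation stays within $\sim_{\basicR}$; the only point requiring care is the arity tracking when $t$ itself contains $x$, so that the original value of $x$ remains available to $\encoding{\Gamma' \vdash t \colon A}$ even as the other occurrences are overwritten.
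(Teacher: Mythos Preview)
Your overall architecture---an auxiliary substitution lemma for expressions proved by induction on $e$, followed by an induction on the predicate grammar---matches the paper exactly (the paper isolates this as a separate \Cref{lemma:encodingsubst}). The gap is in the justification of the equalities.

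You claim that ``all of these are genuine equalities in a Cartesian bicategory,'' but they are not: the naturality laws \eqref{ax:cb:comonoid:nat:copy} and \eqref{ax:cb:comonoid:nat:discard} are only \emph{lax}, giving $f;\copier{} \leq \copier{};(f\per f)$ and $f;\discharger{} \leq \discharger{}$. With only lax naturality you get at best one inclusion, not the equality the lemma asserts. The missing ingredient is the theory $\basicR$ itself: the axioms~\eqref{eq:deterministic-total} force every function symbol, and hence every $\encoding{\Gamma' \vdash t \colon A}$, to be a \emph{map} (single-valued and total), and maps are exactly the arrows for which copy and discard are strictly natural. Concretely, in your variable case $e = y \neq x$ you must discard the computed $t$, which needs totality, not \eqref{ax:cb:comonoid:unit}; in the application case $e = f(e_1,\dots,e_n)$ and in the $R$, $\bar R$, $\land$ predicate cases you must turn one copy of $t$ into $n$ copies, which needs single-valuedness. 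The paper's proof invokes~\eqref{eq:deterministic-total} at each of these steps. (The $\lor$ case is fine because $\diag{}$ is strictly natural in any fb category; the $\bot$ case uses \eqref{ax:bangnat}, again strict, rather than the black counit.)
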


\begin{table}[t]
{\scriptsize
  \begin{equation*}
  \begin{array}{c@{\quad}c@{\quad}c}
    \inferrule*[right=($\mathsf{skip}$)]{ }{\{P\}\mathsf{skip}\{P\}} & \inferrule*[right=($\mathsf{assn}$)]{ }{\{P[e/x]\}x \coloneqq e\{P\}} & \inferrule*[right=(\(\subseteq\))]{P_1 \subseteq P_2 \quad \{P_2\}C\{Q_2\} \quad Q_2 \subseteq Q_1}{\{P_1\}C\{Q_1\}} \\[12pt]
    \inferrule*[right=($\mathsf{seq}$)]{\{P\}C\{Q\} \quad \{Q\}D\{R\}}{\{P\}C ; D\{R\}} & \inferrule*[right=(\(\mathsf{if}\))]{\{P \land B\}C\{Q\} \quad \{P \land \lnot B \}D\{Q\}}{\{P\}\mathop{\mathsf{if}} B \mathop{\mathsf{then}} C \mathop{\mathsf{else}} D\{Q\}} & \inferrule*[right=(\(\mathsf{while}\))]{\{P \land B\}C\{P\}}{\{P\}\mathop{\mathsf{while}} B \mathop{\mathsf{do}} C\{P \land \lnot B\}} %
  \end{array}
  \end{equation*}
  }
  \caption{Rules of Hoare logic.}\label{fig:hoare-rules}
\end{table}

\begin{prop}\label{prop:Hoare}
  If  \(\{P\}C\{Q\}\) is derivable as in \Cref{fig:hoare-rules}, then \(\op{\encoding{P}} \dcomp \encoding{C} \leq_{\basicR} \op{\encoding{Q}}\).
\end{prop}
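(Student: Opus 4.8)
The plan is to proceed by induction on the derivation of $\{P\}C\{Q\}$ in \Cref{fig:hoare-rules}, treating the six rules one at a time. Throughout I read a predicate entailment $P_1 \subseteq P_2$ as the inequality $\encoding{P_1} \leq_{\basicR} \encoding{P_2}$, and I use repeatedly that both $\dcomp$ and $\op{(\cdot)}$ are monotone: the former since $\Cat{C}$ is poset enriched, and the latter because, by \eqref{eq:cb:covolution}, $\op{(\cdot)}$ is assembled from $f$ using only the monotone operations $\dcomp$ and $\per$.

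The two base cases are quick. For $(\mathsf{skip})$ we have $\encoding{C} = \id{\encoding{\Gamma}}$, so the required $\op{\encoding{P}}\dcomp\id{\encoding{\Gamma}} \leq_{\basicR} \op{\encoding{P}}$ is just reflexivity. For $(\mathsf{assn})$ I would invoke \Cref{lemma:encodingsubstPed}, which identifies $\encoding{P[e/x]}$ with $\encoding{x \coloneqq e}\dcomp\encoding{P}$; taking opposites and using $\op{(g\dcomp h)} = \op{h}\dcomp\op{g}$ from \Cref{table:re:daggerproperties} gives $\op{\encoding{P[e/x]}}\dcomp\encoding{x\coloneqq e} = \op{\encoding{P}}\dcomp \op{\encoding{x \coloneqq e}}\dcomp \encoding{x \coloneqq e}$. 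Since the symbols of $\mathcal F$ are forced to denote functions by \eqref{eq:deterministic-total}, the assignment $\encoding{x \coloneqq e}$ is a map, and hence by \Cref{lemma:cb:adjoints} it satisfies $\op{\encoding{x\coloneqq e}}\dcomp\encoding{x\coloneqq e} \leq \id{\encoding{\Gamma}}$; post-composing $\op{\encoding{P}}$ yields the desired bound.

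The inductive cases $(\subseteq)$ and $(\mathsf{seq})$ require only monotonicity and transitivity; for instance, for $(\mathsf{seq})$ one chains $\op{\encoding{P}}\dcomp\encoding{C}\dcomp\encoding{D} \leq_{\basicR} \op{\encoding{Q}}\dcomp\encoding{D} \leq_{\basicR} \op{\encoding{R}}$ using the two induction hypotheses in turn. The crucial ingredient for $(\mathsf{if})$ and $(\mathsf{while})$ is the identity
\[ \op{\encoding{P}}\dcomp c(\encoding{B}) = \op{\encoding{P \land B}}, \]
which I would obtain by taking opposites, using that coreflexives are symmetric (\Cref{lemma:coreflexive properties}.(4)) so that $\op{c(\encoding{B})} = c(\encoding{B})$, and then applying \Cref{proop:bijcor} together with $i(c(\encoding{B})) = \encoding{B}$ and $\encoding{P \land B} = \encoding{P}\sqcap\encoding{B}$ (the latter because $\cocopier{\uno}=\id{\uno}$ makes the convolution of \eqref{eq:cb:covolution} collapse to the conjunction of Table~\ref{tab:encoding}). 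With this in hand, $(\mathsf{if})$ follows by distributing $\dcomp$ over $\sqcup$ (\Cref{lemma:fb-cbsummary}.(2), \eqref{eq:cmon enrichment}) and bounding each of the two joinands by $\op{\encoding{Q}}$ through the two induction hypotheses.

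Finally, $(\mathsf{while})$ is where the Kleene structure does the work. Rewriting the induction hypothesis $\op{\encoding{P\land B}}\dcomp\encoding{C}\leq_{\basicR}\op{\encoding{P}}$ with the identity above yields $\op{\encoding{P}}\dcomp(c(\encoding{B})\dcomp\encoding{C})\leq_{\basicR}\op{\encoding{P}}$, which is exactly the premise of the last implication in \eqref{eq:Kllenelaw} with $l = \op{\encoding{P}}$ and $f = c(\encoding{B})\dcomp\encoding{C}$; hence $\op{\encoding{P}}\dcomp\kstar{(c(\encoding{B})\dcomp\encoding{C})}\leq_{\basicR}\op{\encoding{P}}$. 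Post-composing with $c(\encoding{\neg B})$ and applying the identity once more gives $\op{\encoding{P}}\dcomp\encoding{\mathop{\mathsf{while}} B \mathop{\mathsf{do}} C}\leq_{\basicR}\op{\encoding{P}}\dcomp c(\encoding{\neg B}) = \op{\encoding{P\land\neg B}}$, as required. The step I expect to be most delicate is $(\mathsf{assn})$: it is the only case in which the data-flow structure is genuinely used, and it hinges both on the precise form of \Cref{lemma:encodingsubstPed} and on verifying that encoded assignments are honest maps so that the adjunction of \Cref{lemma:cb:adjoints} becomes available.
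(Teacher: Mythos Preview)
Your proof is correct and follows the paper's own induction on the Hoare rules; the cases $(\mathsf{skip})$, $(\subseteq)$, $(\mathsf{seq})$ and $(\mathsf{if})$ match essentially verbatim, including the key identity $\op{\encoding{P}}\dcomp c(\encoding{B}) = \op{\encoding{P\land B}}$, which the paper derives in exactly the way you describe (symmetry of coreflexives, then \Cref{proop:bijcor}).

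There is one small but genuine divergence worth noting. For $(\mathsf{while})$ you invoke the Kleene implication $l\dcomp f \leq l \Rightarrow l\dcomp\kstar{f}\leq l$ from \eqref{eq:Kllenelaw} directly, whereas the paper instead unfolds the induction hypothesis into a tape-level inequality and then applies the posetal uniformity axiom \eqref{ax:trace:tape:AU2} to the trace. Both routes are valid: yours is arguably cleaner once \Cref{prop:trace-star} is available, while the paper's choice keeps the derivation entirely at the diagrammatic level and thereby showcases uniformity as the operative principle behind the while rule. Your $(\mathsf{assn})$ case is also packaged slightly more abstractly than the paper's---you use that $\encoding{x\coloneqq e}$ is a map and hence $\op{\encoding{x\coloneqq e}}\dcomp\encoding{x\coloneqq e}\leq\id{}$, while the paper unfolds the assignment diagram and applies \eqref{ax:copieradj1} and \eqref{eq:cb:adj:sv} componentwise---but the underlying content is the same.
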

\begin{proof} %
By induction on the rules in \Cref{fig:hoare-rules}.

   \noindent\((\mathsf{skip})\). \; \(\op{\encoding{P}} \dcomp \encoding{\mathsf{skip}} \stackrel{(\text{Table}~\ref{tab:encoding})}{=_{\basicR}} \op{\encoding{P}} \dcomp \id{} =_{\basicR} \op{\encoding{P}}\).\\

  \medskip

  \noindent\((\mathsf{assn})\). \; By Lemma \ref{lemma:encodingsubstPed}, $\op{\encoding{P[e/x]}} \dcomp \encoding{x \coloneqq e}$ is the leftmost diagram below. Thus:
  \[ {
    % [inline block 25: 3 envs, 8610 chars -> data_tex | \begin{tikzpicture} 	\begin{pgfonlayer}{nodelayer}...]

}
}. \] 
In the last inequality we are using the fact that, for all expressions $e$, $\encoding{e}$ is single-valued thanks to the axioms \eqref{eq:deterministic-total} in $\basicR$.
  \medskip

  \noindent\((\subseteq)\). \;\;\; \(\op{\encoding{P_{1}}} \dcomp \encoding{C} \stackrel{(P_1 \subseteq P_2)}{\leq_{\basicR}} \op{\encoding{P_{2}}} \dcomp  \encoding{C} \stackrel{\text{(Ind. hyp.)}}{\leq_{\basicR}} \op{\encoding{Q_{2}}} \stackrel{(Q_1 \subseteq Q_2)}{\leq_{\basicR}} \op{\encoding{Q_{1}}}\).\\

  \medskip

  \noindent\((\mathsf{seq})\). \; $\op{\encoding{P}} \dcomp \encoding{C ; D} \stackrel{(\text{Table}~\ref{tab:encoding})}{=_{\basicR}} \op{\encoding{P}} \dcomp \encoding{C} \dcomp \encoding{D} \stackrel{\text{(Ind. hyp.)}}{\leq_{\basicR}} \op{\encoding{Q}} \dcomp \encoding{D} \stackrel{\text{(Ind. hyp.)}}{\leq_{\basicR}} \op{\encoding{R}}.$

  \medskip

  \noindent\((\mathsf{if})\). \; By \Cref{tab:encoding}, $\op{\encoding{P}}; \encoding{\mathop{\mathsf{if}} B \mathop{\mathsf{then}} C \mathop{\mathsf{else}} D }$ is the leftmost diagram below.
  \begin{align*}
    
    % [inline block 26: 10 envs, 22584 chars -> data_tex | \begin{tikzpicture} 	\begin{pgfonlayer}{nodelayer}...]

}
. \]
    Observe that the last diagram is $\op{\encoding{P  \wedge \lnot B} }$ by the derivation used for the $(\mathsf{if})$ case above.
\end{proof}

\begin{rem}[Other Program Logics]\label{rem:other-program-logics}
The above result proves a syntactic correspondence amongst the deduction systems in Table \ref{fig:hoare-rules} and  $\leq_{\basicR}$. %
However, by recalling that, for a fixed interpretation $\interpretation$,  $\CBdsem{\encoding{\Gamma \vdash C}}_{\interpretation} $ is the intended extensional semantics of a command $C$, one can immediately see that the correspondence at the semantic level --illustrated below on the top-left corner-- holds.
\[% [inline block 27: 2 envs, 2499 chars -> data_tex | \begin{array}{r@{\;\;}c@{\;\;}l r@{\;\;}c@{\;\;}l} \{P\}C\{Q\} &\text{iff}& \CBdsem{\op{\encoding{P}} \dcomp \encoding{C...]

  \caption{Correspondence between triples and inequalities.\label{tab:triples-inequalities}}
\end{table}

\subsection{Relational Hoare Logic}\label{ssec:relationalHoare}
Relational Hoare Logic (RHL), also known as Benton logic~\cite{benton2004simple}, differs from traditional Hoare logic in that preconditions and postconditions are regarded as relations between two states, rather than predicates on a single state. This enables reasoning about how a program behaves across two different executions or how two programs run in a related manner. Such relational reasoning has found various applications in program optimization~\cite{barthe2011relational} and in verification of security properties and cryptographic protocols~\cite{barthe2012probabilistic,sousa2016cartesian,unruh2019quantum}.

Formally, a \emph{quadruple} $c_1 \sim c_2 \colon P \Rightarrow Q$ of RHL asserts that for any pair of initial states $s_1$ and $s_2$ related by the precondition, $(s_1, s_2) \in P$, if the executions of $c_1$ in $s_1$ and $c_2$ in $s_2$ terminate in the final states $s_1'$ and $s_2'$ respectively, then the final states will be related by the postcondition $(s_1', s_2') \in Q$.

The relational nature of Kleene-Cartesian tape diagrams makes them particularly well-suited for handling this setting. In particular, quadruples of RHL can  be characterised as
\begin{equation}\label{eq:rhl-validity}
  \CBdsem{\op{\encoding{\Gamma_1, \Gamma_2 \vdash P\colon 1}} \dcomp \encoding{\Gamma_1 \vdash C_1} \per \encoding{\Gamma_2 \vdash C_2}}_{\interpretation}  \subseteq  \CBdsem{\op{\encoding{\Gamma_1, \Gamma_2 \vdash Q\colon 1}}}
\end{equation}

Note that in the inclusion above we use the monoidal product $\per$ to combine the two programs. This is no accident: in fact, RHL quadruples implicitly construct a \emph{product program}~\cite{barthe2011relational}, namely a single program that simulates two separate programs in lockstep.

\begin{exa}[Product program]\label{ex:productprogram} Let $\Gamma_1$ and $\Gamma_2$ be two contexts, and consider the following programs:
  \[ 
  \Gamma_1 \vdash\mathop{\mathsf{while}} P_1 \mathop{\mathsf{do}} C_1
  \qquad\qquad\text{and}\qquad\qquad
  \Gamma_2 \vdash\mathop{\mathsf{while}} P_2 \mathop{\mathsf{do}} C_2.
  \]
\end{exa}
The product program of the two programs above can be understood as the monoidal product $\per$ of their encodings  
\[
\begin{array}{@{}c@{}c@{}c}
  \begin{array}{@{}c@{}}
    \encoding{\Gamma_1 \vdash\mathop{\mathsf{while}}  P_1 \mathop{\mathsf{do}} C_1}
    \\
    =
    \\
    
    \InputIfFileExists{prodotto/while1.tikz}{}{\input{./tikz/prodotto/while1.tikz}}

  \end{array}
  &
  \qquad\text{and}\qquad\qquad
  &
  \begin{array}{@{}c@{}}
    \encoding{\Gamma_2 \vdash\mathop{\mathsf{while}} P_2 \mathop{\mathsf{do}} C_2}
    \\
    =
    \\
    
    \InputIfFileExists{prodotto/while2.tikz}{}{\input{./tikz/prodotto/while2.tikz}}

  \end{array}
\end{array}
\]
which is given by the following tape diagram.
\[
 
    \InputIfFileExists{prodotto/step1.tikz}{}{\input{./tikz/prodotto/step1.tikz}}

\]
Note that, by definition of $\per$ in \eqref{eq:pertapes}, the diagram above is the sequential composition of the left and right whiskerings of the second and first program, respectively. Each whiskering simply amounts to extending the first (resp. second) program with the additional variables of the second (resp. first) program.

It is interesting to see that, thanks to Proposition~\ref{prop:perstarcup}, the diagram above can be rewritten into an equivalent diagram having a single trace, as shown below:
\begin{align*}
  \encoding{\Gamma_1 \vdash\mathop{\mathsf{while}} &P_1 \mathop{\mathsf{do}} C_1}
  \per
  \encoding{\Gamma_2 \vdash\mathop{\mathsf{while}} P_2 \mathop{\mathsf{do}} C_2}
  \\
  = & \quad
    \InputIfFileExists{prodotto/while1.tikz}{}{\input{./tikz/prodotto/while1.tikz}}
 \!\!\per\!\! 
    \InputIfFileExists{prodotto/while2.tikz}{}{\input{./tikz/prodotto/while2.tikz}}
 \tag{Table~\ref{tab:encoding}} \\
  = & \quad(
    \InputIfFileExists{prodotto/star1.tikz}{}{\input{./tikz/prodotto/star1.tikz}}
 \!\!\per\!\! 
    \InputIfFileExists{prodotto/star2.tikz}{}{\input{./tikz/prodotto/star2.tikz}}
) ; 
    \InputIfFileExists{prodotto/negs.tikz}{}{\input{./tikz/prodotto/negs.tikz}}
 \tag{Functoriality of $\per$} \\
  = & \quad
    \InputIfFileExists{prodotto/finale.tikz}{}{\input{./tikz/prodotto/finale.tikz}}
 \tag{Proposition~\ref{prop:perstarcup}}
\end{align*}

Returning to RHL, consider the following rule, found e.g. in \cite{naumann2020thirty},
\[
  \inferrule*[right=($\mathsf{frame}$)]{ c_1 \sim c_2 \colon P \Rightarrow Q   \qquad \textsf{vars}(S) \cap (\mathsf{mod}(c_1) \cup \mathsf{mod}(c_2)) = \emptyset  }{ c_1 \sim c_2 \colon P \land S \Rightarrow Q \land S}
\]
asserting the fact that the pre- and postconditions can be both strengthened with a relation $S$, provided that $S$ has no variables in common with those modified by the two commands.

In our diagrammatic algebra we can prove a stronger version of the $(\mathsf{frame})$ rule. In particular, we require that in the final state, some of the variables of $c_1$ and $c_2$ retain the same value they had in the initial state. Pictorially this is represented by the following equations:
\begin{equation}\label{eq:rhl:c1-c2}
  \quad \modCirc{c_1}{\Gamma_1}{\Delta_1} =_{\basicR} \modCircCB{c_1}{\Gamma_1}{\Delta_1} \qquad \modCirc{c_2}{\Gamma_2}{\Delta_2} =_{\basicR} \modCircCB{c_2}{\Gamma_2}{\Delta_2}.
\end{equation}
Moreover, we require that the additional predicate $S$ only operates on certain variables, specifically those that in $c_1$ and $c_2$ do not change. Graphically, this corresponds to the following equation, holding for some predicate $S'$:
\begin{equation}\label{eq:rhl:s}
  
    \InputIfFileExists{hoare/frame/lhs2.tikz}{}{\input{./tikz/hoare/frame/lhs2.tikz}}
 =_{\basicR} 
    \InputIfFileExists{hoare/frame/rhs2.tikz}{}{\input{./tikz/hoare/frame/rhs2.tikz}}
.
\end{equation}
As in the original $(\mathsf{frame})$ rule, we require that $c_1 \sim c_2 \colon P \Rightarrow Q$. According to~\eqref{eq:rhl-validity} this amounts to say that the following inequality holds:
\begin{equation}\label{eq:rhl:hyp}
  
    \InputIfFileExists{hoare/frame/lhs.tikz}{}{\input{./tikz/hoare/frame/lhs.tikz}}
 \leq_{\basicR} 
    \InputIfFileExists{hoare/frame/rhs.tikz}{}{\input{./tikz/hoare/frame/rhs.tikz}}

\end{equation}
Then we can conclude that $c_1 \sim c_2 \colon P \land S \Rightarrow Q \land S$, as witnessed by the following derivation:
\begin{align*}
  
    \InputIfFileExists{hoare/frame/step0.tikz}{}{\input{./tikz/hoare/frame/step0.tikz}}
 &\stackrel{\eqref{eq:rhl:s}}{=_{\basicR}} 
    \InputIfFileExists{hoare/frame/step1.tikz}{}{\input{./tikz/hoare/frame/step1.tikz}}
 \stackrel{\eqref{eq:rhl:c1-c2}}{=_{\basicR}} 
    \InputIfFileExists{hoare/frame/step2.tikz}{}{\input{./tikz/hoare/frame/step2.tikz}}
 \stackrel{\eqref{ax:cocopierun}}{=_{\basicR}} 
    \InputIfFileExists{hoare/frame/step3.tikz}{}{\input{./tikz/hoare/frame/step3.tikz}}
 \\
  &\stackrel{\eqref{ax:copieradj1}}{\leq_{\basicR}} 
    \InputIfFileExists{hoare/frame/step4.tikz}{}{\input{./tikz/hoare/frame/step4.tikz}}
 \stackrel{\eqref{eq:rhl:hyp}}{\leq_{\basicR}} 
    \InputIfFileExists{hoare/frame/step5.tikz}{}{\input{./tikz/hoare/frame/step5.tikz}}
  \stackrel{\eqref{ax:cocopierun}}{=_{\basicR}} 
    \InputIfFileExists{hoare/frame/step6.tikz}{}{\input{./tikz/hoare/frame/step6.tikz}}
 \stackrel{\eqref{eq:rhl:s}}{=_{\basicR}}  
    \InputIfFileExists{hoare/frame/step7.tikz}{}{\input{./tikz/hoare/frame/step7.tikz}}
.
\end{align*}

\begin{rem}
    In the derivation above, we make crucial use of the structure of Cartesian bicategories. As noted at the beginning of the section, this structure models data flow, which in traditional Hoare logic is somehow less visible, only becoming apparent in the case of assignment.
\end{rem}

\section{Concluding remarks}

We introduced Kleene bicategories and proved that they form Kleene algebras in 
Kozen's sense (Corollary~\ref{cor:kleeneareka}).  
By examining their interaction with Cartesian bicategories, we developed 
Kleene-Cartesian rig categories and characterised the free such structure in 
terms of tape diagrams (Theorem~\ref{thm:KleeneCartesiantapesfree}).  
Following Lawvere's approach to functorial semantics, we showed that morphisms of 
kc-rig categories out of the freely generated one provide models of theories 
(\Cref{funct:sem}).  
We then exhibited a Kleene--Cartesian theory equivalent to Peano's axioms for the 
natural numbers (\Cref{lemma:Peano}), as well as theories corresponding to 
Kleene algebra with tests~\cite{kozen1997kleene} (\Cref{ex:kat}) and to 
imperative programming (\Cref{sec:hoare}).  
In the latter case, tape diagrams yield an assembly language for several program 
logics (\Cref{rem:other-program-logics}), including relational Hoare logic 
(\Cref{ssec:relationalHoare}), and we showed that the rules of Hoare logic follow 
directly from the structure of kc-rig categories 
(Proposition~\ref{prop:Hoare}).

Regarding Kleene bicategories, although uniform traces over biproduct categories 
have been previously studied (see, e.g.,~\cite{cuazuanescu1994feedback}), our use 
of \emph{posetal} uniformity and the resulting correspondence with Kozen's axioms 
are, to the best of our knowledge, new.

Control-flow and data-flow graphs are familiar and intuitive tools for computer 
scientists.  
Tape diagrams combine these two viewpoints into a single diagrammatic language 
equipped with a fully compositional semantics.  
A further distinguishing feature of tape diagrams, compared with other relational 
or categorical approaches to program logics 
(e.g.,~\cite{kozen00hoarekleene,aguirre2020weakest,goncharov2013relatively,martin2006hoare,hasuo2015generic,DBLP:conf/lics/BarrettCH24}), lies in their 
$\per$--monoidal and rig structure, which enables the direct representation of 
\emph{product programs}~\cite{barthe2011relational} (\Cref{ex:productprogram}).  
Via tape (in)equalities and the $\per$ operator, complex properties such as 
\emph{non--interference}~\cite{goguen1984unwinding} become straightforward to 
express.  
Investigating how such properties can be proved using only the laws of 
kc-rig categories remains an interesting and promising direction for future 
research.

\section*{Acknowledgments}
  \noindent The authors would like to acknowledge Alessio Santamaria, Chad Nester and the students of the ACT school 2022 for several useful discussions at an early stage of this project. Gheorghe Stefanescu and Dexter Kozen provided some wise feedback and offered some guidance through the rather wide literature. The authors would like to thank Mario Román for pointing out a simplification of some proofs. This research was partly funded by the Advanced Research + Invention Agency (ARIA) Safeguarded AI Programme and carried out within the National Centre on HPC, Big Data and Quantum Computing - SPOKE 10 (Quantum Computing) and by the EU Next-GenerationEU - National Recovery and Resilience Plan (NRRP) – MISSION 4 COMPONENT 2, INVESTMENT N. 1.4 – CUP N. I53C22000690001 and by the EPSRC grant No. EP/V002376/1. Bonchi is supported by the Ministero dell'Università e della Ricerca of Italy grant PRIN 2022 PNRR No. P2022HXNSC - RAP (Resource Awareness in Programming). Di Giorgio is supported by the EU grant No. 101087529.

\bibliography{main}

\appendix
\section{Dictionary}\label{app:dictionary}

\begin{table}[H]
    \[
    \def\arraystretch{1.8}
    % [inline block 28: 1 envs, 2036 chars -> data_tex | \begin{array}{l@{\qquad}l@{\qquad}c@{\qquad}c}         \toprule...]

    \]
    \caption{Structure of kc rig categories and its representation as string diagrams and tape diagrams.}\label{tab:dictionary}
\end{table}

\newpage

\section{String Diagrams for (Uniformly) Traced Monoidal Categories}\label{app:stringDiagrams}

String diagrams provide a convenient graphical representations for arrows of symmetric monoidal categories. Arrows are depicted as boxes with labeled wires, indicating the source and target objects.
For instance $\gen \colon A\perG B \to C$ is depicted as the leftmost diagram below. Moreover, $\id{A}$ is displayed as one wire,  $id_{\unoG} $ as the empty diagram and $\sigma_{A,B}^{\perG}$ as a crossing:
\[
    \InputIfFileExists{generator.tikz}{}{\input{./tikz/generator.tikz}}
 \qquad \qquad 
    \InputIfFileExists{id.tikz}{}{\input{./tikz/id.tikz}}
 \qquad  \qquad     
    \InputIfFileExists{empty.tikz}{}{\input{./tikz/empty.tikz}}
 \qquad  \qquad   
    \InputIfFileExists{symm.tikz}{}{\input{./tikz/symm.tikz}}
\]
Finally, composition $f;g$ is represented by connecting the right wires 
of $f$ with the left wires of $g$ when their labels match, 
while the monoidal product $f \perG g$ is depicted by stacking the corresponding 
diagrams on top of each other: \[
    \InputIfFileExists{seq_comp.tikz}{}{\input{./tikz/seq_comp.tikz}}
 \qquad \qquad \qquad  
    \InputIfFileExists{par_comp.tikz}{}{\input{./tikz/par_comp.tikz}}
 \]
Theorem 2.3 in~\cite{joyal1991geometry} states that the laws of monoidal categories are implicitly embodied in the 
graphical representation while the axioms for symmetries  are displayed as in Figure \ref{fig:symm-axioms}. 

\begin{figure}[h!]
\[ 
    \InputIfFileExists{stringdiag_ax1_left.tikz}{}{\input{./tikz/stringdiag_ax1_left.tikz}}
 = 
    \InputIfFileExists{stringdiag_ax1_right.tikz}{}{\input{./tikz/stringdiag_ax1_right.tikz}}
 \quad\qquad 
    \InputIfFileExists{stringdiag_ax2_left.tikz}{}{\input{./tikz/stringdiag_ax2_left.tikz}}
 = 
    \InputIfFileExists{stringdiag_ax2_right.tikz}{}{\input{./tikz/stringdiag_ax2_right.tikz}}
 \]
\caption{String diagram axioms for symmetries.}\label{fig:symm-axioms}
\end{figure}

\begin{defi}\label{def:traced-category}
A  symmetric monoidal category $(\Cat{C}, \perG, \unoG)$  is \emph{traced} if it is endowed with an operator \(\trace_{S} \colon \Cat{C}(S \perG X, S \perG Y) \to \Cat{C}(X,Y)\), for all objects \(S\), \(X\) and \(Y\) of \(\Cat{C}\), that satisfies the axioms in \Cref{tab:trace-axioms} for all suitably typed \(f\), \(g\), \(u\) and \(v\).
 A \emph{morphism of traced  monoidal categories} is a symmetric monoidal functor \(\fun{F} \colon \Cat{B} \to \Cat{C}\) 
  that preserves the trace, namely \(\fun{F}(\trace_{S}f) = \trace_{\fun{F}S}(\fun{F}f)\). We write $\TSMC$ for the category of traced monoidal categories and their morphisms.
\end{defi}

String diagrams can be extended to deal with traces~\cite{joyal1996traced} (see e.g., \cite{selinger2010survey} for a survey). For a morphism $f\colon S \perG X \to S \perG Y$, we draw its trace as %
  \[ 
    \begin{tikzpicture}
      \begin{pgfonlayer}{nodelayer}
        \node [style=label] (105) at (-2.75, -0.625) {$X$};
        \node [style=none] (117) at (-2.25, -0.625) {};
        \node [style=none] (118) at (-1.75, 0.625) {};
        \node [style=label] (120) at (2.75, -0.625) {$Y$};
        \node [style=none] (125) at (2.25, -0.625) {};
        \node [style=none] (126) at (1.75, 0.625) {};
        \node [style=stringlongbox] (128) at (0, 0) {$f$};
        \node [style=none] (129) at (-1.75, 2.125) {};
        \node [style=none] (130) at (1.75, 2.125) {};
        \node [style=label] (131) at (-1.25, 1.125) {$S$};
        \node [style=label] (132) at (1.25, 1.125) {$S$};
      \end{pgfonlayer}
      \begin{pgfonlayer}{edgelayer}
        \draw (125.center) to (117.center);
        \draw (118.center) to (126.center);
        \draw (130.center) to (129.center);
        \draw [bend right=90, looseness=1.50] (129.center) to (118.center);
        \draw [bend left=90, looseness=1.50] (130.center) to (126.center);
      \end{pgfonlayer}
    \end{tikzpicture}.     
  \]
Using this convention, the axioms in the axiom in \Cref{tab:trace-axioms} acquire a more intuitive flavour: see the string diagram in \Cref{fig:trace-axioms}.  %

In the paper we will often need to require the trace to be uniform. This constraint, that arises from technical necessity, turns out to be the key to recover the axiomatisation of Kleene algebras in Section \ref{sec:kleene}, the induction proof principle in Section \ref{sec:peano} and the proof rules for while loops in Hoare logic in Section \ref{sec:hoare}.
\begin{defi}\label{def:utr-category}
  A traced monoidal category \(\Cat{C}\)
  is \emph{uniformly traced} if the trace operator satisfies the implication in \Cref{tab:uniformity} for all suitably typed \(f\), \(g\) and \(r\). A \emph{morphism of uniformly traced monoidal categories} is simply a morphism of traced monoidal categories. The category of uniformly traced monoidal categories and their morphisms is denoted by  \(\UTSMC\).
\end{defi}

With string diagrams, the uniformity axiom is drawn as in  \Cref{fig:uniformity}.

\begin{rem}[Uniformity and sliding]\label{rem:uniformity-sliding}
  The sliding axiom is redundant as it follows from uniformity: 
    \[ 
    \InputIfFileExists{ufu.tikz}{}{\input{./tikz/ufu.tikz}}
 = 
    \InputIfFileExists{ufu.tikz}{}{\input{./tikz/ufu.tikz}}
 \implies 
    % [inline block 29: 6 envs, 5839 chars -> data_tex | \begin{tikzpicture} 	\begin{pgfonlayer}{nodelayer}...]

    \]

    \caption{Diagrammatic formulation of the uniformity axiom.}
    \label{fig:uniformity}
  \end{subfigure}

  \caption{Algebraic and string diagrammatic presentations of the trace axioms and the uniformity axiom.}
  \label{fig:trace_all_four}
\end{figure}

\newpage
\section{Coherence Axioms for Rig Categories}\label{app:coherence axioms}

\begin{figure}[H]
    \begin{minipage}[t]{0.45\textwidth}
        \begin{equation}
            \label{eq:rigax1}\tag{R1}
            \scalebox{0.9}{$\input{tikz-cd/rigax1.tikz}$}
        \end{equation}    
    \end{minipage}
    \hfill
    \begin{minipage}[t]{0.45\textwidth}
        \begin{equation}
            \label{eq:rigax2}\tag{R2}
            \scalebox{0.9}{$\input{tikz-cd/rigax2.tikz}$}
        \end{equation}
    \end{minipage}
    
    \begin{equation}
        \label{eq:rigax3}\tag{R3}
        \scalebox{0.9}{$\input{tikz-cd/rigax3.tikz}$}
    \end{equation}
    \begin{equation}
        \label{eq:rigax4}\tag{R4}
        \scalebox{0.9}{$\input{tikz-cd/rigax4.tikz}$}
    \end{equation}
    \begin{equation}
        \label{eq:rigax5}\tag{R5}
        \scalebox{0.9}{$\input{tikz-cd/rigax5.tikz}$}
    \end{equation}

    \begin{minipage}[t]{0.25\textwidth}
        \begin{equation}
            \label{eq:rigax6}\tag{R6}
            \scalebox{0.9}{$\input{tikz-cd/rigax6.tikz}$}
        \end{equation}
    \end{minipage}
    \hfill
    \begin{minipage}[t]{0.45\textwidth}
        \begin{equation}
            \label{eq:rigax7}\tag{R7}
            \scalebox{0.9}{$\input{tikz-cd/rigax7.tikz}$}
        \end{equation}
    \end{minipage}
    \hfill
    \begin{minipage}[t]{0.25\textwidth}
        \begin{equation}
            \label{eq:rigax8}\tag{R8}
            \scalebox{0.9}{$\input{tikz-cd/rigax8.tikz}$}
        \end{equation}
    \end{minipage}
    \\
    \begin{minipage}[t]{0.48\textwidth}
        \begin{equation}
            \label{eq:rigax9}\tag{R9}
            \scalebox{0.9}{$\input{tikz-cd/rigax9.tikz}$}
        \end{equation}    
    \end{minipage}
    \hfill
    \begin{minipage}[t]{0.48\textwidth}
        \begin{equation}
            \label{eq:rigax10}\tag{R10}
            \scalebox{0.9}{$\input{tikz-cd/rigax10.tikz}$}
        \end{equation}
    \end{minipage}
    \\
    \begin{minipage}[t]{0.50\textwidth}
        \begin{equation}
            \label{eq:rigax11}\tag{R11}
            \scalebox{0.9}{$\input{tikz-cd/rigax11.tikz}$}
        \end{equation}    
    \end{minipage}
    \hfill
    \begin{minipage}[t]{0.46\textwidth}
        \begin{equation}
            \label{eq:rigax12}\tag{R12}
            \scalebox{0.9}{$\input{tikz-cd/rigax12.tikz}$}
        \end{equation}
    \end{minipage}
    \caption{Coherence Axioms of symmetric rig categories}
    \label{fig:rigax}
\end{figure}

\newpage

\newpage

\section{Appendix to Section \ref{sec:kleene}}\label{app:Kleene} 

In this appendix we discuss some properties of Kleene bicategories, we prove Theorem \ref{prop:trace-star} and we illustrate the matrix construction that allows to transform any typed Kleene algebra into a Kleene bicategory.

\begin{lem}\label{lemma:order-adjointness}
Let $(\Cat{C}, \piu, \zero)$ be an fb-category. If $(\Cat{C}, \piu, \zero)$ is a poset enriched symmetric monoidal category and the laws in \eqref{eq:adjointnesfib} hold then for all $f,g \colon X \to Y$,
\begin{center}
$f \leq g$ iff $f \sqcup g=g$. 
\end{center}
\end{lem}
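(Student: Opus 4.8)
The plan is to reduce the biconditional to three facts about the convolution monoid $\sqcup$ defined in~\eqref{eq:covolution}, namely $f \sqcup g = \diag{X};(f \piu g);\codiag{Y}$ with unit $\bot = \bang{X};\cobang{Y}$. Since $\Cat{C}$ is poset enriched, both $;$ and $\piu$ are monotone, and hence so is $\sqcup$ in each argument; and because $\Cat{C}$ is an fb category, all the commutative-monoid laws in~\eqref{eq:cmon laws} \emph{except} idempotency are available, in particular commutativity and the unit law $f \sqcup \bot = f$.

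The key preliminary to establish is that $\bot$ is the least element of every hom-poset, i.e. $\bot \leq f$ for all $f \colon X \to Y$. First I would rewrite the unit using the counit naturality $\cobang{X};f = \cobang{Y}$ from \Cref{def:fb}.(4), obtaining $\bot = \bang{X};\cobang{Y} = \bang{X};\cobang{X};f$. Then the adjointness inequality $\bang{X};\cobang{X} \leq \id{X}$ from~\eqref{eq:adjointnesfib}, together with monotonicity of $;$, yields $\bot = (\bang{X};\cobang{X});f \leq \id{X};f = f$. Dually, I would record the easy half of idempotency, $f \sqcup f \leq f$: combining the comultiplication naturality $\diag{X};(f \piu f) = f;\diag{Y}$ with the counit inequality $\diag{Y};\codiag{Y} \leq \id{Y}$ gives $f \sqcup f = f;\diag{Y};\codiag{Y} \leq f$.

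With these in hand both implications are short. For the direction $f \sqcup g = g \implies f \leq g$, I would chain $f = f \sqcup \bot \leq f \sqcup g = g$, using $\bot \leq g$ and monotonicity. For the converse $f \leq g \implies f \sqcup g = g$, I would squeeze $f \sqcup g$ between two copies of $g$: on one side $g = \bot \sqcup g \leq f \sqcup g$, from $\bot \leq f$ and monotonicity; on the other $f \sqcup g \leq g \sqcup g \leq g$, from $f \leq g$, monotonicity, and the easy half $g \sqcup g \leq g$. Hence $g \leq f \sqcup g \leq g$ and so $f \sqcup g = g$. Note that only $g \sqcup g \leq g$ is used, so the argument does not presuppose full idempotency (which is instead what \Cref{lem:idempfib} will extract from this lemma).

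The only genuinely non-routine step is the proof that $\bot \leq f$: it is the point where the $\bang{}/\cobang{}$ adjointness of~\eqref{eq:adjointnesfib} is converted, via unit naturality, into the statement that the convolution unit is the bottom of the order. Everything else is bookkeeping with monotonicity and the fb-monoid laws of~\eqref{eq:cmon laws} and~\eqref{eq:cmon enrichment}.
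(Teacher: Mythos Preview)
Your proof is correct and follows essentially the same approach as the paper's: both directions hinge on the facts $\bot \leq f$ (from $\bang{}$-$\cobang{}$ adjointness and $\cobang{}$-naturality) and $f \sqcup f \leq f$ (from $\diag{}$-naturality and $\diag{}$-$\codiag{}$ adjointness), which the paper inlines as string-diagram rewrites while you factor them out as explicit preliminaries. The only cosmetic difference is that the paper presents each inclusion as a single diagrammatic chain rather than separating the two auxiliary inequalities first.
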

\begin{proof}%
  For the $(\implies)$ direction we assume $f \leq g$ and prove separtely the following two inclusions.

  \noindent\begin{minipage}{0.48\textwidth}
    \begin{align*}
      
    % [inline block 30: 13 envs, 12285 chars -> data_tex | \begin{tikzpicture} 	\begin{pgfonlayer}{nodelayer}...]

}

  \end{equation*}
\end{proof}

\begin{proof}[Proof of Lemma \ref{lem:idempfib}]
To prove that $1 \Rightarrow 2$, first observe that by Lemma \ref{lemma:order-adjointness}, we know that the ordering is forced to be the one induced by $\sqcup$. Then $\bot \leq f$ for all arrows $f\colon X\to Y$ and thus $f=f\sqcup \bot \leq f \sqcup f$. For the opposite inequality we have the following derivation.
\begin{align*}
f\sqcup f &= (f; \id{Y}) \sqcup (f; \id{Y})\\
&= f; (\id{Y} \sqcup \id{Y}) \tag{\ref{eq:cmon enrichment}}\\
&\leq f; \id{Y} \tag{\ref{eq:adjointnesfib}}\\
&=f
\end{align*}
To prove that $2 \Rightarrow 1$, define $f\leq  g$ as $f\sqcup g=g$. Straightforward computations prove the fours laws in \eqref{eq:adjointnesfib}. To prove that $(\Cat{C}, \piu, \zero)$ is poset enriched one can use the enrichment over commutative monoids. For instance assuming that $f_1\leq f_2$, one can conclude that $f_1;g \leq f_2;g$ as follows.
\begin{align*}
f_1; g \sqcup f_2 ; g &= f_1; g \sqcup ( \,(f_1\sqcup f_2) ; g \,) \tag{$f_1\leq f_2$} \\
&= f_1;g \sqcup f_1;g \sqcup f_2;g \tag{\ref{eq:cmon enrichment}}\\
&=  f_1;g \sqcup f_2;g \tag{Idempotency}\\
&=  (f_1 \sqcup f_2 );g \tag{\ref{eq:cmon enrichment}}\\
&=   f_2 ;g \tag{$f_1\leq f_2$}
\end{align*}
\end{proof}

\begin{proof}[Proof of Proposition \ref{prop:matrixform}]
The normal form of fb category is well known: see e.g. \cite[Proposition 2.7]{harding2008orthomodularity}.
For the ordering, observe that if $f \leq g$ then, since $\Cat{C}$ is poset enriched,  
\[(\id{S} \piu \cobang{X}); f ; (\id{T} \piu \cobang{Y}) \leq  (\id{S} \piu \cobang{X}); g ; (\id{T} \piu \cobang{Y})\]
that is $f_{ST} \leq g_{ST}$. Similarly for the others.

Vice versa from $f_{ST} \leq g_{ST}$ $f_{SY} \leq g_{SY}$
$f_{XT} \leq g_{XT}$ and $ f_{XY} \leq g_{XY}$, one can use the formal form to deduce immediately that $f\leq g$.
\end{proof}

The two posetal uniformity axioms enjoy  alternative characterisations that will be useful later on through this appendix: 
\begin{equation}\label{eq:equivalentuni1}\tag{AU1'}
\text{If }\exists r_1,r_2\colon S \to T\text{ s.t. } r_2 \leq r_1\text{ and }f ; (r_1 \piu \id{Y}) \leq (r_2 \piu \id{X}) ; g\text{, then }\trace_{S}f \leq \trace_{T}g\text{;}
\end{equation}
\begin{equation}\label{eq:equivalentuni2}\tag{AU2'}
\text{If }\exists r_1,r_2\colon T \to S\text{ s.t. } r_2 \leq r_1\text{ and }(r_1 \piu \id{X}) ; f   \leq   g; (r_2 \piu \id{Y})\text{, then }\trace_{S}f \leq \trace_{T}g\text{;}
\end{equation}
\begin{lem}\label{lemma:equivalentUnif}
The following hold:
\begin{enumerate}
\item $(AU1)$ iff $(AU1')$; 
\item   $(AU2)$ iff $(AU2')$.
\end{enumerate}
\end{lem}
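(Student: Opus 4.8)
The plan is to prove each biconditional as two separate implications, observing that in both items the unprimed axiom is exactly the primed one specialised to the diagonal $r_1 = r_2$. Hence one direction is a trivial instantiation, while the other is a one-line monotonicity argument relying only on the facts that $\piu$ and $;$ are monotone and that $\leq$ is transitive (both guaranteed since a Kleene bicategory is in particular a poset enriched symmetric monoidal category).

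For the two converse directions, from $(AU1')$ to $(AU1)$ and from $(AU2')$ to $(AU2)$, I would set $r_1 = r_2 = r$. Then the side condition $r_2 \leq r_1$ reduces to reflexivity $r \leq r$, the primed premise collapses verbatim onto the unprimed one, and the primed axiom delivers exactly the conclusion $\trace_{S} f \leq \trace_{T} g$ demanded by the unprimed axiom. Nothing else is needed.

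For $(AU1) \Rightarrow (AU1')$, assume $r_2 \leq r_1 \colon S \to T$ together with $f ; (r_1 \piu \id{Y}) \leq (r_2 \piu \id{X}) ; g$. Monotonicity of $\piu$ and $;$ turns $r_2 \leq r_1$ into $(r_2 \piu \id{X}) ; g \leq (r_1 \piu \id{X}) ; g$, and transitivity then yields $f ; (r_1 \piu \id{Y}) \leq (r_1 \piu \id{X}) ; g$. This is precisely the hypothesis of $(AU1)$ with $r := r_1$, so $(AU1)$ gives $\trace_{S} f \leq \trace_{T} g$, which is the conclusion of $(AU1')$.

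The direction $(AU2) \Rightarrow (AU2')$ is symmetric, but the shared coefficient must be pushed the other way: here $r_1$ occurs on the left of the premise, so I would shrink it rather than enlarge it. From $r_2 \leq r_1$ and monotonicity one obtains $(r_2 \piu \id{X}) ; f \leq (r_1 \piu \id{X}) ; f \leq g ; (r_2 \piu \id{Y})$, which is the hypothesis of $(AU2)$ with $r := r_2$; applying $(AU2)$ concludes. There is no genuine obstacle in this lemma; the only point demanding care is the bookkeeping over which of $r_1, r_2$ to instantiate with. This choice differs between the two items exactly because the varying coefficient sits on opposite sides of $\leq$ — to the right of the inequality in $(AU1)$ and to the left in $(AU2)$ — so that monotonicity must be applied in the correspondingly opposite direction in order to absorb one coefficient into the other.
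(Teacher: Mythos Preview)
Your proof is correct and follows essentially the same approach as the paper: the trivial direction instantiates $r_1 = r_2 = r$, and the nontrivial direction uses monotonicity of $\piu$ and $;$ together with transitivity to collapse the two coefficients into a single one before invoking the unprimed axiom. The only cosmetic difference is that for $(AU1)\Rightarrow(AU1')$ you take $r := r_1$ (enlarging the right-hand side) whereas the paper takes $r := r_2$ (shrinking the left-hand side via $f;(r_2\piu\id{})\leq f;(r_1\piu\id{})$); both instantiations are valid, and in fact either of $r_1,r_2$ works in both items, so your closing remark about the choice necessarily differing between the two items is slightly overstated, though harmless.
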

\begin{proof}
We prove the first point. The second is analogous,

Since the conclusion of  $(AU1)$ and $(AU1')$ are identical, its enough to prove the equivalence of the premises of the two laws.
\begin{itemize}
\item We prove that the premises of $(AU1')$ entail $(AU1)$. Assume that $\exists r_1,r_2$ such that (a) $r_2 \leq r_1$ and (b) $f ; (r_1 \piu \id{}) \leq (r_2 \piu \id{}) ; g$. Thus:
\[ f ; (r_2 \piu \id{}) \stackrel{(a)}{\leq} f ; (r_1 \piu \id{})\stackrel{(b)}{\leq} (r_2 \piu \id{}) ; g\]
Observe that by replacing $r_2$ by $r$ in the above, one obtains exactly the premise of $(AU1)$.
\item We prove that $(AU1)$ entails $(AU1')$. Assume that $(AU1)$ holds. Then $(AU1)'$ holds by taking both $r_1$ and $r_2$ to be $r$.
\end{itemize}
\end{proof}

\subsection{Proof of Theorem \ref{prop:trace-star}}

In order to prove Theorem \ref{prop:trace-star}, we fix  the correspondence between Kleene star operator and trace: see \Cref{fig:star-trace}.

\begin{figure}[t]
\centering
\[
\begin{array}{l@{\qquad}l}
 f \colon X \to X & a \colon S \piu X \to S \piu Y \\[5pt]
 \kstar{f} \defeq \Crepetition{f}{X}{X} & \trace_S a \defeq \Ctracerep{a\vphantom{\kstar{a}_{SS}}}{S}{X}{Y}
 \end{array}
 \]
 \caption{Kleene star from trace and trace from Kleene star in finite biproduct categories.\label{fig:star-trace}}
 \end{figure}

Our argument rely on the following result from \cite{cuazuanescu1994feedback} (see also \cite{selinger2010survey}).
\begin{prop}[From \cite{cuazuanescu1994feedback}]\label{prop:stef} In a category $\Cat{C}$ with finite biproducts, giving a trace is equivalent to giving a repetition operation, i.e., a family of operators \(\kstar{(\cdot)} \colon \Cat{C}(S,S) \to  \Cat{C}(S,S)\) satisfying the following three laws.
\begin{equation}\label{eq:fromstefanescu}
\kstar{f}=\id{} \sqcup f;\kstar{f} \qquad \kstar{(f \sqcup g)}= \kstar{(\kstar{f};g)};\kstar{f} \qquad \kstar{(f;g)};f = f;\kstar{(g;f)}
\end{equation}
\end{prop}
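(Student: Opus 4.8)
The plan is to prove the stated equivalence as an honest bijection between trace operators and repetition operators on $\Cat{C}$, realised by the two translations drawn in \Cref{fig:star-trace}. Explicitly, from a trace one sets $\kstar{f} \defeq \trace_{X}\big((f \piu \id{X});\codiag{X};\diag{X}\big)$, and from a repetition operator one sets $\trace_{S}a \defeq a_{XS};\kstar{(a_{SS})};a_{SY} \sqcup a_{XY}$, where $a_{SS},a_{SY},a_{XS},a_{XY}$ are the entries of the matrix normal form of $a\colon S\piu X \to S\piu Y$ supplied by \Cref{prop:matrixform}. The proof then factors into three parts: (i) the star read off a trace satisfies the three laws \eqref{eq:fromstefanescu}; (ii) the trace read off a repetition operator satisfies every trace axiom of \Cref{tab:trace-axioms}; and (iii) the two assignments are mutually inverse. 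Throughout, the only enrichment needed is over $\Cat{CMon}$ (so \eqref{eq:cmon laws} and \eqref{eq:cmon enrichment} are available, but \emph{not} idempotency), which is what makes the statement hold for an arbitrary biproduct category.

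First I would carry out part (i). The unfolding law $\kstar{f}=\id{}\sqcup f;\kstar{f}$ follows by unwinding a single step of the feedback loop, i.e.\ by using (\ref{ax:trace:tightening}), (\ref{ax:trace:yanking}) and the (co)monoid laws of the biproduct to split off one traversal of $f$. The sliding law $\kstar{(f;g)};f = f;\kstar{(g;f)}$ is a direct instance of the trace axiom (\ref{ax:trace:sliding}), after writing both sides as traces of diagrams differing only by sliding $f$ around the loop. The genuinely laborious identity is the star-of-sum law $\kstar{(f\sqcup g)}=\kstar{(\kstar{f};g)};\kstar{f}$: I would reduce it to the (\ref{ax:trace:joining}) axiom, recognising that the right-hand side, once expanded, is a trace over $X\piu X$, and using the matrix normal form to match the nested trace produced by (\ref{ax:trace:joining}) against the single-trace left-hand side.

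Next I would treat part (ii). Tightening of the constructed trace is immediate from functoriality and the defining formula; strength reduces to the fact that the component matrix of $a\piu g$ splits so that $g$ never enters the $S$-feedback; and vanishing over $\zero$ collapses because, for $a\colon \zero\piu X \to \zero\piu Y$, the degenerate components force $\trace_{\zero}a = \bot \sqcup a = a$. Yanking is checked by computing the four components of $\sigma^{\piu}_{S,S}$ and invoking $\kstar{\bot}=\id{}\sqcup\bot=\id{}$, which is a consequence of the unfolding law and $\bot$-absorption. The delicate axiom is once more (\ref{ax:trace:joining}), $\trace_{T}\trace_{S}=\trace_{S\piu T}$: here one expands the iterated trace through the star and must massage the result into the single-star formula for $\trace_{S\piu T}$, and this is exactly the step that consumes the star-of-sum law of \eqref{eq:fromstefanescu}. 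Sliding of the trace is obtained dually from the star sliding law.

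Finally, part (iii) is a bookkeeping computation: substituting one formula into the other and simplifying with \Cref{prop:matrixform} and the three star laws returns the original operator, the only subtlety being the verification that the components of $(f\piu\id{X});\codiag{X};\diag{X}$ are such that $\trace_{X}$ of it collapses to $\kstar{f}$. The main obstacle, in all three parts, is the round trip through the (\ref{ax:trace:joining}) axiom and the star-of-sum law: this is the single place where a Bek\'{i}c--style nested-fixed-point decomposition is unavoidable, and I expect the matrix calculus of \Cref{prop:matrixform}, together with the order characterisation of \Cref{lemma:order-adjointness}, to be what keeps that computation under control.
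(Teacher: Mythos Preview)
The paper does not prove this proposition; it is quoted from \cite{cuazuanescu1994feedback} (with a pointer to \cite{selinger2010survey}) and used as a black box in the proof of \Cref{prop:trace-star}. Your outline is essentially the standard argument from that reference: the two translations of \Cref{fig:star-trace}, the Bekić-style correspondence between the (\ref{ax:trace:joining}) axiom and the star-of-sum law, and the matrix normal form of \Cref{prop:matrixform} as the organising device for parts (ii) and (iii).

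One point to correct: in your final sentence you invoke \Cref{lemma:order-adjointness} to keep the joining computation under control. That lemma is unavailable at this level of generality, since the proposition concerns an arbitrary finite-biproduct category with no poset enrichment or idempotent convolution assumed; as you yourself note earlier, the only enrichment is over $\Cat{CMon}$. The Bekić calculation for (\ref{ax:trace:joining}) must therefore be carried out as a chain of equalities between matrix entries, using only \eqref{eq:cmon laws}, \eqref{eq:cmon enrichment}, and the three star laws \eqref{eq:fromstefanescu}, with no recourse to any order. This is how it is done in \cite{cuazuanescu1994feedback}, and once you drop the reference to \Cref{lemma:order-adjointness} your plan goes through.
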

For the sake of completeness, we illustrate also the following that is closely related to the leftmost in \eqref{eq:fromstefanescu}.
\begin{prop}\label{prop:star-fixpoint}
  Let \(\Cat{C}\) be a monoidal category with finite biproducts and trace. For each $f\colon X \to X$ define $\kstar{f}$ as in \Cref{fig:star-trace}. %
  Then, 
  \[\kstar{f}=\id{X} \sqcup \kstar{f};f\]
\end{prop}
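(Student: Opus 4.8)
The plan is to deduce the statement directly from Proposition~\ref{prop:stef}. That result establishes a correspondence between traces and repetition operators on a finite-biproduct category, and under this correspondence the repetition operator associated with the ambient trace is precisely the $\kstar{(\cdot)}$ read off \Cref{fig:star-trace}. Hence $\kstar{(\cdot)}$ automatically satisfies the three identities of \eqref{eq:fromstefanescu}, and the whole proof reduces to extracting the right-hand unfolding $\kstar{f} = \id{X} \sqcup \kstar{f};f$ from them, without ever computing a trace by hand.

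Concretely, first I would record the leftmost identity of \eqref{eq:fromstefanescu}, namely $\kstar{f} = \id{X} \sqcup f;\kstar{f}$. Next I would instantiate the rightmost identity $\kstar{(f;g)};f = f;\kstar{(g;f)}$ at $g = \id{X}$ (legitimate since $f,g\colon X \to X$ are endomorphisms of the same object), which collapses to the commutation law $\kstar{f};f = f;\kstar{f}$. The statement then follows from the single chain
\begin{align*}
\id{X} \sqcup \kstar{f};f &= \id{X} \sqcup f;\kstar{f} \tag{commutation}\\
&= \kstar{f}. \tag{left unfolding}
\end{align*}
Note that no order-theoretic reasoning is needed: $\sqcup$ here is the convolution sum of \eqref{eq:covolution}, available in any finite-biproduct category, and both steps are honest equalities.

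The only genuine subtlety — and the step I would be most careful about — is avoiding circularity. I must make sure that the operator in the statement (the one defined from the trace in \Cref{fig:star-trace}) is exactly the repetition operator that Proposition~\ref{prop:stef} associates with the given trace, so that \eqref{eq:fromstefanescu} applies verbatim; this is precisely the content of that correspondence and requires no fresh calculation. For completeness I would also remark that a self-contained derivation is possible, unfolding $\trace_X\big((f\piu\id{X});\codiag{X};\diag{X}\big)$ via the sliding and tightening axioms together with the biproduct (co)monoid laws, but this route is strictly longer and yields no additional insight, so I would relegate it to a remark rather than make it the main argument.
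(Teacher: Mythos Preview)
Your proposal is correct and takes a genuinely different route from the paper. The paper proves the identity by a direct diagrammatic computation: it unfolds $\id{X}\sqcup\kstar{f};f$ via the definitions in~\eqref{eq:covolution} and \Cref{fig:star-trace}, then applies the trace axioms (\ref{ax:trace:sliding}, \ref{ax:trace:yanking}) and the naturality of $\codiag{}$ and $\diag{}$ to rewrite the expression step by step into $\kstar{f}$. This is exactly the self-contained derivation you describe as an alternative and propose to relegate to a remark.

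Your approach instead extracts the result algebraically from \Cref{prop:stef}: the left unfolding is literally the first law of~\eqref{eq:fromstefanescu}, and specialising the third law at $g=\id{X}$ yields $\kstar{f};f=f;\kstar{f}$, whence the right unfolding follows in one line. This is shorter and cleaner, and your care about circularity is well placed and correctly handled: the correspondence in \Cref{prop:stef} is precisely the one of \Cref{fig:star-trace}, so the repetition laws apply verbatim to the $\kstar{(\cdot)}$ of the statement. The trade-off is that you lean on the cited result \cite{cuazuanescu1994feedback} as a black box, whereas the paper's argument is independent of it and also serves as a worked illustration of manipulating traces diagrammatically --- a technique used repeatedly elsewhere in the appendix. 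Both are perfectly acceptable proofs; yours is more economical, the paper's more didactic.
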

\begin{proof} 
  \begingroup
  \allowdisplaybreaks
  \begin{align*}
   \id{} \sqcup \kstar{f};f
    &= 
    \InputIfFileExists{propStarFix/step1.tikz}{}{\input{./tikz/propStarFix/step1.tikz}}
 \tag{\ref{eq:covolution}} \\
    &= 
    \InputIfFileExists{propStarFix/step1slide.tikz}{}{\input{./tikz/propStarFix/step1slide.tikz}}
 \tag{\Cref{fig:star-trace}}  \\
    &= 
    \InputIfFileExists{propStarFix/step2.tikz}{}{\input{./tikz/propStarFix/step2.tikz}}
 \tag{\ref{ax:trace:sliding}}  \\
    &= 
    \InputIfFileExists{propStarFix/step3.tikz}{}{\input{./tikz/propStarFix/step3.tikz}}
 \tag{\ref{ax:monoid:nat:copy}} \\
    &= 
    \InputIfFileExists{propStarFix/step4.tikz}{}{\input{./tikz/propStarFix/step4.tikz}}
 \tag{\ref{ax:trace:sliding}} \\
    &= 
    \InputIfFileExists{propStarFix/step5.tikz}{}{\input{./tikz/propStarFix/step5.tikz}}
 \tag{\ref{ax:trace:yanking}} \\
    &= 
    \InputIfFileExists{propStarFix/step6.tikz}{}{\input{./tikz/propStarFix/step6.tikz}}
 \tag{\ref{ax:comonoid:nat:copy}} \\
    &= \kstar{f} \tag{\Cref{fig:star-trace}}
  \end{align*}
  \endgroup
\end{proof}

\begin{cor}\label{cor:bounds-star}
   Let \(\Cat{C}\) be a fb category with idempotent convolution and trace. %
  The following inequalities hold for all $f\colon X \to X$:
\[      \begin{array}{c}
      \id{X} \sqcup f \dcomp \kstar{f} \leq \kstar{f}  \\
      \id{X} \sqcup \kstar{f} \dcomp f \leq \kstar{f} 
    \end{array}
  \]
In particular,   $f \dcomp \kstar{f} \leq \kstar{f} $, $ \kstar{f}; f \leq \kstar{f} $ and $\id{X} \leq \kstar{f}$.
\end{cor}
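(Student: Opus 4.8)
The plan is to obtain both displayed inequalities directly from the two fixpoint identities for $\kstar{(\cdot)}$ already at our disposal, and then to read off the three ``in particular'' claims using the fact that, under idempotent convolution, $\sqcup$ computes joins in the induced order.

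First I would recall the two equalities. The identity $\kstar{f} = \id{X} \sqcup f \dcomp \kstar{f}$ is the leftmost law of \eqref{eq:fromstefanescu} supplied by Proposition \ref{prop:stef} (the star defined in \Cref{fig:star-trace} from the trace satisfies those laws), while the symmetric identity $\kstar{f} = \id{X} \sqcup \kstar{f} \dcomp f$ is exactly the content of Proposition \ref{prop:star-fixpoint} just established. Since an equality is in particular a reflexive inequality, these immediately yield
\[
\id{X} \sqcup f \dcomp \kstar{f} \leq \kstar{f}
\qquad\text{and}\qquad
\id{X} \sqcup \kstar{f} \dcomp f \leq \kstar{f},
\]
which are the two displayed bounds (and which coincide with the first two laws of \eqref{eq:Kllenelaw}).

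Next I would derive the remaining three inequalities. Because $\Cat{C}$ has idempotent convolution, by \Cref{lem:idempfib} it is poset enriched and, by \Cref{lemma:order-adjointness}, the order satisfies $g \leq h$ iff $g \sqcup h = h$; hence $\sqcup$ is the binary join, so $a \leq a \sqcup b$ and $b \leq a \sqcup b$ for all parallel $a,b$. Applying this to the first bound gives $f \dcomp \kstar{f} \leq \id{X} \sqcup f \dcomp \kstar{f} \leq \kstar{f}$ and $\id{X} \leq \id{X} \sqcup f \dcomp \kstar{f} \leq \kstar{f}$; applying it to the second bound gives $\kstar{f} \dcomp f \leq \id{X} \sqcup \kstar{f} \dcomp f \leq \kstar{f}$. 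This establishes $f \dcomp \kstar{f} \leq \kstar{f}$, $\kstar{f} \dcomp f \leq \kstar{f}$ and $\id{X} \leq \kstar{f}$.

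There is no substantial obstacle here: the argument is purely formal. The only points requiring care are to invoke the correct fixpoint identity for each side (the left one from Proposition \ref{prop:stef}, the right one from Proposition \ref{prop:star-fixpoint}) and to justify the splitting of $\sqcup$ into its summands via the join-semilattice structure guaranteed by idempotent convolution.
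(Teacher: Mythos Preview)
Your proposal is correct and follows essentially the same route as the paper: the paper's proof is the one-line citation ``By \Cref{lemma:order-adjointness}, \Cref{prop:stef} and \Cref{prop:star-fixpoint}'', and you have simply spelled out how these three ingredients combine. The only minor remark is that invoking \Cref{lem:idempfib} is unnecessary, since an fb category with idempotent convolution is poset enriched by definition; but this does no harm.
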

\begin{proof}
  By \Cref{lemma:order-adjointness}, \Cref{prop:stef} and \Cref{prop:star-fixpoint}.
\end{proof}

\begin{lem}\label{lemma:uniform-star}
  Let $\Cat{C}$ be a poset enriched monoidal category with finite biproducts and trace. $\Cat{C}$ satisfies the axioms in \Cref{fig:ineq-uniformity} iff it satisfies those in \Cref{fig:uniform-star}.
  \begin{figure}[ht!]
    \centering
    \[
    \begin{array}{c}
      
    \InputIfFileExists{equivunif/EU1_lhs.tikz}{}{\input{./tikz/equivunif/EU1_lhs.tikz}}
 \leq 
    \InputIfFileExists{equivunif/EU1_rhs.tikz}{}{\input{./tikz/equivunif/EU1_rhs.tikz}}
 \implies 
    \InputIfFileExists{equivunif/EU1TR_lhs.tikz}{}{\input{./tikz/equivunif/EU1TR_lhs.tikz}}
 \leq 
    \InputIfFileExists{equivunif/EU1TR_rhs.tikz}{}{\input{./tikz/equivunif/EU1TR_rhs.tikz}}

      \\[20pt]
      
    \InputIfFileExists{equivunif/EU1_lhs.tikz}{}{\input{./tikz/equivunif/EU1_lhs.tikz}}
 \geq 
    \InputIfFileExists{equivunif/EU1_rhs.tikz}{}{\input{./tikz/equivunif/EU1_rhs.tikz}}
 \implies 
    \InputIfFileExists{equivunif/EU1TR_lhs.tikz}{}{\input{./tikz/equivunif/EU1TR_lhs.tikz}}
 \geq 
    \InputIfFileExists{equivunif/EU1TR_rhs.tikz}{}{\input{./tikz/equivunif/EU1TR_rhs.tikz}}

    \end{array}
    \]
    \caption{Equivalent uniformity axioms.\label{fig:uniform-star}}
  \end{figure}
\end{lem}
\begin{proof}
  The poset enriched monoidal category obtained by inverting the 2-cells also has biproducts and trace.
  Thus, we show the first of the implications in \Cref{fig:uniform-star} and \Cref{fig:ineq-uniformity}, while the other ones follow by this observation.

  For one direction, suppose that the trace satisfy the axioms in \Cref{fig:ineq-uniformity}, and consider \(f \colon X \to X\), \(g \colon Y \to Y\) and \(r \colon X \to Y\) in \(\Cat{C}\) such that \(f \dcomp r \leq r \dcomp g\).
Observe that %
  \begin{align*}
    
    % [inline block 31: 14 envs, 20204 chars -> data_tex | \begin{tikzpicture} 	\begin{pgfonlayer}{nodelayer}...]
 \right. \tag{Hypothesis} \\
  \end{align*}
  \endgroup
  With these inequalities, we show the inequality between the traces.
  \begingroup
  \allowdisplaybreaks
  \begin{align*}
    
    \InputIfFileExists{lemmaUnifEquiv/LTR/traceEquiv/step1.tikz}{}{\input{./tikz/lemmaUnifEquiv/LTR/traceEquiv/step1.tikz}}
 
    &= 
    \InputIfFileExists{lemmaUnifEquiv/LTR/traceEquiv/step2.tikz}{}{\input{./tikz/lemmaUnifEquiv/LTR/traceEquiv/step2.tikz}}
 \tag{Proposition \ref{prop:matrixform}} \\
    &= 
    \InputIfFileExists{lemmaUnifEquiv/LTR/traceEquiv/step3.tikz}{}{\input{./tikz/lemmaUnifEquiv/LTR/traceEquiv/step3.tikz}}
 \tag{trace axioms} \\
    &\leq 
    \InputIfFileExists{lemmaUnifEquiv/LTR/traceEquiv/step4.tikz}{}{\input{./tikz/lemmaUnifEquiv/LTR/traceEquiv/step4.tikz}}
 \tag{ii} \\
    &\leq 
    \InputIfFileExists{lemmaUnifEquiv/LTR/traceEquiv/step5.tikz}{}{\input{./tikz/lemmaUnifEquiv/LTR/traceEquiv/step5.tikz}}
 \tag{i} \\
    &\leq 
    \InputIfFileExists{lemmaUnifEquiv/LTR/traceEquiv/step6.tikz}{}{\input{./tikz/lemmaUnifEquiv/LTR/traceEquiv/step6.tikz}}
 \tag{iii} \\
    &\leq 
    \InputIfFileExists{lemmaUnifEquiv/LTR/traceEquiv/step6.tikz}{}{\input{./tikz/lemmaUnifEquiv/LTR/traceEquiv/step6.tikz}}
 \tag{iv} \\
    &= 
    \InputIfFileExists{lemmaUnifEquiv/LTR/traceEquiv/step8.tikz}{}{\input{./tikz/lemmaUnifEquiv/LTR/traceEquiv/step8.tikz}}
 \tag{trace axioms} \\
    &= 
    \InputIfFileExists{lemmaUnifEquiv/LTR/traceEquiv/step9.tikz}{}{\input{./tikz/lemmaUnifEquiv/LTR/traceEquiv/step9.tikz}}
 \tag{Proposition \ref{prop:matrixform}}
  \end{align*}
  \endgroup
\end{proof}

The above results can be rephrased in terms of $\kstar{(\cdot)}$ as defined in Figure \ref{fig:star-trace}: 
$\Cat{C}$ satisfies the axioms in \Cref{fig:ineq-uniformity} iff $\kstar{(\cdot)}$ satisfies
\begin{equation}\label{eq:unifstar}
\begin{array}{r@{}c@{}l }
     f \dcomp r \leq r \dcomp g & \implies & \kstar{f} \dcomp r \leq r \dcomp \kstar{g} \\
    f \dcomp r \geq r \dcomp g & \implies & \kstar{f} \dcomp r \geq r \dcomp \kstar{g}
\end{array}
\end{equation}

\begin{rem}
It is worth to remark that in \cite{cuazuanescu1994feedback}, it was proved that the implications obtained by replacing $\leq$ by $=$ in \eqref{eq:unifstar} are equivalent to the standard uniformity axioms.
\end{rem}

It is also immediate to see that the axiom \eqref{ax:kb:traceid} in \Cref{fig:ineq-uniformity} is equivalent to the following.
\begin{equation}\label{eq:happystar}
\kstar{\id{}} \leq \id{}
\end{equation}

\begin{lem}\label{lemma:uniform-kozen}
 Let $\Cat{C}$ be a fb category with idempotent comvolution and trace. 
 $\Cat{C}$ satisfies the axioms in \Cref{fig:ineq-uniformity} iff $\kstar{(\cdot)}$ as defined in Figure \ref{fig:star-trace} satisfies the following:
\begin{equation}\label{eq:stroingstar}     
\begin{array}{r@{}c@{}l}
 f \dcomp r \leq r & \implies & \kstar{f} \dcomp r \leq r \\
 l \dcomp f \leq l & \implies & l \dcomp \kstar{f}  \leq l  
    \end{array}
\end{equation}

\end{lem}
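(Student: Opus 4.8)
The plan is to treat this as a bookkeeping corollary of the two facts just established: \Cref{lemma:uniform-star} (read through \eqref{eq:unifstar}) gives the equivalence between the uniformity axioms \eqref{ax:posetunif:1}, \eqref{ax:posetunif:2} of \Cref{fig:ineq-uniformity} and the pair of implications \eqref{eq:unifstar}, and \eqref{ax:kb:traceid} is equivalent to \eqref{eq:happystar}. It therefore suffices to prove that the conjunction of \eqref{eq:unifstar} and \eqref{eq:happystar} is equivalent to the two Kozen-style laws in \eqref{eq:stroingstar}. Throughout I would freely use \Cref{cor:bounds-star}, which holds for any fb category with idempotent convolution and trace and supplies $\id{} \le \kstar{f}$, $f;\kstar{f}\le\kstar{f}$ and $\kstar{f};f \le \kstar{f}$ without any appeal to uniformity.

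For the direction from \eqref{eq:stroingstar} to \eqref{eq:unifstar} and \eqref{eq:happystar} I would argue directly, via the classical ``bisimulation from induction'' pattern. First, \eqref{eq:happystar} is immediate by instantiating the left law of \eqref{eq:stroingstar} at $f = r = \id{}$. For the first implication of \eqref{eq:unifstar}, assuming $f;r \le r;g$ I would set $x := r;\kstar{g}$ and check $f;x \le r;g;\kstar{g} \le r;\kstar{g} = x$ using \Cref{cor:bounds-star}; the left law of \eqref{eq:stroingstar} then yields $\kstar{f};x \le x$, whence $\kstar{f};r \le \kstar{f};r;\kstar{g} \le r;\kstar{g}$ by $\id{}\le\kstar{g}$. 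The second implication of \eqref{eq:unifstar} is the mirror image: assuming $r;g \le f;r$ I would set $l := \kstar{f};r$, check $l;g \le l$ using $\kstar{f};f\le\kstar{f}$, and apply the right law of \eqref{eq:stroingstar} to get $l;\kstar{g}\le l$, from which $r;\kstar{g} \le \kstar{f};r$ follows.

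For the converse, the left law of \eqref{eq:stroingstar} comes cheaply: given $f;r \le r = r;\id{}$, the first implication of \eqref{eq:unifstar} with $g = \id{}$ gives $\kstar{f};r \le r;\kstar{\id{}}$, and \eqref{eq:happystar} collapses the right-hand side to $r$. The genuine difficulty is the right law of \eqref{eq:stroingstar}, since \eqref{eq:unifstar} provides only \emph{left-handed} uniformity (the star always sits on the left of the sliding arrow) and right-handed induction is not derivable from left-handed uniformity in the bare Kleene setting. Here I would invoke a left--right duality: the opposite category $\Cat{C}^{op}$ is again an fb category with idempotent convolution and trace, since biproducts and the trace are self-dual and the poset enrichment is unchanged; moreover its star agrees with that of $\Cat{C}$ on endomorphisms, and the axioms of \Cref{fig:ineq-uniformity} are preserved, with \eqref{ax:posetunif:1} and \eqref{ax:posetunif:2} interchanged and \eqref{ax:kb:traceid} self-dual. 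Consequently the already-proven left law of \eqref{eq:stroingstar} holds in $\Cat{C}^{op}$, and translating it back across the arrow reversal turns $f \dcomp r \le r \Rightarrow \kstar{f}\dcomp r \le r$ into exactly $l \dcomp f \le l \Rightarrow l\dcomp\kstar{f}\le l$, the right law in $\Cat{C}$.

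The step I expect to be most delicate is precisely this passage to $\Cat{C}^{op}$: one must verify carefully that the star operator defined from the trace via \Cref{fig:star-trace} is invariant under the simultaneous reversal of arrows and exchange of $\diag{}/\codiag{}$ and $\bang{}/\cobang{}$, and that \eqref{ax:posetunif:1}/\eqref{ax:posetunif:2} genuinely swap rather than degenerate, so that ``left law in $\Cat{C}^{op}$'' is literally ``right law in $\Cat{C}$''. If one prefers to avoid the opposite category, the same right law can instead be extracted by going back to \eqref{ax:posetunif:2} before it is folded into \eqref{eq:unifstar}; but the duality argument keeps the bookkeeping symmetric and reuses the left-handed computation verbatim.
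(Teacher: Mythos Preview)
Your argument is correct, and the direction from \eqref{eq:stroingstar} to \eqref{eq:unifstar} and \eqref{eq:happystar} is verbatim the paper's. For the converse, the left law of \eqref{eq:stroingstar} is also handled exactly as in the paper. The difference lies in the right law: you pass through $\Cat{C}^{op}$ (or fall back to \eqref{ax:posetunif:2}), whereas the paper simply applies the \emph{second} implication of \eqref{eq:unifstar} directly. Concretely, from $l;f \leq l = \id{};l$ one reads off $\id{};l \geq l;f$, which is an instance of the hypothesis of the second line of \eqref{eq:unifstar} with $f' = \id{}$, $r = l$, $g = f$; the conclusion $\kstar{\id{}};l \geq l;\kstar{f}$ together with \eqref{eq:happystar} and \Cref{cor:bounds-star} (so that $\kstar{\id{}} = \id{}$) gives $l \geq l;\kstar{f}$ immediately.

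Your characterisation of \eqref{eq:unifstar} as ``only left-handed'' is thus inaccurate: the pair of implications in \eqref{eq:unifstar} is manifestly self-dual under reversal of composition, since the conclusion already places a star on each side of $r$ and the two lines differ only in the direction of the inequality. That is exactly what makes the paper's two-line ``symmetric argument'' go through. Your $\Cat{C}^{op}$ route is not wrong, but it front-loads the delicate verification (self-duality of the trace-defined star under the swap $\diag{}\leftrightarrow\codiag{}$) that the direct argument avoids entirely.
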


\begin{proof}

We prove that \eqref{eq:unifstar} and \eqref{eq:happystar} hold iff \eqref{eq:stroingstar} holds.

  For one direction, assume that \eqref{eq:unifstar} and \eqref{eq:happystar} hold. To prove that the first implication in \eqref{eq:stroingstar} holds, 
 consider \(f \colon X \to X\) and \(r \colon X \to Y\) such that \(f \dcomp r \leq r\).
  Then, \(f \dcomp r \leq r \dcomp \id{Y}\) and,
  \begin{align*}
    \kstar{f} \dcomp r & \leq r \dcomp \kstar{\id{Y}} \tag{\eqref{eq:unifstar}}\\
    & \leq r \dcomp \id{Y} \tag{\eqref{eq:happystar}}\\
    & = r
  \end{align*}
  The second implication follows the symmetric argument. %
  
  For the other direction, assume that \eqref{eq:stroingstar} holds.  To prove \eqref{eq:happystar}, observe that \(\id{} \dcomp \id{} \leq \id{}\).   By \eqref{eq:stroingstar}, \(\kstar{\id{}} = \kstar{\id{}} \dcomp \id{} \leq \id{}\).
  
  To prove the first implication in  \eqref{eq:unifstar}, consider \(f \colon X \to X\), \(g \colon Y \to Y\) and \(r \colon X \to Y\) such that \(f \dcomp r \leq r \dcomp g\).
  Then \(f \dcomp r \dcomp \kstar{g} \leq r \dcomp g \dcomp \kstar{g} \leq r \dcomp \kstar{g}\), where the latter inequality holds by \Cref{cor:bounds-star}.
  By \eqref{eq:stroingstar}, \(\kstar{f} \dcomp r \dcomp \kstar{g} \leq r \dcomp \kstar{g}\).
  By \Cref{cor:bounds-star}, \(\kstar{f} \dcomp r \leq \kstar{f} \dcomp r \dcomp \kstar{g}\), which gives  \(\kstar{f} \dcomp r \leq r \dcomp \kstar{g}\).
  
  To prove the second implication in  \eqref{eq:unifstar}, we proceed similarly: assume that  \(  r \dcomp g \leq f \dcomp r \).
  Then \(\kstar{f} \dcomp r \dcomp g \leq   \kstar{f} \dcomp f \dcomp r \leq  \kstar{f} \dcomp r\), where the latter inequality holds by \Cref{cor:bounds-star}.
  By the second implication in \eqref{eq:stroingstar}, \(\kstar{f} \dcomp r \dcomp \kstar{g} \leq \kstar{f}\dcomp r\).
  By \Cref{cor:bounds-star}, \(  r \dcomp \kstar{g} \leq \kstar{f} \dcomp r \dcomp \kstar{g}\), which gives  \( r \dcomp \kstar{g} \leq \kstar{f} \dcomp r \).
  
\end{proof}

We have now all the ingredients to prove \Cref{prop:trace-star}.

\begin{proof}[Theorem \ref{prop:trace-star}]
  Suppose that \(\Cat{C}\) is a Kleene bicategory. Then one can define a $\kstar{(\cdot)}$ as in \Cref{fig:star-trace}. By Corollary \ref{cor:bounds-star} and
 \Cref{lemma:uniform-kozen}, $\kstar{(\cdot)}$ satisfies the laws in \eqref{eq:Kllenelaw}. Thus, it is a Kleene star.

  Conversely, suppose that \(\Cat{C}\) has a Kleene star operator $\kstar{(\cdot)}$. One can easily show (e.g., by using completeness of Kozen axiomatisation in \cite{Kozen94acompleteness}) that the laws of Kleene star in \eqref{eq:Kllenelaw} entail those in \eqref{eq:fromstefanescu}. Thus, by Proposition \ref{prop:stef}, $\kstar{(\cdot)}$ gives us a trace as defined in the right of  \Cref{fig:star-trace}.
 By \Cref{lemma:uniform-kozen}, this trace satisfies the laws in \Cref{fig:ineq-uniformity}. Thus $\Cat{C}$ is a Kleene bicategory.
\end{proof}

\begin{proof}[Proof of \Cref{lemma:derivedlawsKleene}]
See e.g., \cite{Kozen94acompleteness}.
\end{proof}

\subsection{The Matrix Construction}\label{ssec:matrix}

Thanks to Corollary \ref{cor:kleeneareka}, one can easily construct a forgetful functor $U\colon \KBicat \to \TKAlg$: any Kleene bicategory is a typed Kleene algebra and any morphism of Kleene bicategories is a morphism of typed Kleene algebras. To see the latter, observe that preserving the join-semi lattice and $\kstar{(\cdot)}$, as defined in \eqref{eq:covolution} and \eqref{eq:star}, is enough to preserve monoidal product, (co)monoids, and traces.

We now illustrate that $U\colon \KBicat \to \TKAlg$ has a left adjoint provided by the matrix construction. In \cite[Exercises VIII.2.5-6]{mac_lane_categories_1978}, it is shown that there exists an adjunction in between $\CMonCat$, the category of $\Cat{CMon}$-enriched categories, and $\FBC$, the category of fb categories.
\begin{equation}\label{eq:matrixadj}
\begin{tikzcd}
        \CMonCat
        \arrow[r, "\MatFun"{name=F}, bend left] &
        \FBC
        \arrow[l, "\fun{U}"{name=U}, bend left]
        \arrow[phantom, from=F, to=U, "\vdash" rotate=90]
    \end{tikzcd}
\end{equation}
The functor $U$ is the obvious forgetful functor: every fb category is  $\Cat{CMon}$-enriched. 
Given a $\CMon$-enriched category $\Cat S$, one can form the biproduct completion of $\Cat S$, denoted as $\Mat{\Cat S}$. Its objects are formal $\piu$'s of objects of $\Cat S$, while a morphism $M \colon \Piu[k=1][n]{A_k} \to \Piu[k=1][m]{B_k}$ is a $m \times n$ matrix where $M_{ji} \in \Cat S[A_i,B_j]$. Composition is given by matrix multiplication, with the addition being the plus operation on the homsets (provided by the enrichment) and multiplication being composition. The identity morphism of $\Piu[k=1][n]{A_k}$ is given by the $n \times n$ matrix $(\delta_{ji})$, where $\delta_{ji} = \id{A_j}$ if $i=j$, while if $i \neq j$, then $\delta_{ji}$ is the zero morphism of $\Cat S[A_i,A_j]$.

\begin{prop}\label{prop:matrices-kleene-bicategory}
Let  \(\Cat{K}\) be a typed Kleene algebra. Then  \(\Mat{\Cat{K}}\) is a Kleene bicategory.
\end{prop}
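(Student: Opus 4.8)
The plan is to invoke the characterisation of Kleene bicategories given by \Cref{prop:trace-star}: it suffices to exhibit $\Mat{\Cat{K}}$ as a finite biproduct category with idempotent convolution (\Cref{def:biproduct category}) that carries a Kleene-star operator, and then the theorem supplies the trace and the remaining axioms for free. I would therefore organise the argument into three steps: the biproduct structure, the idempotency of convolution, and the construction and verification of a matrix star.

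First I would establish the underlying finite biproduct category. By the adjunction \eqref{eq:matrixadj} between $\CMonCat$ and $\FBC$, the biproduct completion of any $\CMon$-enriched category is a finite biproduct category; since every typed Kleene algebra is in particular $\CMon$-enriched, this applies to $\Cat{K}$, so $\Mat{\Cat{K}} \in \FBC$. To upgrade to \emph{idempotent} convolution I would recall the standard fact that, in a biproduct completion, the convolution $\sqcup$ of \eqref{eq:covolution} coincides with entrywise matrix addition, which is computed from the hom-monoids of $\Cat{K}$. Because $\Cat{K}$ is enriched over $\Cat{Jsl}$ rather than merely $\CMon$, these hom-monoids are idempotent, and hence so is matrix addition: $(M \sqcup M)_{ji} = M_{ji} \sqcup M_{ji} = M_{ji}$. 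By \Cref{lem:idempfib}, $\Mat{\Cat{K}}$ is then poset-enriched with the pointwise order inherited entrywise from $\Cat{K}$, and so is a finite biproduct category with idempotent convolution.

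Next I would equip $\Mat{\Cat{K}}$ with a Kleene-star operator $\kstar{(\cdot)}$ on square matrices, defined by induction on the number $n$ of summands of the (co)domain. For $n=1$ a square matrix is an endomorphism $a \colon A \to A$ of $\Cat{K}$, and I set $\kstar{M} \defeq \kstar{a}$ using the star of $\Cat{K}$. For the inductive step, writing the object as $A_1 \piu P$ and decomposing $M$ into blocks
\[ M = \begin{pmatrix} a & b \\ c & d \end{pmatrix}, \qquad a \colon A_1 \to A_1,\; b \colon P \to A_1,\; c \colon A_1 \to P,\; d \colon P \to P, \]
I set the Schur complement $f \defeq d \sqcup (b \dcomp \kstar{a} \dcomp c) \colon P \to P$, obtain $\kstar{f}$ from the inductive hypothesis (as $P$ has $n-1$ summands), and define $\kstar{M}$ by the Conway block formula
\[ \kstar{M} \defeq \begin{pmatrix} \kstar{a} \sqcup (\kstar{a} \dcomp c \dcomp \kstar{f} \dcomp b \dcomp \kstar{a}) & \kstar{f} \dcomp b \dcomp \kstar{a} \\ \kstar{a} \dcomp c \dcomp \kstar{f} & \kstar{f} \end{pmatrix}. \]
Every block is well-typed precisely because $\Cat{K}$ is typed, so the definition lands in $\Mat{\Cat{K}}$ and is independent of the chosen splitting.

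Finally I would verify that $\kstar{(\cdot)}$ satisfies the four laws of \eqref{eq:Kllenelaw}; by \Cref{prop:trace-star} this then yields the trace of a Kleene bicategory and completes the proof. This verification is the main obstacle: it is exactly the classical result that matrices over a Kleene algebra again form a Kleene algebra \cite{Kozen94acompleteness}, but it must be carried out in the \emph{typed} setting of \cite{kozen98typedkleene}, tracking sources and targets of every block so that only legitimately composable entries are multiplied. The two unfolding inequalities reduce, through the block formula, to the corresponding laws for $\kstar{a}$ and $\kstar{f}$ together with the distributivity laws \eqref{eq:cmon enrichment}; the two implicational induction laws are the delicate ones, since they amount to solving the matrix fixpoint inequalities block by block and reapplying the inductive hypothesis to the Schur complement $f$. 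I expect essentially all of the genuinely non-routine work to concentrate in these two implicational laws.
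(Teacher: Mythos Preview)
Your proposal is correct and follows essentially the same route as the paper: establish the biproduct structure via \eqref{eq:matrixadj}, obtain idempotent convolution from the $\Cat{Jsl}$-enrichment of $\Cat{K}$, produce a Kleene star on matrices, and conclude via \Cref{prop:trace-star}. The only presentational differences are that the paper (i) verifies the adjoint-biproduct inequalities of \eqref{eq:adjointnesfib} directly by short diagrammatic calculations rather than invoking \Cref{lem:idempfib}, and (ii) dispatches the entire matrix-star step by citing Kozen's result that matrices over a (typed) Kleene algebra again form a Kleene algebra~\cite[Lemma~3.3]{Kozen94acompleteness}, rather than re-deriving the Conway block formula and its fixpoint laws as you propose to do.
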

\begin{proof} %
  By \eqref{eq:matrixadj}, \(\Mat{\Cat{K}}\) has finite biproducts.
  The posetal structure is defined element-wise from the posetal structure of \(\Cat{K}\). 
  We check that it gives adjoint biproducts. The following two derivations prove $\codiag{} \dashv \diag{}$.

  \noindent\begin{minipage}{0.44\textwidth}
    \begin{align*}
      
    % [inline block 32: 11 envs, 11189 chars -> data_tex | \begin{tikzpicture} 	\begin{pgfonlayer}{nodelayer}...]

}
 \tag{$0_{X,X} \leq ( \, \id{X} \, )$}
    \end{align*}
  \end{minipage}

  \bigskip
  
  By Lemma 3.3 in \cite{Kozen94acompleteness}, \(\Mat{\Cat{K}}\) has a Kleene star operator. Thus, by \Cref{prop:trace-star}, $\Mat{\Cat{K}}$ is a Kleene bicategory.
\end{proof}

More generally, one can show that  the functor \(\MatFun \colon \CMonCat \to \fbCat\) restricts to typed Kleene algebras and Kleene bicategories
and that this gives rise to the left adjoint to $U\colon \KBicat \to \TKAlg$.

\begin{cor}\label{cor:adjKleene}
The adjunction in \eqref{eq:matrixadj} restricts to 
\[
\begin{tikzcd}
        \TKAlg
        \arrow[r, "\MatFun"{name=F}, bend left] &
        \KBicat
        \arrow[l, "\fun{U}"{name=U}, bend left]
        \arrow[phantom, from=F, to=U, "\vdash" rotate=90]
    \end{tikzcd}
\]
\end{cor}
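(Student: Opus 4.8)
The plan is to show that both functors of the Mac Lane adjunction \eqref{eq:matrixadj} restrict to the subcategories $\TKAlg \hookrightarrow \CMonCat$ and $\KBicat \hookrightarrow \FBC$, and that the unit and counit of \eqref{eq:matrixadj} are morphisms of the restricted structures; the restricted adjunction then follows from the standard fact that an adjunction $F \dashv G$ descends to subcategories whenever $F$, $G$, its unit and its counit all stay inside them, the triangle identities being inherited verbatim. That $U$ restricts to $\KBicat \to \TKAlg$ is already recorded above (via Corollary~\ref{cor:kleeneareka}), and that $\MatFun$ sends a typed Kleene algebra to a Kleene bicategory is Proposition~\ref{prop:matrices-kleene-bicategory}. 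The two remaining tasks are to check $\MatFun$ on morphisms and to analyse the unit and counit.

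First I would observe that, since an fb-functor automatically preserves the convolution $\sqcup$ of \eqref{eq:covolution} (hence the induced order $f \le g \iff f \sqcup g = g$) and the (co)monoids, the \emph{only} extra condition distinguishing a morphism of Kleene bicategories among fb-functors is preservation of traces, which by Figure~\ref{fig:star-trace} is equivalent to preservation of $\kstar{(\cdot)}$; dually, a $\Cat{CMon}$-functor between typed Kleene algebras is a $\TKAlg$-morphism exactly when it preserves $\kstar{(\cdot)}$. For $\MatFun$ on a $\TKAlg$-morphism $h$: the matrix functor $\MatFun(h)$ acts entrywise and preserves $\dcomp$, $\sqcup$, $\bot$ and biproducts, while $h$ preserves the star of the diagonal entries; since the star of a matrix is a fixed expression in entry-stars, compositions and joins (Lemma~3.3 of~\cite{Kozen94acompleteness}), $\MatFun(h)$ preserves the matrix star and is therefore a morphism of Kleene bicategories. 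Thus $\MatFun$ restricts to a functor $\TKAlg \to \KBicat$.

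The unit $\eta_{\Cat{K}} \colon \Cat{K} \to U\Mat{\Cat{K}}$ embeds $\Cat{K}$ as the single objects (i.e.\ the $1\times 1$ matrices), on which the matrix star coincides with the star of $\Cat{K}$; hence $\eta_{\Cat{K}}$ preserves $\kstar{(\cdot)}$ and the join-semilattice structure and is a $\TKAlg$-morphism. The delicate point, which I expect to be the main obstacle, is the counit $\varepsilon_{\Cat{B}} \colon \Mat{U\Cat{B}} \to \Cat{B}$: one must show it preserves star (equivalently trace). Since $U\Cat{B}$ is a typed Kleene algebra, $\Mat{U\Cat{B}}$ computes the star of a matrix through the formula of Lemma~3.3 of~\cite{Kozen94acompleteness}; as $\varepsilon_{\Cat{B}}$ preserves $\dcomp$, $\sqcup$, biproducts and the star of diagonal scalars, it sends this formula to the same expression evaluated in $\Cat{B}$. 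It therefore suffices to establish the purely structural fact that in \emph{any} Kleene bicategory the star of an endomorphism of a biproduct $\bigoplus_j B_j$ is given by exactly this matrix-star formula in its components $f_{ij}$ --- a statement that follows by rerunning, inside $\Cat{B}$, the block computation used in Proposition~\ref{prop:matrices-kleene-bicategory} (equivalently, by the definition of $\kstar{(\cdot)}$ from the trace in Figure~\ref{fig:star-trace} together with the matrix decomposition of Proposition~\ref{prop:matrixform}). Granting this, $\varepsilon_{\Cat{B}}$ lies in $\KBicat$.

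With $\MatFun$ and $U$ restricting to functors between $\TKAlg$ and $\KBicat$, and with $\eta$ and $\varepsilon$ pointwise in these subcategories, naturality of $\eta, \varepsilon$ and the triangle identities are inherited verbatim from \eqref{eq:matrixadj}. Hence $\MatFun \dashv U$ restricts to an adjunction $\TKAlg \rightleftarrows \KBicat$, as claimed; that the objects and morphisms involved are genuinely Kleene bicategories and their morphisms is guaranteed by Theorem~\ref{prop:trace-star}.
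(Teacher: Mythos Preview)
Your proposal is correct. The paper states this corollary without proof, merely remarking in the preceding paragraph that ``one can show that the functor $\MatFun$ restricts to typed Kleene algebras and Kleene bicategories and that this gives rise to the left adjoint to $U$.'' Your argument fills in precisely the details one would expect: restricting both functors and verifying the unit and counit lie in the subcategories, using the same ingredients the paper has available (Corollary~\ref{cor:kleeneareka} for $U$, Proposition~\ref{prop:matrices-kleene-bicategory} for $\MatFun$ on objects, and Kozen's Lemma~3.3 for the block star formula). Your identification of the counit as the only delicate point is accurate, and the argument you sketch---that in any Kleene bicategory the star of an endomorphism on a biproduct agrees with the block formula in its components, via Figure~\ref{fig:star-trace} and Proposition~\ref{prop:matrixform}---is exactly the right way to close it.
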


\newpage

\section{Appendix to Section \ref{sec:cb}} %

The following result provides the left-version of the law in \eqref{eq:missing}.
\begin{lem}\label{lemma:leftwisktrace}
Let $\Cat{C}$ be a kc-rig category. For all arrows $f\colon S\piu X \to S \piu  Y$ and objects $Z$
\[\id{Z} \per \trace_{S}(f) = \trace_{Z \per S}(\Idl{Z}{S}{X} ; (\id{Z} \per f) ; \dl{Z}{S}{Y} )\]
\end{lem}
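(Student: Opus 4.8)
The plan is to reduce the statement to the already–established right–whiskered law \eqref{eq:missing} by conjugating with the symmetry $\symmt{}{}$ of the $\per$–structure. Writing $g \defeq \trace_{S}(f)\colon X \to Y$, naturality of $\symmt{}{}$ gives $(g \per \id{Z}) ; \symmt{Y}{Z} = \symmt{X}{Z} ; (\id{Z} \per g)$, and since $\symmt{X}{Z}^{-1} = \symmt{Z}{X}$ this yields
\[
\id{Z} \per \trace_{S}(f) \;=\; \symmt{Z}{X} ; \big( \trace_{S}(f) \per \id{Z} \big) ; \symmt{Y}{Z}.
\]
First I would apply \eqref{eq:missing} to rewrite the middle factor as $\trace_{S \per Z}(f \per \id{Z})$; here right–strictness makes $\delta^r$ the identity, so $f \per \id{Z}$ already has type $(S \per Z) \piu (X \per Z) \to (S \per Z) \piu (Y \per Z)$ and is traced over $S \per Z$.

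Next I would use the tightening axiom for the $\piu$–trace to pull the two $\per$–symmetries inside, obtaining $\id{Z} \per \trace_{S}(f) = \trace_{S \per Z}(h_1)$ with
\[
h_1 \defeq (\id{S \per Z} \piu \symmt{Z}{X}) ; (f \per \id{Z}) ; (\id{S \per Z} \piu \symmt{Y}{Z}).
\]
It then remains to match this with the claimed $\trace_{Z \per S}(h_2)$, where $h_2 \defeq \Idl{Z}{S}{X} ; (\id{Z} \per f) ; \dl{Z}{S}{Y}$. The decisive identity to establish is
\[
h_2 \;=\; (\symmt{Z}{S} \piu \id{Z \per X}) ; h_1 ; (\symmt{S}{Z} \piu \id{Z \per Y}).
\]
Once this is known, the sliding axiom applied with the iso $\symmt{S}{Z}\colon S \per Z \to Z \per S$ on the traced wire collapses the conjugating symmetries (using $\symmt{S}{Z} ; \symmt{Z}{S} = \id{}$) and gives $\trace_{Z \per S}(h_2) = \trace_{S \per Z}(h_1)$, closing the chain.

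I expect the main obstacle to be the identity above, which is pure rig coherence. Composing the $\piu$–factors componentwise reduces its right–hand side to $(\symmt{Z}{S} \piu \symmt{Z}{X}) ; (f \per \id{Z}) ; (\symmt{S}{Z} \piu \symmt{Y}{Z})$. The rig axiom \eqref{eq:rigax1}, read with $\delta^r = \id{}$, states $\symmt{S \piu X}{Z} ; \dl{Z}{S}{X} = \symmt{S}{Z} \piu \symmt{X}{Z}$; taking inverses gives $\symmt{Z}{S} \piu \symmt{Z}{X} = \Idl{Z}{S}{X} ; \symmt{Z}{S \piu X}$, while the same axiom read directly gives $\symmt{S}{Z} \piu \symmt{Y}{Z} = \symmt{S \piu Y}{Z} ; \dl{Z}{S}{Y}$. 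Substituting both and then invoking naturality of $\symmt{}{}$, which yields $\symmt{Z}{S \piu X} ; (f \per \id{Z}) ; \symmt{S \piu Y}{Z} = \id{Z} \per f$, transforms the expression into exactly $\Idl{Z}{S}{X} ; (\id{Z} \per f) ; \dl{Z}{S}{Y} = h_2$. The careful bookkeeping of distributors against symmetries is the only genuinely delicate point; everything else is routine application of the trace axioms.
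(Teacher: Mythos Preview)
Your proof is correct and follows essentially the same approach as the paper: conjugate by $\symmt{}{}$ to reduce to \eqref{eq:missing}, tighten, then use sliding along $\symmt{S}{Z}$ and the rig coherence axiom \eqref{eq:rigax1} (with $\delta^r = \id{}$) to identify the resulting expression with $\Idl{Z}{S}{X} ; (\id{Z} \per f) ; \dl{Z}{S}{Y}$. The only cosmetic difference is that the paper inserts $\symmt{S}{Z};\symmt{Z}{S} = \id{}$ before sliding, whereas you phrase the same step as establishing $h_2 = (\symmt{Z}{S}\piu\id{});h_1;(\symmt{S}{Z}\piu\id{})$ and then sliding; these are the same manipulation.
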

\begin{proof}
By the following derivation.
\begin{align*}
\id{Z} \per \trace_{S}(f) &= \symmt{Z}{X};  (\trace_{S}(f) \otimes \id{Z}) ; \symmt{Y}{Z} \tag{Symmetric Monoidal Categories}\\
&=  \symmt{Z}{X};  \trace_{S \otimes Z}(f \otimes \id{Z}) ; \symmt{Y}{Z}  \tag{\ref{eq:missing}}\\
&=   \trace_{S \otimes Z}\big( (\id{S\otimes Z} \per \symmt{Z}{X}) ;  (f \otimes \id{Z}) ; (\id{S\otimes Z} \piu \symmt{Y}{Z}) \big) \tag{Tightening}\\
&=   \trace_{S \otimes Z}\big( (\id{S\otimes Z} \per \symmt{Z}{X}) ;  (f \otimes \id{Z}) ; ((\symmt{S}{Z}; \symmt{Z}{S}) \piu \symmt{Y}{Z}) \big) \tag{Symmetry}\\
&=   \trace_{Z \otimes S}\big( (\symmt{Z}{S} \per \symmt{Z}{X}) ;  (f \otimes \id{Z}) ; ((\symmt{S}{Z}; ) \piu \symmt{Y}{Z}) \big) \tag{Sliding}\\
&= \trace_{Z \per S}(\Idl{Z}{S}{X} ; (\id{Z} \per f) ; \dl{Z}{S}{Y} ) \tag{\Cref{fig:rigax}}
\end{align*}
\end{proof}

The following  results illustrates the interaction of the product $\per$  with the $\oplus$-trace. Despite we never used it, we hope that it can be of some interest,
\begin{prop}\label{prop:trace-per}
	For all $f \colon S \piu X \to S \piu Y$ and $f' \colon S' \piu X' \to S' \piu Y'$ in a Kleene rig category, it holds that 
	\[ 
		\trace_{S \per S'} \begin{psmallmatrix}
			f_{SS} \per f'_{S' S'} & f_{SY} \per f'_{S'Y'} \\
			f_{XS} \per f'_{X' S'} & f_{XY} \per f'_{X'Y'}
		\end{psmallmatrix}
		\; \leq \;
		\trace_{S} f \; \per \; \trace_{S'} f'.
	 \]
	 where $\begin{psmallmatrix} f_{SS} & f_{SY} \\ f_{XS} & f_{XY} \end{psmallmatrix}$ and $\begin{psmallmatrix} f'_{S'S'} & f'_{S'Y'} \\ f'_{X'S'} & f'_{X'Y'} \end{psmallmatrix}$ are, respectively, the matrix normal forms of $f$ and $f'$.
\end{prop}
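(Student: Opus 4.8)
The plan is to reduce everything to the closed form of the $\piu$-trace in a Kleene bicategory. Recall from \Cref{fig:star-trace} (cf.\ \eqref{eq:traceREL}) that for any $h \colon S \piu X \to S \piu Y$ with matrix normal form having entries $h_{SS}, h_{SY}, h_{XS}, h_{XY}$, the trace is
\[
\trace_S(h) = (h_{XS} \dcomp \kstar{h_{SS}} \dcomp h_{SY}) \sqcup h_{XY}.
\]
First I would apply this formula to the combined endomorphism $g \colon (S\per S') \piu (X \per X') \to (S \per S') \piu (Y \per Y')$, whose matrix is the one given in the statement. Its $(S\per S',S\per S')$-entry is $f_{SS}\per f'_{S'S'}$, and similarly for the remaining entries, so the formula yields
\[
\trace_{S\per S'}(g) = \big((f_{XS}\per f'_{X'S'}) \dcomp \kstar{(f_{SS}\per f'_{S'S'})} \dcomp (f_{SY}\per f'_{S'Y'})\big) \sqcup (f_{XY}\per f'_{X'Y'}).
\]

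The key step is to bound the starred middle factor. By \Cref{prop:star-per} we have $\kstar{(f_{SS}\per f'_{S'S'})} \leq \kstar{f_{SS}}\per \kstar{f'_{S'S'}}$; substituting this (using monotonicity of $\dcomp$ and $\sqcup$) and then rearranging by functoriality of $\per$ (the interchange law), the first summand becomes a genuine $\per$ of two traces:
\[
(f_{XS}\per f'_{X'S'}) \dcomp (\kstar{f_{SS}}\per \kstar{f'_{S'S'}}) \dcomp (f_{SY}\per f'_{S'Y'}) = (f_{XS}\dcomp\kstar{f_{SS}}\dcomp f_{SY}) \per (f'_{X'S'}\dcomp\kstar{f'_{S'S'}}\dcomp f'_{S'Y'}).
\]
Hence $\trace_{S\per S'}(g) \leq \alpha\per\alpha' \sqcup \beta\per\beta'$, writing $\alpha \defeq f_{XS}\dcomp\kstar{f_{SS}}\dcomp f_{SY}$, $\beta \defeq f_{XY}$ and analogously for the primed symbols, so that $\trace_S f = \alpha\sqcup\beta$ and $\trace_{S'}f' = \alpha'\sqcup\beta'$.

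To finish, I would expand the right-hand side $(\alpha\sqcup\beta)\per(\alpha'\sqcup\beta')$ using the monoidal enrichment over join-semilattices (\Cref{lemma:fb-cbsummary}.(2)), obtaining the four-fold join $\alpha\per\alpha' \sqcup \alpha\per\beta' \sqcup \beta\per\alpha' \sqcup \beta\per\beta'$. The bound from the previous paragraph exhibits $\trace_{S\per S'}(g)$ as below the join of just two of these four summands, namely $\alpha\per\alpha'$ and $\beta\per\beta'$; since $\sqcup$ is a join in the hom-poset (\eqref{eq:cmon laws}), adjoining the remaining two summands only increases the value, and the inequality follows.

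The argument is essentially bookkeeping once the trace formula is in place, so I expect no serious obstacle. The only conceptual point worth flagging is the provenance of the inequality: it comes precisely from the lax --- rather than strict --- distributivity of $\kstar{(\cdot)}$ over $\per$ (\Cref{prop:star-per}), together with the two cross terms $\alpha\per\beta'$ and $\beta\per\alpha'$ that the right-hand side carries but the left-hand side cannot. These correspond to runs in which one component iterates through its feedback space while the other does not, a behaviour the synchronised trace on $S\per S'$ does not produce; this is exactly why equality fails in general.
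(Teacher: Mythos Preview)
Your proof is correct and follows essentially the same approach as the paper's: both use the closed-form trace formula from \Cref{fig:star-trace}, apply the lax distributivity $\kstar{(f_{SS}\per f'_{S'S'})} \leq \kstar{f_{SS}}\per\kstar{f'_{S'S'}}$ from \Cref{prop:star-per}, invoke functoriality of $\per$, and then compare with the four-fold expansion of $(\alpha\sqcup\beta)\per(\alpha'\sqcup\beta')$ via \Cref{lemma:fb-cbsummary}.(2). The only difference is direction---the paper starts from $\trace_S f \per \trace_{S'}f'$ and works down, while you start from $\trace_{S\per S'}(g)$ and work up---but the chain of steps is identical.
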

\begin{proof}
	\begin{align*}
		\trace_{S} f \; \per \; \trace_{S'} f'
		&= (f_{XS} ; \kstar{f_{SS}} ; f_{SY} \sqcup  f_{XY}) \per (f'_{X'S'} ; \kstar{(f'_{S'S'})} ; f'_{S'Y'} \sqcup  f'_{X'Y'}) \tag{\Cref{fig:star-trace}} \\
		&= \begin{array}[t]{cl}
			\multicolumn{2}{c}{ ((f_{XS} ; \kstar{f_{SS}} ; f_{SY}) \per (f'_{X'S'} ; \kstar{(f'_{S'S'})} ; f'_{S'Y'})) } \\ 
			\sqcup  & ((f_{XS} ; \kstar{f_{SS}} ; f_{SY} \sqcup  f_{XY}) \per f'_{X'Y'}) \\
			\sqcup  & (f_{XY} \per (f'_{X'S'} ; \kstar{(f'_{S'S'})} ; f'_{S'Y'})) \\
			\sqcup  & (f_{XY} \per f'_{X'Y'})
		\end{array} \tag{\Cref{lemma:fb-cbsummary}.(2)}  \\
		&\geq { ((f_{XS} ; \kstar{f_{SS}} ; f_{SY}) \per (f'_{X'S'} ; \kstar{(f'_{S'S'})} ; f'_{S'Y'})) \; \sqcup  \; (f_{XY} \per f'_{X'Y'}) }  \tag{\Cref{lemma:order-adjointness}} \\
		&= (f_{XS} \per f'_{X'S'}) ; (\kstar{f_{SS}} \per \kstar{(f'_{S'S'})}) ; (f_{SY} \per f'_{X'Y'}) \sqcup  (f_{XY} \per f'_{X'Y'}) \tag{Functoriality of $\per$} \\
		&\geq (f_{XS} \per f'_{X'S'}) ; \kstar{(f_{SS} \per f'_{S'S'})} ; (f_{SY} \per f'_{X'Y'}) \sqcup  (f_{XY} \per f'_{X'Y'}) \tag{\Cref{prop:star-per}} \\
		&= \trace_{S \per S'} \begin{psmallmatrix}
			f_{SS} \per f'_{S' S'} & f_{SY} \per f'_{S'Y'} \\
			f_{XS} \per f'_{X' S'} & f_{XY} \per f'_{X'Y'}
		\end{psmallmatrix} \tag{\Cref{fig:star-trace}}
	\end{align*}
\end{proof}

\newpage
\section{Free Kleene-Cartesian bicategories}\label{sec:free-kleene-cartesian}
This section constructs the free Kleene bicategory on a signature and the free 
Kleene--Cartesian bicategory on a monoidal signature, thereby completing the 
proof of \Cref{thm:KleeneCartesiantapesfree}.  
Our approach is modular: we assemble a number of intermediate results concerning 
structures that have not yet been defined explicitly.  
The following definitions---where we tacitly assume that all morphisms are 
structure-preserving functors---serve to summarise these intermediate notions.

\begin{defi}
A \emph{uniformly traced fb category with idempotent convolution} is a monoidal category $(\Cat{C}, \piu, \zero)$ that is both a fb category with idempotent convolution (\Cref{def:biproduct category}) and a uniformly traced monoidal category (\Cref{def:utr-category}). We write $\Cat{UTFIBCat}$ for the category of uniformly traced fb categories with idempotent convolution and their morphisms. %
\end{defi}

\begin{defi} Let $(\Cat{C}, \piu, \per, \zero, \uno)$ be a rig category.
\begin{enumerate}
\item $\Cat{C}$ is a \emph{finite biproduct rig category} (shortly fb rig) if $(\Cat{C},\piu, \zero)$ is a fb category (\Cref{def:fb}); We write $\Cat{FBRig}$ for the category of finite biproduct rig categories and their morphisms.
\item $\Cat{C}$ is a \emph{uniformly traced finite biproduct rig category with idempotent convolution} if $(\Cat{C},\piu, \zero)$ is a uniformly traced fb category with idempotent convolution and \eqref{eq:missing} holds; We write $\Cat{UTFIBRig}$ for the category of such categories and their morphisms.
\item $\Cat{C}$ is a \emph{Kleene rig category} if $(\Cat{C},\piu, \zero)$ is a Kleene bicategory (\Cref{def:kleenebicategory}) and \eqref{eq:missing} holds. We write $\Cat{KRig}$ for the category of such categories and their morphisms.
\item $\Cat{C}$ is a \emph{finite biproduct cartesian rig category} (shortly fb-cb rig) if $(\Cat{C},\piu, \zero)$ is a fb category (\Cref{def:fb}), $(\Cat{C},\per, \uno)$ is a Cartesian bicategory (\Cref{def:cartesian bicategory}) and \eqref{eq:fbcbcoherence} hold. We write $\Cat{FBCB}$ for the category of such categories and their morphisms.
\item $\Cat{C}$ is a \emph{uniformly traced fb-cb rig category with idempotent convolution} if it is both a uniformly traced finite biproduct rig category with idempotent convolution and a fb-cb rig category. We write $\Cat{UTFIBCB}$ for the category of such categories and their morphisms.
\end{enumerate}
\end{defi}

Before delving into the proof, we illustrate our overall strategy:
\begin{equation}%
  \label{diag:proof-strategy-free-kc}
  \begin{tikzcd}
    {\Cat{Sig}} \arrow[bend left=15]{r}{\fun{M}} \arrow[phantom]{r}{\text{\cite{mac_lane_categories_1978}}} & {\Cat{FBCat}} \arrow[bend left=15]{r}{\fun{KB}} \arrow[phantom]{r}{\text{\ref{th:free-kleene-on-signature}}} \arrow[bend left=15]{l}{U} & {\Cat{KBicat}} \arrow[bend left=15]{l}{U}\\
    {\Cat{MSig}}  \arrow[bend left=15]{r}{\fun{FBCT}} \arrow[phantom]{r}{\text{\ref{prop:free-fb-cb-rig}}} & {\Cat{FBCB}} \arrow{u}{\fun{J}} \arrow[bend left=15]{r}{\fun{KC}} \arrow[phantom]{r}{\text{\ref{prop:free-kleene-on-fbcb}}} \arrow[bend left=15]{l}{\fun{U_L}} & {\Cat{KCB}} \arrow{u}[swap]{\fun{H}} \arrow[bend left=15]{l}{U}
  \end{tikzcd}
\end{equation}
In the top row, the leftmost adjunction \(\adjunction{\fun{M}}{\Cat{Sig}}{\Cat{FBCat}}{U}\) is known as the matrix construction~\cite{mac_lane_categories_1978}.
\Cref{ssec:free-kleene} proves the rightmost adjunction.
For this construction, we will need to quotient traced monoidal categories and apply a result about such quotients (\Cref{lem:traced-monoidal-congruence}).

\Cref{ssec:free-kleene-cartesian} proves the adjunctions on the bottom of~\eqref{diag:proof-strategy-free-kc} by restricting two adjunctions: the one of free finite-biproduct rig categories~\cite{bonchi2023deconstructing} and the one of \Cref{th:free-kleene-on-signature}.
For restricting adjunctions along faithful functors, we exploit another useful result (\Cref{lem:restricting-adjunctions}).

\subsection{Two convenient lemmas}

When quotienting the hom-sets of a category by an equivalence relation, we need to ensure that the equivalence is also a congruence with respect to the categorical structure~\cite[II.8]{mac_lane_categories_1978}.
Similarly, if we want to quotient the hom-sets of traced monoidal categories, we need to ensure that the equivalence is additionally a traced monoidal congruence.

\begin{defi}%
  \label{def:traced-monoidal-congruence}
  Let \((\Cat{C}, \perG, I, \trace)\) be a traced monoidal category.
  A \emph{traced monoidal congruence} \(R\) on \(\Cat{C}\) is a family of equivalence relations \(R(X,Y) \subseteq \Cat{C}[X,Y] \times \Cat{C}[X,Y]\) that respects compositions, monoidal products and traces: 
 \begin{itemize}
\item  if \((f,f') \in R(X,Y)\) and \((g,g') \in R(Y,Z)\), then \((f \dcomp g, f' \dcomp g') \in R(X,Z)\); 
\item if \((f,g) \in R(X,Y)\) and \((f',g') \in R(X',Y')\), then \((f \perG f', g \perG g') \in R(X \perG X', Y \perG Y')\); 
\item if \((f,g) \in R(S \perG X, S \perG Y)\), then \((\trace(f), \trace(g)) \in R(X,Y)\).
 \end{itemize}
\end{defi}

\begin{lem}\label{lem:traced-monoidal-congruence}
  Let \((\Cat{C}, \perG, I, \trace)\) be a traced monoidal category and \(R\) be a traced monoidal congruence on \(\Cat{C}\).
  Then, there is a traced monoidal category \(\Cat{C}\slash R\) with the same objects as \(\Cat{C}\) and where the hom-sets are quotients of the hom-sets of \(\Cat{C}\) by the equivalence relations \(R\).
  Moreover, there is a traced monoidal functor \(\eta_{R} \colon \Cat{C} \to \Cat{C}\slash R\) such that, for all traced monoidal functors \(H \colon \Cat{C} \to \Cat{D}\) with \(H(f) = H(g)\) whenever \((f,g) \in R(X,Y)\), there is a unique traced monoidal functor \(H^{\#} \colon \Cat{C}\slash R \to D\) with \(\eta_{R} \dcomp H^{\#} = H\).

\end{lem}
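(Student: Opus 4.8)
The plan is to construct $\Cat{C}\slash R$ explicitly and then verify, in the order dictated by the three clauses of \Cref{def:traced-monoidal-congruence}, that each piece of structure descends to the quotient. First I would define the objects of $\Cat{C}\slash R$ to be exactly the objects of $\Cat{C}$, and set $(\Cat{C}\slash R)[X,Y] \defeq \Cat{C}[X,Y]\slash R(X,Y)$, writing $[f]$ for the $R$-equivalence class of $f$. The underlying category structure is the standard quotient of a category by a congruence \cite[II.8]{mac_lane_categories_1978}: composition is $[f]\dcomp[g]\defeq[f\dcomp g]$ and identities are $[\id{X}]$. The first clause of \Cref{def:traced-monoidal-congruence} guarantees that $\dcomp$ is well defined on classes, and associativity and unit laws are inherited pointwise from $\Cat{C}$, so this is a genuine category with an identity-on-objects functor $\eta_R$ sending $f\mapsto[f]$.

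Next I would equip $\Cat{C}\slash R$ with the monoidal and traced structure. On objects, $\perG$ and $I$ are inherited verbatim from $\Cat{C}$; on morphisms, define $[f]\perG[f']\defeq[f\perG f']$ and $\trace([h])\defeq[\trace(h)]$. The second clause of \Cref{def:traced-monoidal-congruence} makes $\perG$ well defined, and the third makes $\trace$ well defined. The structural isomorphisms (associator, unitors, symmetry) of $\Cat{C}\slash R$ are the $\eta_R$-images of those of $\Cat{C}$; since $\eta_R$ is to be a strict symmetric monoidal functor, all the coherence diagrams for these isomorphisms, as well as the trace axioms of \Cref{def:traced-category} and the uniformity-style equalities, hold in $\Cat{C}\slash R$ precisely because they hold in $\Cat{C}$ and $\eta_R$ preserves composition, $\perG$, and $\trace$ on the nose. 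Concretely, each axiom is an equation between composites of basic arrows; applying $\eta_R$ to the corresponding instance in $\Cat{C}$ and using functoriality yields the same equation in the quotient. Thus $\Cat{C}\slash R$ is a traced symmetric monoidal category and $\eta_R$ is a traced monoidal functor by construction.

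Finally I would establish the universal property. Given a traced monoidal functor $H\colon\Cat{C}\to\Cat{D}$ that identifies $R$-related arrows, define $H^{\#}$ to agree with $H$ on objects and send $[f]\mapsto H(f)$. Well-definedness on morphisms is exactly the hypothesis $H(f)=H(g)$ whenever $(f,g)\in R(X,Y)$. That $H^{\#}$ is a traced monoidal functor follows from the corresponding properties of $H$ together with the definitions of composition, $\perG$, and $\trace$ on classes; for instance $H^{\#}([f]\dcomp[g]) = H^{\#}([f\dcomp g]) = H(f\dcomp g) = H(f)\dcomp H(g) = H^{\#}([f])\dcomp H^{\#}([g])$, and the analogous computations hold for $\perG$ and $\trace$. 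The equation $\eta_R\dcomp H^{\#}=H$ is immediate on objects and on each $f$, and uniqueness follows because $\eta_R$ is surjective on both objects and morphisms, so any functor $G$ with $\eta_R\dcomp G=H$ is forced to satisfy $G([f])=G(\eta_R(f))=H(f)=H^{\#}([f])$.

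The main obstacle, and the part deserving the most care, is the third clause: checking that $\trace$ is well defined and that the functors genuinely preserve it. Composition and monoidal product are routine quotient-category bookkeeping, but the trace is a non-trivial operation whose compatibility with $R$ must be invoked exactly where it is asserted, and one must confirm that the \emph{typed} nature of the trace (the bound object $S$ being untouched by the quotient, since $\Cat{C}\slash R$ shares objects with $\Cat{C}$) causes no mismatch between the equivalence relation indexing and the trace's source/target types. Verifying that the trace axioms of \Cref{def:traced-category} survive the quotient is then a matter of transporting each axiom along $\eta_R$, but it is here that a careless treatment of representatives could break well-definedness, so I would handle this clause first and in full detail before treating the lighter monoidal clauses.
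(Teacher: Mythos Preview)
Your proposal is correct and follows essentially the same approach as the paper: define the quotient category, verify that $\perG$ and $\trace$ descend to equivalence classes using the three clauses of \Cref{def:traced-monoidal-congruence}, transport the coherence and trace axioms along $\eta_R$, and establish the universal property of $H^{\#}$ exactly as you describe. The only minor slip is the mention of ``uniformity-style equalities'': this lemma concerns plain traced monoidal categories, so no uniformity axiom is in play here (uniformity is handled separately via the congruence in \Cref{def:uniformity-congruence}).
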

\begin{proof}
  Define the category \(\Cat{C} \slash R\) with the same objects as \(\Cat{C}\) but \(R\)-equivalence classes of morphisms \(X \to Y\) in \(\Cat{C}\) as morphisms \(X \to Y\).
  We need to show that this defines a traced monoidal category.
  Since \(R\) respects compositions in \(\Cat{C}\), the composition of equivalence classes is well-defined, associative and unital. This is exactly the usual construction~\cite[Section~II.8]{mac_lane_categories_1978}.

  The monoidal products of \(\Cat{C} \slash R\) are defined from those of \(\Cat{C}\): \(X \perG Y = X \perG Y\) and \([f] \perG [g] = [f \perG g]\), where \([f]\) denotes the equivalence class of \(f\) under \(R\).
  On morphisms, they are well-defined because \(R\) respects monoidal products: if \([f] = [g]\) and \([f'] = [g']\), then \([f \perG f'] = [g \perG g']\).
  Functoriality, and the pentagon, triangle and hexagon equations also follow from this assumption, e.g.\
  \begin{equation*}
    ([f] \dcomp [g]) \perG ([f'] \dcomp [g']) = [(f \dcomp g) \perG (f' \dcomp g')] = [(f \perG f') \dcomp (g \perG g')] = ([f] \perG [f']) \dcomp ([g] \perG [g']).
  \end{equation*}

  The trace on \(\Cat{C} \slash R\) is defined from the trace on \(\Cat{C}\): \(\trace([f]) = [\trace(f)]\).
  This is well-defined on equivalence classes by assumption: if \([f] = [g]\), then \([\trace(f)] = [\trace(g)]\).
  The trace axioms in \(\Cat{C} \slash R\) also follow from this assumption, e.g.\
  \begin{equation*}
    \trace([f] \dcomp ([u] \perG \id{})) = [\trace(f \dcomp (u \perG \id{}))] = [\trace((u \perG \id{}) \dcomp f)] = \trace(([u] \perG \id{}) \dcomp [f]).
  \end{equation*}
  Define \(\eta_{R}(X) = X\) and \(\eta_{R}(f) = [f]\), for all objects \(X\) and \(Y\) of \(\Cat{C}\), and all morphisms \(f \colon X \to Y\) in \(\Cat{C}\).
  Since \(R\) is a traced monoidal congruence, we obtain that \(\eta_{R} \colon \Cat{C} \to \Cat{C} \slash R\) is a traced monoidal functor.

  Consider a traced monoidal functor \(H \colon \Cat{C} \to \Cat{D}\) such that \(H(f) = H(g)\) whenever \((f,g) \in R(X,Y)\).
  Then, the assignment \(H^{\#}(X) = H(X)\) and \(H^{\#}([f]) = H(f)\) is well-defined.
  It is also a traced monoidal functor \(H^{\#} \colon \Cat{C} \slash R \to \Cat{D}\) because \(H\) is.
  By its definition, \(H^{\#}(\eta_{R}(X)) = H^{\#}(X) = H(X)\) and \(H^{\#}(\eta_{R}(f)) = H^{\#}([f]) = H(f)\) and any other traced monoidal functor satisfying this equation must coincide with \(H^{\#}\).

\end{proof}

In the following sections, we will multiple times restrict adjunctions along faithful functors, applying the lemma below.

\begin{lem}\label{lem:restricting-adjunctions}
  Consider an adjunction \(\adjunction{F}{\Cat{C}}{\Cat{D}}{U}\) with unit \(\eta\) and counit \(\epsilon\).
  Suppose that 
  \begin{itemize}
\item  there are faithful functors \(J \colon \Cat{C}' \to \Cat{C}\) and \(K \colon \Cat{D}' \to \Cat{D}\) such that the components of the natural transformation \(\eta_{J}\) lie in the image of \(J\) and those of \(\epsilon_{K}\) lie in the image of \(K\).
  \item There are functions on objects \(F'_{o} \colon \ob{\Cat{C}'} \to \ob{\Cat{D}'}\) and \(U'_{o} \colon \ob{\Cat{D}'} \to \ob{\Cat{C}'}\), and functions on hom-sets \(F'_{C,C'} \colon \Cat{C}'[C,C'] \to \Cat{D}'[F'_{o}C,F'_{o}C']\) and \(U'_{D,D'} \colon \Cat{D}'[D,D'] \to \Cat{C}'[U'_{o}D,U'_{o}D']\) such that \[FJC = KF'_{o}C\text{, } UKD = JU'_{o}D\text{, }FJf = KF'_{C,C'}f \text{ and  } UKg = JU'_{D,D'}g\]
  for all objects $C,C'$ in $\Cat{C'}$, $D,D'$ in $\Cat{D'}$ and arrows $f\colon C \to C'$ and $g\colon D \to D'$.
  \end{itemize}
  Then, these definitions give functors \(F' \colon \Cat{C}' \to \Cat{D}'\) and \(U' \colon \Cat{D}' \to \Cat{C}'\) with \(F'\) left adjoint to \(U'\).
  \begin{center}
  \begin{tikzcd}
  {\Cat{C}} \arrow[bend left=20]{r}{F} \arrow[phantom]{r}{\bot} & {\Cat{D}} \arrow[bend left=20]{l}{U}\\
  {\Cat{C}'} \arrow{u}{J} \arrow[bend left=20, dashed]{r}{F'} & {\Cat{D}'} \arrow{u}[swap]{K} \arrow[bend left=20, dashed]{l}{U'}
\end{tikzcd}
\end{center}
\end{lem}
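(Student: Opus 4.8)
The plan is to transport the entire adjunction structure from $\Cat{C}$ and $\Cat{D}$ down to $\Cat{C}'$ and $\Cat{D}'$ through the faithful functors $J$ and $K$, reflecting every equation back along faithfulness. First I would verify that the given assignments make $F'$ and $U'$ into functors. For $F'$, the hypothesis $KF'_{C,C'}f = FJf$ gives, for composable $f,g$ in $\Cat{C}'$, that $K\bigl(F'(f\dcomp g)\bigr) = FJ(f\dcomp g) = FJf \dcomp FJg = K(F'f)\dcomp K(F'g) = K\bigl(F'f \dcomp F'g\bigr)$, whence $F'(f\dcomp g) = F'f \dcomp F'g$ by faithfulness of $K$; preservation of identities is handled the same way. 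Functoriality of $U'$ follows symmetrically from faithfulness of $J$ and $UKg = JU'_{D,D'}g$.

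Next I would construct the unit and counit. The key observation is that $UFJC = U(KF'_oC) = (UK)(F'_oC) = J(U'_oF'_oC) = J(U'F'C)$, so the component $\eta_{JC}\colon JC \to UFJC$ is a morphism $JC \to J(U'F'C)$; by the hypothesis that the components of $\eta J$ lie in the image of $J$, together with faithfulness of $J$, there is a unique $\eta'_C\colon C \to U'F'C$ in $\Cat{C}'$ with $J\eta'_C = \eta_{JC}$. Dually $FUKD = F(JU'_oD) = K(F'_oU'_oD) = K(F'U'D)$, so $\epsilon_{KD}\colon KF'U'D \to KD$ reflects to a unique $\epsilon'_D\colon F'U'D \to D$ with $K\epsilon'_D = \epsilon_{KD}$. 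Naturality of $\eta'$ is then obtained by applying the faithful $J$: for $f\colon C \to C'$, using $JU'(F'f) = U(KF'f) = UFJf$ and $J\eta'_C = \eta_{JC}$, both $J(\eta'_C \dcomp U'F'f)$ and $J(f \dcomp \eta'_{C'})$ equal $\eta_{JC}\dcomp UFJf = Jf \dcomp \eta_{JC'}$ by naturality of $\eta$, so the two sides agree by faithfulness of $J$. Naturality of $\epsilon'$ is symmetric, using $K$ and naturality of $\epsilon$.

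Finally I would verify the triangle identities for $(\eta',\epsilon')$. Applying $K$ to the first identity, and using $KF'\eta'_C = FJ\eta'_C = F\eta_{JC}$ together with $K\epsilon'_{F'C} = \epsilon_{KF'C} = \epsilon_{FJC}$, gives $K\bigl(F'\eta'_C \dcomp \epsilon'_{F'C}\bigr) = F\eta_{JC}\dcomp\epsilon_{FJC} = \id{FJC} = K\,\id{F'C}$ by the original triangle identity for $F\dashv U$, so faithfulness of $K$ yields $F'\eta'_C \dcomp \epsilon'_{F'C} = \id{F'C}$. Applying $J$ to the second identity, and using $J\eta'_{U'D} = \eta_{UKD}$ and $JU'\epsilon'_D = U\epsilon_{KD}$, gives $\eta_{UKD}\dcomp U\epsilon_{KD} = \id{UKD} = J\,\id{U'D}$, and faithfulness of $J$ yields $\eta'_{U'D}\dcomp U'\epsilon'_D = \id{U'D}$. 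This establishes $F'\dashv U'$. The argument is essentially bookkeeping, and the only point requiring genuine care is the construction of $\eta'$ and $\epsilon'$: the two image hypotheses are exactly what guarantees their existence with the correct domains and codomains $C \to U'F'C$ and $F'U'D \to D$, while faithfulness of $J$ and $K$ supplies both their uniqueness and the reflection of each subsequent equation from $\Cat{C},\Cat{D}$ back to $\Cat{C}',\Cat{D}'$.
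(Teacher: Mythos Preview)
Your proposal is correct and follows essentially the same approach as the paper: establish functoriality of $F'$ and $U'$ by reflecting along the faithful $K$ and $J$, lift $\eta$ and $\epsilon$ to $\eta'$ and $\epsilon'$ via the image hypotheses (existence) and faithfulness (uniqueness), and then verify naturality and the triangle identities by applying $J$ or $K$ and invoking the corresponding facts for $\eta,\epsilon$. The paper's proof is organised identically, with the same bookkeeping identities $JU'F' = UFJ$, $KF'U' = FUK$, $J\eta' = \eta_J$, $K\epsilon' = \epsilon_K$ driving each step.
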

\begin{proof}
  We first prove that the functions on objects and hom-sets assemble into functors.
  We write it for \(F'\); the symmetric reasoning applies to \(U'\).
  \begin{align*}
    KF'_{C,C}(\id{C}) & = FJ(\id{C}) = \id{FJC} = \id{KF'_{o}C} = K(\id{F'_{o}C})\\
    KF'_{A,C}(f \dcomp g) & = FJ(f \dcomp g) = FJf \dcomp FJg = KF'_{A,B}f \dcomp KF'_{B,C}g = K(F'_{A,B}f \dcomp F'_{B,C}g)
  \end{align*}
  By faithfulness of \(K\), we obtain that \(F'_{C,C}(\id{C}) = \id{F'_{o}C}\) and \(F'_{A,C}(f \dcomp g) = F'_{A,B}f \dcomp F'_{B,C}g\), which means that \(F'\) is functorial.

  We now define the candidate unit \(\eta'_{C} \colon C \to U'(F'(C))\) and counit \(\epsilon'_{D} \colon F'(U'(D)) \to D\) for objects \(C\) of \(\Cat{C}'\) and \(D\) of \(\Cat{D}'\).
  Consider \(\eta_{JC} \colon J(C) \to UFJ(C) = UKF'(C) = JU'F'(C)\) and \(\epsilon_{KD} \colon KF'U'(D) = FJU'(D) = FUK(D) \to K(D)\).
  Since \(J\) and \(K\) are faithful, \(\eta_{J}\) lies in the image of \(J\) and \(\epsilon_{K}\) lies in the image of \(K\), there are unique morphisms \(\eta'_{C} \colon C \to U'(F'(C))\) and \(\epsilon'_{D} \colon F'(U'(D)) \to D\) such that \(J(\eta'_{C}) = \eta_{JC}\) and \(K(\epsilon'_{D}) = \epsilon_{KD}\).

  We check that \(\eta'\) and \(\epsilon'\) are natural.
  Since \(J\) and \(K\) are faithful, the naturality squares for \(\eta'\) and \(\epsilon'\) hold if and only if they hold after applying \(J\) and \(K\) as below.
  \begin{center}
  \begin{tikzcd}
    {J(C)} \arrow{r}{Jf} \arrow{d}[swap]{J\eta'_C} & {J(C')} \arrow{d}{J\eta'_{C'}} & & {KF'U'(D)} \arrow{r}{KF'U'g} \arrow{d}[swap]{K\epsilon'_D} & {KF'U'(D')} \arrow{d}{K\epsilon'_{D'}}\\
    {JU'F'(C)} \arrow{r}[swap]{JU'F'f} & {JU'F'(C')} & & {K(D)} \arrow{r}[swap]{Kg} & {K(D')}
  \end{tikzcd}
  \end{center}
  But we know that \(J\eta' = \eta_{J}\), \(JU'F' = UFJ\), \(K\epsilon' = \epsilon_{K}\) and \(KF'U' = FUK\), so the squares above coincide with the naturality squares of \(\eta_{J}\) and \(\epsilon_{K}\).
  This shows that \(\eta'\) and \(\epsilon'\) are natural.

  A similar reasoning shows that \(\eta'\) and \(\epsilon'\) satisfy the snake equations.
  Since \(J\) and \(K\) are faithful, the snake equations for \(\eta'\) and \(\epsilon'\) hold if and only if they hold after applying \(J\) and \(K\) as below.
  \begin{center}
    \begin{tikzcd}
      {JU'(D)} \arrow{r}{J\eta'_{U'D}} \arrow{dr}[swap]{\id{}} & {JU'F'U'(D)} \arrow{d}{JU'\epsilon'_D} & {KF'(C)} \arrow{r}{KF'\eta'_C} \arrow{dr}[swap]{\id{}} & {KF'U'F'(C)} \arrow{d}{K\epsilon'_{F'C}} \\
      {} & {JU'(D)} & & {KF'(C)}
    \end{tikzcd}
  \end{center}
  But we know that \(JU'(D) = UK(D)\), \(JU'F'U'(D) = UKF'U'(D) = UFJU'(D) = UFUK(D)\), \(JU'\epsilon'_{D} = UK\epsilon'_{D} = U\epsilon_{KD}\) and \(J\eta'_{U'D} = \eta_{JU'D} = \eta_{UKD}\); this means that the diagram on the left above corresponds to one of the snake equations for \(\eta\) and \(\epsilon\) on the object \(KD\), \(\eta_{UKD} \dcomp U\epsilon_{KD} = \id{UKD}\), and therefore it commutes.
  We also know that \(KF'(C) = FJ(C)\), \(KF'U'F'(C) = FJU'F'(C) = FUKF'(C) = FUFJ(C)\), \(KF'\eta'_{C} = FJ\eta'_{C} = F\eta_{JC}\) and \(K\epsilon'_{F'C} = \epsilon_{KF'C} = \epsilon_{FJC}\); this means that the diagram on the right above corresponds to the other snake equation for \(\eta\) and \(\epsilon\) on the object \(JC\), \(F\eta_{JC} \dcomp \epsilon_{FJC} = \id{FJC}\), and therefore it commutes.
\end{proof}

\subsection{Free Kleene Bicategories}%
\label{ssec:free-kleene}
The free Kleene bicategory on a signature is constructed by composing adjunctions.
\begin{equation}%
  \label{eq:chain-adjunctions-free-kleene}
\begin{tikzcd}
  {\Cat{Sig}} \arrow[bend left]{r}{\fun{M}} \arrow[phantom]{r}{\text{\cite{mac_lane_categories_1978}}} & {\Cat{FBCat}} \arrow[bend left]{r}{\fun{Q_{id}}} \arrow[phantom]{r}{\text{\cite{mac_lane_categories_1978}}} \arrow[bend left]{l}{\fun{U_M}} & {\Cat{FIBCat}} \arrow[bend left]{r}{\fun{UTr_B}} \arrow[phantom]{r}{\text{\ref{prop:free-fib-uniform-trace}}} \arrow[bend left]{l}{\fun{\iota}_{id}} & {\Cat{UTFIBCat}} \arrow[bend left]{r}{\fun{Q_{K}}} \arrow[phantom]{r}{\text{\ref{prop:free-kleene}}} \arrow[bend left]{l}{\fun{U_{TB}}} & {\Cat{KBicat}} \arrow[bend left]{l}{\fun{\iota_{K}}}
\end{tikzcd}
\end{equation}
The first adjunction is well known: it first constructs the free category on a signature~\cite[Section~II.7]{mac_lane_categories_1978} and then the free finite-biproduct category on a category via the matrix construction~\cite[Section~VIII.2, Exercises~5-6]{mac_lane_categories_1978}.
In the second adjunction $\fun{Q_{id}} \colon \Cat{FBCat} \to \Cat{FIBCat}$ simply quotients by idempotency of convolution~\cite{mac_lane_categories_1978}, as defined below. %

\begin{defi}%
  \label{def:idempotency-congruence}
  For a finite-biproduct category \((\Cat{C}, \piu, \zero)\), generate a monoidal congruence \((\idcong_{\Cat{C}})\) inductively by the rules below.
  \begin{gather}
    \inferrule*[right=(\(\id{}\))]{X \in \Cat{C}}{\diag{} \dcomp \codiag{} \idcong[X,X] \id{}}
    \qquad
    \inferrule*[right=(r)]{f \in \Cat{C}(X,Y)}{f \idcong[X,Y] f}
    \qquad
    \inferrule*[right=(t)]{f \idcong[X,Y] g \quad g \idcong[X,Y] h}{f \idcong[X,Y] h}
    \qquad
    \inferrule*[right=(s)]{f \idcong[X,Y] g}{g \idcong[X,Y] f}\nonumber
    \\[8pt]
    \inferrule*[right=($\dcomp$)]{f \idcong[X,Y] f' \quad g \idcong[Y,Z] g'}{f\dcomp g \idcong[X,Z] f'\dcomp g'}
    \qquad
    \inferrule*[right=($\piu$)]{f \idcong[X,Y] f' \quad g \idcong[X',Y'] g'}{f\piu g \idcong[X \piu X', Y \piu Y'] f' \piu g'}\label{eq:idempotency-cong}
  \end{gather}
\end{defi}

We prove the last two adjunctions in~\eqref{eq:chain-adjunctions-free-kleene}.
For the adjunction between fb categories with idempotent convolution and uniformly traced fb categories with idempotent convolution (in \Cref{prop:free-fib-uniform-trace}), we restrict the adjunction between symmetric monoidal categories and uniformly traced monoidal categories (\Cref{def:utr-category}) as illustrated below.

\begin{tikzcd}
  {\SMC} \arrow[bend left=20]{r}{\freeTr} \arrow[phantom]{r}{\text{\cite{katis2002feedback}}} & {\TSMC} \arrow[bend left=20]{r}{\fun{Q_U}} \arrow[phantom]{r}{\text{\ref{prop:free-uniform-trace}}} \arrow[bend left=20]{l}{U} & {\Cat{UTMonCat}} \arrow[bend left=20]{l}{\fun{\iota_{T}}}\\
  {\Cat{FIBCat}} \arrow{u}{\fun{J_1}} \arrow[bend left=10]{rr}{\fun{UTr_B}} \arrow[phantom]{rr}{\text{\ref{prop:free-fib-uniform-trace}}} && {\Cat{UTFIBCat}} \arrow{u}[swap]{\fun{J_2}} \arrow[bend left=10]{ll}{\fun{U_{TB}}}
\end{tikzcd}

We construct the adjunction between symmetric monoidal categories and uniformly traced monoidal categories by composing two adjunctions.
The first one gives the free traced monoidal category on a symmetric monoidal category~\cite[Sections~2.2-2.3]{katis2002feedback}; \Cref{prop:free-trace} recalls this construction.
In the second adjunction, illustrated in \Cref{prop:free-uniform-trace}, $\fun{Q_U}$ quotients by uniformity of the trace (\Cref{def:uniformity-congruence}).

\begin{prop}%
  [{\cite[Sections~2.2-2.3]{katis2002feedback}}]%
  \label{prop:free-trace}
  The free traced monoidal category \(\freeTr (\Cat{C})\) on a symmetric monoidal category \(\Cat{C}\) has  the same objects as \(\Cat{C}\), and morphisms \((f \mid S) \colon X \to Y\) are pairs of an object \(S\) and a morphism \(f \colon S \perG X \to S \perG Y\) of \(\Cat{C}\).
  Morphisms are quotiented by \emph{yanking} and \emph{sliding}.
  Compositions, monoidal products and trace in \(\freeTr (\Cat{C})\) are recalled below.
  \begin{align}
    \st{f}{S} \dcomp \st{g}{T} &\defeq \st{\CStSeq{f}{S}{X}{g}{T}{Z}}{S \perG T} \label{eq:seq in Utr} \\
    \st{f}{S} \perG \st{g}{T} &\defeq \st{\CStPar{f}{S}{X}{Y}{g}{T}{Z}{W}}{S \perG T} \label{eq:per in Utr} \\
    \trace_T\st{f}{S} &\defeq \st{f}{S \perG T} \label{eq:trace in Utr}
  \end{align}
  The assignment $\Cat{C} \mapsto \freeTr(\Cat{C})$ extends to a functor $\freeTr \colon \SMC \to \TSMC$ which is the left adjoint to the obvious forgetful
  $\fun{U}\colon \TSMC \to \SMC$.
  \begin{equation}\label{eq:nonuniform}
    \begin{tikzcd}
      \SMC
      \arrow[r, "\freeTr"{name=F}, bend left] &
      \TSMC
      \arrow[l, "\fun{U}"{name=U}, bend left]
      \arrow[phantom, from=F, to=U, "\vdash" rotate=90]
    \end{tikzcd}
  \end{equation}
\end{prop}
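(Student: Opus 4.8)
The plan is to follow the \emph{feedback} (or $Int$-style) construction of Katis, Sabadini and Walters, verifying separately that $\freeTr(\Cat{C})$ is a traced symmetric monoidal category and that it enjoys the required universal property. The first task is to check that the equivalence relation generated by \emph{yanking} and \emph{sliding} is a monoidal congruence compatible with the operations in \eqref{eq:seq in Utr}--\eqref{eq:trace in Utr}, so that composition, monoidal product and trace descend to equivalence classes of pairs $\st{f}{S}$. Here one repeatedly exploits the naturality of the symmetry $\sigma$ and the coherence of the symmetric monoidal structure of $\Cat{C}$ to reorganise the state objects; for instance, well-definedness of $\dcomp$ amounts to showing that replacing a factor $\st{f}{S}$ by a sliding- or yanking-equivalent representative can be absorbed into the combined state $S \perG T$ up to the same equivalence.

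Once the operations are well defined, I would verify the axioms. Identities are $\st{\id{X}}{\unoG}$, and associativity and unitality of $\dcomp$ reduce to the associator and unitors of $\perG$ acting on the state objects, lifted (with trivial state) from $\Cat{C}$; the same structural isomorphisms make $\freeTr(\Cat{C})$ symmetric monoidal with the product of \eqref{eq:per in Utr}. The trace axioms then follow almost directly from the formula $\trace_T\st{f}{S} = \st{f}{S \perG T}$: \emph{joining} corresponds to associativity of $\perG$ on states, \emph{vanishing} to $\unoG$ being the unit, while \emph{yanking} and \emph{sliding} are exactly the relations imposed in the quotient, and \emph{tightening} and \emph{strength} are bookkeeping reorganisations of the external wires.

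For the adjunction, the unit $\eta_{\Cat{C}} \colon \Cat{C} \to \fun{U}\freeTr(\Cat{C})$ is the identity-on-objects symmetric monoidal functor sending $f$ to $\st{f}{\unoG}$. Given a traced monoidal category $\Cat{D}$ and a symmetric monoidal functor $F \colon \Cat{C} \to \fun{U}\Cat{D}$, I would define the extension $\widehat{F} \colon \freeTr(\Cat{C}) \to \Cat{D}$ by $\widehat{F}(X) = F(X)$ and $\widehat{F}(\st{f}{S}) = \trace_{F(S)}(F(f))$. The crucial verification is that this assignment is invariant under yanking and sliding, and it is precisely here that the yanking and sliding axioms of the trace of $\Cat{D}$ are used to make $\widehat{F}$ well defined on equivalence classes. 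One then checks that $\widehat{F}$ preserves composition, monoidal product and trace, and that it is the unique traced monoidal functor with $\eta_{\Cat{C}} \dcomp \fun{U}\widehat{F} = F$; uniqueness follows because every morphism factors as $\st{f}{S} = \trace_{S}(\eta_{\Cat{C}}(f))$, which forces the stated value.

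The main obstacle is expected to be the compatibility of $\widehat{F}$ with composition. Whereas preservation of the monoidal product and of the trace reduce to strength, naturality and joining, preservation of $\dcomp$ requires the genuinely non-trivial computation showing that tracing out the two separate state objects of a sequential composite $\st{f}{S}\dcomp\st{g}{T}$ --- after the rewiring in \eqref{eq:seq in Utr} --- coincides with the composite $\trace_{F(S)}(F(f)) \dcomp \trace_{F(T)}(F(g))$ of the two traces in $\Cat{D}$. This is the standard superposing lemma for traces, proved by a diagrammatic manipulation using tightening, strength, sliding and joining, where matching the intermediate symmetries is the delicate step. All these verifications are carried out in \cite[Sections~2.2--2.3]{katis2002feedback}, which we follow.
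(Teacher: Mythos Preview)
The paper does not give its own proof of this proposition: it is stated as a citation result, attributed entirely to \cite[Sections~2.2--2.3]{katis2002feedback}, and the construction is merely \emph{recalled} rather than verified. Your outline follows exactly that cited reference and is correct as a sketch of the standard argument, so there is nothing to compare beyond noting that you have spelled out what the paper simply imports.
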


\begin{defi}%
  [Uniformity congruence]%
  \label{def:uniformity-congruence}
  For a traced monoidal category \((\Cat{C}, \perG, I, \trace)\), generate a traced monoidal congruence \((\Ucong_{\Cat{C}})\) inductively by the rules below.
  We denote with \((\Ucong[X,Y]_{\Cat{C}})\) the component of \((\Ucong_{\Cat{C}})\) on the hom-set \(\Cat{C}[X,Y]\).
  We omit the subscript \(\Cat{C}\) and use infix notation to improve readability.
  \begin{gather}
    \inferrule*[right=(r)]{f \in \Cat{C}[X,Y]}{f \Ucong[X,Y] f}
    \qquad
    \inferrule*[right=(t)]{f \Ucong[X,Y] g \quad g \Ucong[X,Y] h}{f \Ucong[X,Y] h}
    \qquad
    \inferrule*[right=(s)]{f \Ucong[X,Y] g}{g \Ucong[X,Y] f}\nonumber
    \\[8pt]
    \inferrule*[right=($\dcomp$)]{f \Ucong[X,Y] f' \quad g \Ucong[Y,Z] g'}{f\dcomp g \Ucong[X,Z] f'\dcomp g'}
    \qquad
    \inferrule*[right=($\perG$)]{f \Ucong[X,Y] f' \quad g \Ucong[X',Y'] g'}{f\perG g \Ucong[X \perG X', Y \perG Y'] f' \perG g'}\label{eq:uniformcong}
    \\[8pt]
    \inferrule*[right=(ut)]{u \Ucong[S,T] v \qquad f \dcomp (u \perG \id{}) \Ucong[S \perG X, T \perG Y] (v \perG \id{}) \dcomp g}{\trace_{S}f \Ucong[X,Y] \trace_{T}g}\nonumber
  \end{gather}
\end{defi}

\begin{prop}\label{prop:free-uniform-trace}
  The quotient by the uniformity congruence \((\Ucong)\) defines a functor \(\fun{Q_{U}}\) that is left adjoint of the inclusion: \(\adjunction{\fun{Q_{U}}}{\TSMC}{\Cat{UTMonCat}}{\fun{\iota_{T}}}\).
\end{prop}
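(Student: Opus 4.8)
The plan is to realise $\fun{Q_U}$ as a reflection, using \Cref{lem:traced-monoidal-congruence} both to perform the quotient and to obtain the required universal property. The first step is to check that, for every traced monoidal category $\Cat{C}$, the relation $(\Ucong_{\Cat{C}})$ generated by \Cref{def:uniformity-congruence} is a traced monoidal congruence in the sense of \Cref{def:traced-monoidal-congruence}: rules (r), (t), (s) make it an equivalence on each hom-set, ($\dcomp$) and ($\perG$) give compatibility with composition and monoidal product, and compatibility with the trace follows from rule (ut) instantiated with $u = v = \id{S}$, since from $f \Ucong_{\Cat{C}} g$ on $S \perG X \to S \perG Y$ one obtains $\trace_{S} f \Ucong_{\Cat{C}} \trace_{S} g$. \Cref{lem:traced-monoidal-congruence} then yields a traced monoidal category $\fun{Q_U}(\Cat{C}) \defeq \Cat{C}\slash(\Ucong_{\Cat{C}})$ together with a traced monoidal functor $\eta_{\Cat{C}} \colon \Cat{C} \to \fun{Q_U}(\Cat{C})$. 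I would next verify that this quotient is in fact \emph{uniformly} traced: given classes $[f], [g]$ and $[r]$ with $[f] \dcomp ([r] \perG \id{}) = ([r] \perG \id{}) \dcomp [g]$, lifting to representatives gives $f \dcomp (r \perG \id{}) \Ucong_{\Cat{C}} (r \perG \id{}) \dcomp g$, and applying rule (ut) with $u = v = r$ produces $\trace f \Ucong_{\Cat{C}} \trace g$, i.e. $\trace [f] = \trace [g]$. Hence $\fun{Q_U}(\Cat{C})$ lives in $\Cat{UTMonCat}$.

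To make $\fun{Q_U}$ a functor, for a morphism $F \colon \Cat{C} \to \Cat{D}$ of $\TSMC$ I would consider $F \dcomp \eta_{\Cat{D}} \colon \Cat{C} \to \fun{Q_U}(\Cat{D})$ and show that it identifies $\Ucong_{\Cat{C}}$-related morphisms, so that it factors through $\eta_{\Cat{C}}$ by the uniqueness clause of \Cref{lem:traced-monoidal-congruence}; the resulting functor is $\fun{Q_U}(F)$. The identification claim is proved by induction on the derivation of $f \Ucong_{\Cat{C}} g$: because $F$ preserves composition, monoidal product and trace, it sends each generating rule of $\Ucong_{\Cat{C}}$ to an instance of $\Ucong_{\Cat{D}}$, the (ut) case being immediate since the $F$-image of its premise is again a (ut)-premise in $\Cat{D}$. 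Functoriality of $\fun{Q_U}$ and naturality of $\eta$ then follow again from the uniqueness clause of \Cref{lem:traced-monoidal-congruence}.

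Finally I would establish the adjunction by exhibiting $\eta_{\Cat{C}}$ as the unit. For any uniformly traced monoidal category $\Cat{D}$ and any traced monoidal functor $H \colon \Cat{C} \to \fun{\iota_T}(\Cat{D})$, the goal is a unique morphism $\tilde{H} \colon \fun{Q_U}(\Cat{C}) \to \Cat{D}$ of $\Cat{UTMonCat}$ with $\eta_{\Cat{C}} \dcomp \fun{\iota_T}(\tilde{H}) = H$; since a morphism of uniformly traced categories is just a morphism of traced categories, such a $\tilde{H}$ automatically lands in $\Cat{UTMonCat}$, and by \Cref{lem:traced-monoidal-congruence} it exists and is unique precisely when $H$ maps $\Ucong_{\Cat{C}}$-related morphisms to equal morphisms. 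I expect the main obstacle to be exactly this last identification: proving $H(f) = H(g)$ whenever $f \Ucong_{\Cat{C}} g$. This proceeds by induction on the derivation, and the crucial case is (ut), where the inductive hypothesis gives $H(u) = H(v)$ and $H(f) \dcomp (H(u) \perG \id{}) = (H(v) \perG \id{}) \dcomp H(g)$; setting $\rho \defeq H(u) = H(v)$ turns this into the premise $H(f) \dcomp (\rho \perG \id{}) = (\rho \perG \id{}) \dcomp H(g)$ of the uniformity axiom (\Cref{def:utr-category}) in $\Cat{D}$, whence uniformity of $\Cat{D}$ delivers $H(\trace f) = H(\trace g)$. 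All remaining verifications are routine applications of the universal property, so the heart of the argument is confining the subtlety to this inductive step.
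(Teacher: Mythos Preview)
Your proposal is correct and follows essentially the same approach as the paper: verify that $(\Ucong_{\Cat{C}})$ is a traced monoidal congruence via (ut) with $u=v=\id{}$, quotient using \Cref{lem:traced-monoidal-congruence}, check the quotient is uniformly traced, and obtain the adjunction from the universal arrows characterisation. Your write-up is in fact more detailed than the paper's, which leaves implicit both the verification that the quotient is uniformly traced and the inductive argument that any traced monoidal functor into a uniformly traced category identifies $\Ucong$-related morphisms; you correctly identify and handle the crucial (ut) case in both places.
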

\begin{proof}
  By \Cref{def:uniformity-congruence}, the relation \((\Ucong_{\Cat{C}})\) is a traced monoidal congruence because (ut) with \(u = v = \id{}\) imply that \(\trace(f) \Ucong \trace(g)\) whenever \(f \Ucong g\).
  For a traced monoidal category \(\Cat{C}\), define \(\fun{Q_{U}}(\Cat{C}) = \Cat{C} \slash \Ucong_{\Cat{C}}\).
  By \Cref{lem:traced-monoidal-congruence}, \(\Cat{C} \slash \Ucong_{\Cat{C}}\) is a traced monoidal category; by definition of \((\Ucong_{\Cat{C}})\) it is also uniformly traced.
  By \Cref{lem:traced-monoidal-congruence}, there are traced monoidal functors \(\eta_{\Cat{C}} \colon \Cat{C} \to \fun{\iota_{T}}(\fun{Q_{U}}(\Cat{C}))\) that are universal: for all traced monoidal functors \(H \colon \Cat{C} \to \fun{\iota_{T}}(\Cat{D})\), there is a unique traced monoidal functor \(H^{\#} \colon \fun{Q_{U}}(\Cat{C}) \to \Cat{D}\) such that \(\eta_{\Cat{C}} \dcomp \fun{\iota_{T}}(H^{\#}) = H\).
  This proves that \(\fun{Q_{U}}\) extends to a functor that is left adjoint to \(\fun{\iota_{T}}\) by the characterisation of adjunctions with universal arrows~\cite[Theorem~IV.2]{mac_lane_categories_1978}.
\end{proof}

Now that we have constructed the free uniformly traced monoidal category on a monoidal category, we restrict this adjunction to fb categories wth idempotent convolution.
The next result constructs the object part of the left adjoint.
\Cref{prop:free-fib-uniform-trace} concludes by constructing its morphism part and applying \Cref{lem:restricting-adjunctions}.

Hereafter, we write $\UTr \colon \Cat{SMC} \to \Cat{UTMonCat}$ for the composition of $\freeTr$ with $\fun{Q_{U}}$ and $\fun{J_{1}}\colon \Cat{FIBCat} \to \Cat{SMC}$ and $\fun{J_{2}}\colon \Cat{UTFIBCat} \to \Cat{UTMonCat}$ for the obvious injections.

\begin{lem}%
  \label{lem:uniform-trace-preserves-biproducts}
  The free uniformly traced monoidal category over a fb category with idempotent convolution is also an fb category with idempotent convolution. %
  In other words, for a fb category with idempotent convolution \(\Cat{C}\), there is a uniformly traced fb category with idempotent convolution \(\fun{UTr_{B}}_{o}(\Cat{C})\) such that \(\UTr(\fun{J_{1}}(\Cat{C})) = \fun{J_{2}}(\fun{UTr_{B}}_{o}(\Cat{C}))\).
\end{lem}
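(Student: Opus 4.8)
The plan is to take $\fun{UTr_B}_{o}(\Cat{C})$ to be the uniformly traced monoidal category $\UTr(\fun{J_1}(\Cat{C}))$ itself, equipped with the biproduct structure $(\diag{},\bang{},\codiag{},\cobang{})$ transported along the unit embedding $\eta\colon \Cat{C}\to \UTr(\fun{J_1}(\Cat{C}))$, $g\mapsto [\st{g}{\zero}]$. Since $\fun{J_2}$ only forgets the finite-biproduct structure and we leave the underlying uniformly traced monoidal category untouched, the required equation $\UTr(\fun{J_1}(\Cat{C}))=\fun{J_2}(\fun{UTr_B}_{o}(\Cat{C}))$ holds by construction. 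It therefore remains to verify (i) that $\UTr(\fun{J_1}(\Cat{C}))$ is a finite biproduct category in the sense of \Cref{def:fb}, and (ii) that its convolution is idempotent, so that by \Cref{lem:idempfib} it is an object of $\Cat{UTFIBCat}$.

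For (i), the coherence laws \Cref{def:fb}.(1)--(3) are equations among morphisms lying in the image of $\eta$; since $\eta$ is an identity-on-objects symmetric monoidal functor preserving composition and $\piu$, these laws transport verbatim from $\Cat{C}$. The substantive point is the naturality condition \Cref{def:fb}.(4), which must hold against an arbitrary morphism $[\st{f}{S}]\colon X\to Y$, not merely against embedded ones. Here I would use uniformity. For the counit, for instance, $[\st{f}{S}]\dcomp\bang{Y}$ is represented, by \eqref{eq:seq in Utr}, as $\trace_{S}(f\dcomp(\id{S}\piu\bang{Y}))$, whereas $\bang{X}=\trace_{\zero}(\bang{X})$; taking the structural map $\bang{S}\colon S\to\zero$ and invoking the naturality of $\bang{}$ already valid in $\Cat{C}$, the premise of rule (ut) of \Cref{def:uniformity-congruence} is satisfied, yielding $[\st{f}{S}]\dcomp\bang{Y}=\bang{X}$. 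The three remaining clauses ($\cobang{}$, $\diag{}$, $\codiag{}$) are handled identically: the two representatives carry trace objects that differ by a single structural map ($\cobang{S}$, $\diag{S}$, $\codiag{S}$ respectively), and uniformity against that map, combined with the corresponding naturality law holding in $\Cat{C}$, identifies them.

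For (ii), by \Cref{lem:idempfib} it suffices to prove that the convolution $\sqcup$ induced by the transported biproduct is idempotent. Given $[\st{f}{S}]\colon X\to Y$, unfolding $\diag{X}\dcomp([\st{f}{S}]\piu[\st{f}{S}])\dcomp\codiag{Y}$ with \eqref{eq:per in Utr} and \eqref{eq:seq in Utr} expresses the self-convolution as $\trace_{S\piu S}(k)$ for an explicit $k\colon (S\piu S)\piu X\to (S\piu S)\piu Y$ that runs two copies of $f$ sharing $X$ and $Y$ through $\diag{X}$ and $\codiag{Y}$. I would then apply rule (ut) with the structural map $\codiag{S}\colon S\piu S\to S$ to collapse the trace object from $S\piu S$ to $S$, the required premise being an identity $k\dcomp(\codiag{S}\piu\id{Y})=(\codiag{S}\piu\id{X})\dcomp f$ in $\Cat{C}$; this reduces, via the naturality of $\codiag{}$ and $\diag{}$ together with the idempotency of $\sqcup$ already holding in $\Cat{C}$ (\Cref{lem:idempfib} applied to $\Cat{C}$), to $\trace_{S\piu S}(k)=\trace_{S}(f)=[\st{f}{S}]$. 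I expect this last step to be the main obstacle: unlike the naturality clauses, whose $\Cat{C}$-side premises are single naturality equations, establishing the $\Cat{C}$-level identity witnessing idempotency requires combining the matrix decomposition of $f$ (\Cref{prop:matrixform}) with the idempotency of convolution in $\Cat{C}$, and some care is needed to route the two copies of the seed object correctly through the (co)diagonals so that the hypothesis of (ut) is met exactly.
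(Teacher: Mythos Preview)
Your approach to part (i) is correct and essentially matches the paper's: transport the (co)monoid structure along the embedding $\eta$, then establish naturality of $\bang{},\cobang{},\diag{},\codiag{}$ against an arbitrary $[\st{f}{S}]$ by applying uniformity with the structural maps $\bang{S},\cobang{S},\diag{S},\codiag{S}$ and invoking the corresponding naturality law in $\Cat{C}$. The paper presents this diagrammatically and additionally decomposes $f$ via the product structure, but the underlying idea is the same.

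Where you diverge is in part (ii), and here you are overcomplicating matters. Once naturality of $\diag{}$ is established in $\UTr(\fun{J_1}(\Cat{C}))$, idempotency is a two-line corollary, as the paper observes:
\[
  (\diag{X}\mid\zero)\dcomp(f\piu f)\dcomp(\codiag{Y}\mid\zero)
  \;=\; f\dcomp(\diag{Y}\mid\zero)\dcomp(\codiag{Y}\mid\zero)
  \;=\; f\dcomp(\diag{Y}\dcomp\codiag{Y}\mid\zero)
  \;=\; f.
\]
No second uniformity argument, no matrix decomposition. Your direct approach via uniformity with $\codiag{S}$ does also work---the premise $k\dcomp(\codiag{S}\piu\id{Y})=(\codiag{S}\piu\id{X})\dcomp f$ follows from naturality of $\codiag{}$ in $\Cat{C}$, the coherence law for $\codiag{S\piu X}$, and idempotency $\diag{X}\dcomp\codiag{X}=\id{X}$ in $\Cat{C}$, \emph{without} invoking \Cref{prop:matrixform}---so the ``main obstacle'' you anticipate does not materialise. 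But there is no reason to take that route when (ii) falls out of (i) for free.
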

\begin{proof}
  Let \((\Cat{C}, \piu, \zero)\) be a symmetric monoidal category whose monoidal product is a biproduct such that convolution is idempotent, \(\diag{X} \dcomp \codiag{X} = \id{X}\) for all objects \(X\).
  Consider a morphism \(f \colon X \to Y\) in \(\UTr(\Cat{C}) = \fun{Q_{U}}(\freeTr(\Cat{C}))\).
  By the definitions of \(\fun{Q_{U}}\) and \(\freeTr\), there is a morphism \(g \colon S \piu X \to S \piu Y\) in \(\Cat{C}\) whose trace is \(f\), \(f = \trace_{S}(g) = (g \mid S)\).
  By the universal property of products, the morphism \(g\) has two components: \(g = \diag{S \piu X} \dcomp (g_{1} \piu g_{2})\).
  The natural comonoid structure \((\diag{},\bang{})\) of \(\Cat{C}\) gives a comonoid structure \(((\diag{} \mid \zero), (\bang{} \mid \zero))\) in \(\UTr(\Cat{C})\) via the unit of the adjunction, \(\eta_{\Cat{C}}\).
  We show that this comonoid structure is natural in \(\UTr(\Cat{C})\).
  We can rewrite \(f \dcomp (\bang{Y} \mid \zero)\) using naturality of \(\bang{Y}\) in \(\Cat{C}\).
  \[
    
    % [inline block 33: 21 envs, 39513 chars -> data_tex | \begin{tikzpicture} 	\begin{pgfonlayer}{nodelayer}...]

}
 \]
  This shows that \(\UTr(\Cat{C})\) is a finite-product category.
  Dually, it is also a finite-coproduct category and then a finite-biproduct category.
  Since convolution is idempotent in \(\Cat{C}\), it must be so in \(\UTr(\Cat{C})\) as well:
  \[(\diag{X} \mid \zero) \dcomp (f \piu f) \dcomp (\codiag{Y} \mid \zero) = f \dcomp (\diag{Y} \mid \zero) \dcomp (\codiag{Y} \mid \zero) = f \dcomp (\diag{Y} \dcomp \codiag{Y} \mid \zero) = f.\]
  Then, we set \(\fun{UTr_{B}}_{o}(\Cat{C}, \piu, \zero) = (\UTr(\Cat{C}), \piu, \zero)\) and this is well-defined.
\end{proof}

\begin{prop}\label{prop:free-fib-uniform-trace}
  The adjunction between symmetric monoidal categories and uniformly traced monoidal categories, \(\adjunction{\UTr}{\SMC}{\Cat{UTMonCat}}{\fun{U_{T}}}\), restricts to the adjunction  \(\adjunction{\fun{UTr_{B}}}{\Cat{FIBCat}}{\Cat{UTFIBCat}}{\fun{U_{TB}}}\).
\end{prop}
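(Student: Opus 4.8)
The plan is to derive this restricted adjunction from the already-established one $\adjunction{\UTr}{\SMC}{\Cat{UTMonCat}}{\fun{U_{T}}}$ (\Cref{prop:free-trace} and \Cref{prop:free-uniform-trace}) by invoking \Cref{lem:restricting-adjunctions} along the faithful inclusions $\fun{J_1} \colon \Cat{FIBCat} \to \SMC$ and $\fun{J_2} \colon \Cat{UTFIBCat} \to \Cat{UTMonCat}$. Concretely, I take $F = \UTr$, $U = \fun{U_{T}}$, $J = \fun{J_1}$ and $K = \fun{J_2}$, and I must supply the two object-level functions and the two hom-set functions required by the lemma, together with the verification that the unit and counit restrict appropriately.

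The object part of the left adjoint is exactly \Cref{lem:uniform-trace-preserves-biproducts}: setting $F'_o = \fun{UTr_B}_o$, that lemma gives $\UTr(\fun{J_1}(\Cat{C})) = \fun{J_2}(\fun{UTr_B}_o(\Cat{C}))$, which is the required identity $FJC = KF'_oC$. For the right adjoint I simply forget the trace: a uniformly traced fb category with idempotent convolution, after discarding its trace operator, is still an fb category with idempotent convolution, since the biproduct structure and the idempotency of convolution are independent of the trace; this yields $\fun{U_{TB}}_o$ and the identity $\fun{U_{T}}(\fun{J_2}(\Cat{D})) = \fun{J_1}(\fun{U_{TB}}_o(\Cat{D}))$, i.e.\ $UKD = JU'_oD$. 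For the hom-set functions, $F'$ sends a morphism $f$ of $\Cat{FIBCat}$ to $\UTr(\fun{J_1}(f))$, and I must check that this traced monoidal functor also preserves the (co)monoid structure and the order, so that it genuinely lives in $\Cat{UTFIBCat}$. This follows from the explicit description in the proof of \Cref{lem:uniform-trace-preserves-biproducts}, where the (co)monoids of $\UTr(\Cat{C})$ are the $\eta_\Cat{C}$-images of those of $\Cat{C}$: since $f$ preserves (co)monoids and $\UTr$ is functorial, $\UTr(\fun{J_1}(f))$ preserves the induced (co)monoids, and order preservation transfers likewise. The right-adjoint hom-set function again just forgets the trace, so the four compatibility equations $FJf = KF'_{C,C'}f$ and $UKg = JU'_{D,D'}g$ hold on the nose.

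Finally I verify the two conditions on the unit $\eta$ and counit $\epsilon$ of $\UTr \dashv \fun{U_{T}}$. For the unit, given an fb category with idempotent convolution $\Cat{C}$, the component $\eta_{\fun{J_1}\Cat{C}} \colon \fun{J_1}\Cat{C} \to \fun{U_{T}}\UTr(\fun{J_1}\Cat{C})$ is a symmetric monoidal functor; by the construction of the (co)monoids in $\UTr(\Cat{C})$ it preserves monoids, comonoids and the order, hence it is a morphism of $\Cat{FIBCat}$ and so lies in the image of $\fun{J_1}$. Dually, for a uniformly traced fb category with idempotent convolution $\Cat{D}$, the counit $\epsilon_{\fun{J_2}\Cat{D}}$ is a traced monoidal functor that preserves the biproduct structure, hence is a morphism of $\Cat{UTFIBCat}$ and lies in the image of $\fun{J_2}$. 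With all hypotheses of \Cref{lem:restricting-adjunctions} verified, we obtain functors $\fun{UTr_B}$ and $\fun{U_{TB}}$ with $\fun{UTr_B} \dashv \fun{U_{TB}}$.

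The main obstacle I expect is precisely the bookkeeping that the unit and counit preserve the (co)monoid and order structure, that is, that they are morphisms in the fb sense and not merely (traced) symmetric monoidal functors. This hinges entirely on the fact, extracted from the proof of \Cref{lem:uniform-trace-preserves-biproducts}, that the biproduct structure on $\UTr(\Cat{C})$ is transported from $\Cat{C}$ along $\eta_\Cat{C}$; once this is in hand, preservation is automatic, but it must be stated carefully because \Cref{lem:restricting-adjunctions} requires the unit and counit components to literally factor through the inclusions $\fun{J_1}$ and $\fun{J_2}$.
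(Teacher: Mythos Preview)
Your proposal is correct and follows the same overall strategy as the paper: invoke \Cref{lem:restricting-adjunctions} along $\fun{J_1}$ and $\fun{J_2}$, with the object part of the left adjoint supplied by \Cref{lem:uniform-trace-preserves-biproducts} and the right adjoint simply forgetting the trace.

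The one difference worth noting is that the paper streamlines the verification you single out as the ``main obstacle''. Rather than checking by hand that $\eta_{\fun{J_1}\Cat{C}}$, $\epsilon_{\fun{J_2}\Cat{D}}$, and $\UTr(\fun{J_1}(f))$ each preserve the (co)monoid and order structure, the paper observes once and for all that $\fun{J_1}$ and $\fun{J_2}$ are \emph{full}: any symmetric monoidal functor between fb categories automatically preserves $\cobang{}$ and $\bang{}$ (since $\zero$ is a zero object) and hence $\codiag{}$ and $\diag{}$, and order preservation then follows because the order is determined by convolution (\Cref{lem:idempfib}). Fullness immediately gives both the hom-set functions $F'_{C,C'}$ and the fact that the unit and counit components lie in the images of $\fun{J_1}$ and $\fun{J_2}$, so all of your direct checks collapse into this single observation. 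Your argument is not wrong---it is precisely the content of fullness spelled out case by case---but recognising fullness up front saves the repetition.
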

\begin{proof}
Observe that the functor \(\fun{J_{1}} \colon \Cat{FIBCat} \to \SMC\) is full and faithful. Indeed,  a morphism of fb categories with idempotent convolution is a symmetric monoidal functor preserving monoids and comonoids. Moreover, any symmetric monoidal functor between fb category must preserves $\cobang{X} \colon \zero \to X$ and $\bang{X} \colon X \to \zero$ as $\zero$ is the initial and final object. Simple computations confirms that also $\codiag{X}$ and $\diag{X}$ are preserved. For analogous reasons, the functor\(\fun{J_{2}} \colon \Cat{UTFIBCat} \to \Cat{UTMonCat}\) is full and faithful. These functors are also injective on objects because two fb categories with idempotent convolution are equal whenever they are equal as monoidal categories with the biproduct as monoidal product.

  \Cref{lem:uniform-trace-preserves-biproducts} constructs the object part of \(\fun{UTr_{B}}\).
  For the morphism part, consider a morphism of fb categories with idempotent convolution \(F \colon \Cat{C} \to \Cat{D}\).
  Then, we obtain \(\UTr(\fun{J_{1}}(F)) \colon \fun{J_{2}}(\fun{UTr_{B}}(\Cat{C})) \to \fun{J_{2}}(\fun{UTr_{B}}(\Cat{D}))\) by the definition on objects of \(\fun{UTr_{B}}\).
  Since \(\fun{J_{2}}\) is full and faithful, there is a unique \(\fun{UTr_{B}}(F) \colon \fun{UTr_{B}}(\Cat{C}) \to \fun{UTr_{B}}(\Cat{D})\) such that \(\fun{J_{2}}(\fun{UTr_{B}}(F)) = \UTr(\fun{J_{1}}(F))\).
  The object part of the functor \(\fun{U_{TB}}\) simply forgets the trace structure, as \(\fun{U_{T}}\) does, so we obtain that \(\fun{J_{1}}(\fun{U_{TB}}(\Cat{C}, \piu, \zero, \trace)) = (\Cat{C}, \piu, \zero) = \fun{U_{T}}(\fun{J_{2}}(\Cat{C}, \piu, \zero, \trace))\) and similarly for functors.
  Since the functors \(\fun{J_{1}}\) and \(\fun{J_{2}}\) are full, the components of the unit \(\eta_{\fun{J_{1}}}\) and counit \(\epsilon_{\fun{J_{2}}}\) of the adjunction belong to the image of \(\fun{J_{1}}\) and \(\fun{J_{2}}\), respectively.

  Finally, we can apply \Cref{lem:restricting-adjunctions} to obtain that the functors \(\UTr \colon \SMC \to \Cat{UTMonCat}\) and \(\fun{U_{T}} \colon \Cat{UTMonCat} \to \SMC\) restrict to functors \(\fun{UTr_{B}} \colon \Cat{FIBCat} \to \Cat{UTFIBCat}\) and \(\fun{U_{TB}} \colon \Cat{UTFIBCat} \to \Cat{FIBCat}\), and that these are adjoint.
\end{proof}

We are left with proving the adjunction between uniformly traced fb categories with idempotent convolution and Kleene bicategories.
This adjunction is also a quotient because the posetal enrichment of Kleene bicategories is derived from the idempotent convolution.

\begin{rem}\label{rem:poset-from-convolution}
  The poset enrichment in fb categories with idempotent convolution is derived (\Cref{lem:idempfib}): the convolution monoid defines a poset enrichment by \(f \leq g\) whenever \(f \sqcup g = g\); and any poset enrichment needs to coincide with this one.
  This means that, while it is useful to think in terms of inequalities, we do not need to add inequalities to signatures.
  In particular, imposing posetal uniformity reduces to quotienting by a congruence.
  This result also simplifies the definition of morphisms of Kleene bicategories (\Cref{def:kleenebicategory}): they are traced monoidal functors preserving the monoid and comonoid structures.
\end{rem}

\begin{defi}%
  [Kleene congruence]%
  \label{def:kleene-congruence}
  For a uniformly traced fb category with idempotent \((\Cat{C}, \piu, 0, \trace)\), generate a traced monoidal congruence \((\Kcong_{\Cat{C}})\) inductively by the rules below.
  We write \(f \Kleq[X,Y]_{\Cat{C}} g\) to indicate that the pair \((f \sqcup g, g)\) belongs to the equivalence relation \((\Kcong[X,Y]_{\Cat{C}})\).
  By \Cref{rem:poset-from-convolution}, quotienting by \((\Kcong)\) is equivalent to adding the corresponding inequalities to the native poset-enrichment \((\leq)\) of \(\Cat{C}\).
  \begin{gather}
    \inferrule*[right=($\leq$)]{f \leq^{\scriptscriptstyle{X,Y}} g}{f \Kleq[X,Y] g}
    \qquad
    \inferrule*[right=($\id{}$)]{X \in \Cat{C}}{\trace(\codiag{};\diag{}) \Kleq[X,X] \id{}}
    \qquad
    \inferrule*[right=(t)]{f \Kleq[X,Y] g \quad g \Kleq[X,Y] h}{f \Kleq[X,Y] h}\nonumber
    \\[8pt]
    \inferrule*[right=($\dcomp$)]{f \Kleq[X,Y] f' \quad g \Kleq[Y,Z] g'}{f\dcomp g \Kleq[X,Z] f'\dcomp g'}
    \qquad
    \inferrule*[right=($\piu$)]{f \Kleq[X,Y] f' \quad g \Kleq[X',Y'] g'}{f\piu g \Kleq[X \piu X', Y \piu Y'] f' \piu g'}\label{eq:posetuniformcong}
    \\[8pt]
    \inferrule*[right=(ut-l)]{u \Kcong[S,T] v \qquad f \dcomp (u \piu \id{}) \Kleq[S \piu X, T \piu Y] (v \piu \id{}) \dcomp g}{\trace_{S}f \Kleq[X,Y] \trace_{T}g}\nonumber
    \\[8pt]
    \inferrule*[right=(ut-r)]{u \Kcong[S,T] v \qquad (v \piu \id{}) \dcomp g \Kleq[S \piu X, T \piu Y] f \dcomp (u \piu \id{})}{\trace_{T}g \Kleq[X,Y] \trace_{S}f}\nonumber
  \end{gather}
\end{defi}

\begin{prop}\label{prop:free-kleene}
  The quotient by the Kleene congruence \((\Kcong)\) defines a functor \(\fun{Q_{K}}\) that is the left adjoint of the inclusion, \( \adjunction{\fun{Q_{K}}}{\Cat{UTFIBCat}}{\Cat{KBicat}}{\fun{\iota_{K}}}\).
\end{prop}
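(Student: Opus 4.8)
The plan is to follow the template of \Cref{prop:free-uniform-trace}: realise $\fun{Q_{K}}$ as a quotient by a congruence, invoke \Cref{lem:traced-monoidal-congruence} to obtain a universal arrow, and conclude via the universal-arrow characterisation of adjunctions. The extra work, compared with the uniform-trace case, is checking that the quotient carries the \emph{full} structure of a Kleene bicategory and not merely that of a uniformly traced fb category.

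First I would verify that $(\Kcong_{\Cat{C}})$ is a traced monoidal congruence, so that \Cref{lem:traced-monoidal-congruence} applies. Reflexivity, transitivity, and compatibility with $\dcomp$ and $\piu$ are built into the generating rules of \Cref{def:kleene-congruence}; symmetry is automatic since $(\Kcong_{\Cat{C}})$ is generated as an equivalence relation, with the derived preorder $\Kleq$ merely recording the one-directional facts $f \sqcup g \Kcong g$. Compatibility with the trace follows from the rules (ut-l) and (ut-r) by specialising $u \Kcong v$ to identities. \Cref{lem:traced-monoidal-congruence} then yields a traced monoidal category $\fun{Q_{K}}(\Cat{C}) \defeq \Cat{C}/\Kcong_{\Cat{C}}$ together with a universal traced monoidal functor $\eta_{\Cat{C}} \colon \Cat{C} \to \fun{\iota_{K}}(\fun{Q_{K}}(\Cat{C}))$.

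Second, I would show that $\fun{Q_{K}}(\Cat{C})$ is a Kleene bicategory. Because $\eta_{\Cat{C}}$ preserves the $\piu$-(co)monoid structure and the congruence is compatible with $\sqcup$, the quotient remains an fb category with idempotent convolution; by \Cref{rem:poset-from-convolution} its poset enrichment is again the one induced by convolution, and the $\Kleq$-rules precisely adjoin the freely generated inequalities to this native order. The axiom AT1 then holds because the rule $(\id{})$ forces $\trace(\codiag{};\diag{}) \Kleq \id{}$, i.e.\ $\trace(\codiag{};\diag{}) \leq \id{}$ in the quotient. Posetal uniformity (AU1) and (AU2) hold because the rules (ut-l) and (ut-r) encode exactly the implications of \Cref{def:kleenebicategory}.(2); here I would invoke \Cref{lemma:equivalentUnif} to reconcile the two-sided premise $u \Kcong v$ of the congruence with the single witness $r$ appearing in (AU1) and (AU2). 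Finally, the adjunction follows from \cite[Theorem~IV.2]{mac_lane_categories_1978}: given any Kleene bicategory $\Cat{D}$ and any morphism $H \colon \Cat{C} \to \fun{\iota_{K}}(\Cat{D})$ of uniformly traced fb categories, the Kleene axioms satisfied by $\Cat{D}$ force $H$ to identify (respectively, to order) every pair related by $(\Kcong_{\Cat{C}})$, so $H$ factors uniquely as $H = \eta_{\Cat{C}} \dcomp \fun{\iota_{K}}(H^{\#})$ through a unique Kleene-bicategory morphism $H^{\#}$. Extending $\fun{Q_{K}}$ to morphisms and checking functoriality is then routine, as in \Cref{prop:free-uniform-trace}.

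The main obstacle will be the second step, namely ensuring that the convolution-derived poset and the freely adjoined $\Kleq$-inequalities cohere. Concretely one must confirm that monotonicity of $\dcomp$ and $\piu$ survives the quotient, and that the resulting order on $\fun{Q_{K}}(\Cat{C})$ is genuinely the one generated by the stated rules rather than a degenerate collapse. \Cref{rem:poset-from-convolution} is the decisive device here, since it reduces posetal enrichment to a congruence quotient and thereby brings the whole construction within the scope of \Cref{lem:traced-monoidal-congruence}.
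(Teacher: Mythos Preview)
Your proposal is correct and follows essentially the same route as the paper: verify that $(\Kcong)$ is a traced monoidal congruence, apply \Cref{lem:traced-monoidal-congruence} to get a universal quotient functor, check that the quotient inherits the fb/Kleene structure (via \Cref{rem:poset-from-convolution}), and conclude by the universal-arrow characterisation of adjunctions. One small simplification: you do not need \Cref{lemma:equivalentUnif} to reconcile the premise $u \Kcong v$ with the single witness $r$ in (AU1)/(AU2), since $u \Kcong v$ precisely means $[u]=[v]$ in the quotient, giving you the single $r$ directly.
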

\begin{proof}
  By \Cref{def:kleene-congruence}, the relation \((\Kcong_{\Cat{C}})\) is a traced monoidal congruence because (ut-l) and (ut-r) with \(u = v = \id{}\) imply that \(\trace(f) \Kcong \trace(g)\) whenever \(f \Kcong g\).
  We define a function \(\fun{Q_{K}}\) on objects of \(\Cat{UTFIBCat}\).
  For a uniformly traced fb category with idempotent convolution \(\Cat{C}\), define \(\fun{Q_{K}}(\Cat{C}) = \Cat{C} \slash \Kcong_{\Cat{C}}\).
  By \Cref{lem:traced-monoidal-congruence}, \(\Cat{C} \slash \Kcong_{\Cat{C}}\) is a traced monoidal category.
  The monoid and comonoid structures that determine biproducts are defined to be the equivalence classes of the monoid and comonoid structures, respectively;
  since \((\Kcong_{\Cat{C}})\) is a traced monoidal congruence, it preserves the monoid and comonoid axioms and their naturality; then the monoidal structure of \(\fun{Q_{K}}(\Cat{C})\) is also a biproduct.
  By definition of \((\Kcong_{\Cat{C}})\), \(\fun{Q_{K}}(\Cat{C})\) is also a Kleene bicategory.
  By \Cref{lem:traced-monoidal-congruence}, there are traced monoidal functors \(\eta_{\Cat{C}} \colon \Cat{C} \to \fun{\iota_{K}}(\fun{Q_{K}}(\Cat{C}))\) that are universal: for all traced monoidal functors \(H \colon \Cat{C} \to \fun{\iota_{K}}(\Cat{D})\), there is a unique traced monoidal functor \(H^{\#} \colon \fun{Q_{K}}(\Cat{C}) \to \Cat{D}\) such that \(\eta_{\Cat{C}} \dcomp \fun{\iota_{K}}(H^{\#}) = H\).
  We need to check that these functors are finite-biproduct functors, i.e.\ that they preserve the monoid and comonoid structures.
  The functor \(\eta_{\Cat{C}}\) preserves them by definition; the functor \(H^{\#}\) preserves them whenever \(H\) preserves them as well.
  This proves that \(\fun{Q_{K}}\) extends to a functor that is left adjoint to \(\fun{\iota_{K}}\) by the characterisation of adjunctions with universal arrows~\cite[Theorem~IV.2]{mac_lane_categories_1978}.
\end{proof}

\begin{thm}\label{th:free-kleene-on-signature}
  There is an adjunction that constructs the free Kleene bicategory \(\Cat{K}_{\Sigma}\) on a signature \(\Sigma\).
\end{thm}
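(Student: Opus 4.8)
The plan is to compose the chain of four adjunctions assembled in~\eqref{eq:chain-adjunctions-free-kleene}, since the free object on a generating signature is obtained by transporting $\Sigma$ along the composite of the four left adjoints. Each arrow of the chain has already been established: the matrix construction yields $\adjunction{\fun{M}}{\Cat{Sig}}{\Cat{FBCat}}{\fun{U_M}}$~\cite{mac_lane_categories_1978}; quotienting by the idempotency congruence of~\Cref{def:idempotency-congruence} yields $\adjunction{\fun{Q_{id}}}{\Cat{FBCat}}{\Cat{FIBCat}}{\fun{\iota_{id}}}$, again by the standard construction of a quotient category~\cite{mac_lane_categories_1978}; the free uniformly traced finite-biproduct category with idempotent convolution is $\adjunction{\fun{UTr_B}}{\Cat{FIBCat}}{\Cat{UTFIBCat}}{\fun{U_{TB}}}$ by~\Cref{prop:free-fib-uniform-trace}; and the quotient by the Kleene congruence gives $\adjunction{\fun{Q_K}}{\Cat{UTFIBCat}}{\Cat{KBicat}}{\fun{\iota_{K}}}$ by~\Cref{prop:free-kleene}.

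Since left adjoints compose, the three rightmost left adjoints assemble into $\fun{KB} \defeq \fun{Q_K} \circ \fun{UTr_B} \circ \fun{Q_{id}} \colon \Cat{FBCat} \to \Cat{KBicat}$, left adjoint to $\fun{\iota_{id}} \circ \fun{U_{TB}} \circ \fun{\iota_{K}}$, and precomposing with the matrix construction gives the full free functor $\fun{KB} \circ \fun{M} \colon \Cat{Sig} \to \Cat{KBicat}$, left adjoint to the composite forgetful $\fun{U_M} \circ \fun{\iota_{id}} \circ \fun{U_{TB}} \circ \fun{\iota_{K}}$. I would then define $\Cat{K}_\Sigma \defeq \fun{KB}(\fun{M}(\Sigma))$ and read off its universal property from the unit of the composite adjunction: for any Kleene bicategory $\Cat{D}$ and any interpretation of the generators of $\Sigma$ as morphisms of $\Cat{D}$ with matching (co)arities, there is a unique morphism of Kleene bicategories $\Cat{K}_\Sigma \to \Cat{D}$ extending it. The only piece of bookkeeping is to verify that the unit $\eta_\Sigma$, obtained by pasting together the units of the four component adjunctions, sends each generator to its canonical image in $\Cat{K}_\Sigma$.

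I expect no real obstacle at this final step: the substance lies entirely in the intermediate results. In particular, the delicate points were that the free uniform trace preserves biproducts and idempotency of convolution (\Cref{lem:uniform-trace-preserves-biproducts}, used in~\Cref{prop:free-fib-uniform-trace}), and that the Kleene congruence is a genuine traced monoidal congruence, so that~\Cref{lem:traced-monoidal-congruence} produces a well-defined quotient traced monoidal category (used in~\Cref{prop:free-kleene}). Granting these, the assembly is purely formal, as the composite of adjoint functors is again an adjunction~\cite{mac_lane_categories_1978}.
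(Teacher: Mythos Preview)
Your proposal is correct and matches the paper's own proof exactly: the paper simply composes the four adjunctions in~\eqref{eq:chain-adjunctions-free-kleene}, noting that the first two are standard and the last two are \Cref{prop:free-fib-uniform-trace} and \Cref{prop:free-kleene}. Your write-up is in fact more detailed than the paper's, which dispenses with the explicit definition of $\Cat{K}_\Sigma$ and the unit bookkeeping you mention.
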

\begin{proof}
  Compose the adjunctions in~\ref{eq:chain-adjunctions-free-kleene}.
  The first two adjunctions are known.
  We have proven the last two in \Cref{prop:free-fib-uniform-trace} and \Cref{prop:free-kleene}.
\end{proof}

\subsection{Free Kleene-Cartesian Bicategories}%
\label{ssec:free-kleene-cartesian}
This section concludes the proof of \Cref{thm:KleeneCartesiantapesfree} by restricting the adjunctions from the previous section (illustrated in the top row of~\eqref{eq:square-adjunctions-free-kcb}) and the adjunction that constructs free finite-biproduct rig categories on a monoidal signature~\cite[Theorem 5.11]{bonchi2023deconstructing}.
\Cref{prop:free-ut-fib-rig,prop:free-kleene-rig} prove the adjunctions in the second row; \Cref{prop:free-fb-cb-rig,prop:free-ut-fib-cb,prop:free-kleene-cartesian} prove those in the third row.

\begin{equation}%
  \label{eq:square-adjunctions-free-kcb}
\begin{tikzcd}
  {}
  & {\Cat{FBCat}} \arrow[bend left=10]{r}{\fun{Q}_{id}} \arrow[phantom]{r}{\bot}
  & {\Cat{FIBCat}} \arrow[bend left=10]{r}{\fun{UTr_{B}}} \arrow[phantom]{r}{\bot} \arrow[bend left=10]{l}{U}
  & {\Cat{UTFIBCat}} \arrow[bend left=10]{r}{\fun{Q_K}} \arrow[phantom]{r}{\bot} \arrow[bend left=10]{l}{U}
  & {\Cat{KBicat}} \arrow[bend left=10]{l}{U}\\
  {\Cat{MSig}} \arrow[bend left=10]{r}{\Cat{FBT}} \arrow[phantom]{r}{\text{\cite{bonchi2023deconstructing}}}
  & {\Cat{FBRig}} \arrow{u}{\fun{J_{FB}}} \arrow[bend left=7]{rr}{\fun{UTr_{R}}} \arrow[phantom]{rr}{\bot} \arrow[bend left=10]{l}{U}
  & { }
  & {\Cat{UTFIBRig}} \arrow{u}[swap]{\fun{J_{UT}}} \arrow[bend left=10]{r}{\fun{Q_{KR}}} \arrow[phantom]{r}{\bot} \arrow[bend left=7]{ll}{U}
  &  {\Cat{KRig}} \arrow{u}[swap]{\fun{J_K}} \arrow[bend left=10]{l}{U}\\
  {\Cat{MSig}} \arrow{u}{\fun{C}} \arrow[bend left=10]{r}{\Cat{FBCT}} \arrow[phantom]{r}{\bot}
  & {\Cat{FBCB}} \arrow{u}{\fun{K_{FB}}} \arrow[bend left=7]{rr}{\fun{UTr_{C}}} \arrow[phantom]{rr}{\bot}  \arrow[bend left=10]{l}{U}
  & { }
  &  {\Cat{UTFIBCB}} \arrow{u}[swap]{\fun{K_{UT}}} \arrow[bend left=10]{r}{\fun{Q_{KC}}} \arrow[phantom]{r}{\bot} \arrow[bend left=7]{ll}{U}
  &  {\Cat{KCB}} \arrow{u}[swap]{\fun{K_K}} \arrow[bend left=10]{l}{U}
\end{tikzcd}
\end{equation}

We begin with the following result that is useful to prove that a monoidal category \((\Cat{C}, \piu, \zero)\)  carries the structure of a rig category. %
\begin{lem}\label{lem:whisk}
  A symmetric monoidal category \((\Cat{C}, \piu, \zero)\) has an additional symmetric strict monoidal structure making it a right-strict rig category if and only if:
  \begin{enumerate}
    \item For all objects \(X\) of \(\Cat{C}\), there are functors \(\RW{X}{-}, \LW{X}{-} \colon \Cat{C} \to \Cat{C}\), called \emph{right} and \emph{left whiskering}, such that, for all objects \(X\) and \(Y\) and all morphisms \(f \colon X \to Y\) and \(f' \colon X' \to Y'\),
          \[\RW{Y}{X} = \LW{X}{Y} \text{ (denoted by } X \per Y{)} \qquad \RW{X'}{f} \dcomp \LW{Y}{f'} = \LW{X}{f'} \dcomp \RW{Y'}{f};\]
    \item There is an object \(\uno\) of \(\Cat{C}\) that is the unit of the whiskerings, \(\RW{1}{f} = f = \LW{1}{f}\);
    \item The whiskerings satisfy associativity,
          \[\LW{X}{\RW{Y}{f}} = \RW{Y}{\LW{X}{f}} \quad \LW{X \per Y}{f} = \LW{X}{\LW{Y}{f}} \quad \RW{Y \per X}{f} = \RW{X}{\RW{Y}{f}} ;\]
    \item There are natural transformations \(\symmt{-}{Y} \colon \RW{Y}{-} \to \LW{Y}{-}\) satisfying the hexagon identity and \(\symmt{X}{Y} \dcomp \symmt{Y}{X} = \id{X \per Y}\);
    \item The whiskerings interact with the monoidal structure \((\piu, \zero)\) as in \Cref{table:whisk-distr}, for isomorphisms \(\dl{X}{Y}{Z} \colon X \per (Y \piu Z) \to (X \per Y) \piu (X \per Z)\).
  \end{enumerate}
  \begin{table}[t]
    \centering
    {\begin{tabular}{lc lc}
      \toprule
       \multicolumn{2}{l}{$1. \ \ \LW{\zero}{f} = \id{\zero}$} & $2. \ \ \RW{\zero}{f} = \id{\zero}$ & (\newtag{W1}{eq:whisk:zero}) \\[0.3em]
       \multicolumn{3}{l}{$1. \ \ \LW{X}{f_1 \piu f_2} = \dl{X}{X_1}{X_2} ; (\LW{X}{f_1} \piu \LW{X}{f_2}) ; \Idl{X}{Y_1}{Y_2}$} & \multirow{2}{*}{(\newtag{W2}{eq:whisk:funct piu})} \\[0.3em]
       \multicolumn{3}{l}{$2. \ \ \RW{X}{f_1 \piu f_2} = \RW{X}{f_1} \piu \RW{X}{f_2}$}
      & {} \\[0.3em]
       \multicolumn{3}{l}{$1. \ \ \LW{X \piu Y}{f} = \LW{X}{f} \piu \LW{Y}{f}$}  & \multirow{2}{*}{(\newtag{W3}{eq:whisk:sum})} \\[0.3em]
       \multicolumn{3}{l}{$2. \ \ \RW{X \piu Y}{f} = \dl{Z}{X}{Y} ; ( \RW{X}{f} \piu \RW{Y}{f} ) ; \Idl{W}{X}{Y}$} & {} \\[0.3em]
       \multicolumn{3}{l}{$\RW X {\symmp{Y}{Z}} = \symmp{Y \per X}{Z \per X}$} & (\newtag{W4}{eq:whisk:symmp}) \\[0.3em]
       $\RW X {\dl{Y}{Z}{W}} = \dl{Y}{Z \per X}{W \per X}$ & (\newtag{W5}{eq:whisk:dl}) & $\LW X {\dl{Y}{Z}{W}} = \dl{X \per Y}{Z}{W} ; \Idl{X}{Y \per Z}{Y \per W}$ & (\newtag{W6}{eq:whisk:Ldl}) \\
        \bottomrule
    \end{tabular}}
    \caption{The algebra of whiskerings}%
    \label{table:whisk-distr}
  \end{table}
\end{lem}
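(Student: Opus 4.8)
The plan is to prove the two implications separately, treating the characterisation as a \emph{whiskering presentation} of the second monoidal product, in the spirit of presenting a pseudomonoid by its left and right actions.

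For the forward direction ($\Rightarrow$), assuming $(\Cat{C}, \piu, \zero)$ already carries a right-strict rig structure with product $\per$, I would set $\RW{X}{f} := f \per \id{X}$ and $\LW{X}{f} := \id{X} \per f$. Functoriality of these whiskerings is immediate from functoriality of $\per$ and the strict unit and associativity of $\per$. Conditions (1)--(4) then unwind to bifunctoriality (interchange), strict unitality, strict associativity of $\per$, and naturality of the braiding $\symmt{X}{Y}$, all part of the definition of a symmetric strict monoidal structure. Condition (5) together with the laws (W1)--(W6) of \Cref{table:whisk-distr} is exactly the rig coherence data: since the category is right-strict, $\delta^r$, $\lambda^\bullet$ and $\rho^\bullet$ are identities, so right whiskering by a sum object and left whiskering by a sum morphism carry the only surviving distributor $\delta^l$, while the symmetric counterparts are strict; each of (W1)--(W6) is read off from the naturality of $\delta^l$ and the coherence axioms in \Cref{fig:rigax}.

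The backward direction ($\Leftarrow$) is the substantive half. Given the whiskering data, I would define $\per$ on objects by $X \per Y := \RW{Y}{X} = \LW{X}{Y}$ and on morphisms $f \colon X \to Y$, $f' \colon X' \to Y'$ by $f \per f' := \RW{X'}{f} \dcomp \LW{Y}{f'}$; by the interchange law in condition (1) this agrees with $\LW{X}{f'} \dcomp \RW{Y'}{f}$, which is exactly what makes $\per$ a well-defined bifunctor. Preservation of identities and of composition then follows from functoriality of the whiskerings together with the interchange law, a short standard computation. Condition (2) yields strict unitality, since $\LW{\uno}{-} = \RW{\uno}{-} = \mathrm{Id}$ gives $\uno \per X = X = X \per \uno$; and condition (3) gives strict associativity both on objects, $(X \per Y) \per Z = \RW{Z}{\RW{Y}{X}} = \RW{Y \per Z}{X} = X \per (Y \per Z)$, and on morphisms. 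Condition (4) supplies the braiding $\symmt{X}{Y} \colon X \per Y \to Y \per X$ as the component at $X$ of $\symmt{-}{Y}$; naturality in the first argument is given, naturality in the second I would derive from the hexagon identity and the self-inverse law, and the remaining symmetry axioms are condition (4) verbatim.

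It remains to produce the distributivity data and verify the rig coherence axioms R1--R12 of \Cref{fig:rigax}. I would take $\dl{X}{Y}{Z}$ from condition (5), define the right distributor by $\dr{X}{Y}{Z} := \symmt{X \piu Y}{Z} \dcomp \dl{Z}{X}{Y} \dcomp (\symmt{Z}{X} \piu \symmt{Z}{Y})$ (forced by R1), and take $\lambda^\bullet$ and $\rho^\bullet$ to be identities, as befits a right-strict rig. Each rig coherence axiom then becomes an equation between morphisms built from whiskerings, the braiding and the structural maps of $(\piu, \zero)$; the laws (W1)--(W6) are precisely the compatibility of the whiskerings with $\piu$ and $\zero$, and I would discharge R1, R2 and R6--R12 by direct substitution, then obtain the pentagon-shaped coherences R3, R4 and especially R5 by chaining (W2), (W3), (W5) and (W6) with the hexagon and Mac Lane coherence for the symmetric monoidal structure $(\piu, \zero)$. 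The main obstacle is exactly this last verification: R5 is a large diagram, and its commutativity requires careful bookkeeping of how iterated whiskerings over sums interact with the associators $\alpha^\piu$ and braidings $\sigma^\piu$, so that the entire diagram collapses, via (W2)/(W3) and the coherence theorem for symmetric monoidal categories, to an identity. Once R1--R12 are established, $(\Cat{C}, \per, \uno)$ together with $(\piu, \zero)$ forms a right-strict rig category, completing the equivalence.
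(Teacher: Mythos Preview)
Your proposal is correct and follows the same conceptual route as the paper, but the paper's own proof is far more compressed: it is two sentences. For conditions (1)--(4) the paper simply cites Power and Robinson's characterisation of a monoidal structure via commuting left and right actions (so your detailed reconstruction of $\per$ as a bifunctor, strict unit, strict associator, and braiding is subsumed by that reference), and for condition (5) it asserts that the laws (W1)--(W6) in \Cref{table:whisk-distr} are nothing but the rig coherence axioms of \Cref{fig:rigax} rewritten in terms of whiskerings. Your plan to verify R1--R12 individually, and in particular your concern about R5, is therefore not needed at the level of detail the paper works at: the claim is that the table \emph{is} the coherence data, not that one derives the coherence axioms from it by a separate diagram chase. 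What your approach buys is a self-contained argument that does not depend on the reader knowing the Power--Robinson result; what the paper's approach buys is brevity and a clear pointer to where the monoidal-structure half of the lemma already lives in the literature.
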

\begin{proof}
  Conditions (1-4) ensure that there is a monoidal structure~\cite{POWER_ROBINSON_1997}.
  The laws in \Cref{table:whisk-distr} spell out the coherence conditions for distributivity in terms of left and right whiskerings.
\end{proof}

We define candidate whiskerings for the free uniformly traced finite-biproduct category $\UTr(\Cat{C})$ on a finite-biproduct rig category $\Cat{C}$.
\Cref{prop:free-ut-fib-rig} proves that they give a rig structure to $\UTr(\Cat{C})$.

\begin{defi}\label{def:utr-whisk}
  Let $(\Cat{C}, \piu, \zero, \per, \uno)$ be a finite-biproduct rig category, $\UTr(\Cat{C})$ the uniformly traced finite-biproduct category freely generated from $(\Cat{C}, \oplus, \uno)$, and $X$ an object of $\UTr(\Cat C)$.
  Then $\fun{L}_X, \fun{R}_X \colon \UTr(\Cat C) \to \UTr(\Cat C)$ are defined on objects as
  $\LW{X}{Y} \defeq X \per Y$ and $\RW{X}{Y} \defeq Y \per X$,
  and on arrows $\st{f}{S} \colon Y \to Z$ as
  \[ \LW{X}{f \mid S} \defeq \st{\symmt{X}{Y}}{\zero} ; \RW{X}{f \mid S} ; \st{\symmt{Z}{X}}{\zero} \qquad \text{ and } \qquad \RW{X}{f \mid S} \defeq \st{\RW{X}{f}}{S \per X}.\]
\end{defi}

\begin{prop}\label{prop:free-ut-fib-rig}
  The free uniformly traced fb category with idempotent convolution over a fb rig category is also a fb rig category with idempotent convolution.
\end{prop}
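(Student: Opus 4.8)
The plan is to invoke \Cref{lem:whisk}. By \Cref{prop:free-fib-uniform-trace} the category $\UTr(\Cat{C})$ is already a uniformly traced fb category with idempotent convolution, so the biproduct and idempotent-convolution parts of the claim hold, and it remains to endow $(\UTr(\Cat{C}),\piu,\zero)$ with a right-strict monoidal structure. For this it suffices, by \Cref{lem:whisk}, to provide whiskerings satisfying conditions (1)--(5) together with the laws of \Cref{table:whisk-distr}; the candidate whiskerings are those of \Cref{def:utr-whisk}. The guiding observation is that for each object $X$ the ambient right whiskering $\RW{X}{-}\colon\Cat{C}\to\Cat{C}$ is an endofunctor of the fb category $(\Cat{C},\piu,\zero)$: right-strictness forces $\delta^{r}=\id{}$, so $\RW{X}{-}$ preserves $\piu$ and $\zero$ strictly, and it preserves the induced (co)monoids because whiskering commutes with the biproduct (\Cref{lemma:fb-cbsummary}.(1)). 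Thus $\RW{X}{-}$ is a morphism of $\Cat{FIBCat}$, and comparing \Cref{def:utr-whisk} with the action of $\fun{UTr_{B}}$ on functors shows $\fun{R}_X=\fun{UTr_{B}}(\RW{X}{-})$. Since $\fun{UTr_{B}}$ is a functor, $\fun{R}_X$ is then automatically well defined and functorial on $\UTr(\Cat{C})$, respecting the yanking, sliding, uniformity and idempotency congruences with no further argument. The analogous \emph{strong} symmetric monoidal endofunctor $\LW{X}{-}$, whose coherence isomorphisms are the $\dl{X}{Y}{Z}$, likewise gives $\fun{L}_X=\fun{UTr_{B}}(\LW{X}{-})$.

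With well-definedness secured, I would first dispatch conditions (1)--(4). I take the symmetries $\symmt{-}{Y}\colon\RW{Y}{-}\to\LW{Y}{-}$ and the distributors to be the $\eta$-images $\st{\symmt{X}{Y}}{\zero}$ and $\st{\dl{X}{Y}{Z}}{\zero}$ of the corresponding data of $\Cat{C}$. Every equation in (1)--(4)---the interchange law, unitality, associativity, the hexagon, and $\symmt{X}{Y};\symmt{Y}{X}=\id{}$---involves only such trivially-traced arrows and whiskerings of arbitrary arrows, and therefore transports from $\Cat{C}$: each side is the image under $\fun{UTr_{B}}$, respectively under the symmetric monoidal unit $\eta$, of a diagram that already commutes in $\Cat{C}$ by its rig coherence (\Cref{fig:rigax}), using that composition of $\zero$-traced arrows in $\UTr(\Cat{C})$ agrees with composition in $\Cat{C}$ (\eqref{eq:seq in Utr} with $S=T=\zero$).

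The remaining content is \Cref{table:whisk-distr}. Those laws that do not feature a distributor---\eqref{eq:whisk:zero}, the right-whiskering halves of \eqref{eq:whisk:funct piu} and \eqref{eq:whisk:sum}, and \eqref{eq:whisk:symmp}---use only $\fun{R}_X$ strictly and reduce at once to rig coherence in $\Cat{C}$. The delicate laws are those carrying a distributor $\delta^{l}$: the left-whiskering half of \eqref{eq:whisk:funct piu}, the right-whiskering half of \eqref{eq:whisk:sum}, and \eqref{eq:whisk:dl}, \eqref{eq:whisk:Ldl}. These assert precisely that $\fun{L}_X$ is strong monoidal with the $\dl{X}{Y}{Z}$ as structure maps and that whiskering sends distributors to distributors; and here the two sides of each equation carry traces over \emph{different} auxiliary objects, since $\fun{R}_X$ enlarges the traced object $S$ to $S\per X$ while $\fun{L}_X$ only threads symmetries of trivial trace. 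I expect this trace-reindexing to be the main obstacle: reconciling the two trace boundaries requires sliding the distributors and symmetries across the trace box and then invoking the coherence axioms of \Cref{fig:rigax} inside it, by means of the sliding axiom and the uniformity congruence built into $\UTr(\Cat{C})$. A convenient way to package the entire computation is to establish the single identity $\fun{L}_X=\fun{UTr_{B}}(\LW{X}{-})$ directly---that is, to check that the conjugation-by-symmetry definition of \Cref{def:utr-whisk} coincides with the free lift of the strong functor $\LW{X}{-}$---after which every law of \Cref{table:whisk-distr} becomes an equality between $\fun{UTr_{B}}$-images of monoidal functors and monoidal natural transformations, and hence follows from functoriality of the free construction applied to the rig coherence of $\Cat{C}$.
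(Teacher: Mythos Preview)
Your plan to invoke \Cref{lem:whisk} and to realise $\fun{R}_X=\fun{UTr_{B}}(\RW{X}{-})$ is sound, and it does buy you functoriality of $\fun{R}_X$ and well-definedness under the congruences for free; this is cleaner than the paper's direct verification. However, you have mislocated the hard step.

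The interchange law does \emph{not} transport via functoriality of $\fun{UTr_{B}}$. Functoriality says $\fun{UTr_{B}}(F\dcomp G)=\fun{UTr_{B}}(F)\dcomp\fun{UTr_{B}}(G)$ for composable endofunctors $F,G$ of $\Cat{C}$; but the interchange law $\fun{R}_{X'}(f\mid S)\dcomp\fun{L}_Y(f'\mid S')=\fun{L}_X(f'\mid S')\dcomp\fun{R}_{Y'}(f\mid S)$ applies four \emph{different} whiskering functors to two \emph{different} arguments and composes the results in $\UTr(\Cat{C})$---there is no equation of functors in $\Cat{FIBCat}$ whose image under $\fun{UTr_{B}}$ is this identity. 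Concretely, the two sides carry traces over $(S\per X')\piu(S'\per Y)$ and $(S'\per X)\piu(S\per Y')$ respectively, which are distinct objects, so exactly the ``trace-reindexing obstacle'' you flag for the distributor laws already appears here. The paper's proof resolves this with two explicit applications of uniformity (first along a tensor-of-identities, then along a symmetry), sandwiched between rig coherence manipulations; this is the central computation, not a formality.

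Two further issues. First, $\LW{X}{-}$ is only \emph{strong} symmetric monoidal (coherence $\dl{X}{-}{-}$), hence not a morphism in $\Cat{FIBCat}$ as you have set it up, so writing $\fun{UTr_{B}}(\LW{X}{-})$ is not yet meaningful; you would need to extend $\fun{UTr_{B}}$ to a 2-functor on a 2-category admitting strong monoidal functors and monoidal transformations, and then argue that $(\Cat{C},\per,\uno)$ is a pseudomonoid there whose image is the desired structure on $\UTr(\Cat{C})$. That approach can be made to work, but it requires exactly the kind of trace-vs-product comparison that the paper computes by hand, so it is not a shortcut. Second, the target category $\Cat{UTFIBRig}$ also demands the coherence \eqref{eq:missing} between the $\piu$-trace and $\per$; the paper verifies this at the end and you should not omit it.
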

\begin{proof}
  Consider a finite-biproduct rig category \((\Cat{C}, \piu, \zero, \per, \uno)\) and the free uniformly traced fb category with idempotent convolution \(\fun{UTr_{B}}(\fun{Q}_{id}(\fun{J_{FB}}(\Cat{C})))\) over it.
  We want to show that it is a uniformly traced finite-idempotent-biproduct rig category.
  By construction, it is uniformly traced.
  By \Cref{prop:free-fib-uniform-trace}, it is a fb category with idempotent convolution.
  We show that it also has a monoidal structure that makes it a rig category.

  The monoidal structure is defined using the monoidal structure of \((\Cat{C}, \per, \uno)\).
  By definition, \(\fun{UTr_{B}}(\fun{Q}_{id}(\fun{J_{FB}}(\Cat{C})))\) has the same objects as \(\Cat{C}\), so the monoidal product on objects coincide with that of \(\Cat{C}\).
  The monoidal structure is more easily defined through whiskerings as in \Cref{def:utr-whisk}.
  The symmetries are also lifted from \((\Cat{C}, \per \uno)\), \(\symmt{X}{Y} = \st{\symmt{X}{Y}}{\zero}\), and therefore satisfy the hexagon equation and \(\symmt{X}{Y} \dcomp \symmt{Y}{X} = \st{\symmt{X}{Y} \dcomp \symmt{Y}{X}}{\zero} = \st{\id{}}{\zero} = \id{}\).

  We check that the whiskerings preserve identities, using that \((\Cat{C}, \piu, \zero, \per, \uno)\) is a distributive category, that the symmetries are isomorphisms and \Cref{def:utr-whisk}.
  \begin{align*}
    & \RW{X}{\id{Y} \mid \zero} && \LW{X}{\id{Y} \mid \zero} \\
    &=  \st{\RW{X}{\id{Y}}}{\zero \per X} && =  \st{\symmt{X}{Y}}{\zero} ; \RW{X}{\id{Y} \mid \zero} ; \st{\symmt{Y}{X}}{\zero} \\
    &=  \st{\id{Y \per X}}{\zero \per X} && =  \st{\symmt{X}{Y}}{\zero} ; \st{\id{Y \per X}}{\zero} ; \st{\symmt{Y}{X}}{\zero} \\
    &=  \st{\id{Y \per X}}{\zero} && =  \st{\id{X \per Y}}{\zero}
  \end{align*}
  The whiskerings also preserve compositions.
  For $(f \mid S) \colon Y \to Z$ and $(g \mid T) \colon Z \to W$, the right whiskering preserves their composition.
  \begin{align*}
    & \RW{X}{\st{f}{S} ; \st{g}{T}} \\
    & = \RW{X}{(\symmp{S}{T} \piu \id{Y}) ; (\id{T} \piu f) ; (\symmp{T}{S} \piu \id{Z}) ; (\id{S} \piu g) \mid S \piu T} \tag{\ref{eq:seq in Utr}} \\
    & = \st{\RW{X}{(\symmp{S}{T} \piu \id{Y}) ; (\id{T} \piu f) ; (\symmp{T}{S} \piu \id{Z}) ; (\id{S} \piu g)}}{(S \piu T) \per X} \tag{Definition~\ref{def:utr-whisk}} \\
    & = (\RW{X}{(\symmp{S}{T} \piu \id{Y}) ; (\id{T} \piu f) ; (\symmp{T}{S} \piu \id{Z}) ; (\id{S} \piu g)} \tag{Table~\ref{tab:equationsonobject}} \\
    & \qquad \mid (S \per X) \piu (T \per X))\\
    & = ( \RW{X}{\symmp{S}{T} \piu \id{Y}} ; \RW{X}{\id{T} \piu f} ; \RW{X}{\symmp{T}{S} \piu \id{Z}} ; \RW{X}{\id{S} \piu g}  \tag{Functoriality in $\Cat{C}$} \\
    & \qquad \mid (S \per X) \piu (T \per X)) \\
    & = ((\symmp{S\per X}{T \per X} \piu \id{Y \per X}) ; (\id{T \per X} \piu \RW{X}{f}) ; (\symmp{T \per X}{S \per X} \piu \id{Z \per X})  \tag{\ref{eq:whisk:funct piu}, \ref{eq:whisk:symmp} in $\Cat{C}$} \\
    & \qquad ; (\id{S \per X} \piu \RW{X}{g}) \mid (S \per X) \piu (T \per X))\\
    & = \st{\RW{X}{f}}{S \per X} ; \st{\RW{X}{g}}{T \per X} \tag{\ref{eq:seq in Utr}} \\
    & = \RW{X}{f \mid S} ; \RW{X}{g \mid T} \tag{Definition~\ref{def:utr-whisk}}
  \end{align*}
  The left whiskering also preserves their composition.
  \begin{align*}
    & \LW{X}{\st{f}{S} ; \st{g}{T}}\\
    &= \st{\symmt{X}{Y}}{\zero} ; \RW{X}{\st{f}{S} ; \st{g}{T}} ; \st{\symmt{W}{X}}{\zero} \tag{Definition~\ref{def:utr-whisk}} \\
    &= \st{\symmt{X}{Y}}{\zero} ; \RW{X}{f \mid S} ; \RW{X}{g \mid T} ; \st{\symmt{W}{X}}{\zero} \tag{Functoriality of $\fun{R}$} \\
    &= \st{\symmt{X}{Y}}{\zero} ; \RW{X}{f \mid S} ; \st{\symmt{Z}{X}}{\zero} ; \st{\symmt{X}{Z}}{\zero} ; \RW{X}{g \mid T} ; \st{\symmt{W}{X}}{\zero} \tag{Symmetries} \\
    &= \LW{X}{f \mid S} ; \LW{X}{g \mid T} \tag{Definition~\ref{def:utr-whisk}}
  \end{align*}
  The symmetries are natural transformations \(\symmt{-}{Y} \colon \RW{Y}{-} \to \LW{Y}{-}\) because they are isomorphisms and by the definition of whiskerings (\Cref{def:utr-whisk}).
  \[\RW{X}{f \mid S} ; \st{\symmt{Z}{X}}{\zero}
    = \st{\symmt{Y}{X}}{\zero} ; \st{\symmt{X}{Y}}{\zero} ; \RW{X}{f \mid S} ; \st{\symmt{Z}{X}}{\zero} = \st{\symmt{Y}{X}}{\zero} ; \LW{X}{f \mid S}\]

  We prove that the whiskerings satisfy the interchange law using uniformity of the trace in \(\fun{UTr_{B}}(\fun{Q}_{id}(\fun{J_{FB}}(\Cat{C})))\).
  Let $f_1 \colon S_1 \piu X_1 \to S_1 \piu Y_1$ and $f_2 \colon S_2 \piu X_2 \to S_2 \piu Y_2$ be morphisms in $\Cat{C}$ and observe that the following holds by the interchange law in $\Cat{C}$:
  \[ \st{\LW{S_1 \piu X_1}{f_2} ; \RW{S_2 \piu Y_2}{f_1}}{\zero} = \st{\RW{S_2 \piu X_2}{f_1} ; \LW{S_1 \piu Y_1}{f_2}}{\zero}. \]
  Using string diagrams for $(\Cat{C}, \piu, \zero)$, the equality above translates into the equality between diagrams below:
  \begin{align*}
    & \Bigg( \scalebox{1.9}{
    % [inline block 34: 12 envs, 38645 chars -> data_tex | \begin{tikzpicture} 	\begin{pgfonlayer}{nodelayer}...]

}
}
      \Bigg| (S_1 \per X_2) \piu (Y_1 \per S_2) \Bigg),
  \end{align*}
  which, by~\eqref{eq:seq in Utr} and Definition~\ref{def:utr-whisk}, corresponds to the equality below:
  \begin{align*}
    & \st{\delta^l_{X_1, S_2, X_2} ; \LW{X_1}{f_2} ; \delta^l_{X_1,S_2,Y_2}}{X_1 \per S_2} ; \RW{Y_2}{f_1 \mid S_1} \\
    &= \RW{X_2}{f_1 \mid S_1} ; \st{\delta^{-l}_{Y_1, S_2, X_2} ; \LW{Y_1}{f_2} ; \delta^l_{Y_1,S_2,Y_2}}{Y_1 \per S_2}.
  \end{align*}
  To conclude the proof of the interchange law, observe that for every $(f \mid S) \colon Y \to Z$ it holds that
  \begin{equation}\label{eq:LRaux}
    \LW{X}{f \mid S} = \st{\Idl{X}{S}{Y} ; \LW{X}{f} ; \dl{X}{S}{Z}}{X \per S}
  \end{equation}
  as shown below:
  \begin{align*}
    & \LW{X}{f \mid S}\\
    &= \st{\symmt{X}{Y}}{\zero} ; \RW{X}{f \mid S} ; \st{\symmt{Z}{X}}{\zero} \tag{Definition~\ref{def:utr-whisk}} \\
    &= \st{\symmt{X}{Y}}{\zero} ; \st{\RW{X}{f}}{S \per X} ; \st{\symmt{Z}{X}}{\zero} \tag{Definition~\ref{def:utr-whisk}} \\
    &= \st{ (\id{S \per X} \piu \symmt{X}{Y}) ; \RW{X}{f} ; (\id{S \per X} \piu \symmt{Z}{X}) }{S \per X}  \tag{\ref{eq:seq in Utr}} \\
    &= \st{ (\symmt{X}{S} \piu \symmt{X}{Y}) ; \RW{X}{f} ; (\symmt{S}{X} \piu \symmt{Z}{X}) }{X \per S}  \tag{\Ref{ax:trace:sliding}} \\
    &= \st{ (\symmt{X}{S} \piu \symmt{X}{Y}) ; \symmt{S \piu Y}{X} ; \LW{X}{f} ; \symmt{X}{S \piu Z} ; (\symmt{S}{X} \piu \symmt{Z}{X}) }{X \per S}  \tag{Symmetries in $\Cat{C}$} \\
    &= \st{\Idl{X}{S}{Y} ; \LW{X}{f} ; \dl{X}{S}{Z}}{X \per S} \tag{\ref{eq:rigax1}}
  \end{align*}
  As monoidal unit, we take the monoidal unit \(\uno\) of \(\Cat{C}\) and prove that it serves as monoidal unit in \(\fun{UTr_{B}}(\fun{Q}_{id}(\fun{J_{FB}}(\Cat{C})))\) as well.
  For $(f \mid S) \colon X \to Y$, we use \Cref{def:utr-whisk} and that \(\uno\) is the unit for \((\per)\) in \(\Cat{C}\).
  \begin{align*}
    & \RW{\uno}{f \mid S} && \LW{\uno}{f \mid S} \\
    &= \st{\RW{\uno}{f}}{S \per \uno} && = \symmt{\uno}{X} \dcomp \RW{\uno}{f \mid S} \dcomp \symmt{Y}{\uno}\\
    &= \st{f}{S \per \uno} && = \symmt{\uno}{X} \dcomp \st{f}{S} \dcomp \symmt{Y}{\uno}\\
    &= \st{f}{S} && = \st{f}{S}
  \end{align*}
  We prove that the whiskerings satisfy associativity.
  For a morphism $(f \mid S) \colon Z \to W$, we check the three associativity equations.
  For the right whiskering, we use \Cref{def:utr-whisk} and associativity of the right whiskering in \((\Cat{C}, \per, \uno)\).
  \[\RW{X}{\RW{Y}{f \mid S}}= \RW{X}{\st{\RW{Y}{f}}{S \per Y}} = \st{\RW{X}{\RW{Y}{f}}}{S \per Y \per X} = \st{\RW{Y \per X}{f}}{S \per Y \per X} \]
  For the left and right whiskering, we use \Cref{def:utr-whisk}, associativity of the whiskerings in \((\Cat{C}, \per, \uno)\) and the sliding axiom of the trace.
  \begin{align*}
    &  \LW{X}{\RW{Y}{f \mid S}} \\
    & = \st{\symmt{X}{Z \per Y}}{\zero} ; \RW{X}{\RW{Y}{f \mid S}} ; \st{\symmt{W \per Y}{X}}{\zero} \tag{Definition~\ref{def:utr-whisk}} \\
    & = \st{\symmt{X}{Z \per Y}}{\zero} ; \RW{X}{\RW{Y}{f} \mid S \per Y} ; \st{\symmt{W \per Y}{X}}{\zero} \tag{Definition~\ref{def:utr-whisk}} \\
    & = \st{\symmt{X}{Z \per Y}}{\zero} ; \st{\RW{X}{\RW{Y}{f}}}{S \per Y \per X} ; \st{\symmt{W \per Y}{X}}{\zero} \tag{Definition~\ref{def:utr-whisk}} \\
    & = \st{\symmt{X}{Z \per Y}}{\zero} ; \st{ \symmt{(S \piu Z)\per Y}{X} ; \LW{X}{\RW{Y}{f}} ; \symmt{X}{(S \piu W) \per Y} }{S \per Y \per X} \tag{Symmetries in $\Cat{C}$} \\
    & \qquad ; \st{\symmt{W \per Y}{X}}{\zero} \\
    & = \st{\symmt{X}{Z \per Y}}{\zero} ; \st{ \symmt{(S \piu Z)\per Y}{X} ; \RW{Y}{\LW{X}{f}} ; \symmt{X}{(S \piu W) \per Y} }{S \per Y \per X} \tag{Whiskering in $\Cat{C}$} \\
    & \qquad ; \st{\symmt{W \per Y}{X}}{\zero} \\
    & = \st{\symmt{X}{Z \per Y}}{\zero} ; ((\symmt{S \per Y}{X} \piu \symmt{Z \per Y}{X}) ; \Idl{X}{S \per Y}{Z \per Y} ; \RW{Y}{\LW{X}{f}}  \tag{\ref{eq:rigax1}} \\
    & \qquad ; \dl{X}{S \per Y}{W \per Y} ; (\symmt{X}{S \per Y} \piu \symmt{X}{W \per Y}) \mid S \per Y \per X) ; \st{\symmt{W \per Y}{X}}{\zero} \\
    & = ((\id{S \per Y \per X} \piu \symmt{X}{Z \per Y}) ; (\symmt{S \per Y}{X} \piu \symmt{Z \per Y}{X}) ; \Idl{X}{S \per Y}{Z \per Y} ; \RW{Y}{\LW{X}{f}} \tag{\ref{eq:seq in Utr}} \\
    & \qquad  ; \dl{X}{S \per Y}{W \per Y} ; (\symmt{X}{S \per Y} \piu \symmt{X}{W \per Y}) ; (\id{S \per Y \per X} \piu \symmt{W \per Y}{X}) \mid S \per Y \per X )\\
    & = ((\symmt{S \per Y}{X} \piu \id{X \per Z \per Y}) ; \Idl{X}{S \per Y}{Z \per Y} ; \RW{Y}{\LW{X}{f}} \tag{Symmetries} \\
    & \qquad  ; \dl{X}{S \per Y}{W \per Y} ; ( \symmt{X}{S \per Y} \piu \id{X \per W \per Y} ) \mid S \per Y \per X) \\
    & = { \st{ \Idl{X}{S \per Y}{Z \per Y} ; \RW{Y}{\LW{X}{f}} ; \dl{X}{S \per Y}{W \per Y} }{X \per S \per Y} } \tag{\Ref{ax:trace:sliding}} \\
    & = { \st{ \RW{Y}{\Idl{X}{S}{Z}} ; \RW{Y}{\LW{X}{f}} ; \RW{Y}{\dl{X}{S}{W}} }{X \per S \per Y} } \tag{\ref{eq:whisk:dl} in $\Cat{C}$} \\
    & = { \st{ \RW{Y}{\Idl{X}{S}{Z} ; \LW{X}{f} ; \dl{X}{S}{W}} }{X \per S \per Y} } \tag{Functoriality in $\Cat{C}$} \\
    & = \RW{Y}{\Idl{X}{S}{Z} ; \LW{X}{f} ; \dl{X}{S}{W} \mid X \per S} \tag{Definition~\ref{def:utr-whisk}} \\
    & = \RW{Y}{\LW{X}{f \mid S}} \tag{\ref{eq:LRaux}}
  \end{align*}
  For the left whiskering, we use \Cref{def:utr-whisk}.
  \begin{align*}
    & \LW{X}{\LW{Y}{f \mid S}}\\
    &=  \st{\symmt{X}{Y \per Z}}{\zero} ; \RW{X}{\LW{Y}{f \mid S}} ; \st{\symmt{Y \per W}{X}}{\zero} \tag{Definition~\ref{def:utr-whisk}} \\
    &=  \st{\symmt{X}{Y \per Z}}{\zero} ; \LW{Y}{\RW{X}{f \mid S}} ; \st{\symmt{Y \per W}{X}}{\zero} \tag{Associativity of $\fun{LR}$} \\
    &=  \st{\symmt{X}{Y \per Z}}{\zero} ; \st{\symmt{Y}{Z \per X}}{\zero} ; \RW{Y}{\RW{X}{f \mid S}} \tag{Definition~\ref{def:utr-whisk}} \\
    & \qquad ; \st{\symmt{W \per X}{Y}}{\zero} ; \st{\symmt{Y \per W}{X}}{\zero} \\
    &=  \st{\symmt{X}{Y \per Z}}{\zero} ; \st{\symmt{Y}{Z \per X}}{\zero} ; \RW{X \per Y}{f \mid S}  \tag{Associativity of $\fun{RR}$} \\
    & \qquad ; \st{\symmt{W \per X}{Y}}{\zero} ; \st{\symmt{Y \per W}{X}}{\zero} \\
    &=  \st{\symmt{X \per Y}{Z}}{\zero} ; \RW{X \per Y}{f \mid S} ; \st{\symmt{W}{X \per Y}}{\zero} \tag{Symmetries} \\
    &=  \LW{X \per Y}{f \mid S} \tag{Definition~\ref{def:utr-whisk}}
  \end{align*}

  This shows that \((\fun{UTr_{B}}(\fun{Q}_{id}(\fun{J_{FB}}(\Cat{C}))), \per, \uno)\) is a symmetric strict monoidal category.

  We now need to show that the monoidal structure distributes over the biproducts.
  For this, we check the equations in \Cref{table:whisk-distr}.\\
  \textsc{Equation}~\eqref{eq:whisk:zero}.
  For a morphism $(f \mid S) \colon X \to Y$, we check that whiskering with \(\zero\) is annihilating, using \Cref{def:utr-whisk} and that \(\zero\) is annihilating in \(\Cat{C}\).
  \begin{align*}
    &\RW{\zero}{f \mid S} && \LW{\zero}{f \mid S} \\
    &= \st{\RW{\zero}{f}}{S \per \zero} && = \st{\symmt{X}{\zero}}{\zero} ; \RW{\zero}{f \mid S} ; \st{\symmt{\zero}{Y}}{\zero} \\
    &= \st{\id{\zero}}{S \per \zero} && = \st{\symmt{X}{\zero}}{\zero} ; \st{\id{\zero}}{\zero} ; \st{\symmt{\zero}{Y}}{\zero} \\
    &= \st{\id{\zero}}{\zero} && = \st{\id{\zero}}{\zero}
  \end{align*}
  \textsc{Equation}~\eqref{eq:whisk:funct piu}.
  For two morphisms, $(f_1 \mid S_1) \colon X_1 \to Y_1$ and $(f_2 \mid S_2) \colon X_2 \to Y_2$, we show that right whiskering preserves their biproduct.
  \begin{align*}
    & \RW{X}{\st{f_1}{S_1} \piu \st{f_2}{S_2}} \\
    & = \RW{X}{ (\id{S_1} \piu \symmp{S_2}{X_1} \piu \id{X_2}) ; (f_1 \piu f_2) ; (\id{S_1} \piu \symmp{Y_1}{S_2} \piu \id{Y_2}) \mid S_1 \piu S_2 } \tag{\ref{eq:per in Utr}} \\
    & = (\RW{X}{(\id{S_1} \piu \symmp{S_2}{X_1} \piu \id{X_2}) ; (f_1 \piu f_2) ; (\id{S_1} \piu \symmp{Y_1}{S_2} \piu \id{Y_2})} \tag{Definition~\ref{def:utr-whisk}} \\
    & \qquad \mid (S_1 \piu S_2) \per X)\\
    & = ((\id{S_1 \per X} \piu \symmp{S_2 \per X}{X_1 \per X} \piu \id{X_2 \per X}) ; (\RW{X}{f_1} \piu \RW{X}{f_2}) \tag{\ref{eq:whisk:funct piu}, \ref{eq:whisk:symmp} in $\Cat{C}$} \\
    & \qquad ; (\id{S_1 \per X} \piu \symmp{Y_1 \per X}{S_2 \per X} \piu \id{Y_2 \per X}) \mid (S_1 \piu S_2) \per X)\\
    & = ((\id{S_1 \per X} \piu \symmp{S_2 \per X}{X_1 \per X} \piu \id{X_2 \per X}) ; (\RW{X}{f_1} \piu \RW{X}{f_2}) \tag{Symmetries} \\
    & \qquad ; (\id{S_1 \per X} \piu \symmp{Y_1 \per X}{S_2 \per X} \piu \id{Y_2 \per X}) \mid (S_1 \per X) \piu (S_2 \per X))\\
    & = \RW{X}{f_1 \mid S_1} \piu \RW{X}{f_2 \mid S_2} \tag{\ref{eq:per in Utr}}
  \end{align*}
  Similarly, we show that left whiskering preserves their biproduct.
  \begin{align*}
    & \LW{X}{\st{f_1}{S_1} \piu \st{f_2}{S_2}} \\
    & = \st{\symmt{X}{X_1 \piu X_2}}{\zero} ; \RW{X}{\st{f_1}{S_1} \piu \st{f_2}{S_2}} ; \st{\symmt{Y_1 \piu Y_2}{X}}{\zero} \tag{Definition~\ref{def:utr-whisk}} \\
    & = \st{\symmt{X}{X_1 \piu X_2}}{\zero} ; (\RW{X}{f_1 \mid S_1} \piu \RW{X}{f_2 \mid S_2})  ; \st{\symmt{Y_1 \piu Y_2}{X}}{\zero} \tag{\ref{eq:whisk:funct piu}.2} \\
    & = \st{\dl{X}{X_1}{X_2} ; (\symmt{X}{X_1} \piu \symmt{X}{X_2})}{\zero} ; (\RW{X}{f_1 \mid S_1} \piu \RW{X}{f_2 \mid S_2}) \tag{\ref{eq:rigax1}} \\
    & \qquad ; \st{ (\symmt{Y_1}{X} \piu \symmt{Y_2}{X}) ; \Idl{X}{Y_1}{Y_2} }{\zero} \\
    & = \st{\dl{X}{X_1}{X_2}}{\zero} ; (\st{\symmt{X}{X_1}}{\zero} \piu \st{\symmt{X}{X_2}}{\zero}) ; (\RW{X}{f_1 \mid S_1} \piu \RW{X}{f_2 \mid S_2}) \tag{\ref{eq:seq in Utr}, \ref{eq:per in Utr}} \\
    & \qquad ; (\st{\symmt{Y_1}{X}}{\zero} \piu \st{\symmt{Y_2}{X}}{\zero}) ; \st{\Idl{X}{Y_1}{Y_2}}{\zero} \\
    & = \st{\dl{X}{X_1}{X_2}}{\zero} ; ( ( \st{\symmt{X}{X_1}}{\zero} ; \RW{X}{f_1 \mid S_1} ; \st{\symmt{Y_1}{X}}{\zero} )  \tag{Symmetries} \\
    & \qquad \piu (\st{\symmt{X}{X_2}}{\zero} ; \RW{X}{f_2 \mid S_2} ; \st{\symmt{Y_2}{X}}{\zero} ) ) ; \st{\Idl{X}{Y_1}{Y_2}}{\zero}\\
    & = \st{\dl{X}{X_1}{X_2}}{\zero} ; ( \LW{X}{f_1 \mid S_1} \piu \LW{X}{f_2 \mid S_2} ) ; \st{\Idl{X}{Y_1}{Y_2}}{\zero} \tag{Definition~\ref{def:utr-whisk}}
  \end{align*}
  \textsc{Equation}~\eqref{eq:whisk:sum}.
  For a morphism $(f \mid S) \colon Z \to W$, we show that right whiskering with a biproduct is, up to isomorphisms, the biproduct of right whiskerings.
  \begin{align*}
    & \RW{X \piu Y}{f \mid S} \\
    & = \st{\RW{X \piu Y}{f}}{S \per (X \piu Y)} \tag{Definition~\ref{def:utr-whisk}} \\
    & = \st{ \dl{S \piu Z}{X}{Y} ; (\RW{X}{f} \piu \RW{Y}{f}); \Idl{S \piu W}{X}{Y} }{S \per (X \piu Y)} \tag{\ref{eq:whisk:sum}.2 in $\Cat{C}$} \\
    & = ((\dl{S}{X}{Y} \piu \dl{Z}{X}{Y}) ; (\id{S \per X} \piu \symmp{S \per Y}{Z \per X} \piu \id{Z \per Y}) ; (\RW{X}{f} \piu \RW{Y}{f}) \tag{\ref{eq:rigax5}}  \\
    & \qquad ; (\id{S \per X} \piu \symmp{S \per Y}{W \per X} \piu \id{W \per Y}) ; (\Idl{S}{X}{Y} \piu \Idl{W}{X}{Y}) \mid S \per (X \piu Y)) \\
    & = ((\id{S \per X} \piu \id{S \per Y} \piu \dl{Z}{X}{Y}) ; (\id{S \per X} \piu \symmp{S \per Y}{Z \per X} \piu \id{Z \per Y}) ; (\RW{X}{f} \piu \RW{Y}{f}) \tag{\Ref{ax:trace:sliding}} \\
    & \qquad ; (\id{S \per X} \piu \symmp{S \per Y}{W \per X} \piu \id{W \per Y}) ; (\id{S \per X} \piu \id{S \per Y} \piu \Idl{W}{X}{Y}) \mid (S \per X) \piu (S \per Y)) \\
    & = \st{\dl{Z}{X}{Y}}{\zero} ; ((\id{S \per X} \piu \symmp{S \per Y}{Z \per X} \piu \id{Z \per Y}) ; (\RW{X}{f} \piu \RW{Y}{f}) \tag{\ref{eq:seq in Utr}} \\
    & \qquad ; (\id{S \per X} \piu \symmp{S \per Y}{W \per X} \piu \id{W \per Y}) \mid (S \per X) \piu (S \per Y)) ; \st{\Idl{W}{X}{Y}}{\zero} \\
    & = { \st{\dl{Z}{X}{Y}}{\zero} ;  ( \st{\RW{X}{f}}{S \per X} \piu \st{\RW{Y}{f}}{S \per Y} )  ; \st{\Idl{W}{X}{Y}}{\zero} } \tag{\ref{eq:per in Utr}} \\
    & = { \st{\dl{Z}{X}{Y}}{\zero} ;  ( \RW{X}{f \mid S} \piu \RW{Y}{f \mid S} )  ; \st{\Idl{W}{X}{Y}}{\zero} } \tag{Definition~\ref{def:utr-whisk}}
  \end{align*}
  Similarly, we show that left whiskering with a biproduct is the biproduct of left whiskerings.
  \begin{align*}
    & \LW{X \piu Y}{f \mid S} \\
    & = \st{\symmt{X \piu Y}{Z}}{\zero} ; \RW{X \piu Y}{f \mid S} ; \st{\symmt{W}{X \piu Y}}{\zero} \tag{Definition~\ref{def:utr-whisk}} \\
    & = \st{\symmt{X \piu Y}{Z}}{\zero} ; ( \st{\dl{Z}{X}{Y}}{\zero} ;  ( \RW{X}{f \mid S} \piu \RW{Y}{f \mid S} )  ; \st{\Idl{W}{X}{Y}}{\zero} ) ; \st{\symmt{W}{X \piu Y}}{\zero} \tag{\ref{eq:whisk:sum}.2} \\
    & = \st{ (\symmt{X}{Z} \piu \symmt{Y}{Z}) ; \Idl{Z}{X}{Y} }{\zero} ; ( \st{\dl{Z}{X}{Y}}{\zero} ;  ( \RW{X}{f \mid S} \piu \RW{Y}{f \mid S} )  ; \st{\Idl{W}{X}{Y}}{\zero} ) \tag{\ref{eq:rigax1}} \\
    & \qquad ; \st{\dl{W}{X}{Y} ; (\symmt{W}{X} \piu \symmt{W}{Y})}{\zero} \\
    & = \st{ (\symmt{X}{Z} \piu \symmt{Y}{Z}) ; \Idl{Z}{X}{Y} ; \dl{Z}{X}{Y} }{\zero} ; ( ( \RW{X}{f \mid S} \piu \RW{Y}{f \mid S} ) ) \tag{\ref{eq:seq in Utr}} \\
    & \qquad  ; \st{\Idl{W}{X}{Y} ; \dl{W}{X}{Y} ; (\symmt{W}{X} \piu \symmt{W}{Y})}{\zero} \\
    & = \st{ \symmt{X}{Z} \piu \symmt{Y}{Z}}{\zero} ; ( \RW{X}{f \mid S} \piu \RW{Y}{f \mid S} ) ; \st{\symmt{W}{X} \piu \symmt{W}{Y}}{\zero} \tag{$\delta^l$ iso} \\
    & = ( \st{\symmt{X}{Z}}{\zero} \piu \st{\symmt{Y}{Z}}{\zero} ) ; ( \RW{X}{f \mid S} \piu \RW{Y}{f \mid S} ) ; ( \st{\symmt{W}{X}}{\zero} \piu \st{\symmt{W}{Y}}{\zero} ) \tag{\ref{eq:per in Utr}} \\
    & = ( \st{\symmt{X}{Z}}{\zero} ; \RW{X}{f \mid S} ; \st{\symmt{W}{X}}{\zero} ) \piu ( \st{\symmt{Y}{Z}}{\zero} ; \RW{Y}{f \mid S} ; \st{\symmt{W}{Y}}{\zero} ) \tag{Symmetries} \\
    & = \LW{X}{f \mid S} \piu \LW{Y}{f \mid S} \tag{Definition~\ref{def:utr-whisk}}
  \end{align*}
  We prove the final three equations.\\
  \textsc{Equation}~\eqref{eq:whisk:symmp}.
  \begin{align*}
    \RW{X}{\symmp{Y}{Z} \mid \zero}
    &= \st{\RW{X}{\symmp{Y}{Z}}}{\zero \per X} \tag{Definition~\ref{def:utr-whisk}} \\
    &= \st{ \symmp{Y \per X}{Z \per X} }{\zero \per X} \tag{\ref{eq:whisk:symmp} in $\Cat{C}$} \\
    &= \st{ \symmp{Y \per X}{Z \per X} }{\zero} \tag{Table~\ref{tab:equationsonobject}}
  \end{align*}
  \textsc{Equation}~\eqref{eq:whisk:dl}.
  \begin{align*}
    \RW{X}{\dl{Y}{Z}{W} \mid \zero}
    &= \st{\RW{X}{\dl{Y}{Z}{W}}}{\zero \per X} \tag{Definition~\ref{def:utr-whisk}} \\
    &= \st{\dl{Y}{Z \per X}{W \per X}}{\zero \per X} \tag{\ref{eq:whisk:dl} in $\Cat{C}$} \\
    &= \st{\dl{Y}{Z \per X}{W \per X}}{\zero} \tag{Table~\ref{tab:equationsonobject}}
  \end{align*}
  \textsc{Equation}~\eqref{eq:whisk:Ldl}.
  \begin{align*}
    & \LW{X}{\dl{Y}{Z}{W} \mid \zero}\\
    &= \st{\symmt{X}{Y \per (Z \piu W)}}{\zero} ; \RW{X}{\dl{Y}{Z}{W} \mid \zero} ; \st{\symmt{(Y  \per Z) \piu (Y \per W)}{X}}{\zero} \tag{Definition~\ref{def:utr-whisk}} \\
    &= \st{\symmt{X}{Y \per (Z \piu W)}}{\zero} ; \st{\RW{X}{\dl{Y}{Z}{W}}}{\zero \per X} ; \st{\symmt{(Y  \per Z) \piu (Y \per W)}{X}}{\zero} \tag{Definition~\ref{def:utr-whisk}} \\
    &= \st{\symmt{X}{Y \per (Z \piu W)}}{\zero} ; \st{\RW{X}{\dl{Y}{Z \per X}{W \per X}}}{\zero} ; \st{\symmt{(Y  \per Z) \piu (Y \per W)}{X}}{\zero} \tag{Table~\ref{tab:equationsonobject}} \\
    &= \st{\symmt{X}{Y \per (Z \piu W)} ; \RW{X}{\dl{Y}{Z \per X}{W \per X}} ; \symmt{(Y  \per Z) \piu (Y \per W)}{X}}{\zero} \tag{\ref{eq:seq in Utr}} \\
    &= \st{\LW{X}{\dl{Y}{Z \per X}{W \per X}} ; \symmt{X}{(Y  \per Z) \piu (Y \per W)} ; \symmt{(Y  \per Z) \piu (Y \per W)}{X} }{\zero} \tag{Symmetries in $\Cat{C}$} \\
    &= \st{\LW{X}{\dl{Y}{Z \per X}{W \per X}} }{\zero} \tag{Symmetries} \\
    &= \st{\dl{X \per Y}{Z}{W} ; \Idl{Y}{X \per Z}{X \per W}}{\zero} \tag{\ref{eq:whisk:Ldl} in $\Cat{C}$} \\
    &= \st{\dl{X \per Y}{Z}{W}}{\zero} ; \st{\Idl{Y}{X \per Z}{X \per W}}{\zero} \tag{\ref{eq:seq in Utr}}
  \end{align*}
  We conclude by proving that the traced and monoidal structure interact as in \eqref{eq:missing} so that we obtain a traced rig category.
  Consider a morphism \((f \mid S) \colon T \piu X \to T \piu Y\) in \(\fun{UTr_{B}}(\fun{Q}_{id}(\fun{J_{FB}}(\Cat{C})))\).
  \begin{align*}
    (\trace_{T}(f \mid S)) \per \id{Z}
    & = (f \mid S \piu T) \per \id{Z}\\
    & = (f \per \id{Z} \mid (S \piu T) \per Z)\\
    & = \trace_{T \per Z}(\delta^{-r} \dcomp (f \per \id{Z} \mid S \per Z) \dcomp \delta^{r})
  \end{align*}
\end{proof}

\begin{prop}\label{prop:free-kleene-rig}
  The free Kleene bicategory on a uniformly traced finite-idempotent-biproducts rig category is also uniformly traced finite-idempotent-biproducts rig.
\end{prop}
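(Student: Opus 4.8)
The plan is to mimic the construction of \Cref{prop:free-kleene}, where the free Kleene bicategory is obtained as the quotient $\fun{Q_K}(\Cat{C}) = \Cat{C}/\Kcong$ of a uniformly traced fb category with idempotent convolution $\Cat{C}$ by the Kleene congruence of Definition~\ref{def:kleene-congruence}. Starting now from a uniformly traced finite-idempotent-biproduct \emph{rig} category $\Cat{C}$, its underlying uniformly traced fb category carries, by the rig structure of $\Cat{C}$ (cf.\ \Cref{prop:free-ut-fib-rig}), the whiskerings $\LW{X}{-},\RW{X}{-}$ of \Cref{lem:whisk}. The goal is to show that these whiskerings descend to $\Cat{C}/\Kcong$, turning it into a Kleene rig category with $\eta\colon\Cat{C}\to\Cat{C}/\Kcong$ a strict rig functor; the adjunction $\fun{Q_{KR}}\dashv\fun{U}$ then follows by restricting $\fun{Q_K}\dashv\fun{\iota_K}$ along the faithful forgetful functors $\fun{J_{UT}}$ and $\fun{J_K}$ via \Cref{lem:restricting-adjunctions}, exactly as in the previous propositions.

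The crux is a single lemma: the Kleene congruence is a \emph{rig} congruence, i.e.\ $f\Kleq g$ implies $\LW{X}{f}\Kleq\LW{X}{g}$ and $\RW{X}{f}\Kleq\RW{X}{g}$ for every object $X$ (and hence the same for $\Kcong$). I would prove this by induction on the derivation of $\Kleq$ using the rules~\eqref{eq:posetuniformcong}. For the base rule $(\leq)$ I use that each whiskering is monotone: being a rig functor it preserves $\piu$ and the biproduct (co)monoids up to distributors, whence $\RW{X}{f\sqcup g}=\RW{X}{f}\sqcup\RW{X}{g}$ (the distributors cancel), so by \Cref{lemma:order-adjointness} it preserves the induced order. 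For the rule $(\id{})$, whiskering $\RW{X}{\trace_Y(\codiag{Y};\diag{Y})}$ and pushing the whiskering through the trace by~\eqref{eq:missing} (and by \Cref{lemma:leftwisktrace} for $\LW{X}{-}$) gives, up to the distributors of \Cref{table:whisk-distr}, the term $\trace_{Y\per X}(\codiag{Y\per X};\diag{Y\per X})$, which is again $\Kleq\id{}$ by the same rule. The congruence rules $(\dcomp)$ and $(\piu)$ are immediate from functoriality of the whiskerings and their compatibility with $\piu$.

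The main obstacle will be the two posetal uniformity rules $(ut\text{-}l)$ and $(ut\text{-}r)$, since they involve traces and their premises mix $\Kcong$ on the connecting morphism $u,v$ with $\Kleq$ on the whiskered data. Here I would again transport the whiskering across the trace using~\eqref{eq:missing} and \Cref{lemma:leftwisktrace}, apply the inductive hypotheses to rewrite both premises in whiskered form, and then re-apply the \emph{same} uniformity rule to the whiskered diagram; the coherence of the distributors with the symmetries recorded in \Cref{table:whisk-distr} is what guarantees that the required typing of the connecting morphism and the shape of the premises are preserved under whiskering.

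Once the rig congruence lemma is in hand, the rig coherence axioms of \Cref{lem:whisk} and \Cref{table:whisk-distr} descend to $\Cat{C}/\Kcong$ because $\eta$ is identity-on-objects and full, and the law~\eqref{eq:missing} holds in the quotient because it holds in $\Cat{C}$ and is preserved by $\eta$; together with the Kleene-bicategory structure supplied by \Cref{prop:free-kleene}, this exhibits $\Cat{C}/\Kcong$ as a Kleene rig category. Restricting the adjunction along $\fun{J_{UT}}$ and $\fun{J_K}$ via \Cref{lem:restricting-adjunctions}, exactly as for $\fun{Q_K}$, then completes the proof.
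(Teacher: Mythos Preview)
Your proposal is correct and follows essentially the same approach as the paper: both show that the Kleene congruence $(\Kcong)$ is compatible with $\per$ by induction on the derivation of $\Kleq$, handling the base cases via distributivity of $\per$ over $\sqcup$ and the $(\id{})$ and uniformity cases by pushing $\per$ through the trace via~\eqref{eq:missing}. The only cosmetic difference is that you phrase the induction in terms of whiskerings $\LW{X}{-},\RW{X}{-}$ whereas the paper works directly with $f \per g \Kleq f' \per g$ for a fixed $g$; these are equivalent, and your remark about restricting the adjunction via \Cref{lem:restricting-adjunctions} is exactly how the paper subsequently packages this result into the functor $\fun{Q_{KR}}$.
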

\begin{proof}
  Consider a uniformly traced finite-idempotent-biproduct rig category \(\Cat{C}\) and the free Kleene bicategory \(\fun{Q_K}(\fun{J_{UT}}(\Cat{C}))\) over it.
  We want to show that it is a Kleene rig category.
  By \Cref{prop:free-kleene}, it is a Kleene bicategory.
  We show that it also has a monoidal structure that makes it a rig category by proving that the quotient by the congruence \((\Kcong)\) in \Cref{def:kleene-congruence} respects the monoidal structure.
  We need to prove that \(f \per g \Kcong f' \per g'\) whenever \(f \Kcong f'\) and \(g \Kcong g'\).
  It suffices to prove that \(f \per g \Kleq f' \per g\) whenever \(f \Kleq f'\) because: \(f \Kcong f'\) if and only if \(f \Kleq f'\) and \(f' \Kleq f\); if \(f \per g \Kleq f' \per g\) then \(g \per f = \symmt{X}{X'} \dcomp (f \per g) \dcomp \symmt{Y'}{Y} \Kleq \symmt{X}{X} \dcomp (f' \per g) \dcomp \symmt{Y'}{Y} = g \per f'\); if \(f \per g \Kleq f' \per g\) and \(f' \per g \Kleq f' \per g'\) then \(f \per g \Kleq f' \per g \Kleq f' \per g'\).
  So, suppose that \(f \Kleq f'\) and proceed by induction on \((\Kleq)\).
  \begin{itemize}
    \item[(\(\leq\))] If \(f \leq f'\), then \(f \sqcup f' = f'\).
          By the axioms of biproduct rig category, \((f \per g) \sqcup (f' \per g) = (f \sqcup f') \per g = f' \per g\).
          From this equality, we obtain that \(f \per g \leq f' \per g\) and then \(f \per g \Kleq f' \per g\).
    \item[(\(\id{}\))] If \(f = \trace(\codiag{X} \dcomp \diag{X})\) and \(f'= \id{X}\).
          By the axioms of traced finite-biproduct rig category and the definition of \((\Kleq)\), we obtain that \(f \per g = (\trace(\codiag{} \dcomp \diag{}) \per \id{Y}) \dcomp (\id{X} \per g) = trace(\Idr{X}{X}{Y} \dcomp ((\codiag{X} \dcomp \diag{X}) \per \id{Y}) \dcomp \dr{X}{X}{Y}) \dcomp (\id{X} \per g) = \trace(\codiag{X \per Y} \dcomp \diag{X \per Y}) \dcomp (\id{X} \per g) \Kleq \id{X \per Y} \dcomp (\id{X} \per g) = f' \per g\).
    \item[(t)] If \(f \Kleq h\) and \(h \Kleq f'\), then \(f \per g \Kleq h \per g\) and \(h \per g \Kleq f' \per g\) by induction.
          By transitivity of \((\Kleq)\), we obtain that \(f \per g \Kleq f' \per g\).
    \item[(\(\dcomp\))] If \(f = f_{1} \dcomp f_{2}\) and \(f' = f'_{1} \dcomp f'_{2}\) with \(f_{1} \Kleq f'_{1}\) and \(f_{2} \Kleq f'_{2}\), then \(f_{1} \per g \Kleq f'_{1} \per g\) and \(f_{2} \per \id{} \Kleq f'_{2} \per \id{}\) by induction.
          Since \((\Kleq)\) is a congruence, we obtain that \(f \per g = (f_{1} \per g) \dcomp (f_{2} \per \id{}) \Kleq (f'_{1} \per g) \dcomp (f'_{2} \per \id{}) = f' \per g\).
    \item[(\(\piu\))] If \(f = f_{1} \piu f_{2}\) and \(f' = f'_{1} \piu f'_{2}\) with \(f_{1} \Kleq f'_{1}\) and \(f_{2} \Kleq f'_{2}\), then \(f_{1} \per g \Kleq f'_{1} \per g\) and \(f_{2} \per g \Kleq f'_{2} \per g\) by induction.
          Since \((\Kleq)\) is a monoidal congruence, we obtain that \(f \per g = \delta \dcomp ((f_{1} \per g) \piu (f_{2} \per g)) \dcomp \delta^{-1} \Kleq \delta \dcomp ((f'_{1} \per g) \piu (f'_{2} \per g)) \dcomp \delta^{-1}  = f' \per g\).
    \item[(ut-l)] If \(f = \trace(h)\) and \(f' = \trace(h')\) with \(h \dcomp (u \piu \id{}) \Kleq (v \piu \id{}) \dcomp h'\) and \(u \Kcong v\), then \((h \dcomp (u \piu \id{})) \per \id{} \Kleq ((v \piu \id{}) \dcomp h') \per \id{}\) and \(u \per \id{} \Kcong v \per \id{}\) by induction.
          Since \((\Kleq)\) is a congruence, we obtain that \(\delta^{-r} \dcomp (h \per \id{}) \dcomp \delta^{r} \dcomp ((u \per \id{}) \piu \id{}) = \delta^{-r} \dcomp ((h \dcomp (u \piu \id{})) \per \id{}) \dcomp \delta^{r} \Kleq \delta^{-r} \dcomp (((v \piu \id{}) \dcomp h') \per \id{}) \dcomp \delta^{r} = ((v \per \id{}) \piu \id{}) \dcomp \delta^{-r} \dcomp (h' \per \id{}) \dcomp \delta^{r}\).
          By definition of \((\Kleq)\), we obtain that \(\trace(\delta^{-r} \dcomp (h \per \id{}) \dcomp \delta^{r}) \Kleq \trace(\delta^{-r} \dcomp (h' \per \id{}) \dcomp \delta^{r})\) and, by congruence and the axioms of traced rig category, that \(f \per g = (\trace h \per \id{}) \dcomp (\id{} \per g) = \trace(\delta^{-r} \dcomp (h \per \id{}) \dcomp \delta^{r})  \dcomp (\id{} \per g) \Kleq \trace(\delta^{-r} \dcomp (h' \per \id{}) \dcomp \delta^{r})  \dcomp (\id{} \per g) = f' \per g\).
    \item[(ut-r)] Analogous to (ut-l).
  \end{itemize}
  The coherence axioms for symmetric monoidal and rig categories hold because they hold in \(\Cat{C}\) and the quotient by \((\Kcong)\) preserves all the operations involved.
\end{proof}

\begin{prop}\label{prop:free-ut-fib-cb}
  The free uniformly traced monoidal category over a finite-biproduct Cartesian bicategory is also a Cartesian bicategory.
  In other words, for a finite-idempotent-biproduct Cartesian bicategory \(\Cat{C}\), the free uniformly traced monoidal category over \(\fun{J_{FB}}(\fun{K_{FB}}(\Cat{C}))\) lies in the image of \(\fun{K_{UT}} \dcomp \fun{J_{UT}}\).
\end{prop}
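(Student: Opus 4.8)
The plan is to transport the black (co)monoid structure of $\Cat{C}$ along the unit $\eta$ of the adjunction and to verify that it equips the object $\fun{UTr_{B}}(\fun{J_{FB}}(\fun{K_{FB}}(\Cat{C})))$ with a Cartesian bicategory structure on $(\per,\uno)$. By \Cref{prop:free-ut-fib-rig} this category is already a uniformly traced fb rig category with idempotent convolution, so it carries the $\per$-monoidal structure and, by \Cref{rem:poset-from-convolution}, the poset enrichment induced by $\sqcup$. I would set $\copier{X} \defeq \st{\copier{X}}{\zero}$, $\discharger{X} \defeq \st{\discharger{X}}{\zero}$, $\cocopier{X} \defeq \st{\cocopier{X}}{\zero}$ and $\codischarger{X} \defeq \st{\codischarger{X}}{\zero}$, that is, the images under $\eta$ of the corresponding arrows of $\Cat{C}$.

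First I would dispatch the axioms of \Cref{def:cartesian bicategory} that are (in)equalities between arrows lying in the image of $\eta$: the (co)monoid laws (1)--(2), the coherence with the monoidal structure (3), the special Frobenius laws (5) and the adjointness laws (6). Since $\eta$ is a strict rig functor preserving $\sqcup$ and $\bot$ --- hence monotone, with $f \leq g$ in $\Cat{C}$ iff $\eta f \leq \eta g$ --- and since all these laws hold in the Cartesian bicategory $(\Cat{C},\per,\uno)$, they transfer verbatim to the lifted arrows.

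The real work lies in axiom (4), the lax naturality of $\copier{}$ and $\discharger{}$, which must hold for \emph{every} arrow of $\UTr(\Cat{C})$ and not only for those in the image of $\eta$. The key observation is that every arrow $\st{g}{S}\colon X \to Y$ satisfies $\st{g}{S} = \trace_{S}(\eta g)$ by \eqref{eq:trace in Utr}, so it suffices to prove that the trace preserves lax comonoid morphisms: if $h \colon S \piu X \to S \piu Y$ is a lax comonoid morphism in $\Cat{C}$, then so is $\trace_{S}(h)$. For this I would follow the template of the naturality argument in \Cref{lem:uniform-trace-preserves-biproducts}: using the coherence laws \eqref{eq:fbcbcoherence} to decompose $\copier{S\piu Y}$ and $\discharger{S\piu Y}$, together with the lax naturality of $\copier{}$ and $\discharger{}$ in $\Cat{C}$ applied to $h$, one rewrites $\trace_{S}(h)\dcomp\copier{Y}$ by pushing the copier inside the trace, and then concludes with \emph{posetal} uniformity (AU1) and (AU2) by comparing against $\copier{X}\dcomp(\trace_{S}h \per \trace_{S}h)$, where the latter product is re-expressed as a single trace over $S\per S$ by means of \eqref{eq:missing} and \Cref{lemma:leftwisktrace}, both established in \Cref{prop:free-ut-fib-rig}. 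The discharger case is the analogous, simpler, one-sided computation.

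I expect the copier subcase to be the main obstacle. In \Cref{lem:uniform-trace-preserves-biproducts} the relevant comonoid was the biproduct diagonal, whose naturality is a genuine \emph{equality} and whose product interacts cleanly with the trace; here $\copier{}$ is only \emph{laxly} natural, so one must track the direction of the inequalities carefully through the trace and correctly handle $\trace_{S}h \per \trace_{S}h$, whose expression as a trace over $S\per S$ (modulo distributors) is precisely what makes the two sides comparable --- and matching them genuinely requires the posetal uniformity axioms rather than their equational counterpart. Once axiom (4) is secured, $\fun{UTr_{B}}(\fun{J_{FB}}(\fun{K_{FB}}(\Cat{C})))$ is by definition a uniformly traced fb-cb rig category with idempotent convolution; since $\fun{K_{UT}}$ and $\fun{J_{UT}}$ act as injective, forgetful functors that leave the underlying traced data unchanged, exhibiting this structure places the object in the image of $\fun{K_{UT}} \dcomp \fun{J_{UT}}$, as required.
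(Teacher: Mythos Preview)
Your overall plan matches the paper's: lift $(\copier{}, \discharger{}, \cocopier{}, \codischarger{})$ along $\eta$ and observe that axioms (1)--(3), (5), (6) of \Cref{def:cartesian bicategory} transfer because $\eta$ preserves $\dcomp$, $\per$ and $\sqcup$. The paper in fact stops there, simply asserting that ``the (co)copy and (co)discard structure satisfies all the equations of a Cartesian bicategory because it does so in $\Cat{C}$'' without isolating axiom~(4); your instinct to single out lax naturality for an arbitrary $\st{h}{S}$ is therefore more careful than what the paper makes explicit.

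Your proposed treatment of axiom~(4), however, has a genuine gap: you plan to conclude via the posetal uniformity axioms \eqref{ax:posetunif:1}--\eqref{ax:posetunif:2}, but those are axioms of \emph{Kleene} bicategories (\Cref{def:kleenebicategory}). At this point in the chain \eqref{eq:square-adjunctions-free-kcb} we are still in $\Cat{UTFIBCat}$ --- the quotient $\fun{Q_{K}}$ of \Cref{prop:free-kleene} has not yet been applied --- so only the \emph{equational} uniformity of \Cref{def:utr-category} is available. Your own remark that the step ``genuinely requires the posetal uniformity axioms rather than their equational counterpart'' is thus self-defeating: if that were so, the proposition could not be established here at all. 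What is needed instead is to translate each lax-naturality inequality into the corresponding $\sqcup$-equality (via \Cref{rem:poset-from-convolution}) and then exhibit a witness for \emph{equational} uniformity, in the spirit of the naturality computation in \Cref{lem:uniform-trace-preserves-biproducts}; your sketch does not supply such a witness.
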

\begin{proof}
  Consider a finite-biproduct Cartesian bicategory \((\Cat{C}, \piu, \zero, \per, \uno)\) and the free uniformly traced monoidal category \(\fun{UTr_{B}} (\fun{Q}_{id} (\fun{J_{FB}} (\fun{K_{FB}} (\Cat{C}))))\) over \(\fun{Q}_{id}(\fun{J_{FB}}(\fun{K_{FB}}(\Cat{C})))\).
  By \Cref{prop:free-ut-fib-rig}, \(\fun{UTr_{B}} (\fun{Q}_{id} (\fun{J_{FB}} (\fun{K_{FB}} (\Cat{C}))))\) lies in the image of \(\fun{J_{UT}}\) because it can be endowed with monoidal structure making it a finite-idempotent-biproduct rig category.
  We need to show that this monoidal structure is also that of a Cartesian bicategory.
  The copy, discard, cocopy and codiscard morphisms are lifted from \(\Cat{C}\): \((\copier{} \mid \zero)\), \((\discharger{} \mid \zero)\), \((\cocopier{} \mid \zero)\) and \((\codischarger{} \mid \zero)\).
  For morphisms \(f\) and \(g\) in \(\Cat{C}\), \((f \mid 0) \dcomp (g \mid 0) = (f \dcomp g \mid 0)\) and \((f \mid 0) \per (g \mid 0) = (f \per g \mid 0)\) by the definition of composition and whiskerings.
  Then, the (co)copy and (co)discard structure satisfies all the equations of a Cartesian bicategory because it does so in \(\Cat{C}\).
\end{proof}

\begin{prop}\label{prop:free-kleene-cartesian}
  The free Kleene bicategory over a uniformly traced finite-idempotent-biproduct Cartesian bicategory is also a Cartesian bicategory.
  In other words, for a uniformly traced finite-idempotent-biproduct Cartesian bicategory \(\Cat{C}\), the free Kleene bicategory over \(\fun{J_{UT}}(\fun{K_{UT}}(\Cat{C}))\) lies in the image of \(\fun{K_{K}} \dcomp \fun{J_{K}}\).
\end{prop}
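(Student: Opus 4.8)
The plan is to follow the template of \Cref{prop:free-ut-fib-cb}, replacing the free-trace functor $\fun{UTr_{B}}$ by the Kleene quotient $\fun{Q_K}$ and building on \Cref{prop:free-kleene-rig} rather than \Cref{prop:free-ut-fib-rig}. Let $\Cat{C}$ be a uniformly traced finite-idempotent-biproduct Cartesian bicategory. By \Cref{prop:free-kleene}, the free Kleene bicategory over $\fun{J_{UT}}(\fun{K_{UT}}(\Cat{C}))$ is the quotient $\Cat{C}\slash\Kcong_{\Cat{C}}$ of the underlying uniformly traced fb category by the Kleene congruence of \Cref{def:kleene-congruence}, equipped with the universal quotient functor $\eta_{\Cat{C}}\colon \Cat{C}\to \Cat{C}\slash\Kcong_{\Cat{C}}$. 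First I would invoke \Cref{prop:free-kleene-rig}: since $\fun{K_{UT}}(\Cat{C})$ is a uniformly traced finite-idempotent-biproduct rig category, the congruence $\Kcong_{\Cat{C}}$ respects $\per$ and the quotient inherits a rig structure with $[f]\per[g]\defeq[f\per g]$, making it a Kleene rig category; in particular it already lies in the image of $\fun{J_K}$. It then remains only to endow it with a Cartesian bicategory structure on $\per$ for which the kc-rig coherences \eqref{eq:missing} and \eqref{eq:fbcbcoherence} hold, so that it lands in the image of $\fun{K_K}\dcomp\fun{J_K}$.

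For the Cartesian structure I would lift the black (co)monoids of $\Cat{C}$ along $\eta_{\Cat{C}}$, setting $\copier{X}\defeq\eta_{\Cat{C}}(\copier{X})$, $\discharger{X}\defeq\eta_{\Cat{C}}(\discharger{X})$, $\cocopier{X}\defeq\eta_{\Cat{C}}(\cocopier{X})$ and $\codischarger{X}\defeq\eta_{\Cat{C}}(\codischarger{X})$. By \Cref{lem:traced-monoidal-congruence}, together with the fact (established inside \Cref{prop:free-kleene-rig}) that $\Kcong_{\Cat{C}}$ respects $\per$, the functor $\eta_{\Cat{C}}$ is a strict rig functor preserving $\dcomp$, $\piu$, $\per$, symmetries and trace, so that $[f]\dcomp[g]=[f\dcomp g]$ and $[f]\per[g]=[f\per g]$. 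Because the quotient is identity-on-objects and surjective on hom-sets, every arrow of $\Cat{C}\slash\Kcong_{\Cat{C}}$ is of the form $\eta_{\Cat{C}}(f)$; hence each \emph{equational} law of \Cref{def:cartesian bicategory} and each coherence in \eqref{eq:missing} and \eqref{eq:fbcbcoherence}, being an identity between composites of the listed operations applied to the lifted (co)monoids, descends from the corresponding identity in $\Cat{C}$.

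The step requiring care is the \emph{inequational} part of the Cartesian structure, namely the lax-morphism laws (item~(4)) and the adjointness laws (item~(6)) of \Cref{def:cartesian bicategory}. Here I would appeal to \Cref{rem:poset-from-convolution}: in any finite-biproduct category with idempotent convolution the order is exactly the one induced by $\sqcup$, and both $\Cat{C}$ and $\Cat{C}\slash\Kcong_{\Cat{C}}$ carry this derived order. Since $\eta_{\Cat{C}}$ preserves $\sqcup$, it is monotone, i.e.\ $f\leq g$ in $\Cat{C}$ forces $\eta_{\Cat{C}}(f)\Kleq\eta_{\Cat{C}}(g)$; this is precisely the rule $(\leq)$ of \Cref{def:kleene-congruence}. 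Writing an arbitrary quotient arrow as $\eta_{\Cat{C}}(f)$ and combining monotonicity with functoriality, each inequational axiom descends from $\Cat{C}$. The only conceivable obstacle would be that quotienting destroys one of these inequalities; but passing to $\Cat{C}\slash\Kcong_{\Cat{C}}$ can only enlarge the order, never remove comparabilities, so no axiom is lost. Assembling these observations shows that $\Cat{C}\slash\Kcong_{\Cat{C}}$ is a kc-rig category, hence lies in the image of $\fun{K_K}\dcomp\fun{J_K}$; this preservation property is exactly what \Cref{lem:restricting-adjunctions} needs in order to restrict the adjunction $\fun{Q_K}\dashv\fun{\iota_K}$ to $\fun{Q_{KC}}\dashv U$ between $\Cat{UTFIBCB}$ and $\Cat{KCB}$.
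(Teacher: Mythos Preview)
Your proposal is correct and follows essentially the same approach as the paper: invoke \Cref{prop:free-kleene-rig} to see that $(\Kcong)$ respects $\per$, then observe that the Cartesian bicategory structure descends to the quotient. Your treatment is in fact more careful than the paper's, which simply says that the quotient ``preserves all the equations between the structure of Cartesian bicategory'' without explicitly addressing the inequational axioms (items~(4) and~(6) of \Cref{def:cartesian bicategory}); your appeal to \Cref{rem:poset-from-convolution} and the monotonicity of $\eta_{\Cat{C}}$ fills that gap cleanly.
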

\begin{proof}
  Consider a uniformly traced finite-idempotent-biproduct Cartesian bicategory \(\Cat{C}\).
  \Cref{prop:free-kleene-rig} proves that \(\fun{Q_{K}}(\fun{J_{UT}}(\fun{K_{UT}}(\Cat{C})))\) lies in the image of \(\fun{J_{K}}\) by proving that the congruence \((\Kcong)\) is also a congruence for the monoidal product \((\per)\).
  This implies that the quotient by \((\Kcong)\) preserves all the equations between the structure of Cartesian bicategory, i.e.\ that \(\fun{Q_{K}}(\fun{J_{UT}}(\fun{K_{UT}}(\Cat{C})))\) is also a Cartesian bicategory.
\end{proof}

\begin{prop}%
  [{\cite[Theorem~5.11]{bonchi2023deconstructing}}]%
  \label{prop:free-fb-rig}
  There is an adjunction that constructs the free finite-biproduct rig category on a monoidal signature.
\end{prop}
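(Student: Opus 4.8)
The plan is to follow the construction of \cite[Theorem~5.11]{bonchi2023deconstructing}, which realises the free finite-biproduct rig category on a monoidal signature $(\sort,\sign)$ concretely as a category of tape-like diagrams. First I would fix the objects to be the polynomials in $(\sort^\star)^\star$, with $\piu$ given by concatenation and $\per$ by the product \eqref{def:productPolynomials}; this choice guarantees that the ambient rig category is sesquistrict, so that the only non-identity distributor is $\delta^l$ on non-basic sorts and we sidestep the coherence pathologies discussed in \Cref{sec:rigcategories}. The arrows are then the terms generated by the fragment of the grammar \eqref{kleeneCartesianGrammar} that retains the generators $\gen$, the symmetries $\symm{A}{B}$ and $\symmp{U}{V}$, and the $\piu$-(co)monoids $\diag{}$, $\bang{}$, $\codiag{}$, $\cobang{}$, but neither the Cartesian-bicategory structure nor the trace, quotiented by the congruence generated by the fb-rig axioms (the relevant subset of \Cref{tab:axiomTAPE}).

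Next I would verify that this data assembles into a sesquistrict fb rig category. The biproduct (co)monoids are natural by construction (\Cref{def:fb}.(4)), and the interaction of $\piu$ with $\per$ is governed by the inductively defined polynomial whiskerings and distributors of \Cref{tab:SyntacticSUGAR}, which one checks satisfy the rig coherence axioms of \Cref{fig:rigax}. Soundness, namely that the interpretation functor $\dsem{\cdot}_\interpretation$ respects the imposed congruence, is then routine, since every axiom in the presentation holds in an arbitrary fb rig category by definition.

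The universal property is the heart of the argument, instantiating \Cref{def:freesesqui}. Given an interpretation $(\alpha_\sort,\alpha_\sign)$ of $(\sort,\sign)$ into a sesquistrict fb rig category $H\colon \Cat{S}\to\Cat{C}$, one must produce a unique sesquistrict rig functor $(\alpha,\beta)$ extending it and preserving the biproduct structure. Existence follows by defining $\beta$ inductively on the grammar, exactly as in \eqref{eq:semantic} but without the Cartesian and trace clauses, and checking that it descends to $\sim$-equivalence classes; uniqueness is immediate because every arrow is, by the generators, a composite under $;$, $\piu$ and $\per$ of basic pieces whose images are forced by the interpretation.

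The main obstacle I expect is the completeness direction underlying well-definedness of $\beta$: one must show that the quotient by the fb-rig axioms yields exactly the intended arrows, equivalently that the axioms form a complete presentation. In \cite{bonchi2023deconstructing} this is handled by establishing a matrix normal form for tapes, the rig analogue of \Cref{prop:matrixform}, and showing that every tape rewrites to a canonical sum-of-products form; faithfully tracking the distributors $\delta^l,\delta^r$ and the polynomial whiskerings of \Cref{tab:SyntacticSUGAR} through this normalisation is the delicate bookkeeping. Since the statement is already proved in loc.\ cit., I would invoke that result directly rather than reproduce the normalisation in full.
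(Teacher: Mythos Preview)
Your proposal is reasonable as a sketch of the argument in \cite{bonchi2023deconstructing}, but it goes well beyond what the present paper does: here the proposition is stated purely as a citation to \cite[Theorem~5.11]{bonchi2023deconstructing} with no proof given at all. The paper simply invokes the result. Since you yourself conclude by saying you would ``invoke that result directly rather than reproduce the normalisation in full,'' your bottom line agrees with the paper's treatment; the preceding outline of the tape construction, whiskerings, and normal-form argument is extra exposition that the paper omits entirely.
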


\begin{prop}%
  \label{prop:free-fb-cb-rig}
  There is an adjunction that constructs the free finite-biproduct rig Cartesian bicategory on a monoidal signature.
\end{prop}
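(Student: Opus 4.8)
The plan is to obtain the adjunction $\adjunction{\Cat{FBCT}}{\Cat{MSig}}{\Cat{FBCB}}{U}$ in the same modular spirit as the second--row adjunctions, by reducing it to the free finite--biproduct rig category of \Cref{prop:free-fb-rig}. The underlying object is exactly the free fb-cb rig category built from tape diagrams in \cite[Section~7]{bonchi2023deconstructing}: its circuit layer is the free Cartesian bicategory on $\Sigma$ (in the sense of \cite{GCQ,Bonchi2017c}), while its tape layer is the free finite--biproduct rig category over the circuits. Since the Cartesian bicategory axioms are imposed already at the circuit level, the construction yields, for every monoidal signature $\Sigma$, an fb-cb rig category $\Cat{FBCT}(\Sigma)$ together with a universal interpretation of $\Sigma$; the task is then to package these free objects into an adjunction.

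First I would make precise the two vertical functors: the signature functor $\fun{C} \colon \Cat{MSig} \to \Cat{MSig}$, whose role is to supply the generators of the Cartesian bicategory structure so that $\Cat{FBT}(\fun{C}(\Sigma))$ carries that structure, and the faithful forgetful functor $\fun{K_{FB}} \colon \Cat{FBCB} \to \Cat{FBRig}$, which discards the $\per$--(co)monoids. The central object--level step, analogous to \Cref{prop:free-ut-fib-rig,prop:free-ut-fib-cb}, is to verify that $\Cat{FBCT}(\Sigma)$ genuinely lies in the image of $\fun{K_{FB}}$: namely, that the copy, discard, cocopy and codiscard morphisms transported from the circuit layer make $(\Cat{FBCT}(\Sigma), \per, \uno)$ a Cartesian bicategory in the sense of \Cref{def:cartesian bicategory}, and that they satisfy the black--white coherence laws \eqref{eq:fbcbcoherence} with the biproduct (co)monoids. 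Concretely, this amounts to checking the Frobenius, adjointness and lax naturality axioms (Figures~\ref{fig:(co)monoids string} and~\ref{fig:ax-cb-string}) on the freely generated diagrams.

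With the object part in hand, I would assemble the adjunction by invoking \Cref{lem:restricting-adjunctions} with $J \defeq \fun{C}$ and $K \defeq \fun{K_{FB}}$. Both functors are faithful, the object and hom--set assignments satisfy $\Cat{FBT}(\fun{C}(\Sigma)) = \fun{K_{FB}}(\Cat{FBCT}(\Sigma))$, and one checks that the components of the unit of the free fb rig adjunction lie in the image of $\fun{C}$ while those of the counit lie in the image of $\fun{K_{FB}}$. This last condition is precisely the universal property of $\Cat{FBCT}(\Sigma)$ recorded in the previous step, so the lemma delivers the desired adjunction $\adjunction{\Cat{FBCT}}{\Cat{MSig}}{\Cat{FBCB}}{U}$.

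I expect the main obstacle to be the object--level verification that the Cartesian bicategory structure is genuinely present and \emph{free}, rather than merely adjoined as uninterpreted generators: the Cartesian bicategory laws must be imposed on the circuits --- either through the free Cartesian bicategory construction or, equivalently, through a quotient by those laws, in the spirit of \Cref{lem:traced-monoidal-congruence} --- before the result is an fb-cb rig category. Consequently, confirming that the counit of the restricted adjunction lands in the image of $\fun{K_{FB}}$, equivalently that the universal property survives this imposition of equations, is the delicate point, and it is exactly the substantive content carried over from \cite[Section~7]{bonchi2023deconstructing}.
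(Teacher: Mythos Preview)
Your proposal is correct and follows essentially the same approach as the paper: define the signature functor $\fun{C}$ adding the Cartesian bicategory generators and equations, observe (via \cite[Theorem~7.3]{bonchi2023deconstructing}) that $\Cat{FBT}(\fun{C}(\Sigma))$ is an fb-cb rig category, and then apply \Cref{lem:restricting-adjunctions} along the faithful functors $\fun{C}$ and $\fun{K_{FB}}$. The delicate point you identify---that the Cartesian bicategory structure is genuinely free, so the counit lands in the image of $\fun{K_{FB}}$---is exactly what the paper handles by citing \cite{bonchi2023deconstructing}.
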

\begin{proof}
  There is a functor \(\fun{C} \colon \Cat{MSig} \to \Cat{MSig}\) such that \(\fun{C}(\Sigma, E)\) is the signature \((\Sigma, E)\) with the extra generators and equations for the structure of Cartesian bicategories: for every object in the signature \(\Sigma\), we add a multiplication, a comultiplication, a unit and a counit to the generators in \(\Sigma\); for every object and for every generator in the signature \(\sigma\), we add the equations of (co)commutative (co)monoids, the equations of a Frobenius algebra and idempotency of convolution to the equations in \(E\).
  The free finite-biproduct rig category over the signature \(\fun{C}(\Sigma, E)\) is also a Cartesian bicategory~\cite[Theorem~7.3]{bonchi2023deconstructing}.
  The unit and counit of the adjunction from \Cref{prop:free-fb-rig} satisfy the hypothesis of \Cref{lem:restricting-adjunctions}.
  The functors \(\fun{C}\) and \(\fun{K_{FB}}\) are faithful, so we apply \Cref{lem:restricting-adjunctions} to restrict the adjunction between monoidal signatures and finite-biproduct rig categories to finite-biproduct Cartesian bicategories.
\end{proof}

\begin{prop}%
  \label{prop:free-kleene-on-fbcb}
  There is an adjunction \(\adjunction{\fun{K_{CB}} }{\Cat{FBCB} }{\Cat{KCB} }{U}\) that constructs the free Kleene-Cartesian bicategory on a finite-biproduct Cartesian bicategory.
\end{prop}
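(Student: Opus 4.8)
The plan is to realise $\fun{K_{CB}}$ as a composite of two left adjoints, following the bottom row of the strategy diagram~\eqref{eq:square-adjunctions-free-kcb}: a free uniformly traced functor $\fun{UTr_{C}} \colon \Cat{FBCB} \to \Cat{UTFIBCB}$ and a free Kleene functor $\fun{Q_{KC}} \colon \Cat{UTFIBCB} \to \Cat{KCB}$. Since left adjoints compose (and the corresponding right adjoints compose to the forgetful functor $U \colon \Cat{KCB} \to \Cat{FBCB}$, which simply discards the trace), it is enough to produce these two adjunctions and paste them together, setting $\fun{K_{CB}} \defeq \fun{UTr_{C}} \dcomp \fun{Q_{KC}}$.

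Each of the two adjunctions is obtained by restricting one of the already-established adjunctions of the top row of~\eqref{eq:square-adjunctions-free-kcb} along faithful functors, via \Cref{lem:restricting-adjunctions}. For $\fun{UTr_{C}}$, I would restrict $\adjunction{\fun{UTr_{B}}}{\Cat{FIBCat}}{\Cat{UTFIBCat}}{\fun{U_{TB}}}$ (from \Cref{prop:free-fib-uniform-trace}) along the faithful composites $\fun{K_{FB}} \dcomp \fun{J_{FB}} \colon \Cat{FBCB} \to \Cat{FBCat}$ and $\fun{K_{UT}} \dcomp \fun{J_{UT}} \colon \Cat{UTFIBCB} \to \Cat{UTFIBCat}$. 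The object- and hom-level assignments required by \Cref{lem:restricting-adjunctions} are supplied by \Cref{prop:free-ut-fib-cb}, which guarantees that $\fun{UTr_{B}}$ applied to the $\Cat{FBCat}$ underlying a finite-biproduct Cartesian bicategory again lies in the image of $\fun{K_{UT}} \dcomp \fun{J_{UT}}$; dually the right adjoint $\fun{U_{TB}}$ restricts because forgetting the trace leaves the rig and Cartesian structure intact. For $\fun{Q_{KC}}$, I would restrict $\adjunction{\fun{Q_{K}}}{\Cat{UTFIBCat}}{\Cat{KBicat}}{\fun{\iota_{K}}}$ (from \Cref{prop:free-kleene}) along $\fun{K_{UT}} \dcomp \fun{J_{UT}}$ and $\fun{K_{K}} \dcomp \fun{J_{K}} \colon \Cat{KCB} \to \Cat{KBicat}$, with the corresponding preservation statement provided by \Cref{prop:free-kleene-cartesian}.

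The only substantive work lies in checking the hypotheses of \Cref{lem:restricting-adjunctions}, and the main obstacle is confirming the image conditions on the units and counits. Concretely, I must show that the unit component $\eta$ at an object in the image of $\fun{K_{FB}} \dcomp \fun{J_{FB}}$ — the canonical embedding of a Cartesian bicategory into its free uniformly traced completion — is itself a morphism of Cartesian bicategories, so that it factors through $\fun{K_{FB}} \dcomp \fun{J_{FB}}$; this holds because the embedding sends each structure map $\copier{}, \discharger{}, \cocopier{}, \codischarger{}$ to its canonical lift, preserving both the rig structure and the Cartesian-bicategory structure, and symmetrically for the counit. The analogous condition for the Kleene step is even easier, since quotienting by the Kleene congruence $(\Kcong)$ is the identity on objects and respects the (co)monoids, so the quotient maps are morphisms of Cartesian bicategories. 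Once these image conditions and the faithfulness of the composite vertical functors (each a composite of faithful structure-forgetting functors) are in place, \Cref{lem:restricting-adjunctions} delivers both adjunctions, and their composition yields $\adjunction{\fun{K_{CB}}}{\Cat{FBCB}}{\Cat{KCB}}{U}$ as required.
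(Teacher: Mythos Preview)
Your approach is essentially the same as the paper's: both obtain $\fun{K_{CB}}$ by composing two restricted adjunctions and both rely on \Cref{lem:restricting-adjunctions} together with \Cref{prop:free-ut-fib-cb} and \Cref{prop:free-kleene-cartesian}. The only difference is that the paper first restricts from the top row to the middle row of~\eqref{eq:square-adjunctions-free-kcb} (the rig level, via \Cref{prop:free-ut-fib-rig} and \Cref{prop:free-kleene-rig}), and then from the middle to the bottom row, whereas you restrict directly along the composite vertical functors $\fun{K_{FB}}\dcomp\fun{J_{FB}}$ and $\fun{K_{UT}}\dcomp\fun{J_{UT}}$. Since \Cref{prop:free-ut-fib-cb} and \Cref{prop:free-kleene-cartesian} already invoke the middle-row results internally, your shortcut is legitimate and slightly more economical.

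One small type mismatch to fix: you write that $\fun{K_{FB}}\dcomp\fun{J_{FB}}$ lands in $\Cat{FBCat}$, but the adjunction you restrict, $\fun{UTr_{B}}\dashv\fun{U_{TB}}$, has domain $\Cat{FIBCat}$. You should either restrict the composite adjunction $\fun{Q_{id}}\dcomp\fun{UTr_{B}}$ (as the paper does), or observe that an fb-cb rig category already has idempotent convolution on $\piu$ so that the image of $\fun{K_{FB}}\dcomp\fun{J_{FB}}$ factors through $\Cat{FIBCat}$.
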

\begin{proof}
  We use the results in this section to construct the free Kleene-Cartesian bicategory on a finite-biproduct Cartesian bicategory.
  By \Cref{prop:free-ut-fib-rig} and by faithfulness of \(\fun{J_{UT}}\), we can restrict the functor \(\fun{Q}_{id} \dcomp \fun{UTr_{B}} \colon \Cat{FBCat} \to \Cat{UTFIBCat}\) to a functor \(\fun{UTr_{R}} \colon \Cat{FBRig} \to \Cat{UTFIBRig}\) that satisfies \(\fun{UTr_{R}} \dcomp \fun{J_{UT}} = \fun{J_{FB}} \dcomp \fun{UTr_{B}}\).
  It is easy to check that the forgetful functors also make the diagram commute.
  A similar reasoning yields functors \(\fun{Q_{KR}} \colon \Cat{UTFIBRig} \to \Cat{KRig}\), \(\fun{UTr_{C}} \colon \Cat{FIBCB} \to \Cat{UTFIBCB}\) and \(\fun{Q_{KC}} \colon \Cat{UTFIBCB} \to \Cat{KCB}\) by applying \Cref{prop:free-kleene-rig}, \Cref{prop:free-ut-fib-cb} and \Cref{prop:free-kleene-cartesian}, and faithfulness of \(\fun{J_{K}}\), \(\fun{K_{UT}}\) and \(\fun{K_{K}}\).
  Then, we can apply \Cref{lem:restricting-adjunctions} to obtain the adjunctions in \Cref{eq:square-adjunctions-free-kcb}.
  In particular, we obtain an adjunction whose left adjoint constructs the free Kleene-Cartesian bicategory over a fb-cb rig category, \(\fun{K_{CB}} \colon \Cat{FBCB} \to \Cat{KCB}\).
\end{proof}

Finally, we can conclude the proof of our main result.
\begin{proof}[Proof of \Cref{thm:KleeneCartesiantapesfree}]
  \Cref{prop:free-fb-cb-rig} constructs the free fb-cb rig category \(\Cat{FBCT}_{\Sigma}\) on a monoidal signature \(\Sigma\).
  \Cref{prop:free-kleene-on-fbcb} constructs the free Kleene-Cartesian bicategory on a finite-biproduct Cartesian bicategory.
  We can, then, compose these two adjunctions to obtain the free Kleene-Cartesian bicategory on a monoidal signature.
\end{proof}

\newpage
\section{Appendix to Section \ref{sec:peano}}\label{app:Peano} %

\begin{proof}[Lemma \ref{lemma:Peano}]
    First, we prove that the axioms in Figure~\ref{fig:tape-theory-natural-numbers} entail those in Figure~\ref{fig:peano-theory-natural-numbers}.
    \begin{itemize}
        \item \eqref{ax:peano:succ:sv},\eqref{ax:peano:zero:sv} follow from \eqref{ax:peanoT:iso:1}, i.e.
        \[
            
    % [inline block 35: 18 envs, 45949 chars -> data_tex | \begin{tikzpicture} 	\begin{pgfonlayer}{nodelayer}...]

}
 \\
                                                            &\leftstackrel{\eqref{eq:Kllenelaw}}{\geq} 
    }
 \stackrel{\eqref{ax:peanoT:ind}}{\geq} 
    }

        \end{align*}
        \item \eqref{ax:peanoT:iso:2} follow from \eqref{ax:peano:succ:tot}, \eqref{ax:peano:succ:inj}, \eqref{ax:peano:zero:tot}, \eqref{ax:peano:bottom} as shown in \eqref{eq:peano:iso2-equivalence}.
    \end{itemize}

\end{proof}

\newpage

\section{Appendix to Section~\ref{sec:hoare}}\label{app:hoare}
In this appendix we provide a detailed proof of Lemma \ref{lemma:encodingsubstPed}.

In the main text we avoid to formally define \emph{substitutions}, but for the purpose of our proof it is convenient to illustrate the inductive definition.
Given two expressions $t$ and $e$ and a variable $x$, the expression $e[t / x]$ is defined inductively as follows, where $y$ is a variable different from $x$.
\[x[t / x] \defeq t \qquad y[ t / x]\defeq y \qquad f(e_1, \dots, e_n)[t/x] \defeq f(e_1[t/x], \dots, e_n[t/x] )\]
Similarly to the case of expressions, one defines substitution on a variable $x$ of a term $t$ in a predicate $P$ inductively:
\[ R(e_1, \dots, e_n) [t/x] \defeq R(e_1[t/x], \dots, e_n[t/x]) \qquad \bar{R}(e_1, \dots, e_n) [t/x] \defeq \bar{R}(e_1[t/x], \dots, e_n[t/x]) \]
\[ \top [t/x] \defeq \top \quad \bot [t/x] \defeq \bot \quad (P \lor Q)[t/x]\defeq P[t/x] \lor Q[t/x] \quad (P \land Q)[t/x]\defeq P[t/x] \land Q[t/x] \]

The following result illustrates how our encoding deals with substituion of expressions.

\begin{lem}\label{lemma:encodingsubst}
Let $\Gamma ' = \Gamma, x\colon A, \Delta$ for some typing contexts $\Gamma$ and $\Delta$. If $\Gamma' \vdash e\colon B$ and $\Gamma ' \vdash t\colon A$, then
\[\encoding{ \Gamma' \vdash e[t /x] \colon B} = 
    \InputIfFileExists{hoare/subs/step1.tikz}{}{\input{./tikz/hoare/subs/step1.tikz}}
\] %
\end{lem}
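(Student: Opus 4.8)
The plan is to argue by structural induction on the expression $e$, following the grammar $e ::= x \mid f(e_1,\ldots,e_n)$, after first isolating the key algebraic ingredient that turns the desired identity into an equation rather than a mere inequality: every expression encoding is a \emph{map}. Concretely, I would first establish, by a routine induction on $e$, that for every well-typed $\Gamma' \vdash e : B$ the diagram $\encoding{\Gamma' \vdash e : B}$ is single valued \eqref{eq:cb:sv} and total \eqref{eq:cb:tot}, provably from $\basicR$, hence a map. Variables are encoded as projections $\discharger{} \per \id{} \per \discharger{}$, which are maps; the symbols $f \in \mathcal F$ are maps precisely because of the axioms \eqref{eq:deterministic-total} imposed in $\basicR$; and maps are closed under $;$, $\per$, and precomposition with copiers. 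By \Cref{lemma:cb:adjoints} every such map has a right adjoint $\encoding e \dashv \op{\encoding e}$, and as a consequence the lax naturality axioms \eqref{ax:cb:comonoid:nat:copy} and \eqref{ax:cb:comonoid:nat:discard} hold as \emph{equalities}, $\encoding e ; \copier{B} =_{\basicR} \copier{\encoding{\Gamma'}} ; (\encoding e \per \encoding e)$ and $\encoding e ; \discharger{B} =_{\basicR} \discharger{\encoding{\Gamma'}}$ (maps are exactly the comonoid homomorphisms; the reverse inequalities follow from \eqref{eq:cb:sv} and \eqref{eq:cb:tot} combined with the lax axioms).

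With this in hand I would carry out the induction on $e$. For the base case $e = x$ we have $x[t/x] = t$, so the left-hand side is $\encoding{\Gamma' \vdash t : A}$; the right-hand side copies $\Gamma'$, computes $\encoding t$ on one copy and inserts it into the $x$-slot of $\encoding x$, and since $\encoding x$ merely projects out that slot and discharges $\encoding\Gamma$ and $\encoding\Delta$, the counit laws \eqref{ax:cb:comonoid:unit} collapse the diagram to $\encoding t$. For $e = y$ with $y \neq x$ we have $y[t/x] = y$, and now the inserted value $\encoding t$ lands in a slot that $\encoding y$ discharges; using discharge-naturality $\encoding t ; \discharger{A} =_{\basicR} \discharger{\encoding{\Gamma'}}$ together with comonoid coherence, the right-hand side reduces to $\encoding y$. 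For the inductive step $e = f(e_1,\ldots,e_n)$ we use $f(e_1,\ldots,e_n)[t/x] = f(e_1[t/x],\ldots,e_n[t/x])$, apply the induction hypothesis to each $e_i$ (appealing to the definition of the encoding in Table~\ref{tab:encoding}), and then reorganise the resulting diagram so that the $n$ separate copies of the $\encoding t$-computation produced by the induction hypothesis are merged into a single computation of $\encoding t$ whose output is subsequently copied $n$ times.

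I expect the function-application case to be the main obstacle. The reorganisation there relies on commuting the single computation of $\encoding t$ past the $n$-fold context copier $\copier{}^n$, which is sound exactly because $\encoding t$ is a map and copy-naturality is therefore available as an equation; the remaining steps are repeated uses of comonoid associativity and commutativity \eqref{ax:cb:comonoid:assoc}, \eqref{ax:cb:comonoid:comm} to align the copiers. Each individual rearrangement is a routine diagrammatic calculation, but performing the bookkeeping cleanly and uniformly for arbitrary arity $n$ — keeping track of which context components are shared, which are discharged, and where the single $A$-wire coming out of $\encoding t$ is duplicated — is the delicate part of the argument.
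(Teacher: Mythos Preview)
Your proposal is correct and follows essentially the same approach as the paper: induction on the structure of $e$, with the variable case handled by counit laws, the distinct-variable case by discharge-naturality of $\encoding{t}$, and the application case by copy-naturality of $\encoding{t}$ combined with the induction hypothesis. The one minor difference is that you make explicit the preliminary fact that every expression encoding is a map, whereas the paper simply cites \eqref{eq:deterministic-total} at the relevant steps and leaves the extension from function symbols to arbitrary terms implicit; your version is slightly more careful on this point.
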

\begin{proof}
The proof proceeds by induction on $\Gamma' \vdash e\colon B$.

If $e$ is the variable $x$, then by the rule (var) $A=B$. Moreover, by definition of $\encoding{\cdot}$, $\encoding{\Gamma, x\colon A, \Delta \vdash x \colon A} =  \discharger{\encoding{\Gamma}} \per \id{A} \per \discharger{\encoding{\Delta}}$. 
Thus
\[ 
    \InputIfFileExists{hoare/subs/step1.tikz}{}{\input{./tikz/hoare/subs/step1.tikz}}
 =_{\basicR} 
    \InputIfFileExists{hoare/subs/step2.tikz}{}{\input{./tikz/hoare/subs/step2.tikz}}
 \stackrel{\eqref{ax:copierun}}{=_{\basicR}} 
    \InputIfFileExists{hoare/subs/step3.tikz}{}{\input{./tikz/hoare/subs/step3.tikz}}
 = \encoding{\Gamma ' \vdash x [t / x]\colon A} \]

If $e$ is a variable $y$, different from $x$, then by the rule $(var)$, there are two possible cases: either $\Gamma = \Gamma_1, y\colon B , \Gamma_2$ for some typing contexts $\Gamma_1$ and $\Gamma_2$ or  $\Delta = \Delta_1, y\colon B , \Delta_2$.
We consider the first case, the second is symmetrical. Observe that, by definition of $\encoding{\cdot}$,

\[\encoding{\Gamma_1, y\colon B, \Gamma_2, x \colon A, \Delta \vdash y \colon B} =  \discharger{\encoding{\Gamma_1}} \per \id{B} \per  \discharger{\encoding{\Gamma_2}} \per  \discharger{A} \per  \discharger{\encoding{\Delta}} \]

Thus,
\[ 
    \InputIfFileExists{hoare/subs/step1.tikz}{}{\input{./tikz/hoare/subs/step1.tikz}}
 =_{\basicR} 
    \InputIfFileExists{hoare/subs/step4.tikz}{}{\input{./tikz/hoare/subs/step4.tikz}}
 \stackrel{\eqref{eq:deterministic-total}}{=_{\basicR}} 
    \InputIfFileExists{hoare/subs/step5.tikz}{}{\input{./tikz/hoare/subs/step5.tikz}}
 \stackrel{\eqref{ax:copierun}}{=_{\basicR}} 
    \InputIfFileExists{hoare/subs/step6.tikz}{}{\input{./tikz/hoare/subs/step6.tikz}}
 = \encoding{\Gamma ' \vdash y [t / x]\colon A} \]

If \(e\) is an application, $e=f(e_1, \dots , e_n)$, by definition of \(\encoding{\cdot}\) on operations, \(\encoding{\Gamma' \vdash f(e_{1}, \ldots, e_{n}) \colon A} \defeq \copier{\encoding{\Gamma'}}^n ; (\encoding{\Gamma' \vdash e_1} \per \dots \per \encoding{\Gamma' \vdash e_n} ) ; f\).
By naturality of copy, we obtain
\begin{align*}
  
    \InputIfFileExists{hoare/subs/step1.tikz}{}{\input{./tikz/hoare/subs/step1.tikz}}
 &=_{\basicR} 
    \InputIfFileExists{hoare/subs/step7.tikz}{}{\input{./tikz/hoare/subs/step7.tikz}}
 \stackrel{\eqref{eq:deterministic-total}}{=_{\basicR}} 
    \InputIfFileExists{hoare/subs/step8.tikz}{}{\input{./tikz/hoare/subs/step8.tikz}}
 \\
  &= \encoding{\Gamma' \vdash f(e_{1} [t / x],\ldots ,e_{n} [t / x])\colon A} = \encoding{\Gamma' \vdash f(e_{1},\ldots ,e_{n}) [t / x]\colon A}
\end{align*}
\end{proof}

With this, we can now prove Lemma \ref{lemma:encodingsubstPed}.
\begin{proof}[Lemma \ref{lemma:encodingsubstPed}]
  Proceed by induction on the typing rules for predicates.

  If \(P\) is \(\top\), then
  \[
    
    \InputIfFileExists{hoare/pred/step1.tikz}{}{\input{./tikz/hoare/pred/step1.tikz}}
 =_{\basicR}  
    \InputIfFileExists{hoare/pred/step2.tikz}{}{\input{./tikz/hoare/pred/step2.tikz}}
 \stackrel{\eqref{eq:deterministic-total}}{=_{\basicR}} 
    \InputIfFileExists{hoare/pred/step3.tikz}{}{\input{./tikz/hoare/pred/step3.tikz}}
 \stackrel{\eqref{ax:copierun}}{=_{\basicR}} 
    \InputIfFileExists{hoare/pred/step4.tikz}{}{\input{./tikz/hoare/pred/step4.tikz}}
  = \encoding{\Gamma' \vdash \top \colon 1} = \encoding{\Gamma' \vdash \top[t /x] \colon 1}.
  \]

  If \(P\) is \(\bot\), then
  \[  
    
    \InputIfFileExists{hoare/pred/step1.tikz}{}{\input{./tikz/hoare/pred/step1.tikz}}
 =_{\basicR} 
    \InputIfFileExists{hoare/pred/step5.tikz}{}{\input{./tikz/hoare/pred/step5.tikz}}
  \stackrel{\eqref{ax:bangnat}}{=_{\basicR}} 
    \InputIfFileExists{hoare/pred/step6.tikz}{}{\input{./tikz/hoare/pred/step6.tikz}}
 = \encoding{\Gamma' \vdash \bot \colon 1} = \encoding{\Gamma' \vdash \bot[t /x] \colon 1}.
  \]

  If \(P\) is a predicate symbol \(R\), then
  \begin{align*}
    
    \InputIfFileExists{hoare/pred/step1.tikz}{}{\input{./tikz/hoare/pred/step1.tikz}}
 &=_{\basicR} 
    \InputIfFileExists{hoare/pred/step7.tikz}{}{\input{./tikz/hoare/pred/step7.tikz}}
 \stackrel{\eqref{eq:deterministic-total}}{=_{\basicR}} 
    \InputIfFileExists{hoare/pred/step8.tikz}{}{\input{./tikz/hoare/pred/step8.tikz}}
 \\
    &= \encoding{\Gamma' \vdash R(e_{1} [t / x],\ldots ,e_{n} [t / x])\colon A} = \encoding{\Gamma' \vdash R(e_{1},\ldots ,e_{n}) [t / x]\colon A}.
  \end{align*}

  If \(P\) is a negated predicate symbol \(\bar{R}\), then
  \begin{align*}
    
    \InputIfFileExists{hoare/pred/step1.tikz}{}{\input{./tikz/hoare/pred/step1.tikz}}
 &=_{\basicR} 
    \InputIfFileExists{hoare/pred/step9.tikz}{}{\input{./tikz/hoare/pred/step9.tikz}}
 \stackrel{\eqref{eq:deterministic-total}}{=_{\basicR}} 
    \InputIfFileExists{hoare/pred/step10.tikz}{}{\input{./tikz/hoare/pred/step10.tikz}}
 \\
    &= \encoding{\Gamma' \vdash \bar{R}(e_{1} [t / x],\ldots ,e_{n} [t / x])\colon A} = \encoding{\Gamma' \vdash \bar{R}(e_{1},\ldots ,e_{n}) [t / x]\colon A}.
  \end{align*}

  For the conjunction case, \(P = Q \land R\),%
  \begin{align*}
    
    \InputIfFileExists{hoare/pred/step1.tikz}{}{\input{./tikz/hoare/pred/step1.tikz}}
 &=_{\basicR} 
    \InputIfFileExists{hoare/pred/step11.tikz}{}{\input{./tikz/hoare/pred/step11.tikz}}
 \stackrel{\eqref{eq:deterministic-total}}{=_{\basicR}} 
    \InputIfFileExists{hoare/pred/step12.tikz}{}{\input{./tikz/hoare/pred/step12.tikz}}
 \\
    &= \encoding{\Gamma' \vdash Q[t/x] \colon 1} \land \encoding{\Gamma' \vdash R[t/x] \colon 1} \\
    &= \encoding{\Gamma' \vdash Q[t/x] \land R[t/x] \colon 1} = \encoding{\Gamma' \vdash (Q \land R)[t/x] \colon 1}.
  \end{align*}

  For the disjunction case, \(P = Q \lor R\),%
  \begin{align*}
    
    \InputIfFileExists{hoare/pred/step1.tikz}{}{\input{./tikz/hoare/pred/step1.tikz}}
 &=_{\basicR} 
    \InputIfFileExists{hoare/pred/step13.tikz}{}{\input{./tikz/hoare/pred/step13.tikz}}
 \stackrel{\eqref{ax:diagnat}}{=_{\basicR}} 
    \InputIfFileExists{hoare/pred/step14.tikz}{}{\input{./tikz/hoare/pred/step14.tikz}}
 \\
    &= \encoding{\Gamma' \vdash Q[t/x] \colon 1} \lor \encoding{\Gamma' \vdash R[t/x] \colon 1} \\
    &= \encoding{\Gamma' \vdash Q[t/x] \lor R[t/x] \colon 1} = \encoding{\Gamma' \vdash (Q \lor R)[t/x] \colon 1}.
  \end{align*}
\end{proof}

\end{document}